\setlist[itemize,enumerate]{noitemsep,nolistsep}
\renewcommand{\lg}{\log}
\renewcommand{\varepsilon}{\epsilon}
\newcommand{\poly}{\operatorname{poly}}
\newcommand{\polylg}{\operatorname{polylog}}
\newcommand{\polylog}{\polylg}
\newcommand{\F}{\mathcal{S}}
\newcommand{\I}{\mathcal{I}}
\newcommand{\J}{\mathcal{J}}
\renewcommand{\S}{\F}
\newcommand{\Procedure}{\mathcal{P}}
\newcommand{\numSmall}{\textsf{small}}
\newcommand{\tO}{\tilde{O}}
\newcommand{\cut}{\mathit{cut}}
\def\iscut{\operatorname{true}}
\def\notcut{\operatorname{false}}
\def\In{\operatorname{In}}
\def\Reach{\operatorname{Reach}}
\def\Winfty{W_{\infty}}
\providecommand{\abs}[1]{\ensuremath{\left\lvert#1\right\rvert}}
\newcommand{\add}{\operatorname{add}}
\newcommand{\shift}{\operatorname{shift}}
\newcommand{\round}{\operatorname{round}}
\newcommand{\case}{\operatorname{case}}
\newcommand{\minconv}{\oplus}
\def\PTASeps{\bar{\epsilon}}
\newcommand{\DP}{\mathsf{DP}}
\newcommand{\tDP}{\mathsf{ADP}}
\newcommand{\ADP}{\tDP}
\renewcommand{\P}{\textsf{P}}
\newcommand{\NP}{\textsf{NP}}
\newcommand{\Prob}[1]{\mathbf{Pr}\left( #1 \right)}
\DeclareRobustCommand{\ALG}{%
	\ifmmode
		\operatorname{ON}
	\else
		\text{ON}\xspace
	\fi
}
\DeclareRobustCommand{\OFF}{%
	\ifmmode
		\operatorname{OFF}
	\else
		\text{OFF}\xspace
	\fi
}
\DeclareRobustCommand{\APPROXALGO}{%
	\ifmmode
		\operatorname{APPROX}
	\else
		\text{APPROX}\xspace
	\fi
}
\def\bold #1{{\bfseries\mathversion{bold}#1}}
\renewcommand{\ss}[1]{}
\newcommand{\stefan}[1]{}
\newcommand{\sn}[1]{}
\newcommand{\hr}[1]{}
\newcommand{\old}[1]{{}}
\def\capac{\operatorname{cap}}
\def\mincut{\operatorname{mincut}}
\def\OPT{\operatorname{OPT}}
\title{Dynamic Maintenance of Monotone Dynamic Programs and Applications}
\author{Monika Henzinger}{Faculty of Computer Science, University of Vienna}{monika.henzinger@univie.ac.at}{https://orcid.org/0000-0002-5008-6530}{This project has received funding from the European Research Council (ERC) under the European Union's Horizon 2020 research and innovation programme (Grant agreement No. 101019564 ``The Design of Modern Fully Dynamic Data Structures (MoDynStruct)'' and from the Austrian Science Fund (FWF) project ``Fast Algorithms for a Reactive Network Layer (ReactNet)'', P~33775-N, with additional funding from the \textit{netidee SCIENCE Stiftung}, 2020--2024.  \flag{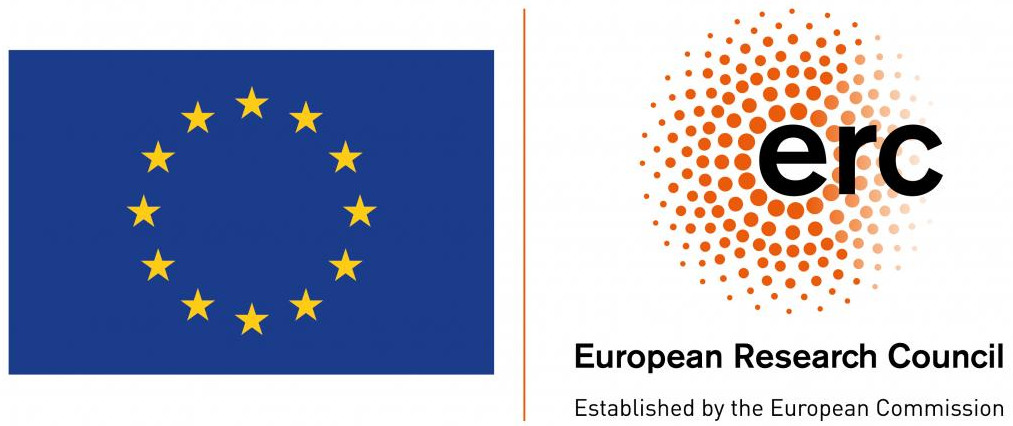}}
\author{Stefan Neumann}{KTH Royal Institute of Technology, Stockholm, Sweden}{neum@kth.se}{}{This research is supported by the the ERC Advanced Grant REBOUND~(834862) and the EC H2020 RIA project SoBigData++~(871042).}
\author{Harald Räcke}{TU Munich, Munich, Germany}{raecke@in.tum.de}{https://orcid.org/0000-0001-8797-717X}{}
\author{Stefan Schmid}{TU Berlin, Germany and Fraunhofer SIT, Germany}{stefan.schmid@tu-berlin.de}{https://orcid.org/0000-0002-7798-1711}{Research supported by Austrian Science Fund (FWF) project I 5025-N (DELTA), 2020-2024.}
\authorrunning{M.\ Henzinger, S.\ Neumann, S.\ Schmid, H.\ Räcke} %
\keywords{Dynamic programming, dynamic algorithms, data structures} %
\begin{document}

\maketitle

\begin{abstract}
	Dynamic programming (DP) is one of the fundamental paradigms in algorithm
	design.  However, many DP algorithms have to fill in large DP tables,
	represented by two-dimensional arrays, which causes at least quadratic
	running times and space usages. This has led to the development of
        improved algorithms for special cases when the DPs satisfy additional
        properties like, e.g., the Monge property or total
	monotonicity.
	
	In this paper, we consider a new condition which assumes
	(among some other technical assumptions) that the \emph{rows} of the DP
	table are \emph{monotone}. Under this assumption, we introduce a novel data
	structure for computing $(1+\varepsilon)$-approximate DP
	solutions in \emph{near-linear time and space} in the static setting, and
	with \emph{polylogarithmic update times} when the DP entries change
	dynamically.  To the best of our knowledge, our new condition is
	incomparable to previous conditions and is the first which allows to derive
	\emph{dynamic} algorithms based on existing DPs.
	Instead of using two-dimensional arrays to store the DP tables, we
	store the rows of the DP tables using monotone piecewise constant functions.
	This allows us to store length-$n$ DP table rows with
	entries in $[0,W]$ using only $\polylog(n,W)$ bits, and to perform
	operations, such as $(\min,+)$-convolution or rounding, on these functions
	in polylogarithmic time.

	We further present several applications of our data structure.  For bicriteria
	versions of \emph{$k$-balanced graph partitioning} and
	\emph{simultaneous source location}, we obtain the first dynamic algorithms
	with subpolynomial update times, as well as the first static algorithms
	using only near-linear time and space.  Additionally, we obtain the currently
	fastest algorithm for \emph{fully dynamic knapsack}.  For $k$-balanced
	partitioning, we show how to monotonize an existing non-monotone DP by
	Feldmann and Foschini~(Algorithmica'15); for simultaneous source location,
	we obtain an efficient algorithm by considering the inverse DP function of
	the one used by Andreev, Garrod, Golovin, Maggs, and Meyerson~(TALG'09).
	Our result for fully dynamic knapsack improves upon a recent result by
	Eberle, Megow, N{\"{o}}lke, Simon and Wiese (FSTTCS'21).

\end{abstract}

\tableofcontents
\newpage

\section{Introduction}
\label{sec:introduction}
Dynamic programming (DP) is one of the fundamental paradigms in algorithm
design. 
In the DP~paradigm, a complex problem is broken up into simpler subproblems
and then the original problem is solved by combining the solutions for the subproblems.
One of the drawbacks of DP algorithms is that in practice they are often
 slow and memory-intensive: for inputs of size $n$ their running
time is typically~$\Omega(n^2)$, and when the DP table is stored using a
two-dimensional array they also need space $\Omega(n^2)$.

This motivated researchers to develop more efficient DP algorithms with
near-linear time and space. Indeed, such improvements are possible under a wide
range of conditions on the DP
tables~\cite{bein09knuth,aggarwal87geometric,eppstein88speeding,galil92dynamic,chan21near,knuth71optimum,yao80efficient,monge1781memoire,burkard96perspectives,miller11approximate},
such as the Monge property, total monotonicity, certain convexity and concavity
properties, or the Knuth--Yao quadrangle-inequality; we discuss these properties
in more detail in Appendix~\ref{sec:comparison}.  When these properties hold,
typically one does not have to compute the entire DP table but instead only has
to compute $O(n)$ DP~entries which reveal the optimal solution.

However, we are not aware of any property for DPs that yields efficient
\emph{dynamic} algorithms, i.e., algorithms that provide efficient update
operations when the input changes. One might find this somewhat surprising
because, from a conceptual point of view, many dynamic algorithms 
hierarchically partition the input and maintain solutions for subproblems;
this is quite similar to how many DP schemes are derived. Indeed, this
conceptual similarity is exploited by many ``hand-crafted''
algorithms~(e.g., \cite{compton20new,henzinger20dynamic}) which start with a
DP~scheme and then show how to maintain it dynamically under input changes.
However, such algorithms are often quite involved and their analysis often
requires sophisticated charging schemes.

Hence, it is natural to ask whether there exists a
\emph{general criterion} which, if satisfied, guarantees that a given DP can be
updated efficiently under input changes.

\smallskip
\textbf{Our Contributions.}
The main contribution of our paper is the introduction of a general criterion
which allows to approximate all entries of a DP table up to a factor of
$1+\varepsilon$. We show that if our criterion is satisfied by a DP (with
suitable parameters) then:
\begin{itemize}
	\item In the dynamic setting, we can maintain a
		$(1+\varepsilon)$-approximation of the entire DP table using
		polylogarithmic update time (see Theorem~\ref{thm:well-behaved-dynamic}).
	\item In the static setting, we can compute a
		$(1+\varepsilon)$-approximation of the DP table in near-linear time
		and space (see Theorem~\ref{thm:well-behaved-static}).
\end{itemize}

Our criterion essentially asserts that the \emph{rows} of the DP tables should
be \emph{monotone} and that the dependency graph of the DP should be a DAG,
where the sets of reachable nodes are small, among some other technical
conditions (see Definition~\ref{def:well-behaved} for the formal definition).
Our criterion is incomparable to the Monge property, total monotonicity or other
criteria from the literature (see Appendix~\ref{sec:comparison} for a more
detailed discussion).

To obtain our results, we introduce a novel data structure for maintaining DPs
which satisfy our criterion. Our data structure is based on the idea of storing
the DP rows using \emph{monotone piecewise constant} functions.  The
monotonicity of the DP rows will allow us to ensure that our functions only
contain very few pieces. Then we show that we can perform operations on such
functions very efficiently, with the running times only depending on the number
of pieces.  This is crucial because it allows us to compute an \emph{entire}
$(1+\delta)$-approximate DP row in time just $\polylg(W)$, even when the DP has
$\Omega(n)$ columns, assuming that the DP entries are from $[0,W]$. Note
that if $W\leq\poly(n)$ then this decreases the running time for computing an
entire row from $\Omega(n)$ to just $\polylog(n)$. Additionally, this also
allows us to store each row \emph{using only $\polylg(W)$ space} rather than
storing it in an array of size $\Omega(n)$. 
We present our criterion and the details of our data structure in
Section~\ref{sec:approx-conv}.

As applications of our data structure, we obtain new static and dynamic
algorithms for various problems.
We present new algorithms for \emph{$k$-balanced partitioning},
\emph{simultaneous source location} and for \emph{fully dynamic knapsack}.
Next, we describe these results in detail; we discuss more related work in
Appendix~\ref{sec:related}.

\smallskip
\textbf{Our Results for Fully Dynamic 0-1 Knapsack.}
First, we provide a novel algorithm for fully dynamic 0-1~knapsack.
In this problem, the input consists of a knapsack size $B\in\mathbb{R}_+$ and a
set of $n$~items, where each item $i\in[n]$ has a weight $w_i\in\mathbb{R}_+$
and a price $p_i\in[1,\infty)$.
The goal is to find a set of items $I$ that maximizes
$\sum_{i\in I} p_i$ while satisfying the constraint $\sum_{i\in I} w_i \leq B$.
In the dynamic version of the problem, items are inserted and deleted.
More concretely, we consider the following update operations:
\emph{insert}($p_i,w_i$), in which the price and weight of item~$i$ are set to
$p_i\in[1,\infty)$ and $w_i\in\mathbb{R}_+$, respectively, and \emph{delete}($i$),
where item~$i$ is removed from the set of items.

Our main result is a dynamic $(1+\varepsilon)$-approximation algorithm
with worst-case update time 
$\varepsilon^{-2} \cdot \log^2(nW) \cdot \polylg(1/\varepsilon,\log(nW))$, where
$W=\sum_i p_i$.
Our algorithm improves upon a recent result by Eberle, Megow, N{\"{o}}lke, Simon
and Wiese~\cite{eberle21fully} that also maintained a
$(1+\varepsilon)$-approximate solution with update time $O(\varepsilon^{-9} \lg^4(nW))$. 
\begin{restatable}{theorem}{knapsack}
\label{thm:knapsack}
	Let $\varepsilon>0$. There exists an algorithm for fully dynamic knapsack
	that maintains a $(1+\varepsilon)$-approximate solution with worst-case
	update time
	$\frac{1}{\varepsilon^2} \log^2(nW) \polylog\left( \frac{1}{\varepsilon} \log(nW)\right)$.
\end{restatable}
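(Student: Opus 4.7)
The plan is to recast fully dynamic $0$-$1$ knapsack as a monotone DP satisfying the well-behaved criterion of Definition~\ref{def:well-behaved}, and then invoke Theorem~\ref{thm:well-behaved-dynamic}. For a subset $S$ of items I define $f_S(p)$ to be the minimum total weight needed to realize total price at least $p$ using items from $S$; this is non-decreasing in $p$, so each DP row is monotone as required, and the desired maximum-price solution on the current item set $I$ is the largest $p$ with $f_I(p) \leq B$, which is easy to read off from the piecewise constant representation of $f_I$.

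Next I would overlay a balanced binary tree on the ground set of items. Each leaf stores the two-piece monotone function corresponding to a single item $i$, namely the step function that equals $0$ for $p < p_i$ and $w_i$ for $p \geq p_i$. Each internal node $v$ with children $u_1,u_2$ stores $f_{S_v} = f_{S_{u_1}} \minconv f_{S_{u_2}}$, the $(\min,+)$-convolution of its children's functions. Since every internal node depends on only two others and the dependency DAG has depth $O(\log n)$, each node's reachable set has size $O(\log n)$, so this construction meets the structural conditions of the well-behaved criterion. Inserting or deleting an item~$i$ then modifies only the corresponding leaf and forces us to refresh only the $O(\log n)$ ancestors on the root-to-leaf path.

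By Theorem~\ref{thm:well-behaved-dynamic}, each such refresh is one $(\min,+)$-convolution of two approximate monotone piecewise constant functions, followed by a rounding step, and takes $\polylog(nW,1/\varepsilon)$ time using the data structure from Section~\ref{sec:approx-conv}. To obtain an overall $(1+\varepsilon)$-approximation despite error accumulating across the $\Theta(\log n)$ tree levels, I would set the per-level precision to $1+\Theta(\varepsilon/\log n)$; the corresponding function representations then retain $O(\varepsilon^{-1} \log n \cdot \log W)$ pieces, and multiplying the per-node cost by the $O(\log n)$ nodes on the update path yields the claimed worst-case bound $\varepsilon^{-2} \log^2(nW)\polylog(\varepsilon^{-1}\log(nW))$.

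The main obstacle I expect is exactly this quantitative precision budget: one must simultaneously bound the multiplicative error accumulated across $\Theta(\log n)$ tree levels and argue that, under the tighter per-level precision, the piecewise constant representations still fit in polylogarithmic size so that each convolution stays within the time budget promised by Theorem~\ref{thm:well-behaved-dynamic}. A secondary subtlety is that $W = \sum_i p_i$ is not static, so the data-structure parameters (rounding granularity, number of pieces) must be rebuilt dynamically, e.g., via a standard doubling trick on a running upper bound of $W$, so that the stated bound always holds with respect to the current value of $W$.
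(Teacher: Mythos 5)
There is a genuine gap, and it is quantitative: your construction only reproduces the paper's first stage, Proposition~\ref{prop:dynamic-knapsack-chan}, not Theorem~\ref{thm:knapsack}. With per-level precision $\delta=\Theta(\varepsilon/\log n)$ each row has $p=O(\varepsilon^{-1}\log n\log W)$ pieces, and one refresh of a node is a $(\min,+)$-convolution plus rounding costing $O(p^2\log p)=\tO(\varepsilon^{-2}\log^2 n\log^2 W)$ by Lemma~\ref{lem:convolution}; multiplying by the $O(\log n)$ ancestors on the update path gives $\tO(\varepsilon^{-2}\log^3 n\log^2 W)$, which is exactly the bound of Proposition~\ref{prop:dynamic-knapsack-chan} and strictly worse than the claimed $\varepsilon^{-2}\log^2(nW)\polylog(\varepsilon^{-1}\log(nW))$ (e.g.\ for $W=\poly(n)$ your bound is larger by a $\Theta(\log^3 n)$ factor, which the $\polylog$ of $\log(nW)$ cannot absorb). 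Your final step simply miscounts; the black-box framework alone cannot reach the stated update time. The paper gets from the proposition to the theorem by a second, non-black-box stage (Section~\ref{sec:knapsack-small-instance}): partition items into $O(\varepsilon^{-1}\log W)$ price classes, keep only the $1/\varepsilon$ lightest items of each class in a core set $X$ with $\abs{X}=O(\varepsilon^{-2}\log W)$, run the data structure of Proposition~\ref{prop:dynamic-knapsack-chan} on $X$ alone (so the $\log^3$ factor becomes $\polylog(\varepsilon^{-1}\log(nW))$), handle the remaining items by fractional knapsack in density-sorted search trees, and combine the two via the charging argument of Lemma~\ref{lem:choice-l-i} that pays for the fractionally cut item out of the profit of the $X$-solution. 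None of this appears in your proposal, and without it the claimed bound does not follow.

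A secondary issue is your choice of DP row: you take $f_S(p)$ to be the minimum \emph{weight} achieving price at least $p$, so the framework's rounding step acts multiplicatively on weights. Since the budget $B$ is a hard constraint, a $(1+\delta)$-approximation of weights only yields a solution that is optimal for a shrunken budget $B/(1+\varepsilon)$, and this can be arbitrarily far from $\OPT$ in price (consider a single item of weight exactly $B$ carrying almost all the profit). The paper avoids this by keeping weight in the \emph{domain} and price in the \emph{codomain} ($f_J(x)=$ maximum price at weight at most $x$, computed via $(\max,+)$-convolutions reduced to Lemma~\ref{lem:convolution}), so that rounding only perturbs the objective, never feasibility. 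You would need to switch to that orientation (or argue the inversion carefully) even for the weaker bound.
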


We will also show that we can return the maintained solution~$I$ in time
$O(\abs{I})$ and that we can answer queries whether a given item $i\in[n]$ is
contained in $I$ in time $O(1)$. This matches the query times
of~\cite{eberle21fully}.

To obtain this result, we first derive a slightly slower algorithm as a simple
application of our data structure for maintaining DPs with monotone rows.  Then
we use this algorithm together with additional ideas to obtain the theorem (see
Section~\ref{sec:knapsack}).

Since our dynamic algorithm is based on a DP, it is possible that the solution
changes significantly after each update. However, in the appendix
(Theorem~\ref{thm:recourse-knapsack}) we prove a lower bound, showing that every
dynamic $(1+\varepsilon)$-approximation algorithm for knapsack must either make
a lot of changes to the solution after each update or store many (potentially
substantially different) solutions between which it can switch after each
update. This implies that maintaining a single explicit solution with
polylogarithmic update times is not possible and the
property of our algorithm cannot be avoided.

\smallskip
\textbf{Our Results for $k$-Balanced Partitioning.}
Our most technically challenging result is for \emph{$k$-balanced graph partitioning}.
In this problem, the input consists of an integer~$k$ and an undirected weighted
graph $G=(V,E,\capac)$ with $n$~vertices, where $\capac: E \to \Winfty$ is a
weight function on the edges with weights in $\Winfty:=[1,W]\cup\{0,\infty\}$.
The goal is to find a partition $V_1,\dots,V_k$ of the vertices such that
$\abs{V_i} \leq \lceil \abs{V}/k\rceil$ for all $i$ and the weight of the edges
which are cut by the partition is minimized. More formally, we want to minimize
$\cut(V_1,\dots,V_k) := \sum_{i=1}^k \sum_{\{u,v\}\in E\cap (V_i\times (V\setminus V_i))} \capac(u,v)$.

We note that this problem is highly relevant in
theory~\cite{even99fast,andreev2006balanced,feldmann15balanced,feige02polylogarithmic}
and in
practice~\cite{karypis1997metis,sanders13experimental,buluc16recent,dong20learning}, 
where algorithms for balanced graph partitioning are often used as a
preprocessing step for large scale data analytics.  Obtaining practical
improvements for this problem is of considerable interest in applied
communities~\cite{buluc16recent} and, for instance, the popular METIS
heuristic~\cite{karypis1997metis} has 1,400+ citations.

Since the above problem is \NP-hard to approximate within a factor of
$n^{1-\varepsilon}$ for any $\varepsilon>0$ even on
trees~\cite{feldmann15balanced}, we consider bicriteria approximation
algorithms. Given an undirected weighted graph $G=(V,E,\capac)$, a
partition $V_1,\dots,V_k$ of $V$ is a \emph{bicriteria
$\emph(\alpha,\beta)$-approximate solution} if
$\abs{V_i}\leq\beta\lceil n/k\rceil$ for all $i$
and $\cut(V_1,\dots,V_k) \leq \alpha\cdot \cut(\OPT)$,
where $\OPT=(V_1^*,\dots,V_k^*)$ is the optimal solution with
$\abs{V_i^*}\leq\lceil n/k\rceil$ for all~$i$.  We note that the previously
mentioned hardness result implies that any algorithm that computes a bicriteria
$(\alpha,1+\varepsilon)$-approximation for any $\alpha\geq1$ and whose running
time depends only polynomially on $n$, must have a running time depending
super-polynomially on $1/\varepsilon$, unless $\P=\NP$.\footnote{
	If we had an algorithm that computes a bicriteria
	$(\alpha,1+\varepsilon)$-approximation in time $\poly(n,1/\varepsilon)$
	then we could set $\varepsilon=1/(2n)$ which implies that all partitions
	have size $\lceil n/k\rceil$. Thus we can compute a bicriteria
	$(\alpha,1)$-approximate solution in time $\poly(n)$ which contradicts the
	hardness result, unless $\P=\NP$.}

Our main result for the static setting is presented in the following theorem.
It gives the first algorithm with polylogarithmic approximation ratio for this
problem \emph{with near-linear running time}. More concretely, we compute a
bicriteria $(O(\lg^4 n), 1+\varepsilon)$-approximation in near-linear time for
constant $k$.
For comparison, the best approximation ratio achieved by a polynomial-time algorithm~\cite{feldmann15balanced}
is a bicriteria $(O(\lg^{1.5} n \lg \lg n), 1+\varepsilon)$-approximation
with running time $\Omega(n^4)$.
\begin{restatable}{theorem}{partitioningstatic}
\label{thm:partitioning-static}
	Let $\varepsilon > 0$ and $k\in\mathbb{N}$. Let
	$G=(V,E,\capac)$ be an undirected weighted graph with $n$~vertices and $m$~edges and
	edge weights in $\Winfty$. Then for the $k$-balanced partition problem we
	can compute:
	\begin{itemize}
		\item An $(O(\log^4 n),1+\varepsilon)$-approximation in
		time $(k/\varepsilon)^{O(\lg(1/\varepsilon)/\varepsilon)} \cdot O'(m \cdot \lg^2(W) )
			+ (k/\varepsilon)^{O(1/\varepsilon^2)}$.\footnote{
			We use the notation $O'(\cdot)$ to suppress
			factors in $\poly(\lg n, k, \lg(1/\varepsilon), \lg\lg(W))$.
		}
		\item A $(1+\varepsilon,1+\varepsilon)$-approximation in time
			$(k/\varepsilon)^{O(\lg(1/\varepsilon)/\varepsilon)} \cdot
				O'(n \cdot h^2 \cdot \lg^2(W)) + (k/\varepsilon)^{O(1/\varepsilon^2)}$
			if $G$ is a tree of height~$h$.
		\item A $(1,1+\varepsilon)$-approximation in time
			$(k/\varepsilon)^{O(\lg(1/\varepsilon)/\varepsilon)} \cdot
				O'(n^4 \cdot \lg^2(W)) + (k/\varepsilon)^{O(1/\varepsilon^2)}$
			if $G$ is a tree.
	\end{itemize}
\end{restatable}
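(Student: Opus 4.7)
The plan is to reduce to the tree case and then apply our general data-structure result from Theorem~\ref{thm:well-behaved-static}. The starting point for trees is the dynamic program of Feldmann and Foschini, which processes the tree bottom-up and, at each node $v$, records, for every sensible pair (number of vertices already closed off in the subtree rooted at $v$, profile of still-open partition classes), the minimum weight of edges cut so far. In its raw form the rows of this DP are not monotone, so the first step is to \emph{monotonize} it by replacing the entry at column $j$ with the best value over all column indices $\geq j$. This is safe because the underlying quantity can be reinterpreted as ``best cost to place at least $j$ vertices under this profile,'' and it turns each row into a non-increasing function of the kind required by Definition~\ref{def:well-behaved}. Sibling subtrees are combined via $(\min,+)$-convolution of their rows, parent rows are obtained by adding the weight of the edge from $v$ to its parent, and the dependency DAG is the tree itself, whose reachability structure is controlled by the tree height~$h$.

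Given this, the second bullet follows by a direct invocation of Theorem~\ref{thm:well-behaved-static}: the per-node cost scales as $O'(h^2 \log^2 W)$ times a $(k/\varepsilon)^{O(\log(1/\varepsilon)/\varepsilon)}$ factor coming from enumerating the open-class profile on an $\varepsilon$-net (this is where the exponential dependence on $1/\varepsilon$ enters, which is necessary in view of the hardness remark above), plus an additive $(k/\varepsilon)^{O(1/\varepsilon^2)}$ to extract an explicit partition from the compressed DP representation at the root. The third bullet is obtained by running the same DP with exact $(\min,+)$-convolutions at cost $O(n^2)$ per node, instead of the polylogarithmic approximate convolutions of our data structure, so the total becomes $O'(n^4 \log^2 W)$ and the multiplicative approximation factor drops to~$1$. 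For the first bullet, the plan is to embed $G$ into a cut-approximating tree of quality $O(\log^4 n)$ constructible in $\tilde O(m)$ time (a R\"acke-style hierarchical decomposition), run the second-bullet algorithm on this tree, and lift the resulting partition back to $G$; this incurs the $O(\log^4 n)$ factor only in the first criterion and replaces $n\cdot h^2$ by $m$ in the running time, yielding the stated bound.

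The main obstacle will be the verification that the monotonized Feldmann--Foschini DP is genuinely well-behaved, together with an error analysis across tree levels. Concretely, the open-class profile is multi-dimensional, so to keep the effective row alphabet small one must aggressively coarsen it; and because the DP composes via $(\min,+)$-convolution at every internal node, one needs to argue that the per-level $(1+\varepsilon')$ errors introduced by the data structure accumulate into only a $(1+\varepsilon)$ overall error when $\varepsilon'$ is chosen appropriately (say $\varepsilon' = \Theta(\varepsilon/\log n)$). Matching this error budget against the exponents in the running-time bound, and in particular against the $(k/\varepsilon)^{O(\log(1/\varepsilon)/\varepsilon)}$ factor, is what I expect to occupy the bulk of the formal proof.
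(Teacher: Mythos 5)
Your high-level plan is essentially the paper's: reinterpret the Feldmann--Foschini DP so that the size coordinate becomes an upper bound on the weight of the root component (which the paper does by redefining $\DP(v,g,\cut,x)$ directly rather than by post-processing the old table, but the effect is the same), index the DP rows by a coarsened ``signature'' of open-component sizes drawn from an alphabet of size $M^t=(k/\varepsilon)^{O(\log(1/\varepsilon)/\varepsilon)}$, run the DP bottom-up over the tree with piecewise-constant rows and per-node $(1+\delta)$-rounding at $\delta=\Theta(\varepsilon/h)$, and compose with a Räcke-style tree cut sparsifier for general graphs. The exact version drops the rounding and pays $\Omega(n^2)$ per node, giving the third bullet.

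However, there is a genuine gap at the one load-bearing step you single out: you propose to obtain the second bullet ``by a direct invocation of Theorem~\ref{thm:well-behaved-static}.'' The monotonized DP is \emph{not} well-behaved; in particular Property~(4b) of Definition~\ref{def:well-behaved} fails. The problem is the transition in which the edge to the parent of $v$ is cut but neither child edge is: the signature contribution $e(x_l+x_r+w(v))$ couples the decomposition $g=g_l+g_r+e(\cdot)$ to the choice of $(x_l,x_r)$, so the update cannot be written as a constant number of operations from Lemma~\ref{lem:operations} plus at most one convolution. The paper's workaround partitions the weight range $[1,w(V)]$ into $\Theta(\log(k/\varepsilon)/\varepsilon)$ geometric intervals on which $e(\cdot)$ is constant and then minimizes over these intervals; since this is a super-constant family of $\min$-operations, it explicitly violates (4b). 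Consequently, the paper does not apply Theorem~\ref{thm:well-behaved-static}; it retreats to the weaker notion of an ``okay-behaved'' DP (Definition~\ref{def:okay-behaved}) and does a bespoke per-node running-time analysis (Lemmas~\ref{lem:property-adp} and \ref{lem:tree-dp}). Your proposal as stated would hit this wall: you would need to abandon the black-box invocation and carry the per-node complexity accounting by hand, which is precisely where the technical weight of the proof lies. A smaller inaccuracy: the additive $(k/\varepsilon)^{O(1/\varepsilon^2)}$ term is not the cost of extracting an explicit partition from the compressed DP, but the cost of running a makespan-minimization PTAS over all $M^t$ signatures to determine which of them correspond to assignments of components to the $k$ parts that respect the $(1+\varepsilon)$-slackened balance constraint.
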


Furthermore, we extend our results to the dynamic setting in which the graph
$G$ is undergoing edge insertions and deletions.  In the following theorem, we
present \emph{the first dynamic algorithm with subpolynomial update time} for
this problem. We again consider bicriteria approximation algorithms with
update and query times depending super-polynomially on~$1/\varepsilon$; this
cannot be avoided since if we computed $(\alpha,1)$-approximations for any
$\alpha\geq1$ or if we had a polynomial dependency on~$1/\varepsilon$, then the
hardness result from above implies that our update and query times must be
super-polynomial in $n$ (unless $\P=\NP$).
\begin{restatable}{theorem}{partitioningdynamic}
\label{thm:partitioning-dynamic}
	Let $\varepsilon > 0$ and $k\in\mathbb{N}$. Let $G=(V,E,\capac)$ 
	be an undirected weighted graph with $n$ vertices that is undergoing edge insertions and deletions.
	Then for the $k$-balanced partition problem we can maintain:
	\begin{itemize}
		\item An $(n^{o(1)},1+\varepsilon)$-approximate solution with amortized
			update time 
			$(k/\varepsilon)^{O(\lg(1/\varepsilon)/\varepsilon)} \cdot n^{o(1)}
				\cdot O'(\lg^2(W))$
			and query time $(k/\varepsilon)^{O(1/\varepsilon^2)}$
			if $G$ is unweighted.
		\item A $(1+\varepsilon,1+\varepsilon)$-approximate solution with
			worst-case update time
			$(k/\varepsilon)^{O(\lg(1/\varepsilon)/\varepsilon)} \cdot O'(h^3 \cdot \lg^2(W))$
			and query time
			$(k/\varepsilon)^{O(1/\varepsilon^2)}$
			if $G$ is a tree of height $h$.
	\end{itemize}
\end{restatable}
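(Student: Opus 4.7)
The plan has two parts: first handle the tree case (second bullet) by plugging a monotonized version of the Feldmann--Foschini DP into the dynamic data structure of Theorem~\ref{thm:well-behaved-dynamic}, and then reduce the unweighted general-graph case (first bullet) to the tree case via a dynamic low-stretch tree embedding. So conceptually, both bullets are instances of a single dynamic tree algorithm, with the first bullet paying an extra $n^{o(1)}$ overhead for the embedding.

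For the tree case, I start from the bottom-up Feldmann--Foschini DP: at each internal node $v$ a row is obtained by $(\min,+)$-combining the rows of $v$'s children and then applying two structural operations corresponding to whether the parent edge of $v$ is cut. Row indices represent a discretized "budget" (roughly, the size of the partition containing $v$) and entries are cut costs, so after the monotonization step (performed earlier in the paper) the rows become monotone functions in the column index and the whole dependency DAG satisfies Definition~\ref{def:well-behaved} with small reachable-predecessor sets. An edge update changes the weight of one edge, and the corresponding DP changes propagate only along the $O(h)$ ancestors of the edge's endpoints. Each affected node requires rebuilding a row by a constant number of $(\min,+)$-convolutions and rounding operations on monotone piecewise constant functions; by the guarantees of the data structure from Section~\ref{sec:approx-conv} each such rebuild costs $(k/\varepsilon)^{O(\log(1/\varepsilon)/\varepsilon)} \cdot O'(\log^2 W)$. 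Summing over the path and absorbing the per-level structural work for the cut/not-cut bookkeeping into an $h^2$ factor yields the stated worst-case $(k/\varepsilon)^{O(\log(1/\varepsilon)/\varepsilon)} \cdot O'(h^3 \log^2 W)$ update time. Queries are answered by descending from the root row down along a maintained witness pointer structure; since each row has description size at most $(k/\varepsilon)^{O(1/\varepsilon^2)}$, tracing the decision at each of the $O(k/\varepsilon)$ relevant break-points gives the $(k/\varepsilon)^{O(1/\varepsilon^2)}$ query time.

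For the first bullet, I maintain, alongside the dynamic graph $G$, a dynamic Bartal/R\"acke-style tree embedding $T$ of $G$ with expected stretch $n^{o(1)}$, height $h = n^{o(1)}$, and amortized update time $n^{o(1)}$ per edge update (using any currently available dynamic low-stretch embedding with these parameters, which for unweighted graphs is known). A single edge update in $G$ triggers, amortized, at most $n^{o(1)}$ edge-weight updates in $T$, each of which I feed into the tree algorithm above; the cost per tree update is $(k/\varepsilon)^{O(\log(1/\varepsilon)/\varepsilon)} \cdot O'(h^3 \log^2 W)$, and substituting $h = n^{o(1)}$ absorbs the $h^3$ factor into $n^{o(1)}$ and gives the stated amortized update time. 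Because the cost of any cut in $G$ is preserved up to a factor of $n^{o(1)}$ (in expectation) by the embedding, a bicriteria $(1+\varepsilon,1+\varepsilon)$-solution on $T$ yields an $(n^{o(1)},1+\varepsilon)$-solution on $G$, and the query time is unchanged from the tree case.

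The main obstacle is the monotonization step, because the natural Feldmann--Foschini DP has two coupled dimensions (the number of partitions already committed and the total weight in the partition containing $v$) and is not monotone row-wise on the nose; producing a reformulation that is both monotone in the column index and has a dependency DAG with polylogarithmic reachable sets is where most of the technical work must go, and is necessary before Theorem~\ref{thm:well-behaved-dynamic} can be invoked at all. A secondary difficulty is matching the parameters of the dynamic tree embedding to the requirements of the tree algorithm so that both the stretch $n^{o(1)}$ and the per-update overhead $n^{o(1)}$ compose to the claimed bound; this forces a careful choice of embedding and a verification that its update stream interleaves cleanly with the tree data structure, which in turn is what confines the first bullet to amortized (rather than worst-case) updates and to unweighted graphs.
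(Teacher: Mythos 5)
Your high-level strategy — a dynamic DP on binary trees plus a dynamic tree cut sparsifier to reduce general graphs to trees — matches the paper's, but several steps as you state them would fail.

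First, you claim that after monotonization the DP satisfies Definition~\ref{def:well-behaved} and hence Theorem~\ref{thm:well-behaved-dynamic} can be invoked directly. The paper explicitly notes that the balanced-partitioning DP \emph{violates} Property~(4b): the step that handles case~B when the parent edge is cut requires minimizing over $\Theta(\log(k/\varepsilon)/\varepsilon)$ intervals (Equation~\eqref{eq:too-many-min-app}), which is not a bounded composition of primitive operations. So the black-box theorem is not available, and one has to fall back to the weaker notion of okay-behaved DPs (Definition~\ref{def:okay-behaved}) together with Lemma~\ref{lem:tree-dp-dynamic} and a custom running-time analysis (Lemma~\ref{lem:property-adp}). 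This also changes where the $h^3$ comes from: $h$ for the number of affected rows on a root path and $p^2 = O((h\log W/\varepsilon)^2)$ for the per-row convolution, not from an ad hoc ``cut/not-cut bookkeeping'' overhead.

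Second, for the general-graph bullet you propose a ``dynamic Bartal/R\"acke-style tree embedding with expected stretch $n^{o(1)}$.'' Bartal-style low-stretch spanning trees preserve \emph{pairwise distances} in expectation, not \emph{cut capacities}; this is insufficient for balanced partitioning, which is a cut problem. What is actually needed is a tree \emph{cut sparsifier} (R\"acke tree) with quality $n^{o(1)}$, for which the paper uses the deterministic dynamic R\"acke tree of Goranci, R\"acke, Saranurak and Tan (Theorem~\ref{thm:dynamic-racke-tree}). Note also that the guarantee is deterministic, not ``in expectation.'' Moreover, the R\"acke tree may have unbounded degree, so the paper must construct a \emph{binarized} R\"acke tree (Lemma~\ref{lem:dynamic-binary-racke-tree}, with $O(n^2)$ vertices and height $O(\log^{7/6}n)$) and assign vertex weights $0/1$ so that a partition of the binary tree corresponds to a balanced partition of $G$; your proposal omits this. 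Two further missing ingredients: when the tree undergoes edge insertions that would force a node to have two parents, you must re-root the affected component (Section~\ref{sec:extension-dynamic}) — you do not address this; and the $(k/\varepsilon)^{O(1/\varepsilon^2)}$ query time does not come from descending witness pointers but from enumerating all $M^t$ signatures at the root, running the Hochbaum--Shmoys makespan PTAS on each to test feasibility, and taking the minimum DP value over the feasible ones (Section~\ref{sec:computing-result}).
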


Our approach is inspired by the DP of Feldmann and
Foschini~\cite{feldmann15balanced}.  However, the DP rows in the algorithm
of~\cite{feldmann15balanced} are not monotone and, hence, their DP cannot
directly be sped up by our approach.  Therefore, we first simplify and
generalize the exact DP of Feldmann and Foschini to make it monotone. The DP we
obtain eventually is still slightly too complex to fit into our black-box
framework, but we show that the ideas from our framework can still be used to
obtain the result. In Section~\ref{sec:monotonizing}, we provide a technical
overview.

Again, it is possible that the solution maintained by our algorithm changes
substantially after each update. Similar to above we show in the appendix
(Theorem~\ref{thm:recourse-partitioning}) that this cannot be avoided when
considering subpolynomial update times.

\smallskip
\textbf{Our Results for Simultaneous Source Location.}
Next, we provide efficient algorithms for the simultaneous source
location problem by Andreev, Garrod, Golovin, Maggs and
Meyerson~\cite{andreev09simultaneous}. In
this problem, the input consists of an undirected graph $G=(V,E,\capac,d)$ with a capacity
function $\capac \colon E \to \Winfty$ on the edges and a demand function
$d \colon V \to \Winfty$ on the vertices. The goal is to select a
minimum set $S\subseteq V$ of \emph{sources} that can simultaneously supply all
vertex demands. More concretely, a set of sources $S$ is \emph{feasible} if
there exists a flow from the vertices in $S$ that supplies demand $d(v)$ to all
vertices $v\in V$ and that does not violate the capacity constraints on the
edges. The objective is to find a feasible set of sources of minimum size.

We will again consider bicriteria approximation algorithms.  Let $S^*$ be the
optimal solution for the simultaneous source location problem.  Then we say that
$S$ is a \emph{bicriteria $(\alpha,\beta)$-approximate solution} if
$\abs{S}\leq\alpha\abs{S^*}$ and if $S$ is a feasible set of sources when all
edge capacities are increased by a factor $\beta$.

The following theorem summarizes our main results.  It presents \emph{the first
near-linear time algorithm} for simultaneous source location
that computes a $(1+\varepsilon)$-approximate solution while only exceeding the
edge capacities by a $O(\lg^4 n)$ factor.  In comparison, the best algorithm
with arbitrary polynomial processing time
computes a bicriteria $(1,O(\lg^2 n\lg\lg n))$-approximate
solution in time $\Omega(n^3)$~\cite{andreev09simultaneous}.
\begin{restatable}{theorem}{sslstatic}
\label{thm:ssl-static}
	Let $\varepsilon > 0$. Let $G=(V,E,\capac,d)$ be an undirected weighted graph with
	$n$~vertices and $m$~edges.  Then for the simultaneous source location problem we can
	compute:
	\begin{itemize}
		\item A $(1+\varepsilon,O(\log^4(n)))$-approximation in
			time\footnote{We write $\tO(f(n,\varepsilon,W))$ to denote running
				times of the form
				$f(n,\varepsilon,W)\cdot\polylg(n,\varepsilon,\lg W)$.}
			$\tO(\frac{1}{\varepsilon^2} m)$.
		\item A $(1+\varepsilon,1)$-approximation in time
			$\tO(\frac{1}{\varepsilon^2} h^2 \cdot n)$ if $G$ is a tree of height~$h$.
	\end{itemize}
\end{restatable}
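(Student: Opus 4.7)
The plan is to prove the tree case first via a direct monotone DP that fits the black-box framework of Theorem~\ref{thm:well-behaved-static}, and then to lift to general graphs using a Räcke-style capacity-preserving tree decomposition that pays a single $O(\log^4 n)$ factor on the capacity side.

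For the tree case, I root $G$ arbitrarily and, for each vertex $v$ with subtree $T_v$, define $\mathrm{DP}[v][f]$ to be the minimum number of sources placed inside $T_v$ such that all demands in $T_v$ can be simultaneously supplied when exactly $f$ units of flow cross the edge from $v$ to its parent (with a fixed sign convention). This is precisely the \emph{inverse} of the Andreev--Garrod--Golovin--Maggs--Meyerson DP, which instead tabulates minimum capacity usage as a function of the number of sources. Inverting makes each row monotone non-increasing in $f$: supplying more from outside $T_v$ can only reduce the number of internal sources required. The recurrence at an internal node with children $c_1,\dots,c_r$ is
\[
	\mathrm{DP}[v][f] \;=\; [v\in S]\;+\;\min\nolimits_{\sum_i f_i=f-d(v)}\;\sum_i \mathrm{DP}[c_i][f_i],
\]
where each $|f_i|$ is capped by $\capac(v,c_i)$ and $[v\in S]\in\{0,1\}$ indicates whether $v$ is chosen as a source (in which case the subtree's demand constraint is relaxed at $v$). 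Each child contribution is a $(\min,+)$-convolution of two monotone rows followed by a capacity truncation, both of which preserve row monotonicity.

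Since the dependency DAG is the rooted tree itself, the set of ancestors reachable from any node has size at most $h$, and DP values lie in $[0,n]\subseteq[0,W]$, the DP satisfies Definition~\ref{def:well-behaved} with reachability parameter $O(h)$. Invoking Theorem~\ref{thm:well-behaved-static} with rounding parameter $\delta=\Theta(\varepsilon/h)$, so that the multiplicative error accumulated along any root-to-leaf chain of length $h$ remains $1+\varepsilon$, stores every row as a monotone piecewise-constant function with $\tilde O(h/\varepsilon)$ pieces and processes every merge in $\tilde O(h^2/\varepsilon^2)$ time. Summing over all $n$ vertices gives the $\tilde O(h^2 n/\varepsilon^2)$ bound, and reading off the row at the root yields a $(1+\varepsilon,1)$-approximate feasible source set.

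For general graphs I apply a Räcke-style capacity-approximating tree $T$ of depth $O(\log n)$ and capacity distortion $O(\log^4 n)$, computable in $\tilde O(m)$ time. Any solution that is feasible on $T$ is feasible on $G$ after inflating all capacities by $O(\log^4 n)$, and the optimum on $T$ is at most the optimum on $G$. Running the tree algorithm on $T$ with the given demands therefore produces a source set whose size is within $(1+\varepsilon)$ of the optimum of $G$ and that is feasible in $G$ at capacity blowup $O(\log^4 n)$. The total time is $\tilde O(m)+\tilde O(h^2 n/\varepsilon^2)=\tilde O(m/\varepsilon^2)$, since $h=O(\log n)$ and $m\ge n-1$. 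The main obstacle is verifying that the inverted DP really fits the well-behavedness hypothesis cleanly enough to apply the framework as a black box: one has to check that the capacity truncation at each parent edge does not inflate the piecewise-constant complexity of the stored rows beyond the $\mathrm{polylog}(n,W)$ bound, which is precisely where the monotonicity granted by the inversion is indispensable, as it ensures that both $(\min,+)$-convolution and truncation admit compact monotone representations.
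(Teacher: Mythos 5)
You have identified the right high-level mechanism---inverting the Andreev et al.\ DP so that each row is monotone, storing rows as piecewise-constant functions, applying the well-behaved framework, and passing to general graphs via a Räcke tree. This is the same route the paper takes. However, two substantive steps are missing.

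First, the recurrence you write at an internal node with children $c_1,\dots,c_r$ requires $r-1$ successive $(\min,+)$-convolutions. Definition~\ref{def:well-behaved}, Property~(4b) permits at most \emph{one} application of Lemma~\ref{lem:convolution} inside each $\tilde{\Procedure}_i$, so for a node of degree $r>2$ your DP is not well-behaved and Theorem~\ref{thm:well-behaved-static} does not apply to it directly. If you instead try to absorb the extra convolutions into $\tilde{\Procedure}_v$ with interleaved roundings, the per-node error becomes $(1+\delta)^{r-1}$ and the total error along a path can be $(1+\delta)^{\sum_v (r_v-1)}$, which for a star forces $\delta=O(\varepsilon/n)$ and blows up the piece count. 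The paper fixes this by \emph{binarizing} the tree (and, for general graphs, by constructing a \emph{binarized} Räcke tree, Lemma~\ref{lem:binary-racke-tree}): each degree-$r$ node is replaced by a balanced binary gadget of depth $O(\log r)$ with $\infty$-capacity internal edges, so that every DP procedure uses exactly one convolution and the error compounds only along the $O(h\log n)$ levels of the binarized tree, giving the claimed $\tO(h^2 n/\varepsilon^2)$ bound after the $\tO$-notation absorbs the $\polylog n$ overhead.

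Second, in the general-graph step you run the tree algorithm on the Räcke tree $T$ with ``the given demands,'' but $T$ has auxiliary internal vertices that do not correspond to any vertex of $G$, and a source placed at such an auxiliary node does not translate back into a valid source in $G$. The paper handles this by generalizing the DP to take as input, for each vertex, a flag indicating whether it is \emph{allowed} to be a source, and then on $T$ it forbids all non-leaf vertices from being sources; this keeps the one-to-one correspondence between source sets in $T$ and in $G$ that the bicriteria argument relies on. You should also cite the flow-sparsifier property of the Räcke tree explicitly rather than only the cut-quality bound, since it is the existence of a corresponding \emph{flow} in $G$ (up to the $O(\log^4 n)$ capacity blow-up) that makes the recovered source set feasible. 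Finally, defining $\DP[v][f]$ with ``exactly $f$ units cross the edge'' is less clean than the paper's ``at most $x$ flow received'' formulation: monotonicity under ``at most'' is immediate because it is a relaxation, whereas under ``exactly'' one has to separately argue that excess inflow can be wasted, which is worth making explicit.
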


Next, we turn to dynamic versions of the problem. We consider the following
update operations: 
\emph{SetDemand}($v$,~$d$): updates the demand of vertex $v$ to $d(v)=d$,
\emph{SetCapacity}($(u,v)$,~$c$): updates the capacity of the edge $(u,v)$ to
								  $\capac(u,v)=c$,
\emph{Remove}($u,v$): removes the edge $(u,v)$,
\emph{Insert}($(u,v)$,~$c$): inserts the edge $(u,v)$ with
								capacity $\capac(u,v)=c$.

We obtain \emph{the first dynamic algorithms with subpolynomial update times}
for this problem, which exceed the edge capacities only by a small subpolynomial
factor.
\begin{restatable}{theorem}{ssldynamic}
\label{thm:ssl-dynamic}
	Let $\varepsilon > 0$. Let $G=(V,E,\capac,d)$ be a graph with $n$~vertices
	and $m$~edges that is undergoing the update operations given above. Then for
	the simultaneous source location problem we can maintain:
	\begin{itemize}
		\item A $(1+\varepsilon,n^{o(1)})$-approximation with
			amortized update time $n^{o(1)}/\varepsilon^2$ and preprocessing
			time $O(n^2/\varepsilon^2)$ if all edge capacities are~$1$.
		\item A $(1+\varepsilon,O(\log^4(n)))$-approximation with
			worst-case update time $\tO(1/\varepsilon^2)$ and preprocessing time
			$\tO(m)$ if we only allow the update operation SetDemand($v$,~$d$).
		\item A $(1+\varepsilon,O(\log^2(n)\lg\lg(n)))$-approximation with
			worst-case update time $\tO(1/\varepsilon^2)$ and preprocessing time
			$\poly(n)$ if we only allow the update operation
			SetDemand($v$,~$d$).
		\item A $(1+\varepsilon,1)$-approximate solution with worst-case update
			time $\tO(h^3/\varepsilon^2)$ and preprocessing time
			$O(n^2/\varepsilon^2)$ if $G$ is a tree of height $h$.
	\end{itemize}
\end{restatable}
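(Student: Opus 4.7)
The plan is to reduce all four bullets to the dynamic tree case and then apply Theorem~\ref{thm:well-behaved-dynamic} to an \emph{inverse} of the Andreev--Garrod--Golovin--Maggs--Meyerson DP, as hinted in the introduction. The original DP records at the subtree rooted at~$v$ the \emph{maximum residual demand} that can be pushed across the parent edge as a function of the number $s$ of sources placed inside the subtree; these rows are not monotone in the sense of Definition~\ref{def:well-behaved}. Inverting the roles, for each admissible residual flow $f$ across the parent edge we store the \emph{minimum number of sources} needed in the subtree. Increasing $f$ can only (weakly) decrease the required number of sources, so the rows are monotone. The DP at $v$ combines its children by a $(\min,+)$-convolution, and the capacity of the parent edge only truncates the domain of the resulting row, which is cheap in the piecewise-constant representation of Section~\ref{sec:approx-conv}. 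Since the number of sources lies in $[0,n]$, the value range is $W=\poly(n)$ and each row operation costs $\tO(1/\varepsilon^2)$.

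For the fourth bullet (trees), a SetDemand or SetCapacity update affects only the root-to-leaf path through the touched vertex or edge. Propagating the update bottom-up recomputes one DP row per node on this path, and the $\tO(h^3/\varepsilon^2)$ bound comes from $h$ nodes on the path, each costing $\tO(h^2/\varepsilon^2)$ once the dependence on path length required by Definition~\ref{def:well-behaved} is accounted for. \emph{Insert} and \emph{Remove} reduce to \emph{SetCapacity} with capacities $c$ and~$0$; the preprocessing $O(n^2/\varepsilon^2)$ builds the initial piecewise-constant rows along the tree.

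For bullets two and three (SetDemand-only on general graphs), edge capacities are fixed, so in preprocessing I would compute a single tree cut-sparsifier of $G$: Räcke's hierarchical decomposition in time $\tO(m)$ and quality $O(\log^4 n)$ for bullet~2, and an FRT/Harrelson--Hildrum--Rao-style decomposition in time $\poly(n)$ and quality $O(\log^2 n\log\log n)$ for bullet~3. Each vertex of $G$ is a leaf of the $O(\log n)$-depth sparsifier tree, so a SetDemand update perturbs exactly one leaf-to-root path; invoking the tree algorithm with $h=O(\log n)$ yields worst-case update time $\tO(1/\varepsilon^2)$, while the sparsifier's quality appears as the capacity-dilation factor in the approximation.

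For the first bullet (fully dynamic, unit capacities) I would plug in a dynamic tree cut-sparsifier for unit-capacity graphs with quality $n^{o(1)}$ and $n^{o(1)}$ amortised update time, available from the recent line of work on dynamic expander hierarchies and low-stretch trees. A sparsifier update triggers $n^{o(1)}$ edge changes in a tree of depth $O(\log n)$, each handled by the tree algorithm above at cost $\tO(1/\varepsilon^2)$, giving amortised time $n^{o(1)}/\varepsilon^2$; the $O(n^2/\varepsilon^2)$ preprocessing absorbs the initial sparsifier construction and the DP setup. The main obstacle is interfacing the dynamic sparsifier, whose occasional batch rebuilds can restructure large parts of the tree, with the tree DP so that the induced row recomputations stay within the amortised budget; the other ingredients (DP inversion, sparsifier-based feasibility translation, and the monotone-row data structure from Theorem~\ref{thm:well-behaved-dynamic}) are direct consequences of what has already been established.
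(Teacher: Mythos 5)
Your proposal mirrors the paper's approach: inverting the Andreev et al.\ DP so that each row maps residual flow across the parent edge to the minimum number of sources (a monotonically decreasing function with codomain $[0,n]$), combining children by $(\min,+)$-convolution with truncation by the parent-edge capacity, propagating updates bottom-up along the $O(h)$-length path to the root via Theorem~\ref{thm:well-behaved-dynamic}, and then plugging in the appropriate Räcke-style cut sparsifier (Peng for the $O(\log^4 n)$ bullet, Harrelson--Hildrum--Rao for the $O(\log^2 n\log\log n)$ bullet, and the dynamic expander-hierarchy sparsifier of Goranci--Räcke--Saranurak--Tan for unit capacities) to reduce each bullet to the tree case. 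This is the paper's route.

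Two details are glossed over, though. First, the sparsifier trees can have unbounded degree, but the DP assumes a binary dependency tree; the paper introduces a \emph{binarization} step that replaces each high-degree vertex with a balanced binary gadget whose internal edges have capacity $\infty$, inflating the tree height by an $O(\log n)$ factor. This is not merely cosmetic: it is why the effective height is $O(\log^2 n)$ for the static sparsifiers and $O(\log^{7/6} n)$ for the dynamic one, and these heights enter the $\delta = \Theta(\varepsilon/h)$ rounding parameter and hence the claimed $\tO(1/\varepsilon^2)$ update bound. Second, your claim that \emph{Insert} and \emph{Remove} on a tree reduce to \emph{SetCapacity} with $c$ and $0$ only holds if the tree topology is fixed. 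Insert$(u,v)$ when $v$ is not the root of its component must make $v$ a child of $u$, which forces a re-orientation of the $v$-to-old-root path; SetCapacity cannot express this. The paper handles it with a \emph{re-rooting} procedure that performs $O(h)$ edge removals and re-insertions in an order chosen so that each step only touches two DP cells, which is what makes the fourth bullet's worst-case $\tO(h^3/\varepsilon^2)$ update time go through. Neither issue changes the high-level strategy, but both must be addressed for the stated bounds to be correct.
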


To obtain these results, we use a similar DP approach as the one used by Andreev
et al.~\cite{andreev09simultaneous}. Interestingly, the DP function that we use
essentially computes \emph{the inverse function} of the one used by Andreev et
al. We sketch the details of this approach in Section~\ref{sec:inverting}.
After making these changes, the theorems become straightforward applications of
our data structure for maintaining DPs with monotone rows.

\smallskip
\textbf{Organization of Our Paper.} In Section~\ref{sec:approx-conv} we provide
the details of our condition for DPs with monotone rows. In
Section~\ref{sec:knapsack} we present our results for 0-1~Knapsack which nicely
illustrate the applicability of our black-box framework from
Section~\ref{sec:approx-conv}. We provide a technical overview of our more
involved results for $k$-Balanced Graph Partitioning and for Simultaneous Source
Location in Section~\ref{sec:technical}.  We give an overview of the appendix in
Appendix~\ref{sec:organization-appendix}. In the appendix we also present more
related work and the full proofs of our results.  We present omitted proofs from
the main text in Appendix~\ref{sec:omitted-proofs}.

\smallskip
\textbf{Open Problems and Future Work.} In the future, it will be interesting to
use our framework to obtain more dynamic algorithms based on existing DPs. We
believe that this is interesting both in theory and in practice.
Furthermore, it is intriguing to ask whether our criterion from
Definition~\ref{def:well-behaved} can be generalized.  Indeed, our approach was
built around approximating monotone functions using piecewise constant
functions, which can be viewed as piecewiese degree-0 polynomials.  An
interesting question is whether we can obtain a more general criterion if we
approximate DP rows using pieces of higher-degree polynomials, such as splines.
Results in this direction might be possible; for example, in
Appendix~\ref{sec:generalization} we give a side result for the case when the
functions contain a small number of non-monotonicities and derive a dynamic
algorithm for the $\ell_\infty$-necklace problem.

\section{Maintaining Monotone Dynamic Programming Tables}
\label{sec:approx-conv}
In this section, we introduce our notion of DP tables with \emph{monotone rows}
and the additional technical assumptions that we are making. Then we present our
data structure for efficiently maintaining DP tables that satisfy our
assumptions. In our data structure, we will store the rows of the DP using
piecewise constant functions, which we will introduce first.

\textbf{List Representation of Piecewise Constant Functions.}
Let $t \in \mathbb{R}$,
$W\in[1,\infty)$ and set $\Winfty := \{0\} \cup[1,W] \cup\{+\infty\}$.
A function $f \colon [0,t] \to \Winfty$ is \emph{piecewise constant with $p$ pieces}
if there exist real numbers $0 = x_0 < x_1 < x_2 < \dots < x_p = t$ and numbers
$y_1,\dots,y_p \in \Winfty$ such that on each interval $[x_{i-1},x_i)$, $f$ is
constant and has value $y_i$. More formally, for all $i\in\{1,\dots,p\}$ we have
$f(x) = y_i$ for all real numbers $x\in[x_{i-1},x_i)$ and $f(x_p)=y_p$.  Note
that we need the condition $f(x_p)=y_p$ such that $f$ is defined on the whole
domain.

In the \emph{list representation} of a piecewise constant function $f$, we use a
doubly linked list to store the pairs $(x_1,y_1),\dots,(x_p,y_p)$. We also store
the pairs $(x_i,y_i)$ in a binary search tree that is sorted by the
$x_i$-values, which allows us to compute a function value $f(x)$ in time
$O(\lg p)$ for all $x\in[0,t]$.  In the following, we assume that all piecewise
constant functions we consider are stored in the list representation with an
additional binary search tree.

One of the main observations we use is that many operations on
piecewise constant functions are efficient if there are only few pieces. 
The following lemma shows that several operations can be computed in time almost
linear in the number of pieces of the function, rather than in time depending on
the size of the domain of $f$.\footnote{We note that computing the operations
	themselves can be done in linear time. However, since we also store the
	pairs $(x_i,y_i)$ of the list representations in a binary search tree, the
	running times in the lemma include an additional logarithmic factor.}
For $\delta>0$ and $y\in \Winfty$, we write $\lceil y\rceil_{1+\delta}$ to
denote the smallest power of $1+\delta$ that is at least $y$, i.e.,
$\lceil y\rceil_{1+\delta} = \min\{(1+\delta)^i : (1+\delta)^i \geq y,\,\, i\in\mathbb{N}\}$;
we follow the convention that $\lceil 0 \rceil_{1+\delta} = 0$ and
$\lceil \infty \rceil_{1+\delta} = \infty$.
\begin{lemma}
\label{lem:operations}
	Let $t \in \mathbb{R}$ and
	$c\in\mathbb{R}_+$.
	Let $g, h : [0,t] \to \Winfty$ be monotone and piecewise constant functions
	with $p_g$ and $p_h$ pieces, resp. Then we can compute the following
	functions:
	\begin{itemize}
		\item $f_{\min}(x):=\min\{g(x),h(x)\}$ with at most
                $p_g+p_h$ pieces in time $O((p_g+p_h)\log(p_g+p_h))$;
		\item $f_{\shift}(x):=g(x-c)$ for $x\ge c$,
                $f_{\shift}(x)=g(0)$ for $x<c$
                with at most $p_g$
                pieces in time $O(p_g \log(p_g))$;
		\item $f_{\add}(x) := g(x)+h(x)$, with at most
                $p_g+p_h$ pieces in time $O((p_g+p_h)\log(p_g+p_h))$;
		\item $f_{\round}(x) := \lceil g(x) \rceil_{1+\delta}$ for $\delta > 0$
                with at most $2 + \lceil \log_{1+\delta}(W) \rceil$ pieces
				in time $O(p_g \log(p_g))$.
	\end{itemize}
\end{lemma}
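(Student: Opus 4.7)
Each of the four operations can be carried out by a single sweep over the breakpoint lists of the inputs, followed by a rebuild of the auxiliary BST; the $\log$ factor in each running time comes solely from rebuilding this BST.

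\textbf{Sweep for $f_{\min}$ and $f_{\add}$.} I would merge the sorted breakpoint lists of $g$ and $h$ in $O(p_g + p_h)$ time, so that the combined sequence $0 = z_0 < z_1 < \dots < z_q = t$ with $q \leq p_g + p_h$ partitions $[0,t]$ into subintervals on each of which \emph{both} $g$ and $h$ are constant. On each subinterval, evaluate the pointwise $\min$ (resp.\ $+$); since every breakpoint of $f_{\min}$ (resp.\ $f_{\add}$) must coincide with a breakpoint of $g$ or $h$, the result is piecewise constant with at most $p_g + p_h$ pieces. A single pass merges consecutive subintervals carrying equal values. Inserting the $\leq p_g + p_h$ surviving breakpoints into a fresh BST costs $O((p_g+p_h)\log(p_g+p_h))$, which dominates. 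Note monotonicity is not needed for these two bounds.

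\textbf{Shift.} To obtain $f_{\shift}$, I walk over the list $(x_1,y_1),\dots,(x_{p_g},y_{p_g})$ of $g$ and output the list $(c+x_1,y_1),\dots,(c+x_{p_g},y_{p_g})$ after prepending the piece with value $g(0)$ on $[0,c)$; any breakpoints that fall beyond $t$ are truncated, and the first original piece of $g$ is absorbed into the prepended piece (so no net increase in pieces beyond $p_g$). The list is built in $O(p_g)$ time; building the BST adds the $O(p_g\log p_g)$ factor.

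\textbf{Rounding.} For $f_{\round}$, I walk over the pieces of $g$ in order and replace each $y_i$ by $\lceil y_i\rceil_{1+\delta}$, then merge consecutive pieces that now carry the same value. Monotonicity of $g$ implies that the rounded values are nondecreasing (or nonincreasing), so distinct rounded values appear in at most as many pieces as there are distinct rounded values in the image. The image of $\lceil \cdot \rceil_{1+\delta}$ restricted to $\Winfty$ is contained in $\{0,\infty\} \cup \{(1+\delta)^i : i\in\mathbb{N},\,(1+\delta)^i\leq (1+\delta)^{\lceil\log_{1+\delta}W\rceil}\}$, a set of size at most $2+\lceil\log_{1+\delta}W\rceil$, which bounds the number of output pieces. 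The sweep costs $O(p_g)$ and the BST rebuild costs $O(p_g\log p_g)$.

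\textbf{Expected obstacle.} Nothing in the argument is deep; the only care needed is (i) verifying the piece-count bound for $f_{\round}$, which relies crucially on monotonicity (without it, the rounded function could oscillate and reintroduce each value many times), and (ii) the bookkeeping for boundary cases in $f_{\shift}$ (the prepended constant piece on $[0,c)$ and truncation at $t$) so that the output remains a valid piecewise constant function on $[0,t]$ with at most $p_g$ pieces.
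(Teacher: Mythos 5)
Your proof is correct and follows essentially the same route as the paper's: take the union of breakpoints, apply the pointwise operation, prune equal consecutive pieces, and use monotonicity only to bound the number of distinct rounded values for $f_{\round}$; the $\log$-factor in every running time is attributed to rebuilding the BST index, exactly as in the paper. Your sweep using a merge of the two already-sorted lists is a small tidying of the paper's version (which sorts the union and then looks each breakpoint up by binary search), and your explicit remark that monotonicity is irrelevant to the $f_{\min}$ and $f_{\add}$ piece bounds is a correct observation that the paper leaves implicit.
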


\noindent
Note that if we set $\tilde{f} = \lceil f \rceil_{1+\delta}$
then $\tilde{f}$ is a $(1+\delta)$-approximation of $f$ in the following sense.
For $\alpha > 1$, we say that a function $\tilde{f} \colon [0,t] \to
\Winfty$ \emph{$\alpha$-approximates} a function $f \colon [0,t] \to \Winfty$ if
for all $x\in[0,t]$,
\begin{align}
\label{eq:approximation}
	f(x) \leq \tilde{f}(x) \leq \alpha \cdot f(x).
\end{align}
Furthermore, if $f$ is monotone then the rounded function
$\tilde{f}$ contains at most $O(\lg_{1+\delta}(W))$ pieces.  This will be
crucial later because this ensures that, if we perform a single rounding
operation for each row of our DP table, the resulting function will have few
pieces and operations on the function can be performed efficiently.

Next, consider functions $f_1,f_2 : [0,t] \to \Winfty$.  A function
$f:[0,t]\to\Winfty$ is the \emph{$(\min,+)$-convolution $f_1 \minconv f_2$}
if for all $x\in[0,t]$,
   $f(x)
	= (f_1 \minconv f_2)(x)
	:= \min_{\bar{x}\in[0,x]} f_1(\bar{x}) + f_2(x-\bar{x})$.
Such convolutions are highly useful for the computation of many DPs.
The following lemma shows that we can efficiently compute the convolution of
piecewise constant functions.
\begin{lemma}
\label{lem:convolution}
	Let $f_1,f_2 : [0,t] \to \Winfty$ be piecewise constant functions with at
	most $p$ pieces and assume that one of them is monotonically decreasing.
	Then we can compute the function $f : [0,t] \to \Winfty$ with
	$f = f_1 \minconv f_2$ in time $O(p^2 \lg p)$ and $f$ is a piecewise
	constant function with $O(p^2)$ pieces.
	Furthermore, after computing $f$, for any $x\in[0,t]$ we can return a value
	$\bar{x}^*\in[0,t]$ such that $f(x) = f_1(\bar{x}^*) + f_2(x-\bar{x}^*)$ in
	time $O(\lg p)$.
\end{lemma}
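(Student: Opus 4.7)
\emph{The plan is as follows.} Assume without loss of generality that $f_2$ is monotonically decreasing, and write $f_1$ as having constant value $a_i$ on each interval $[u_{i-1}, u_i)$ (for $i = 1, \dots, p$) and $f_2$ as having constant value $b_j$ on $[v_{j-1}, v_j)$. The starting point is the following structural observation: for any fixed $x$, the inner minimization of $f_1(\bar x) + f_2(x - \bar x)$ over $\bar x$ restricted to one piece $[u_{i-1}, u_i) \cap [0,x]$ of $f_1$ reduces to $a_i + f_2(x - u_{i-1})$, attained at the \emph{leftmost} endpoint, because $\bar x \mapsto f_2(x - \bar x)$ is monotonically increasing in $\bar x$ when $f_2$ is decreasing. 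Hence
$$
    f(x) \;=\; \min_{i \,:\, u_{i-1} \leq x} \bigl( a_i + f_2(x - u_{i-1}) \bigr),
$$
and in particular the breakpoints of $f$ lie in the $p^2$-element set $\{u_{i-1} + v_{j-1}\}_{i,j}$, so $f$ is piecewise constant with $O(p^2)$ pieces.

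To compute $f$ in $O(p^2 \log p)$ time I would use a sweep-line over $x$. For each $i \in \{1,\dots,p\}$, regard $g_i(x) := a_i + f_2(x - u_{i-1})$ as $+\infty$ for $x < u_{i-1}$ and as a step function that transitions to value $a_i + b_j$ at $x = u_{i-1} + v_{j-1}$. Enumerate the resulting $O(p^2)$ value-change events and sort them by $x$-coordinate in $O(p^2 \log p)$ time. Maintain a min-priority queue $Q$ holding the current value of each $g_i$ together with its index~$i$. Process the events in order: each event triggers one deletion and one insertion in $Q$ at cost $O(\log p)$, after which $\min Q = f(x)$. Recording the minimum value and its argmin $i^*$ after each event, then coalescing adjacent events with the same $(\min Q, i^*)$, yields the doubly linked list representation of $f$ with $O(p^2)$ pieces; the accompanying binary search tree is built in a further $O(p^2 \log p)$ time.

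For the witness query I would store, alongside each piece of $f$, the index $i^*$ that achieved the minimum during the sweep. Given a query point $x$, the BST locates the relevant piece in $O(\log p)$ time, and the stored $i^*$ yields $\bar x^* = u_{i^*-1}$, which by the displayed identity satisfies $f(x) = f_1(\bar x^*) + f_2(x - \bar x^*)$. The main obstacle, and the reason I avoid the straightforward route of repeatedly invoking the $\min$ operation of Lemma~\ref{lem:operations} in a balanced binary merge tree over $g_1,\dots,g_p$, is that such a merge would give only an $O(p^2 \log^2 p)$ bound; the sweep-line exploits the fact that all $g_i$ are shifts of the same $f_2$, so the $O(p^2)$ value changes can be orchestrated by a single global priority queue rather than by $\Theta(\log p)$ rounds of pairwise merging.
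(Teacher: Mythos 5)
Your proof is correct and takes a genuinely different route from the paper's. You decompose $f = f_1 \minconv f_2$ into the lower envelope $\min_i g_i$ of $p$ shifted-and-offset copies of $f_2$ — one per piece of $f_1$, with $g_i(x) = a_i + f_2(x - u_{i-1})$ — exploiting that, because $f_2$ is decreasing, the optimal $\bar{x}$ within any fixed $f_1$-piece sits at its left endpoint. You then compute the envelope by an $x$-sweep with a min-priority queue over the $p$ functions $g_i$, paying $O(\log p)$ per value-change event. The paper's proof instead sorts the $p^2$ pairs $(y^1_i, y^2_j)$ by the sum $y^1_i + y^2_j$ and greedily assigns function values to still-unassigned $x$-intervals from right to left; correctness rests on the claim that the smallest-sum pair whose ``range'' $[x^1_{i-1}+x^2_{j-1}, \cdot)$ covers $x$ gives $f(x)$. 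Both arguments are $O(p^2 \log p)$ with sorting as the bottleneck. Your decomposition has the advantage of exposing the structure of $f$ as a minimum of $p$ monotone functions — in fact you could equivalently have invoked Lemma~\ref{lem:multimin} (stated later in the paper and proved independently of Lemma~\ref{lem:convolution}) in place of your hand-rolled priority-queue sweep, since each $g_i$ is a monotonically decreasing piecewise constant function with $O(p)$ pieces. The paper's value-ordering strategy, by contrast, is the one that generalizes to the non-monotone setting of Lemma~\ref{lem:convolution-general}, where the envelope-of-monotone-shifts viewpoint no longer applies. For the witness query, your bookkeeping (store only the argmin index $i^*$ per piece and return $\bar{x}^* = u_{i^*-1}$) is cleaner than the paper's, which stores both piece indices and performs an extra adjustment of $\bar{x}^*$; under your WLOG assumption that $f_2$ is the decreasing one, $\bar{x}^* = u_{i^*-1}$ is indeed always a valid witness.
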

Now observe that Lemma~\ref{lem:convolution} has a drawback for our approach:
The number of pieces (i.e., the complexity of the functions) grows quadratically
with every application. An important property which can be used to mitigate this
issue is that the result of the convolution is still a monotone function, as we show in
Lemma~\ref{lem:monotone-convolution-is-monotone} in the appendix.
Later, to keep the number of pieces in our functions small, after each
convolution that we perform via Lemma~\ref{lem:convolution} (and that might grow
the number of pieces quadratically), we perform a rounding operation
$\lceil\cdot\rceil_{1+\delta}$ (see Lemma~\ref{lem:operations}).  This loses a
factor $1+\delta$ in approximation but guarantees that the resulting function
has $O(\lg_{1+\delta}(W))$ pieces. This will be crucial to ensure
that our functions have only few pieces.

\textbf{Maintaining DPs With Monotone Rows.}
Next, we introduce our DP scheme formally.  We consider DP tables with a finite
set of rows~$\I$ and a set of columns~$\J$, with entries taking values in
$\Winfty$.	We will consider DP tables as functions $\DP \colon \I \times \J \to
\Winfty$.\footnote{
	Even though our definition may suggest that we only consider two-dimensional
	DP tables, we do not require an order on $\I$ and we allow $\I$ to be any
	finite set. For example, in Section~\ref{sec:balanced-partitioning} we will
	set $\I$ to 3-tuples corresponding to the parameters of a four-dimensional DP.
} Further, we will associate the $i$'th row of the DP with a function
$\DP(i,\cdot) \colon \J \to \Winfty$, and we store each such function
$\DP(i,\cdot)$ using piecewise constant functions from above.

Next, we introduce the dependency graph for the \emph{rows} of our DP. More
concretely, the \emph{dependency graph} $D=(\I,E_D)$ is a directed graph that
has the rows~$\I$ as vertices and a directed edge $(i',i)$ between two rows if
for some columns $j,j'\in\J$ the entry $\DP(i',j')$ is required to compute
$\DP(i,j)$. We write $\In(i) = \{i'\in\I \colon (i',i)\in
E_D\}$ to denote the set of rows~$i'$ that are required to compute row~$i$.
For the rest of the paper we will assume that the dependency graph
is a DAG, which is the case for all applications that we study.
We will also write $\Reach(i)$ to denote the set of vertices that are reachable
from row~$i$ in $D$.

Since we assume that the dependency graph is a DAG, we can compute the $i$'th
DP row as soon as we have computed the solutions for the DP rows in $\In(i)$. 
We assume that this is done via a \emph{procedure} $\Procedure_i$ that takes as input the
DP rows $\DP(i',\cdot)$ for all $i'\in\In(i)$ and returns the row
$\DP(i,\cdot) = \Procedure_i( \{ \DP(i',\cdot) \colon i'\in\In(i)\} )$.

Next, we come to our condition which encodes when our scheme applies.  In the
definition and for the rest of the paper, we write $\ADP$ to refer to an
approximate DP table, which approximates the exact DP table $\DP$. Let
$\beta>1$. We say that $\ADP$ \emph{$\beta$-approximates} $\DP$ if
$\DP(i,j)\leq \ADP(i,j)\leq \beta\DP(i,j)$ for all $i\in\I,j\in\J$.

\begin{definition}
\label{def:well-behaved}
A DP table is $(h,\alpha,p)$-\emph{well-behaved} if it satisfies the following
conditions:
\begin{enumerate}
	\item (Monotonicity:) For all $i\in\I$, the function $\DP(i,\cdot)$ is monotone.
	\item (Dependency graph:) The dependency graph is a DAG
		and $\abs{\Reach(i)} \leq h$ for all $i\in\I$.
	\item (Sensitivity:) 
		Suppose $\beta>1$ and for all $i'\in\In(i)$, we obtain a $\beta$-approximation
		$\ADP(i',\cdot)$ of $\DP(i',\cdot)$.
		Then applying $\Procedure_i$ on the $\ADP(i',\cdot)$ yields
		a $\beta$-approximation of $\DP(i,\cdot)$, i.e.,
		\begin{align*}
			\DP(i,\cdot)
				\le \Procedure_i(\{ \ADP(i',\cdot) \colon i'\in\In(i) \} )
				\le \beta\cdot\DP(i,\cdot).
		\end{align*}
	\item (Pieces:) For each procedure $\Procedure_i$ there exists an approximate
		procedure $\tilde{\Procedure_i}$ such that:\\
		(a)~$\tilde{\Procedure}_i(\{ \ADP(i',\cdot) \colon i'\in\In(i) \} )$ is an
		$\alpha$-approximation of $\Procedure_i(\{ \ADP(i',\cdot) \colon i'\in\In(i) \})$, \\
		(b)~$\tilde{\Procedure}_i$ can be computed as the composition of
		a constant number of operations from Lemma~\ref{lem:operations} and and
		at most one application of Lemma~\ref{lem:convolution}, and \\
		(c)~$\tilde{\Procedure_i}$ returns a monotone piecewise constant
		function with at most $p$~pieces.
\end{enumerate}
\end{definition}
The definition is motivated in the following way: our operations on the
piecewise constant functions have efficient running times when the functions are
monotone and have few pieces. This is ensured by Properties~(1), 4(b), and
4(c).  Next, rounding errors cannot compound too much if each row can only reach
$h$~other rows and the sensitivity condition is satisfied. This is ensured by
Properties~(2), (3), and 4(a).

Even though the definition might look slightly technical at first glance, it
applies in many settings. In particular, Property~(2) is satisfied when the
dependency graph is a rooted tree of height~$h$ in which all edges point towards
the root; this is the case in all of our applications.
The other conditions are immediately satisfied by our DP for
0-1~Knapsack in Section~\ref{sec:knapsack} %
and the DP for simultaneous source location in
Section~\ref{sec:simultaneous-source-location}.  However, our DP for balanced
graph partitioning violates Property~(4b) of Definition~\ref{def:well-behaved}.
Hence, we will also consider a weaker assumption in
Section~\ref{sec:dps-on-trees} which, however, will not allow for nice black-box
results, such as Theorems~\ref{thm:well-behaved-static}
and~\ref{thm:well-behaved-dynamic} below.

\smallskip
Next, we state our main results. They imply that we obtain static
$(1+\varepsilon)$-approximation algorithms running in near-linear time and
space for $(\tO(1),\ln(1+\varepsilon)/\tO(1),\tO(1))$-well-behaved DPs. They
also show that under this assumption, we can dynamically maintain
$(1+\varepsilon)$-approximate DP solutions with polylogarithmic update times.

Our main theorem for static algorithms is as follows.
\begin{theorem}
\label{thm:well-behaved-static}
	Consider an $(h,\alpha,p)$-well-behaved DP. Then we can compute an
	approximate DP table $\ADP$ which $\alpha^{h+1}$-approximates $\DP$ in time and
	space $O(\abs{\I} \cdot p^2\log(p))$.
\end{theorem}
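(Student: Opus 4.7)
The plan is to compute the rows of $\ADP$ one-by-one, in a topological order of the dependency DAG (which exists by Property~(2)), so that when row $i$ is processed every $\ADP(i',\cdot)$ with $i'\in\In(i)$ has already been built. For each such $i$ we simply set
\begin{align*}
\ADP(i,\cdot) := \tilde{\Procedure}_i\bigl(\{\ADP(i',\cdot) : i'\in\In(i)\}\bigr),
\end{align*}
where $\tilde{\Procedure}_i$ is the approximate procedure guaranteed by Property~(4). Each row is stored as a list-represented piecewise constant function, and nothing else is retained.

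First I would bound the per-row cost. By Property~4(b), $\tilde{\Procedure}_i$ is a composition of a constant number of operations from Lemma~\ref{lem:operations} together with at most one application of Lemma~\ref{lem:convolution}. By Property~4(c), every input to $\tilde{\Procedure}_i$ has at most $p$ pieces, and each intermediate function produced inside the composition remains of polynomial size in $p$; hence each operation from Lemma~\ref{lem:operations} contributes $O(p\log p)$ and the single convolution contributes $O(p^2\log p)$, for a per-row cost of $O(p^2\log p)$. Summed over $\I$ this gives the claimed total running time $O(|\I|\cdot p^2\log p)$. For space, each stored row occupies $O(p)$ space and the temporary $O(p^2)$-piece function produced inside the convolution step can be discarded after the row is finalized, so the aggregate space is also within the stated bound.

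To bound the approximation ratio, I would induct on the dependency depth $\ell(i)$ of row $i$, defined as the length of the longest directed chain in the dependency DAG ending at $i$; by Property~(2), $\ell(i)\le h$. The inductive claim is that $\ADP(i,\cdot)$ is an $\alpha^{\ell(i)+1}$-approximation of $\DP(i,\cdot)$. The base case $\ell(i)=0$ corresponds to $\In(i)=\emptyset$, where $\Procedure_i$ produces exactly $\DP(i,\cdot)$ and Property~4(a) directly yields that $\ADP(i,\cdot)$ is an $\alpha$-approximation. For the inductive step, set $\beta := \alpha^{\ell(i)}$. By the inductive hypothesis each $\ADP(i',\cdot)$ with $i'\in\In(i)$ is an $\alpha^{\ell(i')+1}$-approximation of $\DP(i',\cdot)$, and since $\ell(i')\le\ell(i)-1$, it is also a $\beta$-approximation in the sense of~\eqref{eq:approximation}. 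Property~(3) then implies that $\Procedure_i(\{\ADP(i',\cdot)\})$ is a $\beta$-approximation of $\DP(i,\cdot)$, while Property~4(a) implies that $\ADP(i,\cdot)=\tilde{\Procedure}_i(\{\ADP(i',\cdot)\})$ is an $\alpha$-approximation of $\Procedure_i(\{\ADP(i',\cdot)\})$. Composing the two multiplicative approximations yields $\alpha\beta=\alpha^{\ell(i)+1}$, closing the induction and giving the claimed overall bound $\alpha^{h+1}$.

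The main subtlety I expect is cleanly combining the two sources of approximation error: the already-approximated inputs (handled by Property~(3)) and the approximate procedure itself (handled by Property~4(a)). It is essential that Property~(3) is a \emph{sensitivity} statement guaranteeing that the exact procedure does not further amplify a multiplicative error, so each level of the induction contributes exactly one fresh factor of $\alpha$ (coming from $\tilde{\Procedure}_i$); this is precisely what lets the exponent match the bounded depth $h$ rather than something depending on $|\I|$.
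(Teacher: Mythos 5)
Your proposal is correct and follows essentially the same route as the paper's own proof: process rows in topological order, bound the per-row cost by $O(p^2\log p)$ via Property~4(b)–(c) together with Lemmas~\ref{lem:operations} and~\ref{lem:convolution}, and prove by induction on the level (longest chain ending at a row, at most $h$ by Property~(2)) that each row is an $\alpha^{\ell+1}$-approximation, combining Property~(3) with Property~4(a) at each step. Your treatment of the space bound is in fact slightly more explicit than the paper's, but the argument is the same.
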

Later, we will apply the theorem to DPs with dependency trees of logarithmic
heights $h=O(\lg n)$, we will set the
approximation ratio to $\alpha=\ln(1+\varepsilon)/(h+1)$, and
the number of pieces
to $p=\polylog(W)$. This will yield our desired algorithms with near-linear
running time $\tO(\abs{\I})$ and space usage. Note that this is a big
improvement upon the brute-force running times and space usages of
$\Omega(\abs{\I}\cdot\abs{\J})$.

The proof of the theorem follows from observing that when moving from one vertex
to another in the dependency graph, we lose a multiplicative $\alpha$-factor in
the approximation ratio; as each vertex can only reach $h$~other vertices, this
will compound to at most $\alpha^{h+1}$. Combining the assumptions on the
functions $\tilde{\Procedure}_i$ and the results from
Lemmas~\ref{lem:operations} and~\ref{lem:convolution}, we get that each row
$\ADP(i,\cdot)$ can be computed in time $O(p^2\log(p))$ which gives
$O(\abs{I}\cdot p^2\log(p))$ total time.

We also give the following extension to the dynamic setting which shows that if
one of the DP rows changes, we can update \emph{the entire table} efficiently.
\begin{theorem}
\label{thm:well-behaved-dynamic}
	Consider an $(h,\alpha,p)$-well-behaved DP and suppose that row~$i$ is
	changed. Then we can update our approximate DP~table
	$\ADP$ such that after time $O(h \cdot p^2\log(p))$ it is an
	$\alpha^{h+1}$-approximation of $\DP$.
\end{theorem}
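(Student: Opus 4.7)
The plan is to recompute only those approximate rows that could be affected by the change of row~$i$, namely row~$i$ itself and the rows that transitively depend on it. Because the dependency graph~$D$ is a DAG whose edges point from a dependency to the row that uses it, a change to $\DP(i,\cdot)$ alters $\DP(j,\cdot)$ only for $j\in\{i\}\cup\Reach(i)$; for every other row, both the exact value and its stored approximation remain as they were. By Property~(2) of Definition~\ref{def:well-behaved}, $|\Reach(i)|\le h$, so at most $h+1$ rows must be refreshed.

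Algorithmically, I would process $\{i\}\cup\Reach(i)$ in topological order along~$D$. For each such row~$j$, I invoke $\tilde{\Procedure}_j$ on the currently stored inputs $\{\ADP(i',\cdot):i'\in\In(j)\}$ and overwrite $\ADP(j,\cdot)$ with the result. By Property~(4b), $\tilde{\Procedure}_j$ is the composition of a constant number of operations from Lemma~\ref{lem:operations} together with at most one invocation of Lemma~\ref{lem:convolution}, and by Property~(4c) every intermediate piecewise constant function has at most $p$ pieces. Hence the convolution dominates, each call to $\tilde{\Procedure}_j$ costs $O(p^2\log p)$, and summing over $O(h)$ rows yields the claimed running time $O(h\cdot p^2\log p)$.

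For the approximation bound I would argue by induction along the topological order. Let $d_j$ be the length (in edges) of the longest directed path ending at~$j$ in~$D$; since any such path of length $\ell$ contributes $\ell$ distinct vertices to the $\Reach$-set of its starting vertex, the hypothesis $|\Reach(\cdot)|\le h$ forces $d_j\le h$. I claim that after processing row~$j$, the stored $\ADP(j,\cdot)$ is an $\alpha^{d_j+1}$-approximation of the new $\DP(j,\cdot)$. Every input $\ADP(i',\cdot)$ with $i'\in\In(j)$ is either untouched---in which case $\DP(i',\cdot)$ has not changed either and $\ADP(i',\cdot)$ is still the $\alpha^{d_{i'}+1}$-approximation it was before the update---or has just been refreshed to that same quality by the inductive hypothesis. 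Property~(3) (sensitivity) applied with $\beta=\alpha^{\max_{i'\in\In(j)}d_{i'}+1}$ then gives a $\beta$-approximation from $\Procedure_j$, and Property~(4a) loses one more factor of $\alpha$, so the resulting $\ADP(j,\cdot)$ is an $\alpha^{d_j+1}$-approximation. Together with the rows outside $\{i\}\cup\Reach(i)$---which by the same argument applied before the update already satisfy the bound---the whole table is an $\alpha^{h+1}$-approximation of the new $\DP$.

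The main subtlety, and the reason topological processing is essential, is that several inputs to $\tilde{\Procedure}_j$ come from rows whose exact values did not change but whose stored approximations were computed before the update; the argument hinges on observing that these stale approximations are still valid at the quality the induction demands, because their true targets have not moved. Once this is noted, the dynamic proof follows the same compounding as the static one and nothing else is harder.
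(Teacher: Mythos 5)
Your proposal is correct and follows essentially the same route as the paper: only the at most $h$ rows reachable from $i$ are recomputed, each via $\tilde{\Procedure}_j$ in time $O(p^2\log p)$, with the approximation loss compounding along the dependency DAG exactly as in the static case, which is all the paper's terse proof says (it simply inherits the per-row time and the $\alpha^{\ell+1}$ induction from the proof of Theorem~\ref{thm:well-behaved-static}). The only minor slip is your claim that every intermediate function has at most $p$ pieces --- after the single convolution they can have $O(p^2)$ pieces, as the static proof notes --- but this does not affect the $O(p^2\log p)$ per-row bound or anything else in your argument.
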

As before, we will typically use the theorem with $h=O(\lg n)$,
$\alpha=\ln(1+\varepsilon)/(h+1)$ and $p=\polylog(W)$. This will then result in
our desired polylogarithmic update times.  Note that this is a significant
speedup compared to storing the DP tables using two-dimensional arrays: in that
case even updating \emph{a single row} would take time $\Omega(\abs{\J})$, which
in many applications would already be linear in the size of the input.

The theorem follows from observing that after a row $i$ changes, we only have to
update those rows which can be reached from $i$ in the dependency graph. But
these can be at most~$h$ and each of them can be updated in time $O(p^2\log(p))$ by
Lemmas~\ref{lem:operations} and~\ref{lem:convolution}.

\section{Fully Dynamic Knapsack}
\label{sec:knapsack}

In \emph{0-1 knapsack}, the input consists of a knapsack size $B\in\mathbb{R}_+$ and a
set of $n$~items, where each item $i\in[n]$ has a weight $w_i\in\mathbb{R}_+$
and a price $p_i\in[1,\infty)$. The goal is to find a set of items $I$ that
maximizes $\sum_{i\in I} p_i$ while satisfying the constraint $\sum_{i\in I} w_i
\leq B$. For a set of items $I\subseteq[n]$, we refer to the sum $\sum_{i\in I}
w_i$ as \emph{the weight of $I$}.

For the rest of this section we set $W=\sum_i p_i$ and $t=\sum_{i\in[n]} w_i$. 

Next, we first derive a dynamic algorithm with update time
$\tO(\lg^3(n)\lg^2(W)/\varepsilon^2)$ which is based on our framework for DPs
with monotone rows.  Then we will use this algorithm as a subroutine to obtain a
faster algorithm with update time $\tO(\lg^2(nW)/\varepsilon^2)$ in
Section~\ref{sec:knapsack-small-instance}; this will prove
Theorem~\ref{thm:knapsack}.
\knapsack*

Below we will also show that we can return the maintained solution~$I$ in time
$O(\abs{I})$ and that we answer queries whether a given item $i\in[n]$ is
contained in $I$ in time $O(1)$. This matches the query times
of~\cite{eberle21fully}.

\subsection{Knapsack via Convolution of Monotone Functions}
\label{sec:knapsack-chan}
First, we give a brief recap of the knapsack approach by
Chan~\cite{chan18approximation}. We consider the more general problem of
approximating the function $f_J \colon [0,t] \to \mathbb{R}_+$, where
$J\subseteq [n]$ is a set of items and
\begin{align}
\label{eq:knapsack-general}
	f_J(x)
	= \max\left\{ \sum_{i\in I} p_i \colon
   				\sum_{i\in I} w_i \leq x, \enspace
				I \subseteq J \right\}.
\end{align}
Intuitively, the value $f_J(x)$ corresponds to the best possible
knapsack solution if we can only pick items which are contained in $J$ and if
the weight of the solution can be \emph{at most}~$x$. Therefore, $f_{[n]}(B)$
corresponds to the optimum solution of the global knapsack instance.

Note that for each $J\subseteq[n]$, $f_J(x)$ is a monotonically increasing
piecewise constant function: Indeed, consider $x'\leq x$. Any solution
$I\subseteq J$ that is feasible for $x'$ (i.e., the weight of $I$ is at
most~$x'$) is also a feasible solution for~$x$.  Thus, $f_J(x')\leq f_J(x)$ and,
therefore, $f_J$ is monotonically increasing. Furthermore, $f_J$ is piecewise
constant since each function value $f_J(x)$ corresponds to a solution
$I\subseteq J$ and the number of choices for $I\subseteq J$ is finite.

Next, note that if we have two disjoint subsets $J_1,J_2\subseteq [n]$ then it
holds that $f_{J_1 \cup J_2}$ is the $(\max,+)$-convolution of $f_{J_1}$ and
$f_{J_2}$, i.e., for all $x$ it holds that
\begin{align*}
	f_{J_1\cup J_2}(x)
	= \max_{\bar{x}} f_{J_1}(\bar{x}) + f_{J_2}(x-\bar{x}).
\end{align*}
This can be seen by observing that for each $x$, the optimum solution $I$ for
the instance $J_1\cup J_2$ with weight at most $x$ can be split into two
disjoint solutions $I_1\subseteq J_1$ and $I_2\subseteq J_2$ such that $I_1$ has
weight $\bar{x}$ and $I_2$ has knapsack weight at most $x-\bar{x}$ (for suitable
choice of $\bar{x}\in[0,x]$).  We conclude that if we have two knapsack
instances over disjoint sets of items $J_1$ and $J_2$, then we compute the
solution for the knapsack instance with items $J_1\cup J_2$ by computing the
$(\max,+)$-convolution of $f_{J_1}$ and $f_{J_2}$.

\textbf{The Exact DP.}
The previous paragraphs imply a simple way of computing the exact solution of a
knapsack instance: For each item $i\in[n]$, compute the function $f_{\{i\}}$ and
then recursively merge the solutions for sets of size $2^j$,
$j=1,\dots,\lceil\lg n\rceil$, by computing $(\max,+)$-convolutions until we
have computed the global solution $f_{[n]}$. We perform the recursive merging of
the solutions using a balanced binary tree, resulting in a tree of height $O(\lg n)$.

More concretely, we build a rooted balanced binary tree $T$ with $n$ leaf nodes,
where all edges point towards the root.  We have one leaf $f_{\{i\}}$ for each
item~$i$.  Each internal node~$u$ in~$T$ is associated with a function $f_{J_u}$
as per Equation~\eqref{eq:knapsack-general}, where $J_u$ is the set of all items
in the subtree rooted at~$u$. To simplify notation, we will also refer to
$f_{J_u}$ as $f_u$.

Now we consider the exact computation of the DP. This will reveal the
procedures $\Procedure_i$ from Definition~\ref{def:well-behaved}.
As base case, for each $i\in[n]$, the $i$'th leaf of $T$ contains the function
$f_{\{i\}}$, which is a piecewise constant function that has value $0$ on the
interval $[0,w_i)$ and value $p_i$ on the interval $[w_i,t]$.

Next, in each internal node~$u$ of $T$ with children $u_1$ and $u_2$, we set
$f_u$ to the $(\max,+)$-convolution of $f_{u_1}$ and $f_{u_2}$. By induction it
can be seen that for every node $u$ in $T$, it holds that $J_u=J_{u_1}\cup
J_{u_2}$ and thus $J_u$ is the set of all items whose corresponding leaf is
contained in the subtree $T_u$. Hence, for the root~$r$ of $T$ it holds that
$f_r = f_{[n]}$ and $f_r(B)$ is the optimal solution for the global
knapsack instance.

In the following, we check that our DP satisfies Properties~(1--3) of
Definition~\ref{def:well-behaved}.

First, note that the tree~$T$ from above is also the dependency graph of our DP.
Hence, our DP has a row for every vertex of $T$ and thus $O(n)$~rows in total.
Furthermore, since $T$ has height~$O(\lg n)$ and all edges point towards the
root, every vertex can reach at most $h=O(\lg n)$~vertices. Hence, Property~(2)
of Definition~\ref{def:well-behaved} is satisfied.

Second, we observe that in both cases above, the function $f_{\{i\}}$ and $f_u$ which
correspond to the rows of our DP table are monotonically increasing (we argued
this above for all functions $f_J$). Thus, Property~(1) is satisfied.

Third, observe that Property~(3) is also satisfied since
$(\max,+)$-convolution satisfies our sensitivity condition.

We conclude that the first three properties of Definition~\ref{def:well-behaved}
are satisfied. Unfortunately, this does not yet imply that we can obtain
efficient algorithms: Note that if we compute the exact DP bottom-up then we
compute one convolution per node and thus the total running time of this
approach is $O(n\cdot t(p))$, where $p$~is an upper bound on the number of
pieces in our functions and $t(p)$ is the time it takes to compute a
$(\max,+)$-convolution of two functions with $p$~pieces. However, observe that
computing the convolutions can potentially take a large amount of time because
the number of pieces of the functions might grow quadratically after each
convolution (see Lemma~\ref{lem:convolution}).  We will resolve this issue
below using rounding.

\textbf{The Approximate DP.}
Next, we consider approximations which will reveal the functions
$\tilde{\Procedure}_i$ from Definition~\ref{def:well-behaved}.

First, note that we need to compute $(\max,+)$-convolutions of monotonically
increasing functions efficiently. We observe that this can be done efficiently
using our subroutine from Lemma~\ref{lem:convolution} for the
$(\min,+)$-convolution of monotonically decreasing functions:  Indeed, suppose
that $f$ is the $(\max,+)$-convolution of two monotonically increasing
functions~$g$ and $h$, then for all $x$ it holds that
\begin{align*}
	f(x)
	= \max_{\bar{x}} \{ g(\bar{x}) + h(x-\bar{x}) \}
	= - \min_{\bar{x}} \{-g(\bar{x}) + (-h(x-\bar{x})) \}.
\end{align*}
Now observe that $-g$ and $-h$ are monotonically decreasing functions and,
therefore, $f = - ( (-g) \minconv (-h))$, where $\minconv$ denotes the
$(\min,+)$-convolution. Thus, we can use the efficient routine for
$(\min,+)$-convolution from Lemma~\ref{lem:convolution} with the same running
time.\footnote{We note that, formally, Lemma~\ref{lem:convolution} can only be
	applied on functions with non-negative values. However, this can be achieved
	by adding a number $C$ to $-g$ and $-h$, which is an upper bound on the
	values taken by $g$ and $h$, and at the end we 
	subtract the constant function $2C$, i.e., we set
	$f = - ( (-g+C) \minconv (-h+C)) - 2C$.
}

Now we can define the subroutines $\tilde{\Procedure}_i$. Let $\delta>0$ be a
parameter that we set later. Whenever we compute a function $f_u$ via a
$(\max,+)$-convolution, we use the efficient subroutine from
Lemma~\ref{lem:convolution}. After computing the convolution, we set $f_u =
\lceil f_u \rceil_{1+\delta}$ via the subroutine from
Lemma~\ref{lem:operations}.

Observe that this approach satisfies Property~(4a) of
Definition~\ref{def:well-behaved} with $\alpha=1+\delta$. 
Furthermore, Property~(4b) is satisfied since we only use a single convolution
and a single rounding step. Finally, Property~(4c) is also satisfied because the
resulting function is monotone and has $p=O(\log_{1+\delta}(W))$ after the
rounding.

The above arguments show that our DP is $(h,\alpha,p)$-well-behaved for 
$h=\lceil\lg n\rceil$, $\alpha=1+\delta$,
$\delta=\ln(1+\varepsilon)/\lceil\log n\rceil$
and $p=O(\log_{1+\delta}(W)) = O(\log(W)/\delta)$.
Hence, Theorem~\ref{thm:well-behaved-dynamic} immediately implies the following
lemma.
\begin{lemma}
\label{lem:knapsack-ptas}
	Let $\varepsilon>0$.
	There exists an algorithm that computes a $(1+\varepsilon)$-approximate
	solution for 0-1 knapsack in time
	$n\cdot \frac{1}{\varepsilon^2} \log^2(n) \log^2(W)
			\cdot \polylog(\frac{1}{\varepsilon}\log(nW))$.
\end{lemma}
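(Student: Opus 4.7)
The plan is to apply Theorem~\ref{thm:well-behaved-static} in a black-box fashion to the DP built in the preceding paragraphs. The text has already verified that the construction satisfies Properties~(1)--(4) of Definition~\ref{def:well-behaved} with parameters $h = \lceil \log n \rceil$ (the height of the balanced merging tree $T$), $\alpha = 1+\delta$ (from a single $\lceil\cdot\rceil_{1+\delta}$ rounding after each convolution), and $p = O(\log_{1+\delta}(W))$ (the number of pieces after rounding, using that the rounded function is monotone and bounded by $W$). The only step that needs a second look is Property~(3): the $(\max,+)$-convolution preserves a multiplicative error factor (if $g,h$ are within a factor $\beta$ of their exact versions then so is $\max_{\bar x} g(\bar x)+h(x-\bar x)$), and composing this with one $(1+\delta)$-rounding keeps the per-row error exactly at $\alpha=1+\delta$.

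With these properties in hand, I set $\delta = \ln(1+\varepsilon)/(h+1)$ so that
\begin{align*}
\alpha^{h+1} = (1+\delta)^{h+1} \le e^{\delta(h+1)} = 1+\varepsilon.
\end{align*}
Theorem~\ref{thm:well-behaved-static} then produces $\ADP$ with $\DP(i,\cdot)\le\ADP(i,\cdot)\le(1+\varepsilon)\DP(i,\cdot)$ for every row, and in particular at the root~$r$ of $T$ evaluated at the budget~$B$. Thus $\ADP(r,B)$ is a $(1+\varepsilon)$-approximation of $f_{[n]}(B)$, which is the optimal knapsack value.

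For the running time, note that $T$ has $O(n)$ nodes, so $|\I| = O(n)$. Since $1/\ln(1+\varepsilon) = O(1/\varepsilon)$ for $\varepsilon$ bounded (and we may assume $\varepsilon\le 1$), we have $1/\delta = O(\log(n)/\varepsilon)$ and hence $p = O(\log(W)/\delta) = O(\log(n)\log(W)/\varepsilon)$. Theorem~\ref{thm:well-behaved-static} yields total time
\begin{align*}
O(|\I|\cdot p^2 \log p)
= O\!\left(n \cdot \tfrac{\log^2(n)\log^2(W)}{\varepsilon^2} \cdot \log\!\tfrac{\log(n)\log(W)}{\varepsilon}\right),
\end{align*}
which matches the claimed bound, with the final $\log$ factor absorbed into the $\polylog(\frac{1}{\varepsilon}\log(nW))$ term. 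The ``hard part'' here is really just the bookkeeping of parameters — the substantive work (efficient convolution of monotone piecewise constant functions, controlled rounding, and the compounding of errors along the dependency tree) is already encapsulated in Lemma~\ref{lem:convolution}, Lemma~\ref{lem:operations}, and Theorem~\ref{thm:well-behaved-static}.
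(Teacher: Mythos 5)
Your proof is correct and follows essentially the same route as the paper: verify that the knapsack DP is $(h,\alpha,p)$-well-behaved with $h=\Theta(\lg n)$, $\alpha=1+\delta$, $p=O(\lg_{1+\delta} W)$, pick $\delta = \Theta(\varepsilon/\lg n)$ so that errors compound to $1+\varepsilon$, and plug into Theorem~\ref{thm:well-behaved-static} to get $O(\abs{\I}\cdot p^2\lg p) = O\bigl(n\cdot\varepsilon^{-2}\lg^2(n)\lg^2(W)\cdot\polylog(\varepsilon^{-1}\lg(nW))\bigr)$. One side remark: the paper's text cites Theorem~\ref{thm:well-behaved-dynamic} as the source of this lemma, but that theorem's bound of $O(h\cdot p^2\lg p)$ per update would yield an extra $\lg n$ factor over $n$ rebuilds; the lemma's stated time matches Theorem~\ref{thm:well-behaved-static}, which is what you correctly invoked, so that citation in the paper looks like a typo.
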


We note that we can return our solution~$I$ in time
$\abs{I} \lg(n) \cdot \polylog(\frac{1}{\varepsilon}\log(nW))$
as follows. Recall that our global objective function value is
achieved by $f_r(B)$ and that $f_r(B) = f_{u_1}(\bar{x}^*) +
f_{u_2}(B-\bar{x}^*)$, where $u_1$ and $u_2$ are the nodes below the root
node~$r$ of the dependency tree. Now using the second part of
Lemma~\ref{lem:convolution} we can get the value of $\bar{x}^*$ in time
$O(\lg p)$. If $f_{u_1}(\bar{x}^*)>0$ we recurse on $f_{u_1}(\bar{x}^*)$ and if
$f_{u_2}(B-\bar{x}^*)>0$ we also recurse on
$f_{u_2}(B-\bar{x}^*)$. At some point we will reach a leaf node~$i$ and we
include $i$ in the solution iff $f_{\{i\}}(x)>0$. Note that since we only recurse
for function values which are strictly larger than zero, for each item that we
include into the solution we have to follow a single path in the dependency tree
of height $O(\lg n)$ and our work in each internal node is bounded by $O(\lg p)$.
This gives the total time of $O(\abs{I}\lg(n)\lg(p))$ and our claim follows from
our choice of~$p$ above.

\textbf{Extension to the Dynamic Setting.}
Next, we extend our result to the dynamic setting.
For the sake of simplicity, we assume that $n$ is
an upper bound on the maximum number of available items (items in $S$)
and given to our algorithm in the beginning.\footnote{It is possible to drop
	this assumption using an amortization argument. More concretely, every time
	the number of items is less than $n/2$ or more than $n$, we rebuild the
	data structure with a new value of~$n$. Each rebuild can be done in time
	$O(n t(n))$, where $t(n)$ is our update time. Since this only happens after
	$\Omega(n)$ updates occured, we can amortize this cost over the updates that
	appeared since the last rebuild.}
We consider update
operations that insert and delete items from the set.  More concretely, we
consider the following update operations:
\begin{itemize}
	\item \emph{insert}($p_i,w_i$), in which $i$ is added to $S$ by setting the
	price and weight of item~$i$ to $p_i\in\Winfty$ and $w_i\in\mathbb{R}_+$,
	respectively, and
	\item \emph{delete}($i$), where item~$i$ is removed from the set of items. 
\end{itemize}

Our implementation is as follows. In the preprocessing phase, we build the same
tree~$T$ as above and use the subroutine from above to compute the function
$f_{\{i\}}$. For the operation \emph{delete}($i$), we set $p_i=0$ and $w_i=0$,
which changes exactly one row of our DP table.  For the operation
\emph{insert}($p_i,w_i$), we set the price and weight of item $i$ to $p_i$ and
$w_i$, resp., which again changes a single row in our DP table.  After changing
such a row, we recompute the global DP solution via
Theorem~\ref{thm:well-behaved-dynamic}.  Since the DP is
$(h,\alpha,p)$-well-behaved with the same parameters as above, the theorem
implies the following proposition.

\begin{restatable}{proposition}{knapsackProposition}
\label{prop:dynamic-knapsack-chan}
	Let $\varepsilon>0$. There exists an algorithm for the fully dynamic
	knapsack problem
	that maintains a $(1+\varepsilon)$-approximate solution with worst-case
	update time
	$\frac{1}{\varepsilon^2} \log^3(n) \log^2(W) \cdot \polylog\left( \frac{1}{\varepsilon} \log(nW)\right)$.
\end{restatable}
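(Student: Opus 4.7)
The plan is to apply Theorem~\ref{thm:well-behaved-dynamic} directly to the balanced binary tree DP from Section~\ref{sec:knapsack-chan}. In preprocessing, I would build the tree $T$ with $n$ leaves (padding with slots for which $p_i = w_i = 0$ when fewer than $n$ items are present), store at each leaf $i$ the function $f_{\{i\}}$, and compute the approximate function $f_u$ at every internal node $u$ bottom-up using the rounded $(\max,+)$-convolution subroutine of Section~\ref{sec:knapsack-chan}. By the analysis already carried out there, this yields an initial approximate DP table $\ADP$ for a DP that is $(h,\alpha,p)$-well-behaved with $h = \lceil \lg n \rceil$, $\alpha = 1+\delta$ for $\delta = \ln(1+\varepsilon)/\lceil \lg n \rceil$, and $p = O(\log_{1+\delta}(W)) = O(\log(W)\log(n)/\varepsilon)$.

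Next, I would observe that an $\emph{insert}(p_i,w_i)$ or $\emph{delete}(i)$ operation overwrites exactly one leaf of $T$, i.e., exactly one row of $\DP$. Invoking Theorem~\ref{thm:well-behaved-dynamic} on this row then propagates the change along the $h$ ancestors of the affected leaf in worst-case time $O(h \cdot p^2 \log p)$, and preserves an $\alpha^{h+1}$-approximation of $\DP$. Because $\alpha^{h+1} \le e^{\delta (h+1)} \le 1 + O(\varepsilon)$ by the choice of $\delta$, running the algorithm with $\varepsilon' = \Theta(\varepsilon)$ yields a $(1+\varepsilon)$-approximate solution, as required.

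To finish, I would plug the parameters into the update-time bound: with $p = O(\log(W)\log(n)/\varepsilon)$ and $h = O(\log n)$, one obtains
\[
O(h \cdot p^2 \log p)
= O\!\Bigl(\tfrac{1}{\varepsilon^{2}} \log^{3}(n)\log^{2}(W)\Bigr) \cdot \polylog\!\Bigl(\tfrac{1}{\varepsilon}\log(nW)\Bigr),
\]
matching the claim. The footnote in the text handles the assumption that $n$ is known in advance via a standard rebuild-on-doubling argument, and the query procedure for recovering the maintained set $I$ is exactly the top-down traversal described after Lemma~\ref{lem:knapsack-ptas}.

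I expect no substantive obstacle here beyond bookkeeping: every nontrivial hypothesis---monotonicity, the sensitivity of $(\max,+)$-convolution under entrywise $(1+\delta)$-approximation, and the piece-count bound after rounding---was already established in Section~\ref{sec:knapsack-chan} when proving Lemma~\ref{lem:knapsack-ptas}. The only new ingredient is the observation that insertions and deletions each change a single leaf, which is why Theorem~\ref{thm:well-behaved-dynamic} yields a worst-case (rather than merely amortized) update bound.
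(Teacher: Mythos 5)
Your proposal is correct and matches the paper's own argument essentially verbatim: build the balanced binary tree, implement insert and delete by overwriting a single leaf (with deleted slots set to $p_i=w_i=0$), and invoke Theorem~\ref{thm:well-behaved-dynamic} with $h=O(\log n)$, $\alpha=1+\delta$, $\delta=\ln(1+\varepsilon)/\lceil\log n\rceil$, and $p=O(\log_{1+\delta} W)$. The only difference is that you spell out the final arithmetic $O(h\cdot p^2\log p)=\frac{1}{\varepsilon^2}\log^3(n)\log^2(W)\polylog(\frac{1}{\varepsilon}\log(nW))$, which the paper leaves implicit.
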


Observe that with the same procedure as for the static algorithm, we can return
our solution~$I$ in time 
$\abs{I} \lg(n) \cdot \polylog(\frac{1}{\varepsilon}\log(nW))$.
Furthermore, given an item $i\in[n]$, we can return whether $i\in I$ in time
$\lg(n) \cdot \polylog(\frac{1}{\varepsilon}\log(nW))$.
This can be done by using the same query procedure as in the static setting,
where we only recurse on the unique subtree in the depedency tree that contains
the node for item~$i$.

We note that the above proposition already improves upon the update time in the
result of Eberle et al.~\cite{eberle21fully} in terms of the dependency on
$\frac{1}{\varepsilon}$ but it has a worse dependency on $\log(nW)$. However,
our query time is slower than the $O(1)$-time query operation
in~\cite{eberle21fully}. We will resolve these issues in the next subsection,
where we will use the algorithm from
Proposition~\ref{prop:dynamic-knapsack-chan} as a subroutine.

\subsection{Dynamically Maintaining a Small Instance}
\label{sec:knapsack-small-instance}
Next, we we obtain a faster dynamic algorithm with update time
$\tO(\frac{1}{\varepsilon^2} \log^2(nW))$ by combining the algorithm from
Proposition~\ref{prop:dynamic-knapsack-chan} and with ideas from Eberle et
al.~\cite{eberle21fully}.
Our high-level approach is as follows. First, we partition the items into
a small number of \emph{price classes}. Then we take a few items of small weight
from each price class. This will give a very small knapsack instance~$X$ for
which we maintain an almost optimal solution using the subroutine from
Proposition~\ref{prop:dynamic-knapsack-chan}; since this instance is very small
(i.e., $\abs{X} \ll n$), the update time for maintaining this instance
essentially becomes $O(\frac{1}{\varepsilon^2} \log^2(W))$, i.e., we lose the
$O(\lg^3 n)$ term that made the update time in the proposition too costly.  For
the rest of the items which are not contained in~$X$, we show that we can
compute a good solution using fractional knapsack, which can be easily solved
using a set of binary search trees. Then it remains to show that the combination
of the two solutions is a $(1+\varepsilon)$-approximation.

The main differences of our algorithm and the one by Eberle et
al.~\cite{eberle21fully} are as follows. Eberle et al.\ also partition the items
into a small number of price classes. They also combine solutions for a small
set of heavy items~$X$ and solutions based on fractional knapsack for the other
items.  However, they have to enumerate many different sets~$X$ and they also
guess the approximate price of the fractional knapsack solution; more
concretely, they enumerate $\Theta(\frac{1}{\varepsilon^2} \lg(W))$ choices
for~$X$ and the number of guesses they have to make for the fractional knapsack
solution is $\Theta(\frac{1}{\varepsilon} \lg(W))$. Thus they have to consider
$\Theta(\frac{1}{\varepsilon^3} \lg^2(W))$ guesses and for each of them they
have to compute approximate solutions, which takes time
$\Theta(\frac{1}{\varepsilon^4})$ for each~$X$ since they have to run a static
algorithm from scratch. In our approach, we only have to consider a single
set~$X$ which we maintain in our data structure from
Proposition~\ref{prop:dynamic-knapsack-chan}, which saves us a lot of time.
Furthermore, the piecewise constant function, in which we store the solution for
$X$, essentially ``guides'' our $\Theta(\frac{1}{\varepsilon} \lg(W))$ guesses
for the weight of fractional knapsack solution.  In our analysis we have to be
slightly more careful to ensure that our guesses for the weight of the
fractional knapsack solution guarantee the correct approximation ratio.

\textbf{Definitions.}
We assume that $\varepsilon < 1$ and that $1/\varepsilon$ is an integer. More
concretely, we run the algorithm with
$\varepsilon' = \max\{ \frac{1}{i} \colon \frac{1}{i} \leq \varepsilon, i\in\mathbb{N} \}$.
Set $L = \lceil \lg_{1+\varepsilon}(W) \rceil$ and 
recall that we set $W=\sum_{i} p_i$. %

We define the \emph{price classes}
$V_\ell = \{ i \colon (1+\varepsilon)^\ell \leq p_i < (1+\varepsilon)^{\ell+1} \}$. 
In the following, we assume that all items from price class $V_\ell$ have
price exactly $(1+\varepsilon)^{\ell+1}$. We only lose a factor of
$1+\varepsilon$ by making this assumption.
Furthermore, we set $V^{1/\varepsilon}_\ell$ to the set of $1/\varepsilon$ items
from $V_\ell$ with smallest weights $w_i$ (breaking ties arbitrarily). We also define
$V_\ell' = V_\ell \setminus V^{1/\varepsilon}_\ell$.

Next, we set $X = \bigcup_{\ell\geq 0} V^{1/\varepsilon}_\ell$ and 
$Y = \bigcup_{\ell\geq 0} V_\ell'$ for all $\ell\geq0$. Note that $X$ and $Y$
partition the set of
items and $\abs{X} = \frac{1}{\varepsilon} \cdot L = O(\varepsilon^{-2} \log(W))$.

Now our strategy is to use our algorithm from
Proposition~\ref{prop:dynamic-knapsack-chan} to maintain a solution for the
items in~$X$. Then we show how we can combine the solution for~$X$ with a
solution for~$Y$ that is based on fractional knapsack and a charging argument.

\textbf{Data Structures.}
For each $\ell\in[L]$, we maintain $V_\ell$ sorted non-decreasingly by weight.

We also maintain the set~$X$ in a binary search tree, in which we sort the items
by their index, and we maintain our data structure from
Proposition~\ref{prop:dynamic-knapsack-chan} on the items in~$X$.

Furthermore, let $U_\ell = \bigcup_{\ell'\leq\ell} V_{\ell'}'$ denote the set of
all items that are not contained in $X$ and of price class at most~$\ell$. For
each $\ell$, we maintain the set~$U_\ell$ in a binary search tree~$T$ in which
the items are stored as leaves and sorted by their density $\frac{p_i}{w_i}$. In
each internal node~$u$ of~$T$, we store the total weight of the items in the
subtree~$T_u$ rooted at~$u$ and the total profit of the items in~$T_u$. Observe
that this allows us to answer queries of the type: ``Given a budget~$b$, what is
the value of the optimal fractional\footnote{
   In fractional knapsack, we may use items fractionally.
   An optimal solution is achieved by sorting the items
   items by their density and greedily adding items to the solution until we
   have used up our budget~$b$. This approach uses at most one item
   fractionally (namely, the one at which we use up our budget).}
knapsack solution in $U_{\ell}$ with weight
at most~$b$?'' in time $O(\lg n)$.

\textbf{Updates.}
Now consider an item insertion or deletion and suppose that the updated item is
of price class $V_\ell$. We first update the sets $V_\ell$, $U_{\ell'}$ for
$\ell'\leq\ell$ and the sets~$X$ and~$Y$. Note that for each of these sets at
most one item can be removed and inserted. Thus, these steps can be done in time
$O(\ell \cdot \lg(n)) = O(\varepsilon^{-1} \lg(W)\lg(n))$.

Next, if $X$ changed in the previous step, then we also perform the
corresponding updates in the data structure from
Proposition~\ref{prop:dynamic-knapsack-chan}. Since
$\abs{X}=O(\varepsilon^{-2}\log(W))$ holds by construction of~$X$, the update
operations for the data structure maintaing the knapsack solution for~$X$ take a
total time of
\begin{align*}
	&O\left(\varepsilon^{-2} \log^3(\abs{X}) \log^2(W) \cdot
				\polylog\left(\frac{1}{\varepsilon}\log(\abs{X}W)\right)\right) \\
	&= O\left(\varepsilon^{-2} \log^2(W) \cdot
				\polylog\left(\frac{1}{\varepsilon}\log(nW)\right)\right).
\end{align*}
Furthermore, we can explicitly write down our solution~$I_X$ for the items in
$X$ in time $\varepsilon^{-2}\log(W) \cdot \polylog(\frac{1}{\varepsilon}\log(nW))$
since $\abs{X}=O(\varepsilon^{-2}\log(W))$. Also, for each $i\in I_X$, we can
set a bit indicating that $i\in I_X$. Note that the time for writing down $I_X$
and setting the bits is subsumed by the update time above.

\textbf{Queries.}
\emph{Returning the value of a solution:}
We return the value of a global knapsack solution as follows.

Consider the data structure from Proposition~\ref{prop:dynamic-knapsack-chan}
which maintains a solution for the items in~$X$. Note that this solution is
stored as a piecewise constant function with $p\leq L$~pieces and consider the
list representation $(x_1,y_1),\dots,(x_p,y_p)$ of this function.

Our strategy is as follows: For each $i=1,\dots,p$, we consider a solution which
spends budget~$x_i$ on items in~$X$ and budget~$B-x_i$ on items in~$Y$. Then we
take the maximum over all of the solutions we have considered. More concretely,
for given~$i=1,\dots,p$, we obtain our solution as follows. We pick $\ell_i$
such that
$(1+\varepsilon)^{\ell_i} = \lceil \varepsilon \cdot y_i \rceil_{1+\varepsilon}$
(see Lemma~\ref{lem:choice-l-i} below for a justification of this choice).  Now we use
the binary search tree for $U_{\ell_i}$ to find the highest profit that we can
obtain from fractional knapsack on items in~$U_{\ell_i}\subseteq Y$ if we can
spend budget at most $b = B-x_i$. Let $y_i'$ be the value of this query after
removing any profit that we gain from the (at most one) fractionally cut item.
We also store the density of the final item that is contained in the fractional
knapsack solution.  Now we return the maximum of $y_i+y_i'$ over all
$i=1,\dots,p$.

Note that since the solution for~$X$ has at most $L=O(\varepsilon^{-1} \log(W))$
pieces and for each of them we perform a single query in a binary search tree,
the total time for return the solution value is $O(\varepsilon^{-1} \lg(W) \lg(n))$.
Note that this time is subsumed by the update time.

\emph{Returning the entire solution:}
Now we can return our global solution~$I$ in time $O(\abs{I})$ as follows.
Observe that $I$ is composed of the solution $I_X$ for the items in $X$ and of
the items in the fractional knapsack solution. During our updates, we already
stored the items in $I_X$ and can write them down in time $O(\abs{I_X})$. Next,
to return the items from the fractional knapsack solution, recall that we stored
the density of the final item in the fractional knapsack solution. Thus, we only
have to output the items ordered non-decreasingly by their density, while we are
above the desired density-threshold. This can be done in time linear in the size
of the fractional knapsack solution. This is essentially the same query
procedure as in~\cite{eberle21fully}.

\emph{Returning whether an item is in the solution:}
Furthermore, observe that the above implies that we can answer whether an item
$i\in[n]$ is contained in our solution in time $O(1)$: If $i\in X$ then we
already stored a bit whether $i\in I_X$. If $i\not\in X$ then we can check
whether $i$ is in the fractional knapsack solution by checking whether its
density is above or below the threshold given by the final item in the
fractional knapsack solution.

\textbf{Analysis.}
We start by making some simplifications to $\OPT$. We let $\OPT'$ denote the
version of $\OPT$ in which for each $\ell\in[L]$, we pick the
$\abs{\OPT\cap V_\ell}$ items of smallest weight from $V_{\ell}$. This
only loses a factor of $1+\varepsilon$.
Next, define $\OPT_X' = \OPT' \cap X$ and $\OPT_Y' = \OPT' \cap Y$.
Observe that by how we picked $\OPT'$, it holds that
$\OPT_Y'\cap V_\ell \neq \emptyset$ iff
$\abs{\OPT'\cap V_\ell} > 1/\varepsilon$.

Let $p_X$ denote the total price of items in $\OPT_X'$ and let $w_X$ denote the
total weight of the items in $\OPT_X'$. Let~$f$ denote the piecewise constant
function that stores the solution for the items in~$X$.  Observe that by
Proposition~\ref{prop:dynamic-knapsack-chan} we have that
$$ p_X \leq f(w_X) \leq (1+\varepsilon)p_X. $$
Also, the function value $f(w_X)$ is part of a piece
$(x_{i^*},y_{i^*})$ with $x_{i^*} \leq w_X$ and $y_{i^*} = f(w_X)$.

The next lemma justifies why we set $\ell_i$ such that
$(1+\varepsilon)^{\ell_i} = \lceil \varepsilon \cdot y_i \rceil_{1+\varepsilon}$
in our algorithm. To this end, let $\ell_{i^*}$ be such that
$(1+\varepsilon)^{\ell_{i^*}} = \lceil \varepsilon \cdot y_{i^*} \rceil_{1+\varepsilon}$
and let $\ell_Y$ be the price class of the most valuable item in $\OPT_Y'$. In
the lemma we show that $\ell_{i^*}\geq\ell_Y$.  We will use this to show that
our solution for $X$ of profit $y_{i^*}$ is valuable enough such that we can
charge a fractionally cut item from fractional knapsack onto the solution
from~$X$ and only lose a factor of $(1+\varepsilon)^2$.
\begin{lemma}
\label{lem:choice-l-i}
	It holds that $\ell_{i^*} \geq \ell_Y$.
\end{lemma}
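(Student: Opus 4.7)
The plan is to chain together three simple facts: (i) items of price class~$\ell_Y$ appearing in $\OPT_Y'$ force many items of that price class to also appear in $\OPT_X'$; (ii) this lower-bounds $p_X$ in terms of $(1+\varepsilon)^{\ell_Y}$; and (iii) the definition of $\ell_{i^*}$ then forces $\ell_{i^*}\geq\ell_Y$.

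First, I would unpack the definition of $\ell_Y$: since $\OPT_Y'$ contains an item of price class $\ell_Y$, we have $\OPT_Y'\cap V_{\ell_Y}\neq\emptyset$. Invoking the observation already recorded in the paragraph right before the lemma, namely that $\OPT_Y'\cap V_{\ell}\neq\emptyset$ if and only if $|\OPT'\cap V_\ell|>1/\varepsilon$, this gives $|\OPT'\cap V_{\ell_Y}|>1/\varepsilon$.

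Second, I would argue that all $1/\varepsilon$ items of $V_{\ell_Y}^{1/\varepsilon}$ lie in $\OPT_X'$. Recall that $\OPT'$ is defined by picking, within each price class, the $|\OPT\cap V_\ell|$ items of smallest weight. Since $|\OPT'\cap V_{\ell_Y}|>1/\varepsilon$ and $V_{\ell_Y}^{1/\varepsilon}$ consists exactly of the $1/\varepsilon$ lightest items of $V_{\ell_Y}$, we must have $V_{\ell_Y}^{1/\varepsilon}\subseteq\OPT'$; intersecting with $X$ yields $V_{\ell_Y}^{1/\varepsilon}\subseteq\OPT_X'$. Because every item of $V_{\ell_Y}$ is assumed to have price exactly $(1+\varepsilon)^{\ell_Y+1}$, this produces the lower bound
\[
  p_X \;\geq\; \tfrac{1}{\varepsilon}\,(1+\varepsilon)^{\ell_Y+1}.
\]

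Third, I would combine this with the approximation guarantee of the subroutine from Proposition~\ref{prop:dynamic-knapsack-chan}. Since $y_{i^*}=f(w_X)\geq p_X$, multiplying by $\varepsilon$ yields $\varepsilon\,y_{i^*}\geq (1+\varepsilon)^{\ell_Y+1}$. By definition of $\lceil\cdot\rceil_{1+\varepsilon}$, the value $(1+\varepsilon)^{\ell_{i^*}}=\lceil\varepsilon y_{i^*}\rceil_{1+\varepsilon}$ is at least $(1+\varepsilon)^{\ell_Y+1}$, hence $\ell_{i^*}\geq \ell_Y+1\geq \ell_Y$, as claimed. The only subtlety to watch is the book-keeping around the ``if and only if'' from the earlier paragraph and the fact that the $1/\varepsilon$-items picked into $X$ are precisely the lightest of their class, so that $\OPT'$ is forced to include them once more than $1/\varepsilon$ items of that class are used; everything else is an immediate computation.
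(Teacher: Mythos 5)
Your proposal is correct and follows essentially the same route as the paper: use the nonemptiness of $\OPT_Y'\cap V_{\ell_Y}'$ to force all of $V^{1/\varepsilon}_{\ell_Y}$ into $\OPT_X'$, lower-bound $p_X$ by $\frac{1}{\varepsilon}$ times the class price, and push this through $y_{i^*}=f(w_X)\geq p_X$ and the definition of $\lceil\cdot\rceil_{1+\varepsilon}$. The only cosmetic difference is that you invoke the normalized price $(1+\varepsilon)^{\ell_Y+1}$ (giving $\ell_{i^*}\geq\ell_Y+1$) while the paper uses the lower bound $(1+\varepsilon)^{\ell_Y}$; both yield the claimed inequality.
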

\begin{proof}
	Since $\OPT_Y' \cap V_{\ell_Y}' \neq \emptyset$,
	$\abs{\OPT'\cap V_{\ell_Y}} > 1/\varepsilon$ and thus
	$\OPT_X'$ contains all $1/\varepsilon$ items from
	$V^{1/\varepsilon}_{\ell_Y}$. Hence, $p_X \geq \frac{1}{\varepsilon} \cdot
	(1+\varepsilon)^{\ell_Y}$.
	From above we get $f(w_X)=y_{i^*}$ and $f(w_X) \geq p_X$. By choice of
	$\ell_{i^*}$,
	\begin{align*}
		(1+\varepsilon)^{\ell_{i^*}}
		= \lceil \varepsilon \cdot y_{i^*} \rceil_{1+\varepsilon}
		= \lceil \varepsilon \cdot f(w_X) \rceil_{1+\varepsilon}
		\geq \left\lceil \varepsilon \cdot p_X \right\rceil_{1+\varepsilon}
		\geq \left\lceil \varepsilon \cdot \frac{1}{\varepsilon} (1+\varepsilon)^{\ell_Y} \right\rceil_{1+\varepsilon}
		= (1+\varepsilon)^{\ell_Y}.
	\end{align*}
	This implies $\ell_{i^*}\geq \ell_Y$.
\end{proof}

Next, consider the the fractional knapsack solution that we obtain from our
query. Note that this solution has a profit that is at least as large as the
profit of $\OPT_Y'$ (since fractional knapsack is a relaxation of 0-1 knapsack). 
Furthermore, the fractional solution uses at most one item fractionally and this
item is from $U_{\ell_{i^*}}$ and has value at most
$(1+\varepsilon)^{\ell_{i^*}}
= \lceil \varepsilon \cdot y_{i^*} \rceil_{1+\varepsilon}
\leq (1+\varepsilon) \varepsilon \cdot y_{i^*}$.
Thus, we can charge this item on $\OPT_X'$ and lose a factor of at most
$(1+\varepsilon)^2$.

We conclude that the solution $y_{i^*} + y_{i^*}'$ is a
$(1+\varepsilon)^{O(1)}$-approximation of $\OPT$. Combining this with our
previous running time analysis, we obtain Theorem~\ref{thm:knapsack}.

\section{Technical Overview}
\label{sec:technical}
We now present an overview of two techniques for making DPs fit our framework.
We will briefly discuss how we \emph{monotonized} the DP for $k$-balanced
partitioning and how we \emph{inverted} the DP for simultaneous source location.
Due to space constraints, we only present excerpts of our algorithms and we only
consider special cases.  More concretely, for both problems we will consider the
special case when the input graph is a binary tree. In the appendix we will show
that the results can be extended to general graphs.

\subsection{Monotonizing the DP of Feldmann and Foschini}
\label{sec:monotonizing}
We start by considering the $k$-balanced graph partitioning problem.  Recall
that in this problem, the input is a graph $G=(V,E,\capac)$, where
$\capac: E \to \Winfty$ is a weight function on the
edges, and an integer~$k$. As discussed in the introduction, we assume that we
can violate the partition sizes by a $(1+\varepsilon)$-factor and our goal is to
find a partition $V_1,\dots,V_k$ of the vertices such that $\abs{V_i} \leq \lceil
(1+\varepsilon) \abs{V}/k\rceil$ for all $i$ and such that we minimize
$\cut(V_1,\dots,V_k) := \sum_{i=1}^k \sum_{\{u,v\}\in E\cap (V_i\times (V\setminus V_i))} \capac(u,v)$.

For the sake of better exposition, here we only consider the special case in
which \emph{$G$ is a binary tree}; in Appendix~\ref{sec:extension-balanced} we
show how to drop this assumption.

In the following we present a DP in which the rows are monotone and
we show how to efficiently perform operations on these solution vectors using
monotone piecewise constant functions. Our DP is related to the DP by Feldmann
and Foschini~\cite{feldmann15balanced} which is \emph{non-monotone} and thus our
DP can be viewed as the \emph{monotonization} of the DP by Feldmann and
Foschini. We believe that our technique to monotonize the DP will have further
applications in the future.

\textbf{High-Level Description of the DP.}
Our DP is computed bottom-up starting at the leaves of the tree and then moving
up in the tree.  For each vertex $v$, we will compute a DP solution of minimum
cost that encodes whether the edge to the parent~$p$ of $v$ is cut and which
edges shall be cut inside the subtree $T_v$ that is rooted at $v$. Note that the
removal of the cut edges in our solution will decompose the tree into disjoint
connected components and exactly one of them contains $v$'s parent~$p$.
Additionally, we store information about the number of vertices that are
still connected to the parent $p$ (and, therefore, to the outside of $T_v$)
after the cut edges are removed.  We will assume that when we compute the DP
cell for a vertex $v$, we have access to the solutions for both of its children.

More concretely, when we have computed a solution for a subtree $T_v$, i.e., we
know which edges incident to nodes in this subtree we are going to remove (note
that the edge leading to the parent of $v$ is incident to $T_v$ and thus we
consider it as part of this solution), we store the following information in the DP
table.  First, we store its cost, i.e., the total capacity of all edges that are
incident to vertices in $T_v$ and that are cut.  As described above, we would
also like to store the number of vertices that are connected to the parent
of $v$ and the sizes of connected components inside $T_v$. However, there are
two difficulties: (1)~We cannot store the number of vertices that are
connected to the root exactly because this would result in a too large DP table.
Instead, we store the cheapest solution in which vertices of \emph{at most} some
given number are still connected to the parent of $v$. As we will see, this
approach gives rise to \emph{monotonically decreasing} functions and allows for a
very efficient computation of the DP table.  (2)~We store implicitly the
\emph{size} of all connected components that are created after the cut edges are
removed and that lie completely inside $T_v$. As before, storing these sizes
exactly would result in a very large DP table and, therefore, we store them
concisely using the concept of a \emph{signature}. The signatures will help us
to characterize the sizes of the components inside $T_v$ very efficiently.

\textbf{Signatures.} We call a connected component in $T_v$ \emph{large} if it
contains at least $\varepsilon \lceil \abs{V}/k\rceil$ vertices and
otherwise we call it \emph{small}.  Let $t=\lceil
\log_{1+\varepsilon}(1/\varepsilon) \rceil+1$, and let $M = \lceil k/\varepsilon
\rceil+1$. A \emph{signature} is a vector $g=(g_0,\dots,g_{t-1}) \in [M-1]^t$.
Observe that each $\Procedure_i$ is an integer between $0$ and $M-1$ and hence there are
$M^t = (k/\varepsilon)^{O(\varepsilon^{-1} \log(1/\varepsilon))}$ different
signatures. Intuitively, an entry $\Procedure_i$ in $g$ tells us roughly how many
components of size $(1+\varepsilon)^i\cdot\varepsilon\lceil \abs{V}/k\rceil$
there are in the DP solutions that we consider. Due to space constraints, we
refer to the appendix for the formal definition.

For $x\in\mathbb{N}$, we let $e(x) \in [M-1]^t$ denote the signature of a single
component with $x$ vertices. More precisely, we set $e(x)$ to the vector that
has $e(x)_{j}=1$ for
$j=\arg\min\{j\in\mathbb{N} \colon x\le (1+\epsilon)^j\cdot\epsilon\lceil \abs{V}/k\rceil\}$ and
$e(x)_j=0$, otherwise. If $x<\epsilon\lceil \abs{V}/k\rceil$, we define
$e(x)=\smash{\vec{0}}$.

\textbf{Formal DP Definition.}
Now we describe the DP formally.  An entry $\DP(v,g,\cut,x) \in \Winfty$ in the
DP table for a vertex $v$ is indexed by a signature $g$, a Boolean value $\cut$
and $x\in[n]$. We will consider the tuples $(v,g,\cut)$ as the rows~$\I$
of the DP table and $x$ as the columns; we associate each such row with a
function $\DP(v,g,\cut,\cdot) \colon [n] \to \Winfty$. Note that our DP
has $\abs{V} \cdot M^t \cdot 2
	= (k/\varepsilon)^{O(\varepsilon^{-1} \log(1/\varepsilon))} \cdot n$
rows. Also, note that it has columns $n$; later, even though $x$ 
only takes discrete values, we will allow $x$ to take values in~$[0,\infty)$.

An entry $\DP(v,g,\cut,x)$ describes the optimum cost of cutting edges incident on
the subtree $T_v$ (including the cost of maybe cutting the edge to the parent
of~$v$). We will refer to the set of vertices in $T_v$ that are still connected
to the parent of $v$ after the cut edges are removed as the \emph{root component}.
We impose the following conditions on $\DP(v,g,\cut,x)$:
\begin{itemize}
	\item Once the cut edges are removed, the root component $U\subseteq T_v$
	has \emph{at most} $x$ vertices, i.e., $\abs{U}\leq x$.
	\item If $\cut$ is set to true then the edge between $v$ and its parent
	is cut, otherwise it is kept.
	\item The vertices inside $T_v$ that (once the cut edges are removed) are
	\emph{not} connected to the parent of~$v$ form connected components that are
	consistent with the signature $g$.
\end{itemize}

Next, we observe that if we fix a vertex~$v$, a signature~$g$ and a value for
$\cut$, then the resulting function $\DP(v,g,\cut,\cdot)$ is monotonically
decreasing in $x$.
\begin{observation}
\label{obs:monotone-intro}
	Let $v\in V$, $g\in[M-1]^t$ be a signature and $\cut\in\{\iscut,\notcut\}$.
	Then the function $\DP(v,g,\cut,\cdot) : [0,\infty) \to \mathbb{R}_+$ is
	monotonically decreasing.
\end{observation}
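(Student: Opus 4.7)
The plan is to prove the claim directly from the semantics of $\DP(v,g,\cut,x)$ by a feasibility-inclusion argument. Fix $v\in V$, a signature $g$, and $\cut\in\{\iscut,\notcut\}$, and pick arbitrary reals $x_1\le x_2$. The goal is to show $\DP(v,g,\cut,x_1)\ge \DP(v,g,\cut,x_2)$.

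The key observation is that, among the three conditions imposed on a set of cut edges inside $T_v$, two of them do not depend on $x$ at all: the value of $\cut$ fixes whether the parent edge is cut, and the signature constraint on the non-root components is a combinatorial condition on the component sizes that does not reference $x$. Only the bound $\abs{U}\le x$ on the size of the root component depends on $x$. Consequently, the set of edge configurations that are feasible for parameter $x_1$ is contained in the set of edge configurations that are feasible for parameter $x_2$: if $\abs{U}\le x_1$, then also $\abs{U}\le x_2$, while the cut-value and signature conditions are inherited verbatim. Since the cost of a configuration is independent of $x$, taking the minimum cost over a superset can only decrease the optimum, which yields $\DP(v,g,\cut,x_2)\le \DP(v,g,\cut,x_1)$.

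The only edge case worth mentioning is when $\DP(v,g,\cut,x_1)=+\infty$, meaning no configuration is feasible for parameter $x_1$. In this case the inequality holds vacuously, since the right-hand side always lies in $\Winfty$. I do not expect any genuine obstacle here, since the argument does not touch the internal structure of the DP recurrence: it relies solely on the fact that the $x$-parameter appears as an upper bound on a single quantity (the root-component size) and is otherwise absent from the constraints and from the objective. The full write-up therefore amounts to a short paragraph formalising the feasibility inclusion above.
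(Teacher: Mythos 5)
Your proof is correct and takes essentially the same approach as the paper: both argue that any configuration feasible for the tighter bound $x_1$ remains feasible for $x_2\ge x_1$, so the optimum can only decrease. Your version is slightly more detailed in spelling out which constraints depend on $x$ and in handling the $+\infty$ case, but the core feasibility-inclusion argument is identical.
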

\begin{proof}
	By definition, $\DP(v,g,\cut,x)$ stores the cost of the optimum solution in
	which there are \emph{at most} $x$ vertices in the root component. Since
	$x\leq x'$, the solution $\DP(v,g,\cut,x)$ is also a feasible solution for
	$\DP(v,g,\cut,x')$.  Hence, $\DP(v,g,\cut,\cdot)$ is monotonically
	decreasing.
\end{proof}

\textbf{Comparison With the DP by Feldmann and Foschini.}
When comparing our DP with the one by Feldmann and
Foschini~\cite{feldmann15balanced} then one of the crucial changes is that in
our DP, $x$ encodes an \emph{upper bound} on the number of vertices in the root
component. Previously, Feldmann of Foschini considered root components
with \emph{exactly} $x$ vertices.  This is why their DP was
\emph{non-monotone} and why one can view our DP as the \emph{monotonization} of
the DP in~\cite{feldmann15balanced}. However, we also generalize the DP to the
setting with vertex weights and, as we will see below, parts of our algorithm
for computing the DP approximately are rather involved.

\subsubsection{Computing the DP}
We now give a flavor of what our algorithms for computing the DP look like.  We
start by showing how to compute the exact DP solution $\DP(v,\cdot,\cdot,\cdot)$
for a vertex~$v$ of the tree, where $v$ has parent $p$ and children $v_l$, $v_r$
and it is connected to them via edges $e_p$, $e_l$ and $e_r$, respectively.

Computing the DP is based on several case distinctions; here, we only consider
the case \emph{in which $v$ is an internal vertex of the tree we do not cut the
edges $e_l$ and $e_r$}.  All other cases are presented in the appendix.

When computing a DP row given by $\DP(v,\cdot,\cdot,\cdot)$, we will only
require access to the DP rows $\DP(v_l,\cdot,\cdot,\cdot)$ and
$\DP(v_r,\cdot,\cdot,\cdot)$. This implies that the dependency tree of the DP
is a tree and has the same height as our input graph~$G$ (recall that here we
assume that $G$ is a binary tree). Note that the height of the tree also implies
an upper bound on the number of reachable nodes.

\bold{Exact Computation.} 
We start with the exact computation. %
Here, we can afford to iterate over all
values $x\in[n]$ and $g\in[M-1]^t$ to compute $\DP(v,\cdot,\cdot,\cdot)$.
Therefore, we consider the values for $x$ and $g$ as input to our algorithm.

Since we assume that we do not cut the edges $e_l$ and $e_r$, we have to select
subsolutions for $T_{v_l}$ and $T_{v_r}$, where each subsolution is
characterized by the upper bound $x_l$ (resp.\ $x_r$) and its signature $g_l$
(resp.\ $g_r$).

First, suppose that we cut the edge $e_p$. If we let $x_l$ and $x_r$ denote the
number of vertices of the root components for the subsolutions, then the vertex $v$
will be included in a component of size $x_l+x_r+1$ afterwards. Hence, we can
combine the subsolutions to a solution for signature $g$ as long as
$g_l+g_r+e(x_l+x_r+1)=g$.  Consequently we set for every $x\in[0,\infty)$,
\begin{equation*}
	\DP_B(v,g,\iscut,x) =
		\capac(v,p)
		+ \min_{x_l,x_r,g_l+g_r=g-e(x_l+x_r+1)}
				\DP(v_l,g_l,\notcut,x_l)+\DP(v_r,g_r,\notcut,x_r).
\end{equation*}

Second, suppose that we do not cut $e_p$. Again we have to set
$\DP_B(v,g,\notcut,x) = \infty$ for all signatures $g$ and all $x\in[0,1)$,
because $v$ can reach $p$.  For $x\ge 1$ we have
to select $x_l$ and $x_r$  such that they sum to $x-1$ as this guarantees
that at most $x$ vertices can reach the parent $p$. Consequently, we
set for all $x\in[1,\infty)$
\begin{equation*}
	\DP_B(v,g,\notcut,x) =
	\min_{g_l+g_r=g,x_l+x_r=x-1}
		\DP(v_l,g_l,\notcut,x_l)+\DP(v_r,g_r,\notcut,x_r).
\end{equation*}

Here, we can afford to exhaustively enumerate all $O(M^{t} n^2)$ possibilities
in the $\min$-operations above.

\bold{Approximate Computation.} Now let us consider the approximate computation.
We denote the approximate DP solution by $\ADP$. We assume that we have
already computed the children solutions $\ADP(v_l,g,\cut,\cdot)$ and
$\ADP(v_r,g,\cut,\cdot)$ and that they are stored using our data structure from
Section~\ref{sec:approx-conv}. We will
maintain as an invariant that each of these functions has at most
$p=O(\lg_{1+\delta}(W))$ pieces and we will ensure this \emph{by rounding our
solution} at the end of every step, i.e., by setting $\ADP(v,g,\cut,\cdot) =
\lceil \ADP(v,g,\cut,\cdot)\rceil_{1+\delta}$ using the rounding procedure from
Lemma~\ref{lem:operations}. This will ensure the following two properties:
(1)~The functions $\ADP(v,g,\cut,\cdot)$ never have more than
$O(\lg_{1+\delta}(W))$ pieces by Lemma~\ref{lem:operations}. Thus, we can
perform all of our operations very efficiently. (2)~For the function at the root
of the tree, the approximation error is at most $(1+\delta)^h$, where $h$ is the
height of the tree. By picking $\delta = O(\varepsilon/h)$, we will achieve that
we obtain a $(1+\varepsilon)$-approximate solution at the root.  Now we proceed
to the explanation of our computation.

\emph{If we do not cut the edge to the parent of $v$}, we proceed similar to the exact
DP above. We start by setting $\tDP_B(v,g,\notcut,x)=\infty$ for all $x\in[0,1)$.
Next, for $x\in[1,\infty)$ we wish to set
\begin{align}
	\tDP_B(v,g,\notcut,x) &=
	\min_{g_l+g_r=g,x_l+x_r=x-1}
		\tDP(v_l,g_l,\notcut,x_l)+\tDP(v_r,g_r,\notcut,x_r)\\
	&= 
	\min_{g_l+g_r=g}
	\min_{x_l+x_r=x-1}
		\tDP(v_l,g_l,\notcut,x_l)+\tDP(v_r,g_r,\notcut,x_r).
\label{eq:b-line-intro}
\end{align}
Note that for fixed $g_l$ and $g_r$, the inner $\min$-operation in the second
line describes a $(\min,+)$-convolution due to the constraint $x_l+x_r=x-1$.
Therefore, in the inner $\min$-operation we compute a convolution
$\tDP(v_l,g_l,\notcut,\cdot)\minconv \tDP(v_r,g_r,\notcut,\cdot)$ and shift the
result by $1$ via the shift operation from Lemma~\ref{lem:operations} (where
for $x\in[0,1)$ we set $\tDP_B(v,g,\notcut,x) =\infty$).
We need time
$O(p^2\log p)$ for computing the convolution according to
Lemma~\ref{lem:convolution}.
To compute the outer minimum in Equation~\eqref{eq:b-line-intro},
we iterate over all $g_l\in[M-1]^t$ using Lemma~\ref{lem:operations}
and thus perform $O(M^{t})$ minimum
computations over piecewise constant functions with at most $p^2$~pieces. Hence,
we need time $O(M^{t} p^2 \lg(M^t p^2))$ according to Lemma~\ref{lem:multimin}.
By Lemma~\ref{lem:monotone-convolution-is-monotone}, $\tDP_B(v,g,\notcut,\cdot)$
is monotonically decreasing since it is the minimum over convolutions of two
monotonically decreasing functions.

\emph{If we cut the edge to the parent of $v$}, then for all $x\in[0,\infty)$ we would
like to set
\begin{align*}
	\tDP_B(v,g,\iscut,x) =
	\capac(v,p)
	+\min_{x_l,x_r,g_l+g_r=g-e(x_l+x_r+1)}
		\tDP(v_l,g_l,\notcut,x_l)+\tDP(v_r,g_r,\notcut,x_r).
\end{align*}
Note that here we need to be careful as the range of $g_l$ and $g_r$ depends on
the choice of $x_l+x_r$. Since there are $\Omega(n)$ possible values for
$x_l+x_r$, we cannot afford to iterate over all values that $x_l+x_r$ can take.
Instead, we will show that we only need to consider
$O(\lg(k/\varepsilon)/\varepsilon)$ different pairs $(x_l,x_r)$ by exploiting
the monotonicity of $\ADP(v_l,g_l,\notcut,\cdot)$ and
$\ADP(v_r,g_r,\notcut,\cdot)$.

First, observe that we can assume $x_l\le \abs{T_{v_l}}$ and $x_r\le \abs{T_{v_r}}$:
increasing the upper bounds on the number of vertices of the root component further would
mean that the root component contains than \emph{all} vertices
inside the sub-tree, which is impossible. Thus, $x_l+x_r+1\in[1,n]$.

Second, we partition the interval $[1,n]$ into $O(\log(k/\epsilon)/\epsilon)$
intervals. We have intervals $I_j=(\xi_{j-1},\xi_{j}]$ with
$\xi_j = (1+\varepsilon)^j\varepsilon\lceil n/k\rceil$ for all
$j=1,\dots,\log_{1+\epsilon}(k/\epsilon)$. In addition, we add an
\enquote{interval} $I_0:=[\epsilon\lceil n/k\rceil,\epsilon\lceil
n/k\rceil]$ and the interval $I_{-1}:=[1,\epsilon\lceil n/k\rceil)$.
We set $\xi_{0}=\epsilon\lceil n/k\rceil$ and we set $\xi_{-1}$ to the
largest integer that is less than $\epsilon\lceil n/k\rceil$.  Observe that
for all $j\ge -1$ and $x\in I_j$, we have $e(x)=e(\xi_j)$, i.e., the value of
$e(x)$ does not change inside the interval $I_j$.  Below, this property will
allow us to separate the conditions on $x_l+x_r$ and on $g_l+g_r$.

Now we can rewrite the above expression as
\begin{align*}
&\tDP_B(v,g,\iscut,x) = \\
&\capac(v,p)+
\min_j
\min_{x_l+x_r+1\in I_j}
\min_{g_l+g_r=g-e(\xi_j)}\tDP(v_l,g_l,\notcut,x_l)+\tDP(v_r,g_r,\notcut,x_r).
\end{align*}

Third, note that now the two $\min$-operations only depend on the choice of $j$
and, importantly, the minimum over $g_l$ and $g_r$ does not depend on the choice
of $x_l+x_r$ anymore. Therefore, we can swap the order of the two
$\min$-operations.
Furthermore, since $\tDP_B(v,g,\notcut,x)$ is monotonically decreasing with $x$, we can
restrict the choice of $x_l$ and $x_r$ such that $x_l+x_r+1$ is the largest
number in the corresponding interval $I_j$, i.e., $x_l+x_r+1=\xi_j$. Thus,
\begin{equation*}
\begin{split}
&\tDP_B(v,g,\iscut,x) = \\
&\capac(v,p)+
\min_j
\min_{g_l+g_r=g-e(\xi_j)}\min_{x_l+x_r+1=\xi_j}
\tDP(v_l,g_l,\notcut,x_l)+\tDP(v_r,g_r,\notcut,\xi_j-x_l-1).
\end{split}
\end{equation*}

Next, we explain how the above expression can be computed efficiently.
Let us first argue how we can efficiently compute the inner $\min$-operation of
the above expression. We start by observing that this $\min$-operation is
\emph{not} a convolution since in the constraint we sum up to $\xi_i$ which is a
constant (rather than to the variable $x$). Now recall that
$\tDP(v_l,g_l,\notcut,\cdot)$ and $\tDP(v_r,g_r,\notcut,\cdot)$ are piecewise
constant functions with $O(p)$ pieces by our invariants. Since $x_l,x_r\geq 0$
this implies that there are only $O(p^2)$ choices for $x_l$ and $x_r$ such that
$x_l,x_r \in I_j$ and either a new piece starts in $\tDP(v_l,g_l,\notcut,x_l)$
or in $\tDP(v_r,g_r,\notcut,x_r)$. Thus, we can iterate over all these pairs
$(x_l,x_r)$ and evaluate $\tDP(v_l,g_l,\notcut,x_l)+\tDP(v_r,g_r,\notcut,x_r)$,
where $x_r=\xi_j-x_l-1$.
Thus, we can compute the inner $\min$-operation in time $O(p^2 \lg p)$.
We note that since this $\min$-operation is considering a super-constant number
of terms, this DP is not well-behaved (it violates Property~(4b) of
Definition~\ref{def:well-behaved}). This is why in our analysis we will use the
more general notion from Section~\ref{sec:dps-on-trees}.

Next, we can compute the outer two $\min$-operations by simply iterating over
$j$ and all choices for $g_l$ and setting $g_r=g-e(\xi_j)-g_l$ as above in
$O(M^{t}\cdot\log(k/\epsilon)/\epsilon)$ iterations.  Hence, we obtain 
a running time of $O(M^{t}p^2\log p\cdot\log(k/\epsilon)/\epsilon)$.

Finally, we note that as $\tDP_B(v,g,\iscut,x)$ is independent of $x$, it is a
constant. Thus, $\tDP_B(v,g,\iscut,x)$ is a piecewise constant function with a
single piece and it is monotonically decreasing. 

\emph{Rounding Step.}
As noted earlier, after computing the solutions $\ADP_B(v,g,\notcut,\cdot)$ and
$\ADP_B(v,g,\iscut,\cdot)$, we also round the solution by setting
$\ADP_B(v,g,\cut,\cdot) = \lceil \ADP_B(v,g,\cut,\cdot)\rceil_{1+\delta}$ for
$\cut\in\{\iscut,\notcut\}$ to ensure that we only have $p=O(\lg_{1+\delta}(W))$
pieces in the resulting function.
Note that this is the only approximate operation we perform and all other
operations above have been exact.

\subsection{Inverting the DP of Andreev et al.}
\label{sec:inverting}
Now we briefly describe our DP for simultaneous source location. Recall that
in this problem, the input consists of an undirected graph $G=(V,E,\capac,d)$
with a capacity function $\capac \colon E \to \Winfty$ and a demand
function $d \colon V \to \Winfty$. The goal is to select a
minimum set $S\subseteq V$ of \emph{sources} that can simultaneously supply all
vertex demands. More concretely, a set of sources $S$ is \emph{feasible} if
there exists a flow from the vertices in $S$ that supplies demand $d(v)$ to all
vertices $v\in V$ and that does not violate the capacity constraints on the
edges. The objective is to find a feasible set of sources of minimum size.

Here, we will again assume the special case in which $G$ \emph{is a binary
tree}; we show in Appendix~\ref{sec:ssl-general} how to drop this assumption.

\textbf{DP Definition.} %
Given a vertex $v$ and a
value $x\in\mathbb{R}$, we let $\DP(v,x)$ denote the minimum number of sources
that we need to place in the subtree $T_v$ such that when $v$ receives flow at
most $x$ from its parent then all demands in $T_v$ can be satisfied.  We note
that $x$ can take positive and negative values: for $x\geq0$ this corresponds to
the setting in which flow is sent from the parent of $v$ into $T_v$ and for
$x<0$ this corresponds to the setting in which flow is sent from $T_v$ towards
the parent of $v$. We further follow the convention that when the demands in
$T_v$ cannot be satisfied when $v$ receives flow $x$ from its parent, then we
set $\DP(v,x)=\infty$.

Observe that this DP has rows $\I=V$ and columns $\J=\mathbb{R}$. Furthermore,
$\DP(v,\cdot)$ is monotonically decreasing since for $x < x'$, any
solution in which $T_v$ receives flow at most $x$ from the parent of $v$ is also
feasible when $T_v$ receives flow at most $x'$ from the parent of~$v$.
This satisfies Property~(1) of Definition~\ref{def:well-behaved}.

\textbf{The Inverse DP.}
Interestingly, our DP is very related to the one by Andreev et
al.~\cite{andreev09simultaneous}. They defined a function $f(v,i)$ which, given
a vertex $v$ and an integer $i\in\mathbb{N}$, denotes the \emph{minimum amount
of flow} that $v$ needs to receive from its parent if all demands in $T_v$ need
to be satisfied and if we can place $i$~sources in the subtree $T_v$.  Similar
to above, $f(v,i)$ takes positive values if the demand in $T_v$ can only be
satisified by receiving flow from the parent of $v$ and it takes negative values
if the demand in $T_v$ is already satisfied by the sources in the subtree~$T_v$
and $v$ can send flow to its parent.

Now observe that our DP can essentially be viewed as the ``inverse'' of
$f(v,i)$. More formally, observe that
$\DP(v,x)=f^{-1}(v,x):=\min\{ i \colon f(v,i)\leq x \}$.

The reason why we chose the inverse formulation for our DP is as follows. To
ensure that our algorithms are efficient, we have to make sure that our monotone
piecewise constant functions have only few pieces. One natural way to do is
using rounding. However, since the function values of $f$ are positive and
negative, it is not clear how we should perform the rounding. For example, to
only use a small number of pieces for representing~$f$, we would have to use
different rounding mechanisms for those function values in $[-1,1]$ and those in
$[-W,W]\setminus[-1,1]$, where $W$ is the largest edge capacity:  Indeed, if we
rounded the values of~$f$ to powers of $(1+\delta)^j$ then there are only
$O(\lg_{1+\delta}(W))$ function values in $[-W,W]\setminus[-1,1]$ but there are
infinitely many function values in $[-1,1]$.  Similarly, if we rounded to
multiples of $\delta$ then there are only $O(1/\delta)$ function values in
$[-1,1]$ but this would lead to $O(W/\delta)$ function values in
$[-W,W]\setminus[-1,1]$. In both cases, our functions would have too many pieces
and we would have to pick a rounding function which provides a tradeoff
between these two cases. Furthermore, we would have to find an analysis that
shows that this ``more involved'' rounding function does not introduce much too
error.

In our DP we bypass these issues because we move the negative numbers into the
domain of the function $\DP(v,\cdot) \colon \mathbb{R} \to [n+1]$. Then in the
codomain we only have non-negative numbers to which we can apply the standard
rounding function $\lceil \cdot \rceil_{1+\delta}$ in a straightforward way.
This also has the positive side effects that instead of getting factors of
$\polylog(W)$ in our running times, we only get factors of $\polylog(n)$ because
our codomain became $[n+1]$ rather than some potentially large interval
$[-W,W]$.  We believe this technique of considering inverse DPs will be useful
in the future to compute approximate solutions for DPs that can take positive
\emph{and negative} values.

Due to lack of space, we present the details for computing the DP in
Appendix~\ref{sec:simultaneous-source-location}.

{
\bibliographystyle{plainurl}
\bibliography{main}

\begin{thebibliography}{10}

\bibitem{abboud2020new}
Amir Abboud, Vincent Cohen-Addad, and Philip~N Klein.
\newblock New hardness results for planar graph problems in p and an algorithm
  for sparsest cut.
\newblock In {\em STOC}, pages 996--1009, 2020.

\bibitem{aggarwal87geometric}
Alok Aggarwal, Maria~M. Klawe, Shlomo Moran, Peter~W. Shor, and Robert~E.
  Wilber.
\newblock Geometric applications of a matrix-searching algorithm.
\newblock {\em Algorithmica}, 2:195--208, 1987.

\bibitem{aloupis2006algorithms}
Greg Aloupis, Thomas Fevens, Stefan Langerman, Tomomi Matsui, Antonio Mesa,
  Yurai Nu{\~n}ez, David Rappaport, and Godfried Toussaint.
\newblock Algorithms for computing geometric measures of melodic similarity.
\newblock {\em Computer Music Journal}, 30(3):67--76, 2006.

\bibitem{andreev09simultaneous}
Konstantin Andreev, Charles Garrod, Daniel Golovin, Bruce~M. Maggs, and Adam
  Meyerson.
\newblock Simultaneous source location.
\newblock {\em {ACM} Trans. Algorithms}, 6(1):16:1--16:17, 2009.

\bibitem{andreev2006balanced}
Konstantin Andreev and Harald R{\"a}cke.
\newblock Balanced graph partitioning.
\newblock {\em Theory of Computing Systems}, 39(6):929--939, 2006.

\bibitem{aouad22approximate}
Ali Aouad and Danny Segev.
\newblock An approximate dynamic programming approach to the incremental
  knapsack problem.
\newblock {\em Operations Research}, 2022.

\bibitem{arata2002locating}
Kouji Arata, Satoru Iwata, Kazuhisa Makino, and Satoru Fujishige.
\newblock Locating sources to meet flow demands in undirected networks.
\newblock {\em Journal of Algorithms}, 42(1):54--68, 2002.

\bibitem{ardila2005necklace}
Yoan Jos{\'e}~Pinz{\'o}n Ardila, Rapha{\"e}l Clifford, and Manal Mohamed.
\newblock Necklace swap problem for rhythmic similarity measures.
\newblock In {\em International Symposium on String Processing and Information
  Retrieval}, pages 234--245. Springer, 2005.

\bibitem{axiotis2018capacitated}
Kyriakos Axiotis and Christos Tzamos.
\newblock Capacitated dynamic programming: Faster knapsack and graph
  algorithms.
\newblock In {\em ICALP}, pages 19:1--19:13, 2019.

\bibitem{backurs17better}
Arturs Backurs, Piotr Indyk, and Ludwig Schmidt.
\newblock Better approximations for tree sparsity in nearly-linear time.
\newblock In {\em SODA}, pages 2215--2229, 2017.

\bibitem{bateni2018fast}
MohammadHossein Bateni, MohammadTaghi Hajiaghayi, Saeed Seddighin, and Cliff
  Stein.
\newblock Fast algorithms for knapsack via convolution and prediction.
\newblock In {\em STOC}, pages 1269--1282, 2018.

\bibitem{bein09knuth}
Wolfgang~W. Bein, Mordecai~J. Golin, Lawrence~L. Larmore, and Yan Zhang.
\newblock The knuth--yao quadrangle-inequality speedup is a consequence of
  total monotonicity.
\newblock {\em {ACM} Trans. Algorithms}, 6(1):17:1--17:22, 2009.

\bibitem{bienkowski03practical}
Marcin Bienkowski, Miroslaw Korzeniowski, and Harald R{\"{a}}cke.
\newblock A practical algorithm for constructing oblivious routing schemes.
\newblock In {\em SPAA}, pages 24--33, 2003.

\bibitem{bremner14necklaces}
David Bremner, Timothy~M. Chan, Erik~D. Demaine, Jeff Erickson, Ferran Hurtado,
  John Iacono, Stefan Langerman, Mihai P{\u{a}}tra{\c{s}}cu, and Perouz
  Taslakian.
\newblock Necklaces, convolutions, and {X+Y}.
\newblock {\em Algorithmica}, 69(2):294--314, 2014.

\bibitem{bringmann22faster}
Karl Bringmann and Alejandro Cassis.
\newblock Faster knapsack algorithms via bounded monotone min-plus-convolution.
\newblock In {\em ICALP}, volume 229, pages 31:1--31:21, 2022.

\bibitem{bringmann19truly}
Karl Bringmann, Fabrizio Grandoni, Barna Saha, and Virginia~Vassilevska
  Williams.
\newblock Truly subcubic algorithms for language edit distance and {RNA}
  folding via fast bounded-difference min-plus product.
\newblock {\em {SIAM} J. Comput.}, 48(2):481--512, 2019.

\bibitem{bringmann19approximating}
Karl Bringmann, Marvin K{\"{u}}nnemann, and Karol Wegrzycki.
\newblock Approximating {APSP} without scaling: equivalence of approximate
  min-plus and exact min-max.
\newblock In {\em STOC}, pages 943--954. {ACM}, 2019.

\bibitem{buluc16recent}
Aydin Bulu{\c{c}}, Henning Meyerhenke, Ilya Safro, Peter Sanders, and Christian
  Schulz.
\newblock Recent advances in graph partitioning.
\newblock In {\em Algorithm Engineering - Selected Results and Surveys}, volume
  9220 of {\em Lecture Notes in Computer Science}, pages 117--158. 2016.

\bibitem{burkard96perspectives}
Rainer~E. Burkard, Bettina Klinz, and R{\"{u}}diger Rudolf.
\newblock Perspectives of monge properties in optimization.
\newblock {\em Discret. Appl. Math.}, 70(2):95--161, 1996.

\bibitem{bussieck1994fast}
Michael Bussieck, Hannes Hassler, Gerhard~J Woeginger, and Uwe~T Zimmermann.
\newblock Fast algorithms for the maximum convolution problem.
\newblock {\em Operations research letters}, 15(3):133--141, 1994.

\bibitem{chan18approximation}
Timothy~M. Chan.
\newblock Approximation schemes for 0-1 knapsack.
\newblock In {\em SOSA}, volume~61, pages 5:1--5:12, 2018.

\bibitem{chan21near}
Timothy~M. Chan.
\newblock Near-optimal randomized algorithms for selection in totally monotone
  matrices.
\newblock In {\em SODA}, pages 1483--1495, 2021.

\bibitem{chan15clustered}
Timothy~M. Chan and Moshe Lewenstein.
\newblock Clustered integer 3sum via additive combinatorics.
\newblock In {\em STOC}, pages 31--40, 2015.

\bibitem{chi22faster}
Shucheng Chi, Ran Duan, Tianle Xie, and Tianyi Zhang.
\newblock Faster min-plus product for monotone instances.
\newblock In {\em STOC}, pages 1529--1542, 2022.

\bibitem{colannino2006n}
Justin Colannino, Mirela Damian, Ferran Hurtado, John Iacono, Henk Meijer,
  Suneeta Ramaswami, and Godfried Toussaint.
\newblock An o(n log n)-time algorithm for the restriction scaffold assignment
  problem.
\newblock {\em Journal of Computational Biology}, 13(4):979--989, 2006.

\bibitem{compton20new}
Spencer Compton, Slobodan Mitrovic, and Ronitt Rubinfeld.
\newblock New partitioning techniques and faster algorithms for approximate
  interval scheduling.
\newblock {\em CoRR}, abs/2012.15002, 2020.
\newblock URL: \url{https://arxiv.org/abs/2012.15002}.

\bibitem{cygan19problems}
Marek Cygan, Marcin Mucha, Karol Wegrzycki, and Michal Wlodarczyk.
\newblock On problems equivalent to (min, +)-convolution.
\newblock {\em {ACM} Trans. Algorithms}, 15(1):14:1--14:25, 2019.

\bibitem{dong20learning}
Yihe Dong, Piotr Indyk, Ilya~P. Razenshteyn, and Tal Wagner.
\newblock Learning space partitions for nearest neighbor search.
\newblock In {\em ICLR}, 2020.

\bibitem{eberle21fully}
Franziska Eberle, Nicole Megow, Lukas N\"{o}lke, Bertrand Simon, and Andreas
  Wiese.
\newblock {Fully Dynamic Algorithms for Knapsack Problems with Polylogarithmic
  Update Time}.
\newblock In {\em FSTTCS}, volume 213, pages 18:1--18:17, 2021.

\bibitem{eiben2019bisection}
Eduard Eiben, Daniel Lokshtanov, and Amer~E Mouawad.
\newblock Bisection of bounded treewidth graphs by convolutions.
\newblock In {\em ESA 2019}. Schloss Dagstuhl-Leibniz-Zentrum fuer Informatik,
  2019.

\bibitem{eppstein88speeding}
David Eppstein, Zvi Galil, and Raffaele Giancarlo.
\newblock Speeding up dynamic programming.
\newblock In {\em FOCS}, pages 488--496. {IEEE} Computer Society, 1988.

\bibitem{even99fast}
Guy Even, Joseph Naor, Satish Rao, and Baruch Schieber.
\newblock Fast approximate graph partitioning algorithms.
\newblock {\em {SIAM} J. Comput.}, 28(6):2187--2214, 1999.

\bibitem{feige02polylogarithmic}
Uriel Feige and Robert Krauthgamer.
\newblock A polylogarithmic approximation of the minimum bisection.
\newblock {\em {SIAM} J. Comput.}, 31(4):1090--1118, 2002.

\bibitem{feldmann15balanced}
Andreas~Emil Feldmann and Luca Foschini.
\newblock Balanced partitions of trees and applications.
\newblock {\em Algorithmica}, 71(2):354--376, 2015.

\bibitem{galil92dynamic}
Zvi Galil and Kunsoo Park.
\newblock Dynamic programming with convexity, concavity, and sparsity.
\newblock {\em Theor. Comput. Sci.}, 92(1):49--76, 1992.

\bibitem{goranci21expander}
Gramoz Goranci, Harald R{\"{a}}cke, Thatchaphol Saranurak, and Zihan Tan.
\newblock The expander hierarchy and its applications to dynamic graph
  algorithms.
\newblock In {\em SODA}, pages 2212--2228, 2021.

\bibitem{harrelson03polynomial}
Chris Harrelson, Kirsten Hildrum, and Satish Rao.
\newblock A polynomial-time tree decomposition to minimize congestion.
\newblock In {\em SPAA}, pages 34--43, 2003.

\bibitem{henzinger20dynamic}
Monika Henzinger, Stefan Neumann, and Andreas Wiese.
\newblock Dynamic approximate maximum independent set of intervals, hypercubes
  and hyperrectangles.
\newblock In {\em SoCG}, volume 164 of {\em LIPIcs}, pages 51:1--51:14, 2020.

\bibitem{hochbaum87using}
Dorit~S Hochbaum and David~B Shmoys.
\newblock Using dual approximation algorithms for scheduling problems
  theoretical and practical results.
\newblock {\em Journal of the ACM}, 34(1):144--162, 1987.

\bibitem{hua22maintaining}
Yiding Hua, Rasmus Kyng, Maximilian~Probst Gutenberg, and Zihang Wu.
\newblock Maintaining expander decompositions via sparse cuts.
\newblock {\em CoRR}, abs/2204.02519, 2022.

\bibitem{indyk2017fine}
Piotr Indyk, Fabian Kuhn, and Anca Muscholl.
\newblock On the fine-grained complexity of one-dimensional dynamic
  programming.
\newblock In {\em ICALP 2017}, volume~80, page~21, 2017.

\bibitem{jansen2018integer}
Klaus Jansen and Lars Rohwedder.
\newblock On integer programming, discrepancy, and convolution.
\newblock {\em arXiv preprint arXiv:1803.04744}, 2018.

\bibitem{jin19improved}
Ce~Jin.
\newblock An improved {FPTAS} for 0-1 knapsack.
\newblock In {\em ICALP}, volume 132, pages 76:1--76:14, 2019.

\bibitem{karypis1997metis}
George Karypis and Vipin Kumar.
\newblock Metis: A software package for partitioning unstructured graphs,
  partitioning meshes, and computing fill-reducing orderings of sparse
  matrices.
\newblock 1997.

\bibitem{knuth71optimum}
Donald~E. Knuth.
\newblock Optimum binary search trees.
\newblock {\em Acta Informatica}, 1:14--25, 1971.

\bibitem{kumabe22average}
Soh Kumabe and Yuichi Yoshida.
\newblock Average sensitivity of dynamic programming.
\newblock In {\em SODA}, pages 1925--1961, 2022.

\bibitem{laber2014lower}
Eduardo~S Laber, Wilfredo Bardales, and Ferdinando Cicalese.
\newblock On lower bounds for the maximum consecutive subsums problem and the
  (min,+)-convolution.
\newblock In {\em ISIT}, pages 1807--1811. IEEE, 2014.

\bibitem{miller11approximate}
Gary~L. Miller, Richard Peng, Russell Schwartz, and Charalampos~E. Tsourakakis.
\newblock Approximate dynamic programming using halfspace queries and
  multiscale monge decomposition.
\newblock In {\em SODA}, pages 1675--1682, 2011.

\bibitem{monge1781memoire}
Gaspard Monge.
\newblock M{\'e}moire sur la th{\'e}orie des d{\'e}blais et des remblais.
\newblock {\em Mem. Math. Phys. Acad. Royale Sci.}, pages 666--704, 1781.

\bibitem{mucha2019subquadratic}
Marcin Mucha, Karol W{\k{e}}grzycki, and Micha{\l} W{\l}odarczyk.
\newblock A subquadratic approximation scheme for partition.
\newblock In {\em SODA}, pages 70--88. SIAM, 2019.

\bibitem{peng16approximate}
Richard Peng.
\newblock Approximate undirected maximum flows in
  \emph{O}(\emph{m}polylog(\emph{n})) time.
\newblock In {\em SODA}, pages 1862--1867, 2016.

\bibitem{racke02minimizing}
Harald R{\"{a}}cke.
\newblock Minimizing congestion in general networks.
\newblock In {\em FOCS}, pages 43--52, 2002.

\bibitem{racke14improved}
Harald R{\"a}cke and Chintan Shah.
\newblock Improved guarantees for tree cut sparsifiers.
\newblock In {\em ESA}, pages 774--785, 2014.

\bibitem{racke14computing}
Harald R{\"{a}}cke, Chintan Shah, and Hanjo T{\"{a}}ubig.
\newblock Computing cut-based hierarchical decompositions in almost linear
  time.
\newblock In {\em SODA}, pages 227--238, 2014.

\bibitem{sanders13experimental}
Peter Sanders and Christian Schulz.
\newblock Think locally, act globally: Highly balanced graph partitioning.
\newblock In {\em SEA}, volume 7933, pages 164--175, 2013.

\bibitem{tamura1990location}
Hiroshi Tamura, Masakazu Sengoku, Shoji Shinoda, and Takeo Abe.
\newblock Location problems on undirected flow networks.
\newblock {\em IEICE TRANSACTIONS (1976-1990)}, 73(12):1989--1993, 1990.

\bibitem{tamura1992some}
Hiroshi Tamura, Masakazu Sengoku, Shoji Shinoda, and Takeo Abe.
\newblock Some covering problems in location theory on flow networks.
\newblock {\em IEICE Transactions on Fundamentals of Electronics,
  Communications and Computer Sciences}, 75(6):678--684, 1992.

\bibitem{toussaint2004geometry}
Godfried Toussaint.
\newblock The geometry of musical rhythm.
\newblock In {\em Japanese Conference on Discrete and Computational Geometry},
  pages 198--212. Springer, 2004.

\bibitem{toussaint2004comparison}
Godfried~T Toussaint et~al.
\newblock A comparison of rhythmic similarity measures.
\newblock In {\em ISMIR}, 2004.

\bibitem{varma21average}
Nithin Varma and Yuichi Yoshida.
\newblock Average sensitivity of graph algorithms.
\newblock In {\em SODA}, pages 684--703, 2021.

\bibitem{williams18faster}
R.~Ryan Williams.
\newblock Faster all-pairs shortest paths via circuit complexity.
\newblock {\em {SIAM} J. Comput.}, 47(5):1965--1985, 2018.

\bibitem{williams2018some}
Virginia~Vassilevska Williams.
\newblock On some fine-grained questions in algorithms and complexity.
\newblock In {\em Proceedings of the ICM}, volume~3, pages 3431--3472. World
  Scientific, 2018.

\bibitem{yao77probabilistic}
Andrew~Chi{-}Chih Yao.
\newblock {Probabilistic Computations: Toward a Unified Measure of Complexity}.
\newblock In {\em FOCS}, pages 222--227, 1977.

\bibitem{yao80efficient}
F.~Frances Yao.
\newblock Efficient dynamic programming using quadrangle inequalities.
\newblock In {\em STOC}, pages 429--435, 1980.

\end{thebibliography}
}

\newpage
\appendix

\section{Organization of the Appendix}
\label{sec:organization-appendix}
Our appendix is organized as follows:
\begin{itemize}
\item In Appendix~\ref{sec:related} we discuss more related work.
\item Appendix~\ref{sec:preliminaries} introduces preliminaries.
\item Appendix~\ref{sec:balanced-partitioning} presents our results for 
$k$-balanced partitioning.
\item Appendix~\ref{sec:simultaneous-source-location} presents our results for
simultaneous source location.
\item Appendix~\ref{sec:recourse} presents our recourse lower bounds for
algorithms which only maintain few solutions.
\item Appendix~\ref{sec:generalization} presents our
generalization to functions with non-monotonicities and our results for
$\ell_\infty$-necklace.
\item Appendix~\ref{sec:omitted-proofs} presents missing proofs.
\end{itemize}

\section{Further Related Work}
\label{sec:related}
\label{sec:comparison}

Speeding up DP algorithms is a well-studied topic, which has received attention
for several
decades~\cite{bein09knuth,aggarwal87geometric,eppstein88speeding,galil92dynamic,chan21near,knuth71optimum,yao80efficient,monge1781memoire,burkard96perspectives,miller11approximate}.
This line of work has led to several conditions, which, if satisfied, imply that
the underlying DP can be solved more efficiently. These conditions include, for
example, the Monge property, total monotonicity, certain convexity and concavity
properties, or the Knuth--Yao quadrangle-inequality, which are often related to
each other. For example, it is known that DP tables which satisfy the Monge
property are also totally monotone. One of the most popular methods in this area
is the SMAWK algorithm~\cite{aggarwal87geometric} which runs in near-linear time
in the number of columns of the DP table if the DP table is \emph{totally monotone}.
More concretely, a DP table is totally monotone if for each submatrix $A$ of the
DP table and for every pair of consecutive rows~$i$ and $i+1$ in $A$, the
minimum entry for row~$i+1$ appears in a column that is equal to or greater
than the minimum entry for row~$i$.

However, these conditions are quite different from our conditions in
Definition~\ref{def:well-behaved} and they are essentially incomparable.  For
the purpose of illustration, we will briefly argue this for total monotonicity
and Definition~\ref{def:well-behaved}; similar arguments can also be made for
the Monge property and other criteria.  On one hand, the totally monotone
matrices do not imply that the rows of the DP table are monotone. Indeed, when
the rows are monotone then finding the columns with the minimum entries is
trivial (they are always in the first or last column, depending on whether we
consider monotonically increasing or decreasing rows, respectively). Hence,
total monotonicity does not imply our condition from
Definition~\ref{def:well-behaved}. On the other hand, the ordering of the rows
is highly important for the conditions above: just swapping two rows of a
totally monotone DP table can break total monotonicity.  In our case, the
rows can be ordered arbitrarily in the DP table, as long as their dependency
graph has good properties. Hence, our property does not imply total
monotonicity. This shows that these definitions are incomparable.

Recently, Varma and Yoshida~\cite{varma21average} and Kumabe and
Yoshida~\cite{kumabe22average} studied the sensitivity of graph algorithms and
of DP algorithms. They studied how much the solutions of such algorithms change
when a random element from the input is deleted. For several problems including
knapsack they showed that these algorithms have small sensitivity. However, we
show in Section~\ref{sec:recourse} that when insertions are allowed, dynamic
algorithms must have high recourse or they have to maintain many different
solutions.

The $k$-balanced graph partitioning problem has received a lot of attention in
the theory
community~\cite{even99fast,andreev2006balanced,feldmann15balanced,feige02polylogarithmic}.
The problem is also highly relevant in
practice~\cite{karypis1997metis,sanders13experimental,buluc16recent,dong20learning},
where algorithms for balanced graph partitioning are often used as a
preprocessing step for large scale data analytics.
For the special case of $k=2$, this corresponds to
the minimum bisection problem and Feige and
Krauthgamer~\cite{feige02polylogarithmic} presented polynomial-time algorithms
with polylogarithmic approximation ratios. For $k\geq3$, Andreev and
Räcke~\cite{andreev2006balanced} showed that no polynomial-time algorithm can
achieve a finite approximation ratio unless $\P=\NP$. They also showed how to
compute a bicriteria $(O(\lg^{1.5}(n)/\varepsilon^2),1+\varepsilon)$-approximate
solution in polynomial time. Feldmann and Foschini~\cite{feldmann15balanced}
obtained a polynomial-time bicriteria
$(O(\lg^{1.5}(n) \lg \lg n),1+\varepsilon)$-approximation algorithm which has
the advantage that the approximation ratio does not depend on the
parameter~$\varepsilon$ of the partition sizes.  Even et al.~\cite{even99fast}
showed that one can compute a bicriteria $(O(\lg n),2)$-approximation in
polynomial time.

The simultaneous source location problem that we study is closely related to the
source location problem introduced by Tamura et
al.~\cite{tamura1990location,tamura1992some}, in which a minimum number of
sources must be selected to be able to satisfy any single demand in an
undirected edge-capacitated graph. Arata et al. \cite{arata2002locating} showed
that the problem is NP-hard and presented an exact algorithm for the variant
with uniform vertex costs.  In the simultaneous source location problem that was
introduced by Andreev et al. \cite{andreev2006balanced} and that we study in
this paper, \emph{all} demands must be satisfied simultaneously. Andreev et al.\
provide an $O(\log D)$-approximation algorithm, where $D$ is the sum of demands,
and a matching hardness result for this problem in general graphs.  They also
present an exact polynomial-time algorithm when the input graph is a tree and
show that this result can be extended to general graphs when the edge capacities
can be violated by a $O(\lg^2 n \lg \lg n)$-factor, where $n$ is the number of
vertices in the graph.

Chan~\cite{chan18approximation} showed that one can consider
the solutions for the 0-1 knapsack as monotone piecewise constant functions and
used this insight to obtain faster algorithms.  Recently, these results were
improved by Jin~\cite{jin19improved} who showed how to compute a
$(1+\varepsilon)$-approximation for 0-1 knapsack with $n$~items in time
$\tO(n + \varepsilon^{-9/4})$. Bringmann and Cassis~\cite{bringmann22faster}
derived faster exact algorithms for 0-1 knapsack using bounded monotone
min-plus-convolution. Aouad and Segev~\cite{aouad22approximate} study the
incremental knapsack problem, where the capacity constraint is increased over
time and the goal is to find nested subsets of items which maximize the average
profit; we note that this is different from our setting, where the goal is to
obtain efficient update times, while the solutions may change arbitrarily over
time.

An $\ell_1$-necklace alignment problem was first considered by Toussaint
\cite{toussaint2004geometry},  motivated by computational music theory and
rhythmic similarity \cite{toussaint2004comparison}. Toussaint focused on a
scenario where the beads lie at integer coordinates.  Ardila et
al.~\cite{ardila2005necklace} studied the problem for binary strings.  There
also exist results for different distance measures between two sets of points on
the real line in which not every points needs to be
matched~\cite{colannino2006n}, as well as for computing the similarity of two
melodies when they are represented as closed orthogonal chains on a
cylinder~\cite{aloupis2006algorithms}. Bremner et al.~\cite{bremner14necklaces}
showed that $\ell_2$-necklace alignment can be solved in time $O(n\lg n)$, where
$n$ is the number of beads, using FFT. They also showed that
$\ell_\infty$-necklace alignment can be solved using a constant number of
$(\min,+)$-operations and obtained subquadratic-time algorithms for $\ell_1$-
and $\ell_\infty$-necklace alignment.

A common subroutine that is employed when solving DPs is $(\min,+)$-convolution;
note that this subroutine is also of high importance in all of our algorithms.
The complexity of $(\min,+)$ convolution has received significant
attention in the literature
\cite{axiotis2018capacitated,backurs17better,bateni2018fast,bremner14necklaces,bringmann19approximating,bussieck1994fast,chan15clustered,cygan19problems,jansen2018integer,indyk2017fine,laber2014lower,mucha2019subquadratic,chi22faster}.
It was shown that naive algorithm with running time $O(n^{2})$ can be improved to
time $n^2/2^{\Omega(\sqrt{\lg n})}$~\cite{bremner14necklaces,williams18faster}
by a reduction to All Pairs Shortest Path~\cite{bremner14necklaces}
using Williams’ algorithm for the latter~\cite{bremner14necklaces}.
However, so far, no $O(n^{2-\varepsilon})$-time algorithm was found,
which led to the \emph{MinConv} hardness conjecture in fine-grained complexity theory \cite{cygan19problems,indyk2017fine}.
The conjecture is particularly appealing because it implies other conjectures
such as the 3-SUM and the All-Pairs Shortest Paths conjectures, and
dozens of lower bounds that follow from them
(see~\cite{williams2018some,cygan19problems}).
There further exist many conditional lower bounds from the MinConv conjecture and several MinConv-equivalent problems are known, e.g., related to the knapsack problem or to subadditive sequences \cite{cygan19problems,indyk2017fine}, 
among others
\cite{backurs17better,cygan19problems,jansen2018integer,indyk2017fine,laber2014lower,mucha2019subquadratic,abboud2020new,eiben2019bisection}.
There have also been improvements for efficiently approximating the
$(\min,+)$-convolution in the case of large
weights~\cite{bringmann19approximating} for the exact $(\min,+)$-matrix product
with bounded differences~\cite{bringmann19truly}.

\section{Preliminaries}
\label{sec:preliminaries}

We introduce some preliminaries that we will use in the rest of the paper.  For
the sake of better readability, we present some of the proofs in
Appendix~\ref{sec:omitted-proofs}.
We write $[m]$ to denote the set $\{0,1,\dots,m\}$.

Throughout the paper, we will consider input graphs $G=(V_G,E_G,\capac_G)$ with
$n$ vertices and $m$ edges, where $\capac_G:E_G\rightarrow\Winfty\cup\{\infty\}$ is
a weight function that for an edge $e\in E_G$ describes the capacity of the
edge. To simplify notation we extend $\capac_G$ to all vertex pairs and define
\begin{equation*}
\capac_G(x,y)=\left\{
\begin{array}{ll}
\capac_G(\{x,y\}) & \{x,y\}\in E_G\\
0 & \text{otherwise.}
\end{array}
\right.
\enspace.
\end{equation*}
Additionally, for disjoint sets $A,B\subseteq V_G$, we set
$\capac_G(A,B):=\sum_{(a,b)\in A\times B}\capac_G(a,b)$ and
$\capac_G(A):=\capac_G(A,V\setminus A)$. We drop the subscript $G$ of the
capacity function $\capac$ whenever the graph is clear from the context.

Let $(V_T,E_T,r)$ be a rooted tree. For a vertex $v\in V_T$ we use $T_v$ to denote
the subtree rooted at $v$ and we say that the \emph{degree} of $v$ is its number
of children. The \emph{height $h$ of $T$} is the length of the longest path from
the root to a leaf.

\subsection{Räcke Tree}
\label{sec:racke-tree}
A \emph{Räcke tree}~\cite{racke02minimizing} (or \emph{tree cut sparsifier})
$T=(V_T,E_T)$ for an undirected graph $G=(V_G,E_G)$ is a weighted, rooted tree
in which the leaf nodes correspond to vertices of $G$. 
For a vertex $v\in V_T$, we write $V_v\subseteq V_G$ to denote
the set of leaf vertices in $T_v$.
Naturally, an edge $e=(u,v)$ of $T$
corresponds to a cut in $G$, namely to the cut formed by the set $V_u\cap V_v$
in $G$. The capacity $\capac_T$ of the tree edge $(u,v)$ is set to the capacity
of this cut, i.e., to $\capac_G(V_u\cap V_v)$.

For a graph $H=(V_H,E_H)$ and two disjoint subsets $A,B\subseteq V_H$, we write 
\begin{equation*}
\mincut_H(A,B):=
  \min_{S\subseteq V_H: A\subseteq S, B\subseteq \bar{S}}\capac_H(S)
\end{equation*}
to denote the minimum capacity of a cut that separates $A$ and $B$.
By definition of the edge capacities in $T$ we have
$\mincut_T(A,B)\ge \mincut_G(A,B)$ for any two disjoint subsets $A,B\in V_G$.
For the sake of completeness, we prove
this property in Appendix~\ref{sec:property-racke-tree}.

The goal of a Räcke tree $T$ is to approximate the cut-structure of $G$, i.e.,
to guarantee that for all disjoint sets of vertices $A,B\subseteq V_G$, 
\begin{equation*}
\mincut_G(A,B)\le\mincut_T(A,B)\le q\cdot\mincut_G(A,B)\enspace,
\end{equation*}
for a small value $q\ge 1$. The parameter $q$ is called the \emph{quality} of
the Räcke tree.

In the static setting, Räcke trees with polylogarithmic quality guarantees can
be computed in nearly linear time~\cite{racke14computing,peng16approximate}.
When larger running times are allowed, better qualities can be
achieved~\cite{bienkowski03practical,harrelson03polynomial,racke14improved}.
\begin{theorem}[Peng~\cite{peng16approximate}]
\label{thm:racke-tree}
	Let $G$ be a connected undirected graph with $n$~vertices and $m$~edges.
	Then there exist an algorithm that computes a Räcke tree of height $O(\lg n)$ for
	$G$ with quality $O(\log^4n)$ in time $\tilde{O}(m)$.
\end{theorem}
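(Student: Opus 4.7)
The plan is to construct the Räcke tree top-down via a recursive hierarchical decomposition, where at each level we split the current vertex cluster into roughly balanced pieces using a near-linear time approximate sparse-cut subroutine. Formally, I would start with the root associated to $V_G$, and at an internal node $u$ with associated vertex set $V_u$, invoke an approximate balanced-cut oracle to partition $V_u$ into $V_{u_1},\dots,V_{u_r}$ (in the simplest version, $r=2$) such that the partition is \emph{balanced} (each side has $\Theta(|V_u|)$ vertices) and the edges cut across the partition are within an $O(\log n)$-factor of the best such balanced cut. Children $u_1,\dots,u_r$ are recursed on, terminating when the cluster is a singleton; the edge weight $\capac_T(u,u_i)$ is then set to $\capac_G(V_{u_i},V_G\setminus V_{u_i})$ as required by the definition.

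For the sparse-cut oracle I would plug in a near-linear time approximate max-flow routine (e.g., Sherman-style or Kelner--Lee--Orecchia--Sidford), which, combined with a standard ``cut-matching game'' framework, produces a balanced cut whose sparsity is within an $O(\log^2 n)$ factor of optimal in $\tilde O(m)$ time; here the balance ensures that after $O(\log n)$ levels of recursion every cluster is a singleton, so the tree has height $O(\log n)$ as claimed. The total running time telescopes: the work at any single level is $\tilde O(m)$ because the clusters at one level are vertex-disjoint, and multiplying by $O(\log n)$ levels and a polylog overhead from the approximate max-flow solver still yields $\tilde O(m)$.

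The main obstacle — and the heart of the proof — is the quality analysis, i.e., showing $\mincut_T(A,B)\le O(\log^4 n)\cdot \mincut_G(A,B)$ for every disjoint $A,B\subseteq V_G$. The standard way I would do this is via a \emph{congestion} / \emph{oblivious routing} argument: interpret the tree as giving a routing scheme in $G$ where demand is pushed up to the LCA and back down, and show that this scheme has congestion $O(\log^4 n)$. The recursive balanced cut property yields a per-level congestion bound of $O(\log^2 n)$ (one $\log$ from the approximation ratio of the balanced-cut oracle and one from charging to the optimal bisection via the Leighton--Rao style multicommodity flow / min-cut gap), and summing over $O(\log n)$ levels gives $O(\log^3 n)$; an additional $O(\log n)$ factor accounts for rounding the hierarchical decomposition into a single deterministic tree (as opposed to a convex combination of trees in Räcke's original bound). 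Put together, any cut in $G$ can be routed in $T$ with congestion $O(\log^4 n)$, which is exactly the dual statement of the required quality bound.

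The lower bound $\mincut_G(A,B)\le\mincut_T(A,B)$ is free: it is proved in Appendix~\ref{sec:property-racke-tree} from the definition of $\capac_T$, and requires no new ingredients. Thus assembling (i) the balanced sparse-cut oracle, (ii) the telescoped near-linear running time, and (iii) the per-level congestion analysis gives the theorem; the only delicate point is the congestion analysis, which I would carry out by carefully combining a $j$-tree / hierarchical routing argument with the approximation guarantees of the underlying flow subroutine.
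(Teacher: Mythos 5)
This theorem is imported as a black box from Peng~\cite{peng16approximate} (building on Räcke--Shah--Täubig~\cite{racke14computing}); the paper gives no proof of its own, only the easy lower-bound direction $\mincut_G(A,B)\le\mincut_T(A,B)$ in Appendix~\ref{sec:property-racke-tree}. So there is no paper proof to compare against, and the right move here is simply to cite the result. Evaluating your sketch on its own merits, though, there is a real gap in the heart of the argument.

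The gap is in the quality analysis. Building the tree by recursively taking approximately-sparsest \emph{balanced} cuts does not by itself give a tree cut sparsifier with polylogarithmic quality: for the LCA routing scheme to have low congestion you need each cluster $V_u$ to be \emph{well-linked} (or boundary-linked) with respect to its boundary and its children's boundaries, so that the demand that gets routed up to $u$ and back down can actually be spread inside $V_u$ without creating a bottleneck. A sparse balanced cut says nothing about the internal expansion of the pieces it produces, and without that property the per-level congestion bound you assert (``$O(\log^2 n)$ from the approximation ratio plus the flow--cut gap'') does not hold; this is precisely why the near-linear constructions of~\cite{racke14computing,peng16approximate} route through boundary-linked/expander decompositions (and why Harrelson--Hildrum--Rao~\cite{harrelson03polynomial} needed an explicit well-linked decomposition). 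Your accounting for the fourth $\log$ factor — ``rounding a convex combination of trees into a single tree'' — also does not match these constructions: Räcke's 2002 result produced a distribution over trees, but the single-tree constructions you would need here build one tree directly and pick up their log factors from the cut-matching game and the recursion, not from a derandomization step. Finally, the telescoping of the running time is fine at one level, but you would need to check that the well-linked/expander decomposition subroutine itself runs in $\tilde O(m)$ total across a level, which again is exactly the nontrivial content of the cited papers. In short: the skeleton (recursive decomposition plus a congestion argument) points in the right direction, but the missing well-linkedness ingredient is not a detail — it is the reason the theorem is hard, and without it the claimed $O(\log^4 n)$ quality does not follow.
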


Furthermore, there has recently been interest in maintaining Räcke trees
dynamically~\cite{goranci21expander,hua22maintaining}. Here, we will use a
result by Goranci, Räcke, Saranurak and Tan who showed that one can maintain
Räcke trees for unweighted graphs dynamically with subpolynomial update time.
\begin{theorem}[Goranci, Räcke, Saranurak and Tan~\cite{goranci21expander}]
\label{thm:dynamic-racke-tree}
	Let $G$ be an undirected, unweighted graph with $n$ vertices that is
	undergoing edge insertions and deletions.  There exists a deterministic
	algorithm with amortized update time $n^{o(1)}$ that maintains a Räcke tree
	for $G$ with quality $n^{o(1)}$ and height $O(\lg^{1/6}n)$.
\end{theorem}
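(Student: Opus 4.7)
The plan is to construct a hierarchical \emph{expander decomposition} of $G$ and show that it yields a Räcke tree with the claimed quality and height, then to maintain this hierarchy dynamically using an expander-pruning subroutine level by level.

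First, I would recall the notion of a $\phi$-expander decomposition: for a graph $H$ and a conductance parameter $\phi$, one can partition $V_H$ into clusters $U_1,\dots,U_k$ so that each induced subgraph $H[U_i]$ has conductance at least $\phi$ while the number of inter-cluster edges is at most $\tilde{O}(\phi\cdot|E_H|)$. Modern machinery (Saranurak--Wang; Nanongkai--Saranurak) gives algorithms that both compute such a decomposition and \emph{maintain} it under edge insertions/deletions, in amortized update time $n^{o(1)}$ per update, with the key operation being \emph{expander pruning}: when an expander cluster is perturbed, a small subset of vertices is removed so that the remaining part is still a $\phi'$-expander for some slightly smaller~$\phi'$.

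Next, I would build the hierarchy by recursion. Apply an expander decomposition to $G$ with some conductance $\phi=n^{-o(1)}$; the resulting clusters become the children of the root of the tree. Contract each cluster into a super-vertex and apply the decomposition recursively to the resulting (weighted) multigraph; repeat until the graph has only one node. The leaves of the hierarchy correspond to the vertices of $G$, internal nodes correspond to clusters, and each tree edge is weighted by the capacity of the cut of the corresponding cluster in~$G$. Because each level reduces the number of vertices by a polynomial factor (via the expansion in each cluster), the depth is $O(\log^{1/6} n)$ once the parameters are balanced carefully. Standard Räcke-tree analysis then shows that routing demands between two leaves along the tree congests each edge by at most $\tilde{O}(1/\phi)$ per level; over $O(\log^{1/6} n)$ levels the total quality is at most $(1/\phi)^{O(\log^{1/6} n)} = n^{o(1)}$.

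To make the construction dynamic, I would maintain each level by the dynamic expander-decomposition data structure. An edge update at the bottom level either keeps the affected cluster an expander or triggers an expander-pruning step, which relocates a subset of leaves to a ``side cluster'' and possibly rebuilds that cluster from scratch. These local changes translate into a constant number of edge updates in the contracted graph of the next level, where the same mechanism is applied recursively. The main obstacle, and the heart of the analysis, is to bound the total amortized work propagating across the $O(\log^{1/6} n)$ levels: one must choose $\phi$ and the pruning parameters so that the number of induced updates at level $\ell+1$ is only a subpolynomial blow-up of the number at level $\ell$, while still keeping the tree quality at $n^{o(1)}$. Once this parameter balancing is done and combined with the per-level $n^{o(1)}$ update time of the dynamic expander decomposition, one obtains the claimed amortized update time of $n^{o(1)}$ for maintaining a Räcke tree of height $O(\log^{1/6} n)$ and quality $n^{o(1)}$.
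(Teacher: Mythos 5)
This theorem is invoked as a black box from Goranci, R\"acke, Saranurak and Tan~\cite{goranci21expander}; the paper contains no proof of it, only the citation. So there is no in-paper argument against which to check your reconstruction — what matters is whether your sketch reflects the cited work. In broad strokes it does: the cited construction is indeed a hierarchical expander decomposition maintained by expander pruning, with the R\"acke-tree quality bound following from a per-level congestion guarantee compounded over the levels.

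Two substantive points deserve correction, though. First, your explanation of the $O(\log^{1/6}n)$ depth bound --- that ``each level reduces the number of vertices by a polynomial factor'' --- is not the right mechanism; a genuine polynomial reduction in $n$ per level would give $O(1)$ depth, not $\Theta(\log^{1/6}n)$. In the expander hierarchy the recursion is on the contracted graph of \emph{inter-cluster (boundary) edges}, and the depth emerges from balancing the conductance parameter $\phi$ (equivalently, the boundary-linkedness parameter) against the rate at which boundary-edge volume shrinks per level, against the per-level loss in quality, and against the amortized cost of pruning. The exponent $1/6$ is an artifact of that three-way trade-off, not of a vertex-count recursion. Second, you defer the whole amortized-cost analysis to ``careful parameter balancing,'' but that cascade argument --- bounding the total pruned volume that propagates through all $O(\log^{1/6}n)$ levels via a potential function --- is the technical heart of the cited result and cannot be waved away; without it there is no reason updates at level $\ell+1$ remain subpolynomially bounded in terms of updates at level~$\ell$. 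As a reconstruction of the cited proof your outline has the right ingredients but misstates the depth mechanism and omits the central amortization argument.
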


\subsection{Okay-Behaved DPs}
\label{sec:dps-on-trees}
We introduce a more general DP condition compared to the one in
Definition~\ref{def:well-behaved} which, however, will not allow us to obtain
results like Theorems~\ref{thm:well-behaved-static}
or~\ref{thm:well-behaved-dynamic}. We will consider the same type of DP tables
as in Section~\ref{sec:approx-conv}.

\begin{definition}
\label{def:okay-behaved}
A DP is \emph{okay-behaved} if it fulfills the sensitivity condition
of well-behaved DPs:
	Suppose $\beta>1$ and for all $i'\in\In(i)$, we obtain a
	$\beta$-approximation $\ADP(i',\cdot)$ of $\DP(i',\cdot)$ (as
	per Equation~\eqref{eq:approximation}).
	Then applying $\Procedure_i$ on the $\ADP(i',\cdot)$ yields
	a $\beta$-approximation of $\DP(i,\cdot)$, i.e.,
	\begin{align*}
		\DP(i,\cdot)
			\le \Procedure_i(\{ \ADP(i',\cdot) \colon i'\in\In(i) \} )
			\le \beta\cdot\DP(i,\cdot).
	\end{align*}
\end{definition}

We also use routines $\tilde{\Procedure}_i$ to compute the DP rows
$\ADP(i,\cdot)$. Again, if for all~$i$ it holds that 
$\tilde{\Procedure}_i(\{ \ADP(i',\cdot) \colon i'\in\In(i) \} )$ is an
$\alpha$-approximation of $\Procedure_i(\{ \ADP(i',\cdot) \colon i'\in\In(i) \})$,
we say that $\ADP(1,\cdot),\dots,\ADP(n,\cdot)$ is an
\emph{$\alpha$-approximate DP solution}.

In the dependency graph, we call a vertex without any incoming edges a
\emph{leaf}. The \emph{level} of a vertex~$u$ is the length of the longest path
from a leaf to~$u$.  Similar to the proof of
Theorem~\ref{thm:well-behaved-static} we can show the following approximation
guarantee for the approximate solutions $\ADP(i,\cdot)$ and the exact solutions
$\DP(i,\cdot)$.
\begin{lemma}
\label{lem:approx-dp}
	Let $i$ be a vertex of the dependency graph with level~$\ell$. Then the
	entry $\tDP(i,\cdot)$ in the $\alpha$-approximate ADP-solution
	for a okay-behaved DP problem fulfills
	\begin{equation*}
		\DP(i,\cdot)\le\tDP(i,\cdot)\le\alpha^{\ell+1}\cdot\DP(i,\cdot).
	\end{equation*}
\end{lemma}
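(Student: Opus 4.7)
The plan is to prove the lemma by induction on the level $\ell$ of the vertex $i$ in the dependency graph.

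For the base case $\ell = 0$, the vertex $i$ is a leaf, so $\In(i) = \emptyset$. Then $\tDP(i,\cdot) = \tilde{\Procedure}_i(\emptyset)$ is an $\alpha$-approximation of $\Procedure_i(\emptyset) = \DP(i,\cdot)$ by the definition of an $\alpha$-approximate ADP-solution, giving $\DP(i,\cdot) \le \tDP(i,\cdot) \le \alpha \cdot \DP(i,\cdot) = \alpha^{\ell+1}\cdot \DP(i,\cdot)$.

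For the inductive step, I would fix a vertex $i$ of level $\ell \ge 1$ and assume the claim holds for all vertices of level at most $\ell - 1$. Every $i' \in \In(i)$ has level at most $\ell - 1$, so by the induction hypothesis, $\DP(i',\cdot) \le \tDP(i',\cdot) \le \alpha^{\ell}\cdot \DP(i',\cdot)$, meaning each $\tDP(i',\cdot)$ is an $\alpha^{\ell}$-approximation of $\DP(i',\cdot)$. Applying the sensitivity condition from Definition~\ref{def:okay-behaved} with $\beta = \alpha^{\ell}$ then gives
\begin{equation*}
    \DP(i,\cdot)
    \le \Procedure_i\bigl(\{ \tDP(i',\cdot) : i' \in \In(i)\}\bigr)
    \le \alpha^{\ell}\cdot \DP(i,\cdot).
\end{equation*}
Combining this with the fact that $\tilde{\Procedure}_i$ applied to the $\tDP(i',\cdot)$ is an $\alpha$-approximation of $\Procedure_i$ applied to the same inputs, we obtain
\begin{equation*}
    \DP(i,\cdot)
    \le \Procedure_i\bigl(\{\tDP(i',\cdot)\}\bigr)
    \le \tDP(i,\cdot)
    \le \alpha \cdot \Procedure_i\bigl(\{\tDP(i',\cdot)\}\bigr)
    \le \alpha^{\ell+1}\cdot \DP(i,\cdot),
\end{equation*}
which completes the induction.

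There is no real obstacle here; the lemma is essentially a clean unwinding of the two approximation layers. The only subtle point is making sure to apply the sensitivity condition with the correct value of $\beta$ (the compounded $\alpha^{\ell}$ inherited from the induction hypothesis) rather than with $\beta = \alpha$, since the inputs $\tDP(i',\cdot)$ are already distorted by a factor of $\alpha^{\ell}$ before $\tilde{\Procedure}_i$ adds one more multiplicative $\alpha$-factor. The well-definedness of the induction relies on the dependency graph being a DAG (so levels are well-defined), which is implicitly assumed as in Definition~\ref{def:well-behaved}.
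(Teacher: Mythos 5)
Your proof is correct and follows essentially the same inductive strategy as the paper's: induct on the level, invoke the sensitivity (okay-behaved) condition with the compounded factor $\beta=\alpha^{\ell}$, and tack on one more $\alpha$-factor from the approximation guarantee of $\tilde{\Procedure}_i$. If anything your write-up is a bit more careful than the paper's, which abbreviates the sensitivity step by writing an equality where an application of the condition with $\beta=\alpha^{\ell}$ is really meant.
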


Next, suppose the dependency graph of the DP
that we consider is derived from a tree as follows.
Let $T = (V_T,E_T,r)$ be a rooted tree with root $r$ and
height~$h$.  We assume that the children of a vertex are ordered from left to
right.  The dependency graph that we associate with $T$ is simply a directed
copy of $T$ in which we direct each edge towards the root. More precisely, the
dependency graph contains copies of all vertices in $V_T$ and for each vertex
$v$ (except for $r$) an edge to its parent $p$. Clearly, this set of edges
induces a DAG in which the longest path has at most $h$ edges.  The following
lemma summarizes the properties of approximate DP solutions when using this
approach.

\begin{lemma}
\label{lem:tree-dp}
	Consider a rooted tree $T=(V_T,E_T,r)$ with height~$h$.
	Consider an okay-behaved DP and the ADP-solution $\tDP(i,\cdot)$
	corresponding to the dependency graph described above. Assume that each
	$\tilde{\Procedure}_i$ is an $\alpha$-approximation of $\Procedure_i$ and can be computed in
	time at most $t$.  Then $\tDP(r,\cdot)$ is an $\alpha^{h+1}$-approximation of
	$\DP(r,\cdot)$ and can be computed in time $O(\abs{V_T} \cdot t)$.
\end{lemma}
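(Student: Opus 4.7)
The plan is to prove the two claims separately: first the approximation guarantee and then the running time bound.

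For the approximation guarantee, I would simply invoke Lemma~\ref{lem:approx-dp}. In the dependency graph associated with $T$, every edge points toward the root, and the longest path from any leaf to any vertex $v$ has length at most the depth of $v$ in $T$. In particular, since $T$ has height $h$, the root $r$ has level at most $h$ in the dependency graph. Applying Lemma~\ref{lem:approx-dp} with $i = r$ and $\ell \le h$ yields
\[
    \DP(r,\cdot) \le \tDP(r,\cdot) \le \alpha^{h+1}\cdot \DP(r,\cdot),
\]
which is the desired approximation bound. This part is essentially immediate given the work already done in Lemma~\ref{lem:approx-dp}.

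For the running time, I would describe a straightforward bottom-up computation on $T$. We process the vertices of $T$ in an order consistent with the reverse of the dependency graph, i.e., leaves first and the root last (for instance, by a post-order traversal). When we process a vertex $i$, all of its in-neighbors $i'\in\In(i)$ in the dependency graph (which correspond to the children of $i$ in $T$) have already been processed, so their rows $\tDP(i',\cdot)$ are available. We then compute $\tDP(i,\cdot)$ by a single application of $\tilde{\Procedure}_i$, which by assumption takes time at most $t$. Summing over all $|V_T|$ vertices gives a total running time of $O(|V_T|\cdot t)$, as required.

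I do not expect any real obstacles here, since both parts are essentially bookkeeping on top of Lemma~\ref{lem:approx-dp} and the definition of the tree-based dependency graph. The only mildly subtle point is that we must justify the ordering used to process the vertices and observe that, because the dependency graph is a DAG obtained by orienting tree edges toward the root, a post-order traversal of $T$ respects all dependencies; this ensures that each call to $\tilde{\Procedure}_i$ finds all of its inputs already computed.
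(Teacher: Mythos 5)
Your proof is correct and follows the same two-step strategy as the paper: bound the level of the root by $h$ and invoke Lemma~\ref{lem:approx-dp}, then observe that a post-order (i.e., topological) traversal computes each row exactly once via $\tilde{\Procedure}_i$ in time $t$, giving $O(\abs{V_T}\cdot t)$ overall. One small wording slip: the longest leaf-to-$v$ path in the dependency graph is the height of the subtree rooted at $v$, not the depth of $v$, but your conclusion that $r$ has level at most $h$ is unaffected.
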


The main difference of this lemma together with the definition of okay-behaved
DPs and Theorem~\ref{thm:well-behaved-static} with well-behaved DPs is as
follows.  When applying Theorem~\ref{thm:well-behaved-static}, we only have to
consider how many pieces our functions have and we do not have to bother about
deriving running times bound for computing the operations on our functions
(because the additional conditions from the well-behaved DPs imply good running
 time bounds).  Here, we have to check less conditions for okay-behaved DPs
(in particular, we do not have to bound the number of pieces or operations) but we
have to provide our own running time analysis.

Later, when we consider dynamic algorithms, we will have to consider the
scenario when the underlying tree $T$ changes due to edge insertions and
deletions (and therefore might become a forest). In that case, the dependency
graph and the DP solutions $\DP(i,\cdot)$
and $\ADP(i,\cdot)$ change over time as well. The following lemma asserts that when
a vertex $i$ is affected by an edge insertion or deletion,
we only have to recompute the solutions $\DP(j,\cdot)$
and $\ADP(j,\cdot)$ for vertices $j$ that are reachable from $i$ in the
dependency graph and that there are at most $h$ such vertices.
\begin{lemma}
\label{lem:tree-dp-dynamic}
	Consider a rooted tree $T=(V_T,E_T,r)$ with height~$h$
	that is undergoing edge insertions and deletions.  Then after
	each insertion or deletion, we can recompute an ADP-solution with the same
	guarantees as in Lemma~\ref{lem:tree-dp} in time $O(h \cdot t)$, where $t$
	is the time it takes to compute the functions $\tilde{\Procedure}_i$.
\end{lemma}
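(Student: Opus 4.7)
The plan is to identify the minimal set of DP rows that must be recomputed after a single edge insertion or deletion and to argue that (i)~this set has size at most $h$, (ii)~recomputing these rows in the correct order suffices to restore a valid $\alpha^{h+1}$-approximate ADP-solution, and (iii)~each individual recomputation costs $O(t)$ by hypothesis.

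First, I would describe the set of \emph{affected} rows. Suppose the edge insertion/deletion concerns the edge between a vertex~$u$ and its parent~$p$ in~$T$. The dependency graph is obtained by orienting tree edges towards the root, so the only procedures $\Procedure_i$ whose input collection $\{\DP(i',\cdot) : i'\in\In(i)\}$ is altered are those of ancestors of~$p$: for $p$ itself, its in-neighbourhood $\In(p)$ loses or gains the child~$u$, and for every strict ancestor~$a$ of~$p$ the in-neighbourhood is unchanged in identity but the value $\DP(c,\cdot)$ of the child $c$ on the path from $p$ to~$a$ may have changed. Every other row $\DP(i,\cdot)$ with $i$ outside the ancestor path (including every vertex in the subtree $T_u$) has both the same $\In(i)$ and the same inputs as before, so its old stored value is still correct. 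The set of affected rows is therefore exactly the ancestor path from~$p$ to the root~$r$, which has at most~$h$ vertices.

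Next, I would carry out the recomputation. Process the ancestor path bottom-up: first recompute $\ADP(p,\cdot)$ by invoking $\tilde{\Procedure}_p$ on the current children's ADP rows, and then walk up the path recomputing $\ADP(a,\cdot)$ for each strict ancestor~$a$ using $\tilde{\Procedure}_a$ applied to the (now up-to-date) rows in $\In(a)$. By hypothesis each call to some $\tilde{\Procedure}_i$ runs in time at most~$t$, and we make at most $h$ such calls, giving total time $O(h\cdot t)$. For edge insertions that make $u$ a new child of~$p$, one also pays $O(t)$ or less to compute the leaf/initial row $\ADP(u,\cdot)$, which is absorbed in $O(h\cdot t)$.

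Finally, I would verify the approximation guarantee. All rows outside the ancestor path still satisfy the invariant of Lemma~\ref{lem:tree-dp} from before the update, and in particular the rows in $\In(p)\setminus\{u\}$ and the rows used when processing each strict ancestor (other than the one on the recomputed path) remain $\alpha^{\ell+1}$-approximations at their respective levels~$\ell$. Processing the path bottom-up guarantees that when $\tilde{\Procedure}_i$ is called on an affected vertex~$i$, each of its inputs is already an $\alpha^{\ell(i')+1}$-approximation of $\DP(i',\cdot)$; applying the sensitivity condition of Definition~\ref{def:okay-behaved} together with the fact that $\tilde{\Procedure}_i$ is an $\alpha$-approximation of $\Procedure_i$ yields, exactly as in the proof of Lemma~\ref{lem:approx-dp}, that $\ADP(i,\cdot)$ is an $\alpha^{\ell(i)+1}$-approximation of $\DP(i,\cdot)$. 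In particular $\ADP(r,\cdot)$ is an $\alpha^{h+1}$-approximation of $\DP(r,\cdot)$, matching Lemma~\ref{lem:tree-dp}. The only genuine subtlety, and what I expect to be the main thing to be careful about, is ensuring that after a deletion the ``new roots'' of disconnected subtrees and the ordering of children assumed by $\Procedure_p$ are handled consistently so that the updated in-neighbourhoods $\In(p)$ precisely reflect the current tree; once this bookkeeping is in place, the counting and sensitivity arguments above give the claimed $O(h\cdot t)$ bound.
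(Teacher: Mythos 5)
Your proposal is correct and takes essentially the same route as the paper: identify the affected rows as the path of ancestors, bound that path by~$h$, and charge each $\tilde{\Procedure}_i$ call~$t$, with the approximation guarantee following from the same bottom-up propagation used in Lemma~\ref{lem:approx-dp} and Lemma~\ref{lem:tree-dp}. The paper's version is terser (it just says: recompute every $j$ reachable from the affected vertex~$i$ in the dependency DAG, at most~$h$ of them), but both arguments are the same counting.

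One small slip worth noting: you place the lower endpoint~$u$ strictly outside the affected set and claim its row is unchanged, but after a deletion~$u$ becomes a new root, and after an insertion it gains a parent, so $\Procedure_u$'s output can change (in the concrete instantiations the DP cell is parameterized by the existence of a parent edge). The paper's phrasing sidesteps this by taking $i$ to be the incident vertex whose reachable set is recomputed, which naturally includes~$u$ itself; the concrete dynamic algorithm in Section~\ref{sec:extension-dynamic} indeed explicitly recomputes the cell for the child endpoint. You already flag this bookkeeping as "the main thing to be careful about," so the oversight is cosmetic, but the affected path should begin at~$u$, not at~$p$; this does not change the $O(h)$ count.
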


Lemma~\ref{lem:operations} already provided a way to compute the minimum of two
monotone piecewise constant functions. When more than two functions are involved
in the minimum computation, the following version gives improved guarantees.
\begin{lemma}
\label{lem:multimin}
	Let $f_i: [0,t]\to\Winfty$, $i\in\{1,\dots,k\}$ be piecewise constant functions
	that are either all monotonically increasing or all monotonically decreasing.
	Then $f_{\min}(x):=\min_i\{f_i(x)\}$ can be computed in time $O(\sum_i
	p_i\cdot\log(\sum_i p_i))$, where $p_i$ denotes the number of pieces of
	function $f_i$.
\end{lemma}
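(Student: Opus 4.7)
The plan is to compute $f_{\min}$ by a single left-to-right sweep over the combined set of breakpoints of the $f_i$'s. First I would observe that the pointwise minimum of finitely many monotone functions that are all of the same type is again monotone of that type: if all $f_i$ are decreasing and $x\leq x'$, then $f_i(x')\leq f_i(x)$ for every $i$ and so $\min_i f_i(x')\leq \min_i f_i(x)$ (and symmetrically for the increasing case). In particular, a new piece of $f_{\min}$ can start only at some breakpoint of some $f_i$, so $f_{\min}$ has at most $P := \sum_i p_i$ pieces, and the correct output size is not an issue.

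Concretely, I would gather all $P$ breakpoints into a single list of triples $(x,i,y)$ meaning ``at $x$, $f_i$ jumps to value $y$'', and sort them by $x$ in $O(P\log P)$ time, breaking ties by $i$. Then I would initialize the current values $v_i := f_i(0)$ for $i\in\{1,\dots,k\}$ and, for the monotone increasing case, a min-heap (or balanced BST) keyed by value over the multiset $\{v_i\}_i$ that supports update-key and min-query in $O(\log k)$ time. The sweep sets $m := \min_i v_i$, emits the piece $(0,m)$ for $f_{\min}$, and then processes the sorted breakpoints in order: at each $(x,i,y)$ it updates $v_i \leftarrow y$ in the heap, queries the new minimum $m'$, and if $m' \neq m$ appends a new piece $(x,m')$ to $f_{\min}$ and sets $m := m'$.

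Correctness is immediate: between consecutive breakpoints every $f_i$ is constant, so $f_{\min}$ is too; and at a breakpoint the only value that can change is the one being updated, which is handled explicitly. The running time is $O(P\log P)$ for the sort plus at most $P$ heap operations each of cost $O(\log k)\leq O(\log P)$, which totals $O(P\log P)$ as desired. The only step that really needs a data structure is the monotone-increasing case, since raising some $v_i$ may force the minimum to be recomputed; in the monotone-decreasing case values can only decrease, the running minimum can only decrease with them, and it suffices to maintain $m$ as a single variable and test $y<m$ at each breakpoint, so the cost is dominated by the initial sort. I do not expect any significant technical obstacle beyond being careful about these two monotone directions and about tie-breaking when several breakpoints share the same $x$-coordinate (which is handled by processing all events at a common $x$ before recording the corresponding output piece).
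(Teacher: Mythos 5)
Your proof is correct and its core idea -- collect all $P = \sum_i p_i$ breakpoints, sort them by $x$-coordinate in $O(P\log P)$ time, and sweep while maintaining the running minimum -- is essentially the same as the paper's. The paper frames the same sweep a bit differently: it turns each piece into a ``piece-function'' that is $\infty$ before its start point and equal to the piece value from the start point onward, argues by monotonicity that $f_{\min}$ is the pointwise minimum of these piece-functions, and then processes them by start time, comparing each incoming piece value against the last piece of the accumulated result. For the monotonically decreasing case these two descriptions compute the same thing with the same single-variable running minimum. The one genuine difference is how the increasing case is handled: the paper just says it is analogous (which amounts to sweeping from the other end, or reflecting the domain, so that the minimum is again non-increasing along the sweep), whereas you keep a min-heap over the current per-function values $v_i$ so that raising the minimizing $v_i$ can be followed by a new $\min$-query in $O(\log k)$. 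Both give $O(P\log P)$; your heap variant is slightly more uniform across the two monotone directions, while the paper's reflection trick avoids the extra data structure. Either way the argument is sound, and you also correctly observe that every breakpoint of $f_{\min}$ is a breakpoint of some $f_i$, so the output has at most $P$ pieces.
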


We also note the following well-known lemma for sake of completeness.
\begin{lemma}
\label{lem:monotone-convolution-is-monotone}
	Let $f_1,f_2 \colon [0,t]\to\Winfty$ and suppose that one of $f_1$
	and $f_2$ is monotonically decreasing. Then $f=f_1\minconv f_2$ is
	monotonically decreasing.
\end{lemma}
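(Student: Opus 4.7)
The plan is a one-line exchange argument using the optimal splitter. Assume without loss of generality that $f_2$ is monotonically decreasing (the other case is symmetric by commutativity of $\minconv$). I need to show that $f(x) \ge f(x')$ whenever $0 \le x \le x' \le t$.

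First I would fix $x \le x'$ and pick a minimizer $\bar{x}^* \in [0,x]$ achieving
\begin{equation*}
  f(x) = f_1(\bar{x}^*) + f_2(x - \bar{x}^*).
\end{equation*}
Since $\bar{x}^* \le x \le x'$, the point $\bar{x}^*$ is also a feasible splitter in the minimization defining $f(x')$, so
\begin{equation*}
  f(x') \;\le\; f_1(\bar{x}^*) + f_2(x' - \bar{x}^*).
\end{equation*}

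Now I use monotonicity of $f_2$: because $x' - \bar{x}^* \ge x - \bar{x}^* \ge 0$ and $f_2$ is monotonically decreasing on $[0,t]$, we have $f_2(x' - \bar{x}^*) \le f_2(x - \bar{x}^*)$. Substituting yields
\begin{equation*}
  f(x') \;\le\; f_1(\bar{x}^*) + f_2(x - \bar{x}^*) \;=\; f(x),
\end{equation*}
which is exactly the required monotonicity. The only subtle point is that the minimum in the definition of $f$ is indeed attained (so $\bar{x}^*$ exists): this follows because $f_1,f_2$ take values in $\Winfty$, and the piecewise constant structure used throughout the paper ensures that the set of function values $\{f_1(\bar{x}) + f_2(x-\bar{x}) : \bar{x} \in [0,x]\}$ is finite. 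If one wants to avoid any such assumption, the argument can be recast with an infimum and a standard $\epsilon$-approximate minimizer, which costs nothing in the final inequality. No other case analysis is required, so there is no real obstacle in the proof.
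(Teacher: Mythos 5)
Your proof is correct and uses the same two ideas as the paper's: a splitter feasible for $x$ remains feasible for $x' \ge x$, and monotonicity of $f_2$ controls the remaining term. The only cosmetic difference is that the paper chains the inequalities directly on the $\min$ expression (shrink the feasible set, then apply pointwise monotonicity inside the $\min$), which quietly sidesteps the attainment question you pause on; your explicit-minimizer phrasing is an equivalent restatement.
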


\section{Balanced Graph Partitioning}
\label{sec:balanced-partitioning}
In this section, we provide an algorithm for the \emph{$k$-balanced graph partitioning}
problem.  In this problem, the input consists of a graph $G=(V,E,\capac)$, where
$\capac: E \to \Winfty$ is a weight function on the edges, and an integer $k$.
The goal is to find a partition $V_1,\dots,V_k$ of the vertices such that $\abs{V_i}
\leq \lceil \abs{V}/k\rceil$ for all $i$ and the weight of the edges which are cut
by the partition is minimized. More formally, we want to minimize
$\cut(V_1,\dots,V_k) := \sum_i\capac(V_i)$, where
$\capac(V_i) = \sum_{\{u,v\} \in E \cap (V_i, V\setminus V_i)} \capac(u,v)$.

Since the above problem is \NP-hard to approximate within any factor
$n^{1-\varepsilon}$ for any $\varepsilon$ even on
trees~\cite{feldmann15balanced}, we consider bicriteria approximation
algorithms. Given a weighted graph $G=(V,E,\capac)$, we say that a partition
$V_1,\dots,V_k$ of $V$ is an \emph{$\emph(\alpha,\beta)$-approximate} solution
if $\abs{V_i}\leq\beta\lceil n/k\rceil$ for all $i$ and $\cut(V_1,\dots,V_k) \leq
\alpha\cdot \cut(\OPT)$, where $\OPT=(V_1^*,\dots,V_k^*)$ is the optimal
solution with $\abs{V_i^*}\leq\lceil n/k\rceil$ for all $i$.

Our first main result in this section is summarized in the following theorem.
We use the notation $O'(\cdot)$ to suppress
factors in $\poly(\lg n, k, \lg(1/\varepsilon), \lg\lg(W))$.
\partitioningstatic*

Furthermore, we can also extend our results to the dynamic setting in which the
graph $G$ is undergoing edge insertions and deletions.  Our second main result
in this section is summarized in the following theorem.
\partitioningdynamic*

Our DP approach is inspired by the DP of Feldmann and
Foschini~\cite{feldmann15balanced}.  However, the DP cells in the algorithm of
Feldmann and Foschini are not monotone and, therefore, their DP cannot directly
be sped up by the fast convolution of monotone functions approach.  Hence, we
first simplify and generalize their DP to make it monotone such that we can
apply the fast convolution of monotone functions approach.

We note that in our static and dynamic algorithms, we can output the
corresponding solutions similarly to what we descriped after
Proposition~\ref{prop:dynamic-knapsack-chan} for knapsack.

To obtain these results, we will first describe an exact DP in
Section~\ref{sec:balanced-partitioning-exact} for the special case of binary
trees. Then we will show how to compute
the DP more efficiently by introducing approximation in
Section~\ref{sec:balanced-partitioning-approximate}. In
Section~\ref{sec:computing-result} we show how to return a solution based on
our DP table. Sections~\ref{sec:extension-non-binary}
and~\ref{sec:extension-dynamic} provide extensions from binary trees to more
general graphs and to the dynamic setting, respectively.

\subsection{The Exact DP}
\label{sec:balanced-partitioning-exact}
When describing the DP, we will make two assumptions. First, we assume that the
input graph $T=(V,E)$ is a \emph{binary} tree (we show in
Section~\ref{sec:extension-non-binary} how to remove this assumption).
Second, we consider a slight generalization of the $k$-balanced partition
problem on trees; we note that we did not mention this generalization in
Section~\ref{sec:technical}. In this generalization, we suppose that each vertex
is assigned a weight by a weight function $w \colon V \to \{0,1\}$.\footnote{We
	note that our proofs and algorithms also work for
	more general weight functions $w \colon V \to \mathbb{R}_+$. However, in
	that case the functions $\DP(v,g,\cut,\cdot)$ that we will introduce later
	will become more complicated to compute and, therefore, we stick with the
	simpler case of vertex weights in $\{0,1\}$.}
For convenience we set $w(U)=\sum_{u\in U} w(u)$ for all
$U\subseteq V$ and refer to $w(U)$ as \emph{the weight of the vertices in $U$}.
Now our goal will be to find a partition $V_1,\dots,V_k$ of $V$ such that
$w(V_i) \leq (1+\varepsilon)\lceil w(V)/k \rceil$ for all $i$ and we will
compare against $\OPT=(V_1^*,\dots,V_k^*)$, where $\OPT$ is the optimal solution
with $w(V_i^*)\leq\lceil w(V)/k\rceil$ for all $i$. Note that by setting $w(v)=1$
for all $v\in V$, we obtain the standard $k$-balanced partition problem and,
therefore, our variant is a strict generalization.

The reason for considering the above generalization is that later we want to use
our algorithm to find a balanced partitioning of general graphs $G=(V',E')$
using a Räcke tree~$T=(V,E)$ (see Section~\ref{sec:racke-tree}). However, the vertices
$V'$ of $G$ are just a subset of the vertices $V$ of the Räcke tree~$T$
(since the vertices of $G$ correspond to leaves in $T$ and the internal nodes of
 $T$ do not correspond to any vertices in $G$).  Thus, if we assigned weight
$w(v)=1$ to all vertices in $T$ and computed a balanced partitioning of $T$,
this would not necessarily correspond to a balanced partitioning of $G$.
Instead, later we will consider the weight function which assigns weight~1 to
all leaves in $T$ (corresponding to the vertices in $G$) and weight~0 to all
internal nodes of $T$ (which can be ignored when deriving a partitioning of
$G$). Then each set $V_i$ in $T$ will correspond to a set $V_i'$ in $G$ with
$w(V_i) = \abs{V_i'}$.  In particular, if $w(V_i)\leq (1+\varepsilon)\lceil
w(V)/k\rceil$ then we will obtain that
$\abs{V_i'} \leq (1+\varepsilon)\lceil \abs{V'}/k\rceil$ and, therefore, the
sets $V_1,\dots,V_k$ imply a balanced partition $V_1',\dots,V_k'$ of $G$.

\textbf{High-Level Description of the DP.}
We start by giving a high-level description of the DP. The DP is computed
bottom-up starting at the leaves of the tree~$G$ and then moving up.  For
each vertex $v$, we will compute a DP solution of minimum cost that encodes
whether the edge to the parent~$p$ of $v$ is cut and which edges shall be cut inside
the subtree $T_v$ that is rooted at $v$. Note that the removal of the cut edges
in our solution will decompose the tree into disjoint connected components and
exactly one of them contains $v$'s parent~$p$.
Additionally, we store information about the weight of the vertices that
are still connected to the parent $p$ (and, therefore, to the outside of
$T_v$) after the cut edges are removed.  We will assume that when we compute the
DP cell for a vertex $v$, we have access to the solutions for both of its
children.

More concretely, when we have computed a solution for a subtree $T_v$, i.e., we
know which edges incident to nodes in this subtree we are going to remove (note
that the edge leading to the parent of $v$ is incident to $T_v$ and thus we
consider it as part of this solution), we store the following information in the DP
table.  First, we store its cost, i.e., the total capacity of all edges that are
incident to vertices in $T_v$ and that are cut.  As described above, we would
also like to store the weight of the vertices that are connected to the parent
of $v$ and the sizes of connected components inside $T_v$. However, there are
two difficulties: (1)~We cannot store the weight of the vertices that are
connected to the root exactly because this would result in a too large DP table.
Instead, we store the cheapest solution in which vertices of \emph{at most} some
given weight are still connected to the parent of $v$. As we will see, this
approach gives rise to \emph{monotonically decreasing} functions and allows for a
very efficient computation of the DP table.  (2)~We store implicitly the
\emph{size} of all connected components that are created after the cut edges are
removed and that lie completely inside $T_v$. As before, storing these sizes
exactly would result in a very large DP table and, therefore, we store them
concisely using the concept of a \emph{signature}. The signatures will help us
to characterize the sizes of the components inside $T_v$ very efficiently.

\textbf{Signatures.} We call a connected component in $T_v$ \emph{large} if it
contains vertices of total weight at least $\varepsilon \lceil w(V)/k\rceil$ and
otherwise we call it \emph{small}.  Let $t=\lceil
\log_{1+\varepsilon}(1/\varepsilon) \rceil+1$, and let $M = \lceil k/\varepsilon
\rceil+1$. A \emph{signature} is a vector $g=(g_0,\dots,g_{t-1}) \in [M-1]^t$.
Observe that each $g_i$ is an integer between $0$ and $M-1$ and hence there are
$M^t = (k/\varepsilon)^{O(\varepsilon^{-1} \log(1/\varepsilon))}$ different
signatures. Intuitively, an entry $g_i$ in $g$ tells us roughly how many
components of weight $(1+\varepsilon)^i\cdot\varepsilon\lceil w(V)/k\rceil$
there are in the DP solutions that we consider. The precise definition is as
follows.

Let $\mathcal{S}=\{S_1,\dots,S_r\}$ be a set of connected components inside
$T_v$ (e.g., think of $\mathcal{S}$ as the components that are created after
removing the cut edges in the DP solution for vertex $v$).
We say that a signature vector $g=(g_0,\dots,g_{t-1}) \in [M-1]^t$ is
\emph{consistent} for $\mathcal{S}$ if we can \emph{match} the connected
components in $\mathcal{S}$ to entries in $g$ as follows.  For each large
component $S_j$ we let 
$\ell(S_j) = \arg\min\{ i\in[t] \colon
	w(S_j)\le (1+\varepsilon)^i \cdot\varepsilon\lceil w(V)/k\rceil\}$,
i.e., $\ell(S_j)$ is the smallest number $i$ such that $S_j$ has weight at most
$(1+\varepsilon)^i \cdot\varepsilon\lceil w(V)/k\rceil$. Let $s_i\in[M-1]$
denote the number of times the value $i\in[t]$ has been chosen in this process,
i.e., $s_i=\abs{\{ j \colon \ell(S_j) = i \}}$, and let $s=(s_0,\dots,s_{t-1})$
denote the resulting vector. We say that $g$ is \emph{consistent} with the set
of components $\cal S$ if $g=s$.  Thus, the above matching process can be viewed
as rounding up the component sizes and counting the number of components of each
size.

For $x\in\mathbb{N}$, we let $e(x) \in [M-1]^t$ denote the signature of a single
component with total weight $x$. More precisely, we set $e(x)$ to the vector that
has $e(x)_{j}=1$ for
$j=\arg\min\{j\in\mathbb{N} \colon x\le (1+\epsilon)^j\cdot\epsilon\lceil w(V)/k\rceil\}$ and
$e(x)_j=0$, otherwise. If $x<\epsilon\lceil w(V)/k\rceil$, we define
$e(x)=\smash{\vec{0}}$.

\subsubsection{DP Definition}
Now we describe the DP formally.  An entry $\DP(v,g,\cut,x) \in \Winfty$ in the
DP table for a vertex $v$ is indexed by a signature $g$, a
Boolean value $\cut$ and $x\in[n]$. We will consider the tuples $(v,g,\cut)$ as the rows~$\I$
of the DP table and $x$ as the columns; we associate each such row with a
function $\DP(v,g,\cut,\cdot) \colon [n] \to \Winfty$. Note that our DP
has $\abs{V} \cdot M^t \cdot 2
	= (k/\varepsilon)^{O(\varepsilon^{-1} \log(1/\varepsilon))} \cdot n$
rows. Also, note that it has columns $n$; later, even though $x$ 
only takes discrete values, we will allow $x$ to take values in~$[0,\infty)$.

It describes the optimum cost of cutting edges incident on
the subtree $T_v$ (including the cost of maybe cutting the edge to the parent
of~$v$). We will refer to the set of vertices in $T_v$ that are still connected
to the parent of $v$ after the cut edges are removed as the \emph{root component}.
We impose the following conditions on $\DP(v,g,\cut,x)$:
\begin{itemize}
	\item Once the cut edges are removed, the root component $U\subseteq T_v$
	has total weight \emph{at most} $x$, i.e., $w(U)\leq x$.
	\item If $\cut$ is set to true then the edge between $v$ and its parent
	is cut, otherwise it is kept.
	\item The vertices inside $T_v$ that (once the cut edges are removed) are
	\emph{not} connected to the parent of~$v$ form connected components that are
	consistent with the signature $g$.
\end{itemize}

We observe that if we fix a vertex~$v$, a signature~$g$ and a value for $\cut$,
then the resulting function $\DP(v,g,\cut,\cdot)$ is monotonically decreasing in
$x$.  This will be the crucial property for the rest of the section.
\begin{observation}
\label{obs:monotone}
	Let $v\in V$, $g\in[M-1]^t$ be a signature and $\cut\in\{\iscut,\notcut\}$.
	Then the function $\DP(v,g,\cut,\cdot) : [0,\infty) \to \mathbb{R}_+$ is
	monotonically decreasing.
\end{observation}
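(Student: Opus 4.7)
The plan is to argue monotonicity directly from the definition of $\DP(v,g,\cut,x)$ as the optimum value of a minimization problem whose \emph{feasible set} grows with $x$. Once we establish that every feasible configuration for a smaller bound $x$ is also feasible for a larger bound $x'$, the monotonicity of the minimum follows immediately.

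More concretely, I would fix $v\in V$, a signature $g\in[M-1]^t$, and a value $\cut\in\{\iscut,\notcut\}$, and take any $0\le x\le x'$. I would then consider the set $\mathcal{F}(v,g,\cut,x)$ of all choices of edges to cut inside $T_v$ (together with the fixed decision about the parent edge, as encoded by $\cut$) such that (i)~the resulting root component $U\subseteq T_v$ satisfies $w(U)\le x$, and (ii)~the connected components of $T_v\setminus U$ are consistent with the signature $g$. Condition~(ii) and the decision about the parent edge are independent of~$x$, while condition~(i) is a single inequality that becomes weaker as $x$ increases. Hence $\mathcal{F}(v,g,\cut,x)\subseteq\mathcal{F}(v,g,\cut,x')$.

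Since $\DP(v,g,\cut,x)$ is the minimum cost of a configuration over $\mathcal{F}(v,g,\cut,x)$ (and $\infty$ if this set is empty), we get
\begin{equation*}
	\DP(v,g,\cut,x')
	= \min_{C\in\mathcal{F}(v,g,\cut,x')} \cost(C)
	\le \min_{C\in\mathcal{F}(v,g,\cut,x)} \cost(C)
	= \DP(v,g,\cut,x),
\end{equation*}
which is exactly the claim that $\DP(v,g,\cut,\cdot)$ is monotonically decreasing. The argument is essentially the same as the one sketched for Observation~\ref{obs:monotone-intro} in Section~\ref{sec:monotonizing}; I do not expect any obstacle here. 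The only point that warrants a brief comment is that the ``at most $x$'' formulation (rather than ``exactly $x$'', as in Feldmann--Foschini) is precisely what makes the feasible sets nested and, therefore, makes the rows monotone — this is the key benefit of the monotonization we performed in passing from their DP to ours.
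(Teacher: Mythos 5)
Your proposal is correct and is essentially the same argument as the paper's: both observe that the feasibility constraint $w(U)\le x$ only relaxes as $x$ grows while all other constraints are $x$-independent, so any optimal configuration for $x$ remains feasible for $x'\ge x$, which gives $\DP(v,g,\cut,x')\le\DP(v,g,\cut,x)$. Your slightly more formal phrasing in terms of nested feasible sets $\mathcal{F}(v,g,\cut,x)\subseteq\mathcal{F}(v,g,\cut,x')$ is a fine presentation of the identical idea.
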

\begin{proof}
	By definition, $\DP(v,g,\cut,x)$ stores the cost of the optimum solution in
	which the vertices in the root component have weight \emph{at most} $x$. Now
	observe that for $x\leq x'$, the solution $\DP(v,g,\cut,x)$ is also a
	feasible solution for $\DP(v,g,\cut,x')$.  Therefore, $\DP(v,g,\cut,\cdot)$
	must be monotonically decreasing.
\end{proof}

Since the DP cells are monotonically decreasing in $x$, we will use the shorthand
notation $\DP(v,g,\cut,\infty)$ to denote the solution $\min_x \DP(v,g,\cut,x)$.
Note that this minimum is obtained for the largest $x$-value at which
$\DP(v,g,\cut,\cdot)$ changes.

\subsubsection{Computing the DP}
In the following, we describe how to compute $\DP(v,\cdot,\cdot,\cdot)$ exactly.
For computing $\DP(v,\cdot,\cdot,\cdot)$ we simply iterate over all possible
choices of $x$, $g$ and $\cut$. Note that since each vertex has weight in
$\{0,1\}$, the function $\DP(v,g,\cut,\cdot)$ only changes for
$x\in[n+1]$ (i.e., when $x$ is an integer). Thus, we only need to consider $n+1$
choices for $x$. We conclude that to compute $\DP(v,\cdot,\cdot,\cdot)$ for a
fixed vertex $v$, there are $O(M^t \cdot n)$ parameter choices that we need to
iterate over.

In our descriptions we use $p$ to denote the parent of $v$, and $v_l$ and $v_r$
to denote $v$'s left and right child, respectively, if these exist.

\def\bold #1{{\bfseries\mathversion{bold}#1\mathversion{normal}}}

\medskip\noindent
\bold{Case 1: $v$ is a leaf.} If we cut the edge to the parent of $v$, then the
cost is $\capac(v,p)$, there are no vertices in the root component and $v$ forms
its own connected component with signature $e(w(v))$. Thus, we set
$\DP(v,e(w(v)),\iscut,x) = \capac(v,p)$ for all $x\in[0,\infty)$ and we set
$\DP(v,g,\iscut,x) = \infty$ for all $x\in[0,\infty)$ and for all
signatures~$g\neq e(w(v))$.

Now suppose we do not cut the edge~$(v,p)$ to the parent of $v$. Then we do not
have to pay any cost since we are not cutting any edge, the weight of vertices
in the root component is $w(v)$ and the signature is $g=0$ since there are no
connected components in $T_v$ that are not connected to $p$.
Therefore, for all $x\in[0,w(v))$ we set $\DP(v,0,\notcut,x) = \infty$ and
for all $x\in[w(v),\infty)$ we set $\DP(v,0,\notcut,x) = 0$.
For all signatures $g\neq0$ and all $x\in[0,\infty)$, we set
$\DP(v,g,\notcut,x)=\infty$.

\medskip\noindent
\bold{Case 2: $v$ is not a leaf.}
If $v$ is not a leaf then we assume that it has exactly two children $v_l$ and
$v_r$ (if it has only one child, we can add a second child $v'$ with $w(v')=0$,
$\capac(v,v')=0$ and then $v'$ has no impact on the solution).
We assume that for both $v_l$ and $v_r$, we have already computed the solutions
$\DP(v_l,g,\cut,x)$ and $\DP(v_r,g,\cut,x)$ for all possible values of $x$, $g$
and $\cut$.

Let $e_l=(v,v_l)$ and $e_r=(v,v_r)$ denote the edges to the respective child and
let $e_p=(p,v)$ denote the edge to the parent $p$ of $v$.
In the following we distinguish four cases (A, B, C, D) depending on which of
these edges we decide to cut. For each case, we compute
$\DP_{\case}(v,g,\cut,x)$-values, $\case\in\{A,B,C,D\}$, which are the optimum
values under the condition that we cut $e_l$ and $e_r$ according to the case.
The final entry $\DP(v,g,\cut,x)$ is then obtained by minimizing over all cases,
i.e., by setting
$$\DP(v,g,\cut,x) = \min_{\case\in\{A,B,C,D\}} \DP_{\case}(v,g,\cut,x)$$ for all
$x$, $g$, $\cut$.

\medskip\noindent
\bold{Case A: cut $e_l$ and $e_r$.}
Suppose we cut $e_l$ and $e_r$. Then, given $x$ and $g$, we have to select
subsolutions for the left and right sub-tree such that the weight of vertices
that can reach $p$ is at most $x$ and the connected components inside are
consistent with $g$.

First, assume we cut the edge $e_p$. Then the cost for cutting this edge is
$\capac(v,p)$. Furthermore, the weight of vertices inside $T_v$ that can reach
$p$ is zero and, hence, the value of $x$ is irrelevant by the monotonicity of
$\DP(v,g,\cut,\cdot)$. Next, if we have a solution with signatures
$g_l$ and $g_r$ in the left and right subtree, respectively, we can combine
these solutions as long as $g_l+g_r+e(w(v))=g$ (as the vertex $v$ forms a single
		component of weight $w(v)$ since we cut both edges
$e_l$ and $e_r$). Note that in the subsolution for the child~$v_l$, the value of
$x$ does not play a role for the feasibility of the solution $\DP(v,g,\cut,x)$
since the size of the root component in $T_{v_l}$ is already encoded in $g_l$.
Therefore, to obtain minimum cost we consider $\DP(v_l,g_l,\iscut,\infty)$; by
symmetry, the same holds for $v_r$.  Therefore, we set for all $x\in[0,\infty)$,
\begin{equation}
\label{Acut}
\DP_A(v,g,\iscut,x)
  = \capac(v,p)
	+ \min_{g_l+g_r=g-e(w(v))}\{\DP(v_l,g_l,\iscut,\infty)
	+ \DP(v_r,g_r,\iscut,\infty)\}.
\end{equation}

Second, assume we do not cut the edge to the parent $p$. Then there will be at least one vertex
(namely $v$) that can reach $p$. Hence, $\DP_A(v,g,\notcut,x) = \infty$ for all
signatures $g$ and $x\in[0,w(v))$. For $x\in[w(v),\infty)$, we can combine the
solutions as above and we set
\begin{equation}
\label{Anotcut}
\DP_A(v,g,\notcut,x)
  = \min_{g_l+g_r=g}\{\DP(v_l,g_l,\iscut,\infty) + \DP(v_r,g_r,\iscut,\infty)\}.
\end{equation}

\medskip
\noindent
\bold{Case B: cut neither $e_l$ nor $e_r$.} 
Next, suppose we cut neither $e_l$ nor $e_r$. In this case we have to select
subsolutions for $T_{v_l}$ and $T_{v_r}$, where each subsolution is
characterized by the upper bound $x_l$ (resp.\ $x_r$) and its signature $g_l$
(resp. $g_r$).

First, suppose that we cut the edge $e_p$. If we let $x_l$ and $x_r$ denote the
exact weight of the root components for the subsolutions, then the vertex $v$
will be included in a component of size $x_l+x_r+w(v)$ afterwards. Hence, we can
combine the subsolutions to a solution for signature $g$ as long as
$g_l+g_r+e(x_l+x_r+w(v))=g$.  Consequently we set for every $x\in[0,\infty)$,
\begin{align*}
	&\DP_B(v,g,\iscut,x) = \\
	&\capac(v,p)
		+ \min_{x_l,x_r,g_l+g_r=g-e(x_l+x_r+w(v))}
				\DP(v_l,g_l,\notcut,x_l)+\DP(v_r,g_r,\notcut,x_r).
\end{align*}

Second, suppose that we do not cut $e_p$. Then again we have to set
$\DP_B(v,g,\notcut,x) = \infty$ for all signatures $g$ and all $x\in[0,w(v))$,
because the vertex $v$ of weight $w(v)$ can reach $p$.  For $x\ge w(v)$ we have
to select $x_l$ and $x_r$  such that they sum to $x-w(v)$ as this guarantees
that vertices of weight at most $x$ can reach the parent $p$. Consequently, we
set for all $x\in[w(v),\infty)$
\begin{equation*}
	\DP_B(v,g,\notcut,x) =
	\min_{g_l+g_r=g,x_l+x_r=x-w(v)}
		\DP(v_l,g_l,\notcut,x_l)+\DP(v_r,g_r,\notcut,x_r).
\end{equation*}

\medskip
\noindent
\bold{Case C: cut $e_l$ but not $e_r$.}
Now suppose we cut the edge to the left child $v_l$ but we do not cut the edge
to the right child $v_r$.  In this case, $v$ stays connected to the root
component of $v_r$ and we need to choose a subsolution with parameters $x_r$
and $g_r$ for $T_{v_r}$ and a subsolution with parameter $g_l$ for $T_{v_l}$.
Note that since we cut $e_l$, the upper bound on the weight of the root
component of $v_l$ is irrelevant as this is implicitly encoded in $g_l$.

First, suppose we cut $e_p$. If we let $x_r$ denote the exact weight of the root
component for the subsolution in $T_{v_r}$ then $v$ will be included in a
component of size $x_r+w(v)$ afterwards. Hence, we can combine the subsolutions
to a solution for signature $g$ as long as $g_l+g_r+e(x_r+w(v))=g$.
Consequently, for every $x\in[0,\infty)$ we set
\begin{equation}
\label{Ccut}
	\DP_C(v,g,\iscut,x) =
	\capac(v,p)
	+ \min_{x_r,g_l+g_r=g-e(x_r+w(v))}
		\DP(v_l,g_l,\iscut,\infty)+\DP(v_r,g_r,\notcut,x_r).
\end{equation}

Second, suppose we do not cut $e_p$. Then we have to set $\DP_C(v,g,\notcut,x) =
\infty$ for all signatures $g$ and all $x\in[0,w(v))$, because vertex~$v$ with
weight $w(v)$ can reach $p$.  For $x\in[w(v),\infty)$, we have to select $x_r\le
x-w(v)$ as this guarantees that vertices of total weight at most $x$ can reach
the parent $p$. Due to the monotonicity of $\DP(v_r,g_r,\notcut,\cdot)$ we can
just choose $x_r=x-w(v)$.  Consequently, for all $x\in[w(v),\infty)$ we set
\begin{equation}
\label{Cnotcut}
	\DP_C(v,g,\notcut,x) =
	\min_{g_l+g_r=g,x_r=x-w(v)}
		\DP(v_l,g_l,\iscut,\infty)+\DP(v_r,g_r,\notcut,x_r).
\end{equation}

\medskip\noindent
\bold{Case D: cut $e_r$ but not $e_l$.} Symmetric to Case~C.

\bigskip
Next, we argue that this DP is okay-behaved, i.e., it satisfies
Definition~\ref{def:okay-behaved}. In particular, we note that this DP is not
well-behaved because it does not satisfy Property~(4b) of
Definition~\ref{def:well-behaved} since in Case 2, Step~B below we will have to
perform too many $\min$-operations (see Equation~\eqref{eq:too-many-min-app}). 
We will also show that the DP's dependency graph is exactly the input tree and
hence the conditions of Lemma~\ref{lem:tree-dp} are satisfied.  Furthermore, all
entries for a DP cell $\DP(v,\cdot,\cdot,\cdot)$ can be computed in time
$O(M^{2t} n^3)$ by simply enumerating all choices in the different
$\min$-operations above.
\begin{lemma}
\label{lem:exact-dp-time}
	The DP is okay-behaved and the dependency tree and the input tree $T$
	are identical. Furthermore, given a vertex $v$, we can compute all entries
	in $\DP(v,\cdot,\cdot,\cdot)$ in time $O(M^{2t} n^3)$.
\end{lemma}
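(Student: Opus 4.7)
The lemma has three parts: (i) the dependency graph is exactly the input tree $T$, (ii) the DP satisfies the sensitivity condition of Definition~\ref{def:okay-behaved}, and (iii) the running time bound $O(M^{2t}n^3)$ per vertex. I will handle these in order, since (i) and (ii) are essentially immediate consequences of inspecting the recurrences, whereas (iii) requires a careful counting argument across the four cases.

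For (i), I will simply read off the recurrences in Cases~1 and 2: the base case depends on no other row, and in the recursive case the only rows appearing on the right-hand side are of the form $\DP(v_l,\cdot,\cdot,\cdot)$ and $\DP(v_r,\cdot,\cdot,\cdot)$, where $v_l,v_r$ are the children of $v$ in $T$. Thus the edge set of the dependency graph coincides with the parent pointers of $T$ (with edges directed toward the root), which is exactly the setup of Lemma~\ref{lem:tree-dp}.

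For (ii), I will argue that each procedure $\Procedure_v$ is built from three operations that preserve $\beta$-approximations: (a) adding a fixed constant $\capac(v,p)$, (b) summing values from different child rows, and (c) taking a pointwise minimum. Concretely, if $\ADP(v_l,g_l,\cut_l,\cdot)$ and $\ADP(v_r,g_r,\cut_r,\cdot)$ are $\beta$-approximations of their exact counterparts, then for any fixed choice of indices the sum $\ADP(v_l,g_l,\cut_l,x_l)+\ADP(v_r,g_r,\cut_r,x_r)$ is a $\beta$-approximation of the corresponding exact sum, and taking the minimum over a set of $\beta$-approximations (which the outer $\min$ in each case does) preserves the $\beta$-approximation ratio. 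Adding the constant $\capac(v,p)$ only decreases the approximation factor, hence preserves it. Combining these observations across all four cases A--D and then taking one final outer $\min$ over the cases yields the sensitivity bound.

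For (iii), I will count the work to compute $\DP(v,\cdot,\cdot,\cdot)$ by first observing that the number of entries is $O(M^t \cdot n)$ since there are $M^t$ signatures, $2$ choices for $\cut$, and $n+1$ integer values of $x$ that matter (because $w\colon V\to\{0,1\}$ implies the functions only change at integer $x$). For each fixed $(g,\cut,x)$, I will bound the work per case: Cases~A, C, D involve a minimum over $O(M^t)$ or $O(M^t\cdot n)$ tuples (signature split plus at most one free index $x_r$), and Case~B requires a minimum over at most $O(M^t\cdot n^2)$ tuples $(g_l,g_r,x_l,x_r)$ satisfying the linear constraints $g_l+g_r = g - e(x_l+x_r+w(v))$ and $x_l+x_r \le n$. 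Enumerating all these tuples naively gives $O(M^t\cdot n^2)$ work per entry, for a total of $O(M^t\cdot n)\cdot O(M^t\cdot n^2) = O(M^{2t}n^3)$ per vertex. I expect the Case~B count to be the main obstacle, since one must be careful that looping over $x_l, x_r$ and then over $g_l$ (with $g_r$ determined) indeed covers all feasible tuples within this budget; this just requires noting that once $x_l+x_r+w(v)$ is fixed, $e(x_l+x_r+w(v))$ is determined and only the split of the remaining signature mass between $g_l$ and $g_r$ is free, giving $M^t$ signature choices times $O(n^2)$ integer-pair choices. Finally, I will take the outer minimum over the four cases in constant additional time per entry, which is absorbed into the stated bound.
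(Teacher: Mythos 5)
Your proof takes essentially the same approach as the paper's: you read off the dependency structure directly from the recurrences, argue okay-behavedness by noting that the exact DP operations preserve multiplicative approximations, and obtain the running time by bounding the per-entry work by the worst case (Case B with $\cut=\iscut$, which requires iterating over $x_l,x_r$ and a signature split in $O(M^t n^2)$ time) and multiplying by the $O(M^t n)$ entries. Your explicit decomposition of each $\Procedure_v$ into constant-addition, summation, and pointwise-minimum steps is a slightly more detailed unpacking of the paper's terse ``because we perform all computations exactly,'' but the underlying argument and bounds are identical.
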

\begin{proof}
	First, note that in the DP each cell $\DP(v,\cdot,\cdot,\cdot)$ only depends
	on the solutions of its two children. Note that these are
	exactly the edges which are present in the dependency graph and also in $T$.
	Therefore, the dependency graph and $T$ are identical. Furthermore, when the
	input for a child solution is a $\beta$-approximation, the output of the DP
	will also be an $\beta$-approximation because we perform all computations
	exactly. Thus, the DP is also okay-behaved.

	Second, let us consider the running time. Recall that for fixed $x$, $g$ and
	$\cut$, we set $\DP(v,g,\cut,x) = \min_{\case\in\{A,B,C,D\}}
	\DP_{\case}(v,g,\cut,x)$ and this quantity can be computed in time $O(1)$ by
	a simple table lookup. Thus, we only have to consider the time it takes to
	compute $\DP_{\case}(v,g,\cut,x)$ for each $\case\in\{A,B,C,D\}$ and for
	fixed $x$, $g$ and $\cut$.

	For Case~A, observe the $\min$-operations can be computed by iterating over
	all $M^t$ choices of $g_l$ and setting $g_r=g-e(w(v))-g_l$ as long as $g_r$
	is a non-negative vector. Then the expressions inside the $\min$-term can be
	computed by table lookup in constant time. Thus, the time is $O(M^t)$.
	For Case~B, in case $\cut=\iscut$ note that we can iterate over all choices
	of $x_l$, $x_r$ and iterate over $g_l$ as described above. This takes time
	$O(M^t n^2)$. In the case $\cut=\notcut$ we can again iterate over the $g_l$
	as above and we can iterate over all $x_l\in[n+1]$ and set $x_r=x-w(v)-x_l$
	as long as $x_r\geq 0$; thus, the case can be solved in time $O(M^t n)$.
	For Cases~C and~D, we can iterate over all choices of $x_r$ and then iterate
	over the $g_l$ as above. This gives a total running time of $O(M^t n)$.

	We conclude that for fixed $x$, $g$ and $\cut$, the time to compute
	$\DP_{\case}(v,g,\cut,x)$ for all $\case\in\{A,B,C,D\}$ is $O(M^t n^2)$.
	Since there are $O(n)$ choices of $x$, $M^t$ choices for $g$ and two choices
	for $\cut$, we conclude that the total running time to compute
	$\DP(v,\cdot,\cdot,\cdot)$ is $O(M^{2t} n^3)$.
\end{proof}

\subsection{The Approximate DP}
\label{sec:balanced-partitioning-approximate}
In this section we show how to construct the approximate DP table in an
efficient manner. For this we essentially perform the same computations as
above, but instead of computing the exact solution $\DP(v,\cdot,\cdot,\cdot)$
by computing exact solutions to the cases $\DP_{\case}(v,\cdot,\cdot,\cdot)$, we
compute an approximate solution $\ADP(v,\cdot,\cdot,\cdot)$ which will be the
minimum of approximate solutions $\ADP_{\case}(v,\cdot,\cdot,\cdot)$, where
$\case\in\{A,B,C,D\}$.

However, there are a few crucial differences.  First, for fixed $v$, $g$, $\cut$
and $\case\in\{A,B,C,D\}$, we interpret $\tDP_{\case}(v,g,\cut,\cdot)$ as a
piecewise constant function which is stored in an efficient list representation
(as per Section~\ref{sec:approx-conv}). After we computed the solutions
$\tDP_{\case}(v,g,\cut,\cdot)$, we compute the function
\begin{align}
\label{eq:rounding-step}
\begin{split}
	&\tDP(v,g,\cut,\cdot) := \\
	&\lceil \min\{
		\tDP_A(v,g,\cut,\cdot),
		\tDP_B(v,g,\cut,\cdot),
		\tDP_C(v,g,\cut,\cdot),
		\tDP_D(v,g,\cut,\cdot),
	\} \rceil_{1+\delta},
\end{split}
\end{align}
i.e., instead of just taking the minimum over the different cases, we also
perform a rounding step to multiples of $1+\delta$. This rounding step
introduces an approximation error of $\alpha=1+\delta$ but reduces the number of
pieces within the piecewise constant function $\DP(v,g,\cut,\cdot)$ to
$p:=O(\log_{1+\delta}(W))$ according to Lemma~\ref{lem:operations} (for this to
work we need to guarantee that the function to be rounded is monotone and
therefore we will show that $\tDP_{\case}(v,g,\cut,\cdot)$ is monotone for each
$\case\in\{A,B,C,D\}$). The second crucial difference is, of course, that we
perform the above computations with values that already have been rounded, i.e.,
with entries from $\tDP$ instead of entries from $\DP$.  We note that
Equation~\eqref{eq:rounding-step} is the \emph{only} place in the approximate DP
which is not exact; all other computations are done precisely (without any
rounding) and, therefore, the approximate DP only loses a factor $1+\delta$.

In order to guarantee a highly efficient implementation we rely on the
following invariants for entries in the approximate DP:
\begin{enumerate}
	\item For all $v$, $g$, and $\cut$, the function $\tDP(v,g,\cut,\cdot)$ is
		monotonically decreasing.
	\item For all $v$, $g$, and $\cut$, the function $\tDP(v,g,\cut,\cdot)$ is
		piecewise constant with at most $p:=O(\lg_{1+\delta}(W))$ pieces.
\end{enumerate}
Note that the first property resembles the fact that for the exact DP,
$\DP(v,g,\cut,\cdot)$ is monotonically decreasing as per
Observation~\ref{obs:monotone}. However, here we state this property as an
invariant because there could exist approximations of $\DP(v,g,\cut,\cdot)$
which are non-monotone and, therefore, we need to prove that each of our
functions $\tDP(v,g,\cut,\cdot)$ is indeed monotone.
Note the second property follows immediately from the monotonicity and the
rounding step in Equation~\eqref{eq:rounding-step} and thus we will not need to
prove it in the following.

Similar to the description of the exact DP, we will now go through each of the
cases and, given $v$, describe how to compute $\ADP(v,g,\cut,\cdot)$ in time
$\tO(1)$ for all $g$ and $\cut$. The cases are exactly the same as for the exact
DP and thus for the sake of brevity we do not repeat the correctness
argument.

\medskip\noindent
\bold{Case 1: $v$ is a leaf.}
Then, we do the same in the exact case. We set $\tDP(v,e(w(v)),\iscut,x) = \capac(v,p)$ for all
$x\in[0,\infty)$ and we set $\tDP(v,g,\iscut,x) = \infty$ for all
$x\in[0,\infty)$ and all signatures~$g\neq e(w(v))$. Furthermore, we set
$\tDP(v,0,\notcut,x) = \infty$ for all $x\in[0,w(v))$ and
$\tDP(v,0,\notcut,x) = 0$ for all $x\in[w(v),\infty)$. 
For all signatures~$g\neq0$ and all $x\in[0,w(v))$, we set
$\tDP(v,g,\notcut,x)=\infty$.
Note that in all cases, the corresponding functions $\tDP(v,g,\cut,\cdot)$ are
monotonically decreasing and have $O(1)$ pieces.

\medskip\noindent
\bold{Case 2: $v$ is not a leaf.}
We distinguish the same four cases as for the exact DP. Again, we will assume
that $v$ has exactly two children $v_l$ and $v_r$ and we let $e_l=(v,v_l)$,
$e_r=(v,v_r)$ and $e_p=(p,v)$, where $p$ is the parent of~$v$.

\medskip\noindent
\bold{Case A: cut $e_l$ and $e_r$.} First, suppose we cut $e_l$ and $e_r$. Then,
as in the exact DP, if we cut the edge to the parent of $v$, we wish to set
\begin{align*}
&\tDP_A(v,g,\iscut,x) = \\
&\capac(v,p)
		+\min_{g_l+g_r=g-e(w(v))}
			\{\tDP(v_l,g_l,\iscut,\infty) + \tDP(v_r,g_r,\iscut,\infty)\}.
\end{align*}
for all $x\in[0,\infty)$.  Note that in the equation above, the quantities
$\capac(v,p)$, $\tDP(v_l,g_l,\iscut,\infty)$ and $\tDP(v_r,g_r,\iscut,\infty)$
are simply numbers and can be viewed as a piecewise constant function with a
single piece.  Thus, $\tDP_A(v,g,\iscut,\cdot)$ is a piecewise constant function
with a single piece and, therefore, it is also monotonically decreasing.  Hence,
 the invariants are satisfied for $\tDP_A(v,g,\iscut,\cdot)$.  Furthermore,
 $\tDP_A(v,g,\iscut,\cdot)$ can be computed via a sum and a minimum over
 monotonically decreasing piecewise functions via Lemma~\ref{lem:operations}.
 Note that the minimum takes $O(M^{t})$ different values because it is computed
 by iterating over all $g_l\in[M-1]^t$ and setting $g_r=g-e(w(v))-g_l$ as long
 as all entries in $g_r$ are non-negative.  Since each function
 $\tDP(v_l,g_l,\iscut,\cdot)$ has $O(p)$ pieces according to our invariants, we
 can compute the value $\tDP(v_l,g_l,\iscut,\infty)$ in time $O(1)$; the same
 holds for $\tDP(v_r,g_l,\iscut,\infty)$.  Thus, computing
 $\tDP_A(v,g,\iscut,\cdot)$ takes time $O(M^{t})$.

Next, suppose we do not cut the edge to the parent of $v$. Then, as in the exact
DP, we wish to set:
\begin{align*}
\tDP_A(v,g,\notcut,x)
  &= \min_{g_l+g_r=g}\{\tDP(v_l,g_l,\iscut,\infty) + \tDP(v_r,g_r,\iscut,\infty)\}
\end{align*}
for all $x\in[0,\infty)$. Then by the same arguments as above,
$\tDP_A(v,g,\notcut,\cdot)$ is a piecewise constant monotonically decreasing
function with a single piece. 
It can be computed in time $O(M^{t})$ as described above.

\medskip
\noindent
\bold{Case B: cut neither $e_l$ nor $e_r$.} Now suppose we do not cut any edge
to the children. 

If we do not cut the edge to the parent of $v$, we proceed similar to the exact
DP. We start by setting $\tDP_B(v,g,\notcut,x)=\infty$ for all $x\in[0,w(v))$.
Next, for $x\in[w(v),\infty)$ we wish to set
\begin{align}
	\tDP_B(v,g,\notcut,x) &=
	\min_{g_l+g_r=g,x_l+x_r=x-w(v)}
		\tDP(v_l,g_l,\notcut,x_l)+\tDP(v_r,g_r,\notcut,x_r)
		\nonumber \\
	&= 
	\min_{g_l+g_r=g}
	\min_{x_l+x_r=x-w(v)}
		\tDP(v_l,g_l,\notcut,x_l)+\tDP(v_r,g_r,\notcut,x_r).
\label{eq:b-line}
\end{align}
Note that for fixed $g_l$ and $g_r$, the inner $\min$-operation in the second
line describes a $(\min,+)$-convolution due to the constraint $x_l+x_r=x-w(v)$.
Therefore, in the inner $\min$-operation we compute a convolution
$\tDP(v_l,g_l,\notcut,\cdot)\minconv \tDP(v_r,g_r,\notcut,\cdot)$ and shift the
result by $w(v)$ via the shift operation from Lemma~\ref{lem:operations} (where
for $x\in[0,w(v))$ we set $\tDP_B(v,g,\notcut,x) =\infty$). We need time
$O(p^2\log p)$ for computing the convolution according to
Lemma~\ref{lem:convolution}.
To compute the outer minimum in Equation~\eqref{eq:b-line},
we iterate over all $g_l\in[M-1]^t$ and thus perform $O(M^{t})$ minimum
computations over piecewise constant functions with at most $p^2$~pieces. Hence,
we need time $O(M^{t} p^2 \lg(M^{t}p^2))$ according to Lemma~\ref{lem:multimin}.
By Lemma~\ref{lem:monotone-convolution-is-monotone}, $\tDP_B(v,g,\notcut,\cdot)$
is monotonically decreasing since it is the minimum over convolutions of two
monotonically decreasing functions.

If we cut the edge to the parent of $v$, then for all $x\in[0,\infty)$ we would
like to set
\begin{align*}
	&\tDP_B(v,g,\iscut,x) = \\
	&\capac(v,p)
	+\min_{x_l,x_r,g_l+g_r=g-e(x_l+x_r+w(v))}
		\tDP(v_l,g_l,\notcut,x_l)+\tDP(v_r,g_r,\notcut,x_r).
\end{align*}
Note that here we need to be careful as the range of $g_l$ and $g_r$ depends on
the choice of $x_l+x_r$. Since there are $\Omega(n)$ possible values for
$x_l+x_r$, we cannot afford to iterate over all values that $x_l+x_r$ can take.
Instead, we will show that we only need to consider
$O(\lg(k/\varepsilon)/\varepsilon)$ different pairs $(x_l,x_r)$ by exploiting
the monotonicity of $\ADP(v_l,g_l,\notcut,\cdot)$ and
$\ADP(v_r,g_r,\notcut,\cdot)$.

First, observe that we can assume $x_l\le w(T_{v_l})$ and $x_r\le w(T_{v_r})$:
increasing the upper bounds on the weight of the root component further would
mean that the root component contains more weight than \emph{all} vertices
inside the sub-tree, which is impossible. Thus, $x_l+x_r+w(v)\in[1,w(V)]$.

Second, we partition the interval $[1,w(V)]$ into $O(\log(k/\epsilon)/\epsilon)$
intervals. We have intervals $I_j=(\xi_{j-1},\xi_{j}]$ with
$\xi_j = (1+\varepsilon)^j\varepsilon\lceil w(V)/k\rceil$ for all
$j=1,\dots,\log_{1+\epsilon}(k/\epsilon)$. In addition, we add an
\enquote{interval} $I_0:=[\epsilon\lceil w(V)/k\rceil,\epsilon\lceil
w(V)/k\rceil]$ and the interval $I_{-1}:=[1,\epsilon\lceil w(V)/k\rceil)$.
We set $\xi_{0}=\epsilon\lceil w(V)/k\rceil$ and we set $\xi_{-1}$ to the
largest integer that is less than $\epsilon\lceil w(V)/k\rceil$.  Observe that
for all $j\ge -1$ and $x\in I_j$, we have $e(x)=e(\xi_j)$, i.e., the value of
$e(x)$ does not change on in the interval $I_j$.  Below, this property will
allow us to separate the conditions on $x_l+x_r$ and on $g_l+g_r$.

Now we can rewrite the above expression as
\begin{equation*}
\begin{split}
\tDP_B(v,g,\iscut,x) &=\\
\capac(v,p)+
\min_j
&\min_{x_l+x_r+w(v)\in I_j}
\min_{g_l+g_r=g-e(\xi_j)}\tDP(v_l,g_l,\notcut,x_l)+\tDP(v_r,g_r,\notcut,x_r).
\end{split}
\end{equation*}

Third, note that now the two $\min$-operations only depend on the choice of $j$
and, importantly, the minimum over $g_l$ and $g_r$ does not depend on the choice
of $x_l+x_r$ anymore. Therefore, we can swap the order of the two
$\min$-operations.
Furthermore, since $\tDP_B(v,g,\notcut,x)$ is monotonically decreasing with $x$, we can
restrict the choice of $x_l$ and $x_r$ such that $x_l+x_r+w(v)$ is the largest
number in the corresponding interval $I_j$, i.e., $x_l+x_r+w(v)=\xi_j$. Thus,
\begin{align}
\tDP_B(v,g,\iscut,x) &= \capac(v,p)+ \nonumber \\
\min_j
\min_{g_l+g_r=g-e(\xi_j)}\min_{x_l+x_r+w(v)=\xi_j}
&\tDP(v_l,g_l,\notcut,x_l)+\tDP(v_r,g_r,\notcut,\xi_j-x_l-w(v)).
\label{eq:too-many-min-app}
\end{align}

Next, we explain how the above expression can be computed efficiently.
Let us first argue how we can efficiently compute the inner $\min$-operation of
the above expression. We start by observing that this $\min$-operation is
\emph{not} a convolution since in the constraint we sum up to $\xi_i$ which is a
constant (rather than to the variable $x$). Now recall that
$\tDP(v_l,g_l,\notcut,\cdot)$ and $\tDP(v_r,g_r,\notcut,\cdot)$ are piecewise
constant functions with $O(p)$ pieces by our invariants. Since $x_l,x_r\geq 0$
this implies that there are only $O(p^2)$ choices for $x_l$ and $x_r$ such that
$x_l,x_r \in I_j$ and either a new piece starts in $\tDP(v_l,g_l,\notcut,x_l)$
or in $\tDP(v_r,g_r,\notcut,x_r)$. Thus, we can iterate over all these pairs
$(x_l,x_r)$ and evaluate $\tDP(v_l,g_l,\notcut,x_l)+\tDP(v_r,g_r,\notcut,x_r)$,
where $x_r=\xi_j-x_l-w(v)$.  Thus, we can compute the inner $\min$-operation in
time $O(p^2 \lg p)$.

Next, we can compute the outer two $\min$-operations by simply iterating over
$j$ and all choices for $g_l$ and setting $g_r=g-e(\xi_j)-g_l$ as above in
$O(M^{t}\cdot\log(k/\epsilon)/\epsilon)$ iterations.  Hence, we obtain a running
time of $O(M^{t}p^2\log p\cdot\log(k/\epsilon)/\epsilon)$.
We note that this is the step which makes the okay-behaved rather than
well-behaved (since it violates Property~(4b) of
Definition~\ref{def:well-behaved}).

Finally, we note that as $\tDP_B(v,g,\iscut,x)$ is independent of $x$, it is a
constant. Thus, $\tDP_B(v,g,\iscut,x)$ is a piecewise constant function with a
single piece and it is monotonically decreasing.

\medskip
\noindent
\bold{Case C: cut $e_l$ but not $e_r$.} Now suppose we cut the edge to the left
child but not to the right child.

First assume that we cut the edge to the parent of $v$. As in the exact DP, for
all $x\in[0,\infty)$ we want to set
\begin{align*}
&\tDP_C(v,g,\iscut,x) = \\
&\capac(v,p)+\min_{x_r,g_l+g_r=g-e(x_r+w(v))}\tDP(v_l,g_l,\iscut,\infty)+\tDP(v_r,g_r,\notcut,x_r).
\end{align*}
As in the previous case, observe that in the minimum the constraint
$g_l+g_r=g-e(x_r+w(v))$ depends on the choice of $x_r$. Thus, we rewrite the
above equation analogously to the previous case:
\begin{equation*}
\begin{split}
\label{caseC_ADP}
\tDP_C(v,g,\iscut,x) &\\
=\capac(v,p)+&\min_j\min_{x_r+w(v)\in I_j}\min_{g_l+g_r=g-e(\xi_j)}\tDP(v_l,g_l,\iscut,\infty)+\tDP(v_r,g_r,\notcut,x_r)\\
=\capac(v,p)+&\min_j\min_{g_l+g_r=g-e(\xi_j)}\min_{x_r+w(v)\in I_j}\tDP(v_l,g_l,\iscut,\infty)+\tDP(v_r,g_r,\notcut,x_r)\\
=\capac(v,p)+&\min_j\min_{g_l+g_r=g-e(\xi_j)}\tDP(v_l,g_l,\iscut,\infty)+\tDP(v_r,g_r,\notcut,\xi_j-w(v)),
\end{split}
\end{equation*}
where in the last step we used that $\tDP(v_r,g_r,\notcut,\cdot)$ is
monotonically decreasing.  The evaluation of the function values of the two
piecewise constant functions with $O(p)$ pieces can be done in time
$O(\log(p))$. Furthermore, by
exhaustively enumerating all choices for $j$ and proceeding for $g_l$ and $g_r$
as above, we obtain
$O(M^{t}\log(k/\epsilon)/\epsilon)$ iterations giving a total running time of
$O(M^{t} \log p\log(k/\epsilon)/\epsilon)$.
As before, $\tDP_C(v,g,\iscut,\cdot)$ is a constant (since the computation does
not depend on $x$) and therefore it has only a single piece and it is
monotonically decreasing.

Next, suppose we do not cut the edge to the parent of $v$. Then we set
$\DP_C(v,g,\notcut,0) = \infty$ for all signatures $g$ and all $x\in[0,w(v))$.
For all $x\in[w(v),\infty)$, we set
\begin{align*}
	\tDP_C(v,g,\notcut,x) =
	\min_{g_l+g_r=g}\tDP(v_l,g_l,\iscut,\infty)+\tDP(v_r,g_r,\notcut,x-w(v)).
\end{align*}
Note that inside the $\min$-operation, the first term is a constant and the
second term is a piecewise constant function that is shifted by $w(v)$.
Furthermore, the minimum is taken over $O(M^{t})$ piecewise constant functions
(one for each choice of $g_l$ by the same argument as above). We can perform
the addition and shift operation via Lemma~\ref{lem:operations} (time $O(p\lg
p)$ per application). Then
we perform a minimum operation over $M^t$ functions where each function
has just $p$ pieces. This can be done in time $O(M^tp\lg(M^tp))$ by
Lemma~\ref{lem:multimin}. In total we get a running time of
$O(M^tp\lg(M^tp))$.

\bold{Case D: cut $e_r$ but not $e_l$.} Symmetric to Case~C.

\bigskip 
We conlucde this subsection with the following lemma which summarizes the
properties of the approximate DP computation
The lemma follows immediately from
the above discussion.
\begin{lemma}
\label{lem:property-adp}
	The approximate DP computes a $(1+\delta)$-approximate
	DP solution and the dependency tree and the input tree $T$ are identical. 
	Given a vertex $v$, a signature $g$ and value $\cut\in\{\iscut,\notcut\}$,
	we can compute the corresponding approximate DP entry $\ADP(v,g,\cut,\cdot)$
	in time $O(M^{t}p^2\log(M^tp)\log(k/\epsilon)/\epsilon))$.
\end{lemma}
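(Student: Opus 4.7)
The plan is to verify the three claims of the lemma in sequence, leaning on the case-by-case analysis that was just carried out in Section~\ref{sec:balanced-partitioning-approximate}.

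First, for the claim that the computation produces a $(1+\delta)$-approximate DP solution in the sense of Definition~\ref{def:okay-behaved}, I would observe that the formulas for $\tDP_A, \tDP_B, \tDP_C, \tDP_D$ are literally the exact formulas of Cases A--D from Section~\ref{sec:balanced-partitioning-exact}, with the sole difference that exact child entries $\DP(v_l,\cdot,\cdot,\cdot)$ and $\DP(v_r,\cdot,\cdot,\cdot)$ are replaced by the approximate entries $\tDP(v_l,\cdot,\cdot,\cdot)$ and $\tDP(v_r,\cdot,\cdot,\cdot)$. Every convolution, shift, sum, and inner minimum is then executed exactly via Lemmas~\ref{lem:operations} and~\ref{lem:convolution}. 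Consequently the only place where we incur approximation error at vertex~$v$ is the final step~\eqref{eq:rounding-step}, where rounding with $\lceil\cdot\rceil_{1+\delta}$ inflates values by at most a factor $1+\delta$. This shows that $\tilde{\Procedure}_v$ is a $(1+\delta)$-approximation of the exact procedure $\Procedure_v$ evaluated on the same inputs, as required.

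Second, for the dependency tree, I would reuse the argument from Lemma~\ref{lem:exact-dp-time} verbatim: each $\tDP(v,\cdot,\cdot,\cdot)$ only reads $\tDP(v_l,\cdot,\cdot,\cdot)$ and $\tDP(v_r,\cdot,\cdot,\cdot)$, so the directed dependency graph has exactly the child$\to$parent edges of $T$ and thus coincides with $T$ itself.

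Third, for the running time, I would sum the per-case costs tabulated above. Case~A contributes $O(M^t)$ for both values of $\cut$. Case~B with $\cut=\notcut$ costs $O(M^t p^2\log(M^t p^2))$ for one convolution per choice of $g_l$ followed by a $\min$-reduction via Lemma~\ref{lem:multimin}. Case~B with $\cut=\iscut$ is the dominant term: exploiting the partition into intervals $I_j$ and the monotonicity of the child rows, the inner $\min$ at fixed $j, g_l, g_r$ costs $O(p^2\log p)$ because only the $O(p^2)$ piece boundaries of the two child functions inside $I_j$ need to be checked, and the two outer $\min$-operations multiply this by $O(M^t\log(k/\varepsilon)/\varepsilon)$. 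Cases~C and D are symmetric and are dominated by the Case~B iscut bound. The final rounding step in~\eqref{eq:rounding-step} costs only $O(p\log p)$ by Lemma~\ref{lem:operations} and is absorbed. Taking the maximum yields the claimed $O(M^t p^2 \log(M^t p)\log(k/\varepsilon)/\varepsilon)$.

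The main obstacle I expect is ensuring, cleanly and for every branch, that the function passed to the rounding step in~\eqref{eq:rounding-step} is monotonically decreasing, since Lemma~\ref{lem:operations} only guarantees the $O(\lg_{1+\delta}(W))$ piece bound when rounding a monotone function. For Cases~A and~B/iscut and~C/iscut the output is a single-piece constant function (monotone trivially); for Case~B/notcut monotonicity follows from Lemma~\ref{lem:monotone-convolution-is-monotone} since the output is a minimum of $(\min,+)$-convolutions of monotonically decreasing children; for Cases~C/notcut and~D/notcut monotonicity is preserved because we only add a constant and shift a monotonically decreasing function. Verifying these four monotonicity claims is what closes the invariant and, in turn, makes the running time bound uniform across all recursive levels.
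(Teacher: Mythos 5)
Your proof takes essentially the same route as the paper's: the $(1+\delta)$-factor comes only from the single rounding in Equation~\eqref{eq:rounding-step}, the dependency tree coincides with $T$ because each $\tDP(v,\cdot,\cdot,\cdot)$ reads only the two child rows, and the running time is just the per-case accounting already carried out before the lemma, of which Case~B dominates. One small bookkeeping slip: the function fed into Equation~\eqref{eq:rounding-step} is the min of $\tDP_A,\dots,\tDP_D$, and in particular $\tDP_B(v,g,\notcut,\cdot)$ can have $O(M^t p^2)$ pieces before rounding, so the rounding step itself costs $O(M^t p^2\log(M^t p^2))$ rather than $O(p\log p)$; this is still absorbed by the claimed bound, so the conclusion stands. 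Your closing observation about checking monotonicity of each $\tDP_{\case}(v,g,\cut,\cdot)$ before rounding is exactly the invariant the paper relies on to keep the number of pieces at $O(\log_{1+\delta}W)$.
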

\begin{proof}
	The approximation ratio of the approximate DP is $(1+\delta)$-approximate
	because, as we pointed out earlier, we only use exact computations except in
	the rounding step in Equation~\eqref{eq:rounding-step}. Thus, we only use
	$(1+\delta)$-factor in the computation.

	The claim about the running time follows immediately from the discussion
	above the lemma, where we already analyzed the running times for all steps.
\end{proof}

\subsection{Computing the Result}
\label{sec:computing-result}
In this section we describe how the previously described DPs can be used to
extract the result for the $k$-balanced partition problem. Recall that we
consider the generalized version of the $k$-balanced partition problem, where
each vertex~$v$ has a weight $w(v)\in\{0,1\}$ (see
Section~\ref{sec:balanced-partitioning-exact} for the definition).

We focus on the \emph{value version} of the problem in which we only need to
output an approximation of the \emph{value} of the optimal cut $\OPT$ but we do
not have to return the actual partition $V_1,\dots,V_k$ that obtains this cut
value.  We note, however, that by analyzing the DP solution from top to bottom,
we could also construct a concrete partition $V_1,\dots,V_k$ in time $\tO(n)$
that achieves the cut value which is returned by the value version.

\textbf{Feasible Signatures.} Before we describe our algorithm, we first need to
introduce the notion of \emph{feasible} signatures. More concretely, recall that
in Section~\ref{sec:balanced-partitioning-exact} we introduced signatures as a
succinct way of storing the sizes of connected components in a solution. Now,
feasible signatures will refer to signatures in which the connected components
can be partitioned such that we obtain a nearly $k$-balanced partitioning of the
vertices. We make this intuition more formal below.

For every signature $g=(g_0,\dots,g_{t-1}) \in [M-1]^t$, we say that its
associated machine scheduling instance\footnote{
	Recall that in the makespan minimization problem with identical machines,
	the input consists of a set of $N$ jobs of sizes $s_1,\dots,s_N$ and an
	integer $k$.  The goal is to find an assignment of the jobs to $k$ machines
	such that the makespan is minimized. Here, the makespan refers to maximum
	load of all $k$ machines.}
$I(g)$ is the instance which contains exactly $g_i$ jobs of size
$(1+\varepsilon)^{i} \cdot \varepsilon \lceil w(V)/k\rceil$ of all $i$.  We say
that $g$ is a \emph{feasible} signature if the jobs in $I(g)$ can be scheduled
on $k$ machines with makespan at most $(1+\epsilon)\lceil w(V)/k\rceil$.  Later,
we will identify the machines of the scheduling problems with partitions in the
$k$-balanced partitioning solution and the jobs with connected components. In
this way, we will be able to ensure the balance constraints of the $k$-balanced
partitioning solution.

\textbf{Algorithm.}
We now describe our two static algorithms for binary trees. The only difference
between the algorithms is whether to use the exact DP from
Section~\ref{sec:balanced-partitioning-exact} or the approximate DP from
Section~\ref{sec:balanced-partitioning-approximate}; we will refer to these
algorithms as the \emph{exact} and the \emph{approximation} algorithm,
respectively.  We assume that the input is an error parameter $\varepsilon>0$
and a rooted, weighted tree $T=(V,E,\capac)$ with root $r$ and vertex weights
$w(v)\in\{0,1\}$ for which we wish to solve the $k$-balanced partitioning
problem.

First, our algorithm augments $T$ by adding a \emph{fake root} $r'$. We make
$r'$ the parent of $r$ and set $w(r')=0$ and $\capac(r,r')=0$. Then we compute
the DP bottom-up as described in Section~\ref{sec:dps-on-trees}, where we
interpret $T$ as its own dependency graph. In the exact algorithm, we use the DP
from Section~\ref{sec:balanced-partitioning-exact}, and in the approximation
algorithm, we use the DP from
Section~\ref{sec:balanced-partitioning-approximate}.

Second, we compute the set of all \emph{nearly} feasible signatures. To obtain
this set, we enumerate all $M^t$ signatures and for each of them, we check
whether it is nearly feasible or not. We do this as follows.  For each
signature~$g$, we construct the machine scheduling instance $I(g)$ and run the
PTAS by Hochbaum and Shmoys~\cite{hochbaum87using} for this problem with
approximation ratio $1+\PTASeps$ and running time
$(N/\PTASeps)^{O({1/\PTASeps^2)}}$, where $N$ denotes the total number of jobs
in $I(g)$ and we will see later that $N$ is a constant if $k$, $\varepsilon$ and
$\PTASeps$ are constants.  We add a signature~$g$ to the set of nearly feasible
signatures if the returned makespan for $I(g)$ is at most
$(1+\PTASeps)(1+\varepsilon)\lceil w(V)/k\rceil$.  We note that by using the
PTAS, the set that we compute can potentially contain some signatures which are
infeasible but they still do not violate the balance constraint too much.

Third, we consider the entries in the DP table at the (true) root~$r$ of the
tree for the case that the edge to its (artificial) parent is not cut (recall
that we added an edge of weight $0$ from the true root~$r$ to the fake root~$r'$
and so cutting it does not incur any cost), i.e., we consider the DP entries
$\DP(r,\cdot,\iscut,w(V))$ or $\ADP(r,\cdot,\iscut,w(V))$ depending on whether
we are in the approximate or in the exact case. We iterate over all
\emph{feasible} signature vectors $g$ and then take the minimum value that we
have seen.

We conclude the algorithms' guarantees in the following proposition. We
note that for constant $k$, $\varepsilon$, $\PTASeps$ and $W$, the running time
of the exact algorithm is $\tO(n^4)$ and the running time of the approximation
algorithm simplifies to $\tO(n\cdot h^2)$, where $h$ is the height of the input
tree.  Thus, for trees of height $\tO(1)$, the approximation algorithm is very
efficient and runs in time $\tO(n)$.
\begin{proposition}
\label{prop:binary-trees}
	Let $\varepsilon, \PTASeps > 0$ and $k\in\mathbb{N}$.
	Let $T=(V,E,\capac)$ be a rooted binary tree that has edge weights
	$\capac(e)$ and vertex weights $w(v)\in\{0,1\}$. Then:
	\begin{itemize}
		\item The exact algorithm obtains a bicriteria
		$(1,(1+\PTASeps)(1+\epsilon))$-approximation for the $k$-balanced
		partitioning problem on $T$ in time $O(M^{2t} n^4)$.
		\item The approximation algorithm obtains a bicriteria
		$(1+\varepsilon,(1+\PTASeps)(1+\epsilon))$-approximation for the $k$-balanced
		partitioning problem on $T$ in time
                $O\big(nh^2\cdot M^{2t}\log^2(W)
          \log(k/\varepsilon)\log(M^th
          \log(W)/\varepsilon)/\varepsilon^3\big)
			+M^t(k/(\epsilon\PTASeps))^{O(1/\PTASeps^2)}$,
		where $h$ denotes the height of~$T$.
	\end{itemize}
\end{proposition}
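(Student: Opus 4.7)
My plan is to split the proof into a correctness argument, with separate cost-upper-bound and balance subparts, and a running-time analysis.

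For the cost upper bound, I would observe that $\OPT$, when viewed through the rooted-tree DP, is induced by a specific cut-set whose connected components give a unique signature $g^\ast$. I then argue that $g^\ast$ survives the algorithm's feasibility filter: since each of $\OPT$'s $k$ parts has weight at most $\lceil w(V)/k \rceil$ and the signature rounds each large component's weight up by a factor at most $1+\varepsilon$, assigning the rounded large components to bins according to $\OPT$ yields a schedule for $I(g^\ast)$ of makespan at most $(1+\varepsilon)\lceil w(V)/k\rceil$, so the Hochbaum--Shmoys PTAS returns a schedule of makespan at most $(1+\PTASeps)(1+\varepsilon)\lceil w(V)/k\rceil$. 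Since $\DP(r,g^\ast,\iscut,w(V))$ is by definition the minimum cost of any cut-set inducing signature $g^\ast$, and $\OPT$'s own cut-set qualifies, we get $\DP(r,g^\ast,\iscut,w(V)) \le \cost(\OPT)$. For the approximate algorithm, I would combine this with Lemma~\ref{lem:approx-dp} (applied to the okay-behaved DP of Lemmas~\ref{lem:exact-dp-time} and~\ref{lem:property-adp}) and choose $\delta=\Theta(\varepsilon/h)$ so that $(1+\delta)^{h+1}\le 1+\varepsilon$, yielding $\ADP(r,g^\ast,\iscut,w(V)) \le (1+\varepsilon)\cost(\OPT)$.

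For the balance bound, I would fix any nearly feasible signature $g$ considered by the algorithm and translate the DP's implicit cut-set into a $k$-partition. The PTAS schedules $I(g)$ with makespan at most $C := (1+\PTASeps)(1+\varepsilon)\lceil w(V)/k\rceil$; interpreting each machine as a part $V_i$ and each job as the associated large component (whose actual weight is at most the rounded weight), I obtain a packing of all large components into $k$ parts of load at most $C$. I would then add the small components (each of weight strictly below $\varepsilon\lceil w(V)/k\rceil$) greedily, always to the currently least-loaded part. The key invariant is that if no part could accept a small component without exceeding $C$, then every part's load would already be strictly greater than $C - \varepsilon\lceil w(V)/k\rceil = (1+\PTASeps+\varepsilon\PTASeps)\lceil w(V)/k\rceil \ge \lceil w(V)/k\rceil$, forcing the total load to exceed $w(V)$---a contradiction. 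Hence the greedy augmentation succeeds and the balance factor $(1+\PTASeps)(1+\varepsilon)$ holds.

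For the running time, the exact algorithm invokes Lemma~\ref{lem:exact-dp-time} at each of $n$ vertices, giving $O(M^{2t} n^4)$; the feasibility-check loop runs the PTAS on $M^t$ scheduling instances of size $O(tM)$, which is dominated. For the approximate algorithm, Lemma~\ref{lem:property-adp} gives $O(M^t p^2 \log(M^t p)\log(k/\varepsilon)/\varepsilon)$ per row; multiplying by the $O(M^t)$ rows per vertex and by $n$ vertices and substituting $p = O(h\log(W)/\varepsilon)$ (forced by our choice of $\delta$) yields the stated product bound, while the feasibility loop contributes the additive $M^t(k/(\varepsilon\PTASeps))^{O(1/\PTASeps^2)}$ term. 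The main obstacle I expect is the interaction between the signature-level view (which only tracks large components) and the small components when materializing a partition from the DP's cut-set: the DP does not constrain small components at all, so the greedy-augmentation argument above must be stated carefully to ensure the $(1+\PTASeps)(1+\varepsilon)$ balance factor holds uniformly for every nearly feasible signature that the algorithm might pick in its final minimum.
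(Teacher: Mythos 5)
Your proposal is correct and follows essentially the same route as the paper's proof: the cost bound via $\OPT$'s induced signature passing the (PTAS-based) nearly-feasible filter plus Lemma~\ref{lem:tree-dp}/\ref{lem:approx-dp} with $\delta=\Theta(\varepsilon/h)$, the balance bound via the Hochbaum--Shmoys schedule followed by greedy assignment of small components using the averaging argument, and the running time via Lemmas~\ref{lem:exact-dp-time} and~\ref{lem:property-adp} multiplied by the number of rows and vertices with $p=O(h\log(W)/\varepsilon)$. The "obstacle" you flag regarding small components is exactly what the paper's greedy-augmentation step addresses, and your contrapositive averaging argument handles it correctly.
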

\begin{proof}
	To prove the proposition, we need to argue about the approximation ratios of
	the algorithms and we also need to prove that the partitioning does not
	violate the balance constraints. We will also need to analyze the running
	times.

	We start by analyzing the balance constraints. We show that in the solution
	returned by the algorithm, the connected components $V_1,\dots,V_k$ can be
	partitioned such that $w(V_i) \leq (1+\PTASeps)(1+\varepsilon)\lceil
	w(V)/k\rceil$ for all $i=1,\dots,k$.

	Consider the DP entry $\DP(r,g,\iscut,w(V))$ for the (true) root~$r$, where
	the edge to the parent is cut and any signature vector $g$ that is in the
	set of nearly feasible signatures that we computed.  Then this corresponds
	to the cost of some partition of $T=T_r$ where, after removing the cut
	edges, the large connected components $\S$ in $T_r$ can be matched to
	entries in $g$ such that:
	\begin{itemize}
		\item a component $S\in \S$ is matched to entry $g_i$ with 
			$\abs{S}\le (1+\epsilon)^i\epsilon\lceil w(V)/k\rceil$ and
		\item exactly $g_i$ components are matched to $g_i$.
	\end{itemize}
	Hence, we can obtain a partitioning $V_1,\dots,V_k$ as follows. First, we
	compute the $(1+\PTASeps)$-approximate solution of $I(g)$ in which (by
	assumption on $g$) the makespan is at most
	$(1+\PTASeps)(1+\epsilon)\lceil w(V)/k\rceil$.  This gives us an assignment
	of jobs to machines. Now we identify components with jobs and the sets $V_i$
	with machines and obtain an assignment of the large components to the
	$V_i$.  In particular, each $V_i$ receives large components for which the
	(rounded) weights sum to at most $(1+\PTASeps)(1+\epsilon)\lceil w(V)/k\rceil$.
	Now we need to assign the small components in the algorithm's solution.
	These can be assigned greedily by always assigning a small component (of
	weight less than $\epsilon\lceil n/k\rceil$) to set $V_i$ of (currently)
	smallest weight. In the end, all $V_i$ will have weight at most
	$(1+\PTASeps)(1+\epsilon)\lceil w(V)/k\rceil$ (this follows from the
	standard argument that, when considering exact component weights, on average
	each server has makespan at most $w(V)/k$ and thus there will always be a
	server of makespan at most $w(V)/k$ to which the current small component can
	be assigned without violating the capacity constraint).  This means if the
	algorithm returns an objective function value then there is a partition
	$V_1,\dots,V_k$ with the same objective function value that is nearly
	feasible, i.e., that satisfies $w(V_i) \leq
	(1+\PTASeps)(1+\varepsilon)\lceil w(V)/k\rceil$ for all $i=1,\dots,k$.

	Next, let us consider the approximation ratios of the algorithms. Consider
	the optimum partition $\OPT=(V_1^*,\dots,V_k^*)$ that minimizes
	$\cut(V_1^*,\dots,V_k^*)$ such that $w(V_i^*)\leq \lceil w(V)/k\rceil$ for all
	$i$. We first argue that $\OPT$ gives rise to a DP entry with a feasible
	signature and cost $\OPT$ in the exact DP.  To see this, take the optimum
	partition $V_1^*,\dots,V_k^*$ and round up the weight of every large
	connected component to the next value of the form
	$(1+\epsilon)^i\cdot\epsilon\lceil n/k\rceil$. Let
	$g=(g_0,\dots,g_{t-1})\in[M-1]^t$ be the signature where $g_i$ denotes the
	number of large components in $\OPT$ whose rounded weight is
	$(1+\epsilon)^i\cdot\epsilon\lceil n/k\rceil$.  Note that since
	$w(V_i^*)\leq\lceil w(V)/k\rceil$ for all $i$, the total \emph{rounded}
	weight of components in $V_i^*$ is at most $(1+\epsilon)\cdot\lceil w(V)/k\rceil$
	as component weights are increased at most by a $(1+\epsilon)$-factor.
	Hence, the constructed signature vector $g$ is feasible because the
	partition $V_1^*,\dots,V_k^*$ gives rise to a feasible solution for $I(g)$.
	Furthermore, the rounding did not have any effect on the objective function
	value and, thus, $\OPT$ gives rise to a DP entry with a feasible signature
	and cost $\OPT$ in the exact DP. This implies that the optimum value for the
	exact DP is at most $\cut(V_1^*,\dots,V_k^*)$.  Together with the above
	claim that the DPs approximately satisfy the balance constraint, we obtain
	that the exact algorithm computes a bicriteria
	$(1,(1+\PTASeps)(1+\epsilon))$-approximation.

	Now let us turn to the approximation ratio of the approximation algorithm
	from Section~\ref{sec:balanced-partitioning-approximate}. Recall that by
	Lemma~\ref{lem:exact-dp-time} the exact DP is okay-behaved and in
	Lemma~\ref{lem:property-adp} we show that in each step the approximation
	algorithm loses a factor of at most $1+\delta$ at every level of the
	tree~$T$. Now, we can apply Lemma~\ref{lem:tree-dp} to obtain that the
	approximation in the root is $(1+\delta)^{h+1}$, where $h$ is
	the height of the tree~$T$.  Thus, the approximation ratio of the
	approximate DP is $1+\varepsilon$ if we set
	$\delta=\ln(1+\varepsilon)/(h+1)$ since then
	$(1+\delta)^{h+1} \leq \exp(\delta (h+1)) = 1+\varepsilon$. Since the
	notion of approximation from Lemma~\ref{lem:tree-dp} holds for all functions
	of the form $\ADP(r,g,\iscut,\cdot)$ and all possible values of $x$, we
	obtain that the approximation algorithm computes a bicriteria
	$(1+\varepsilon,(1+\PTASeps)(1+\epsilon))$-approximation.

	We conclude the proof of the proposition by considering the running times of
	the algorithms. Note that w.r.t.\ running time, both algorithms only differ
	by how long it takes to fill the DP cells and the time for computing the
	solution is the same.

	Let us first consider the time for computing the solution as per
	Section~\ref{sec:computing-result}. First, let us consider the time for
	solving the PTAS which is $(N/\PTASeps)^{O({1/\PTASeps^2)}}$, where $N$
	denotes the total number of jobs.
	Note that in our case there are at most $N\leq k(1+1/\epsilon)$ jobs:
	each job has size at least $\epsilon\lceil n/k\rceil$ and therefore a
	machine can take at most $1+1/\epsilon$ jobs in an optimum solution. Hence, if
	we have more than $k(1+1/\epsilon)$ jobs, a PTAS can directly reject the instance
	and declare it infeasible.  Thus, the time for running the PTAS a single
	time is $(k/(\epsilon\PTASeps))^{O(1/\PTASeps^2)}$. Since we have to run the
	PTAS for each of the $M^t$ signatures, the total time for finding the nearly
	feasible configurations is $M^t (k/(\epsilon\PTASeps))^{O(1/\PTASeps^2)}$.

	Finally, let us consider the time for filling the DP cells. For the
        exact DP, Lemma~\ref{lem:exact-dp-time} states that filling a cell
        $\DP(v,\cdot,\cdot,\cdot)$ takes time $O(M^{2t}n^3)$. Then, by applying
        Lemma~\ref{lem:tree-dp}, the total time to compute all DP cells is
        $O(M^{2t}n^4)$. For the approximate DP, it takes time
        $O(M^{t}p^2\log(M^tp)\log(k/\epsilon)/\epsilon))$ to fill a single
        DP cell $\ADP(v,g,\cut,\cdot)$ by Lemma~\ref{lem:property-adp}. Since
        there are $M^t$ choices for $g$ and by again applying
        Lemma~\ref{lem:tree-dp}, we obtain that the total running time for
        filling the approximate DP table is
        $O(nM^{2t}p^2\log(M^tp)\log(k/\epsilon)/\epsilon))$.
		Since in
        Section~\ref{sec:balanced-partitioning-approximate} we picked the
        number of pieces to be $p=O(\lg_{1+\delta}(W))$ and above we picked
        $\delta=O(\varepsilon/h)$, the running time is upper bounded by
        $O\big(nM^{2t} \cdot \big(\nicefrac{1}{\varepsilon}\cdot h \log
          W\big)^2 \cdot \log(k/\varepsilon)/\varepsilon\cdot\log(M^th
          \log(W)/\varepsilon)\big)=O\big(nh^2\cdot M^{2t}\log^2(W)
          \log(k/\varepsilon)\log(M^th
          \log(W)/\varepsilon)/\varepsilon^3\big)$.
\end{proof}

\subsection{Extension to General Graphs}
\label{sec:extension-balanced}
Now we generalize the results of Proposition~\ref{prop:binary-trees} from binary
trees to general graphs.

We start with the generalization to general graphs in which we will make use of
Räcke trees (see Section~\ref{sec:racke-tree}). Since Räcke trees might be
non-binary, we now introduce the notion of \emph{binarized Räcke trees} which
essentially describe a way of turning a non-binary Räcke tree into a binary tree
that is very similar to a Räcke tree. Later, the binarized Räcke trees will
allow us to apply Proposition~\ref{prop:binary-trees} on them.
\label{sec:extension-non-binary}
\begin{definition}[Binarized Räcke Tree]
	Let $G=(V_G,E_G,\capac_G)$ be a weighted graph. We say that a weighted,
	rooted tree $T=(V_T,E_T,\capac_T)$ is a \emph{binarized Räcke tree for $G$}
	if the following properties hold:
	\begin{itemize}
		\item $T$ is a rooted binary tree.
		\item $V_G\subseteq V_T$.
		\item All edges in $T$ have weights in $\Winfty$.
		\item Let $T'$ be the tree that is obtained by contracting all edges with
			weight $\infty$ in $T$. Then $T'$ is a Räcke tree for $G$. 
	\end{itemize}
	We call the tree $T'$ from the last bullet point the \emph{corresponding
	(non-binarized) Räcke tree of $T$}.  We say that $T$ has quality $q$ if the
	corresponding Räcke tree $T'$ has quality $q$.
\end{definition}
Next, we observe that each cut in $T$ \emph{of finite cost} corresponds to a cut
in the corresponding Räcke tree $T'$ and vice versa. Therefore, cuts of finite
cost in $T'$ approximate the cut structure of the initial graph $G$. We
make this more formal in following observation.
\begin{observation}
\label{obs:binarized-racke-tree}
	Let $G=(V_G,E_G,\capac_G)$ be a weighted graph and let
	$T=(V_T,E_T,\capac_T)$ be a binarized Räcke tree for $G$ with quality $q$.
	Then for all disjoint subsets $A,B\subseteq V_G$ it holds that
	$\mincut_G(A,B)\le\mincut_T(A,B)\le q\cdot\mincut_G(A,B)$.
\end{observation}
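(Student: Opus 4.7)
The plan is to reduce the claim to the quality guarantee already established for the (non-binarized) Räcke tree $T'$ corresponding to $T$. Concretely, I will show that for any two disjoint subsets $A, B \subseteq V_G$, we have
\begin{equation*}
  \mincut_T(A,B) \;=\; \mincut_{T'}(A,B).
\end{equation*}
Once this equality is in hand, the observation follows immediately by applying the Räcke tree guarantee $\mincut_G(A,B) \le \mincut_{T'}(A,B) \le q \cdot \mincut_G(A,B)$, which holds by definition of the quality of the corresponding Räcke tree $T'$.

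To establish the equality, I will exhibit value-preserving maps in both directions between finite-cost $(A,B)$-cuts in $T$ and $(A,B)$-cuts in $T'$. First, I would note that since $T'$ is obtained from $T$ by contracting every edge of weight $\infty$, each vertex $u'$ of $T'$ corresponds to a maximal connected subset $C_{u'} \subseteq V_T$ of vertices in $T$ joined by $\infty$-weight edges, and the vertices of $V_G$ (which sit inside $V_T$) each belong to exactly one such class. Given a finite-cost $(A,B)$-cut $(S, V_T \setminus S)$ in $T$, no $\infty$-weight edge is cut (otherwise the cost would be $\infty$), so $S$ is a union of whole classes $C_{u'}$; therefore $S$ induces a cut in $T'$ separating $A$ from $B$ with identical capacity. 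Conversely, any $(A,B)$-cut $(S', V_{T'} \setminus S')$ in $T'$ lifts to the cut $S = \bigcup_{u' \in S'} C_{u'}$ in $T$, which separates $A$ from $B$ and whose capacity equals that of $(S', V_{T'} \setminus S')$ (no $\infty$-weight edge is cut, as both endpoints of each contracted edge lie in the same class). Taking minima over both sides yields the desired equality.

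The main, very mild, obstacle is being careful that the correspondence between cuts is defined for \emph{all} cuts we care about: in particular, verifying that the minimizers on either side can be chosen of finite cost (which is immediate as long as $T$ is connected via finite-weight edges after restricting to the classes relevant to $A \cup B$, a property inherited from $T'$ being a tree on $V_G$) and that the separation of $A$ from $B$ is preserved in both directions (which holds because the vertices of $V_G$ are preserved as leaves and never get identified with one another by contraction). With these points in place, the two inequalities of the observation follow directly by composition with the quality bound for $T'$.
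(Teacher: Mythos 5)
Your proposal is correct and follows essentially the same route as the paper: it reduces the claim to the equality $\mincut_T(A,B)=\mincut_{T'}(A,B)$ via the correspondence between finite-cost cuts in $T$ and cuts in the contracted tree $T'$, and then invokes the quality guarantee of $T'$. Your write-up just spells out the cut correspondence (unions of contraction classes) in more detail than the paper's brief argument, and the finiteness concern you flag is harmless since the relevant inequality is trivial when $\mincut_T(A,B)=\infty$.
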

\begin{proof}
	Let $T'$ be the corresponding (non-binarized) Räcke tree of $T$ and consider
	two disjoint subsets of vertices $A,B\subseteq V_G$. We show that
	$\min_T(A,B)=\mincut_{T'}(A,B)$. Then the observation follows immediately
	since $T$ has quality~$q$ (by assumption) and, therefore, $T'$ is a Räcke
	tree for $G$ with quality~$q$ which satisfies the property from the
	observation.

	Since $T'$ can be obtained from $T$ only by contracting edges, we have
	$\mincut_{T'}(A,B)\geq\mincut_{T}(A,B)$. Next, let us argue that
	$\mincut_T(A,B)\geq\mincut_{T'}(A,B)$. First, note that
	$\mincut_{T'}(A,B)\leq q\cdot \mincut(A,B) < \infty$. Since we contract only
	edges with weight~$\infty$ to go from $T$ to $T'$, $T$ does not contain any
	cut with finite cost that is not contained in $T'$. Therefore,
	$\mincut_T(A,B)\geq\mincut_{T'}(A,B)$.
\end{proof}

Additionally, we show that we can compute a binarized Räcke tree of good quality
in nearly-linear time.
\begin{lemma}
\label{lem:binary-racke-tree}
	Let $G=(V_G,E_G,\capac_G)$ be a weighted graph with $n$~vertices and
	$m$~edges.  We can compute a binarized Räcke tree~$T=(V_T,E_T,\capac_T)$
	with $O(n)$ vertices, height $O(\log^2 n)$ and quality $O(\lg^4 n)$ in time
	$\tO(m)$.
\end{lemma}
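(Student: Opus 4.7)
The plan is to start from a non-binary Räcke tree of near-optimal quality and then binarize it in a way that preserves the cut structure, by using weight-$\infty$ edges.

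First, I will invoke Theorem~\ref{thm:racke-tree} to compute, in time $\tO(m)$, a Räcke tree $T'=(V_{T'},E_{T'},\capac_{T'})$ for $G$ of height $O(\log n)$ and quality $O(\log^4 n)$. Without loss of generality I may assume that every internal node of $T'$ has at least two children: any degree-$1$ internal node can be suppressed by contracting one of its incident edges (and keeping the minimum of the two capacities on the surviving edge), which can only decrease quality by a constant factor and does not affect height. Since $T'$ has exactly $n$ leaves (the vertices of $G$) and every internal node has at least two children, $\abs{V_{T'}}=O(n)$.

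Next, I binarize $T'$ locally at each internal node. For an internal node $v\in V_{T'}$ with children $c_1,\dots,c_d$, replace the star consisting of $v$ and the edges $(v,c_i)$ by a balanced binary tree $B_v$ of depth $\lceil\log_2 d\rceil$ that has $v$ as its root and $c_1,\dots,c_d$ as its leaves. All newly introduced internal nodes of $B_v$ are fresh auxiliary vertices, and every edge of $B_v$ whose endpoints are both in $\{v\}\cup(\text{auxiliary nodes})$ is given capacity $\infty$, while the edge whose lower endpoint is $c_i$ retains the original capacity $\capac_{T'}(v,c_i)$. Applying this replacement at every internal node of $T'$ yields a rooted binary tree $T$. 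Because the total number of inserted auxiliary vertices is $\sum_v O(d_v)=O(\abs{E_{T'}})=O(n)$ and the depth of each $B_v$ is at most $O(\log n)$, we obtain $\abs{V_T}=O(n)$ and $\mathrm{height}(T)=O(\log n)\cdot O(\log n)=O(\log^2 n)$. The construction takes $O(n)$ time, so the overall running time is dominated by Peng's construction and is $\tO(m)$.

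It remains to check that $T$ is indeed a binarized Räcke tree for $G$. All edges either have capacity in $\Winfty$ inherited from $T'$ or have capacity $\infty$, so the weight condition is satisfied. Contracting every $\infty$-weight edge of $T$ collapses each gadget $B_v$ back to the original star centered at $v$ with the original finite capacities, thereby recovering $T'$ exactly; hence the tree $T'$ obtained after the contractions is a Räcke tree for $G$ of quality $O(\log^4 n)$, which is precisely the definition of a binarized Räcke tree of that quality. The only subtle point — and the only part that required care — is ensuring that the capacities on the gadget edges are set so that the contraction reproduces $T'$ identically; placing all finite capacities on the bottom-most edges of each $B_v$ and $\infty$ on the remaining gadget edges accomplishes this.
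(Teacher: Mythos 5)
Your proof is correct and takes essentially the same approach as the paper: invoke Theorem~\ref{thm:racke-tree} for a non-binary Räcke tree $T'$ of height $O(\log n)$ and quality $O(\log^4 n)$, then binarize by replacing each high-degree star with a balanced binary gadget whose internal edges have capacity $\infty$, and observe that contracting the $\infty$-edges recovers $T'$. The only cosmetic difference is that the paper creates a fresh root vertex $r_u$ and fresh leaf vertices $c_{u,v}$ for each gadget $\tau_u$ (placing the original finite capacity on the inter-gadget edge $(c_{u,v},r_v)$), whereas you reuse $v$ and its children $c_i$ directly and place the finite capacity on the bottom-most gadget edges; both yield the same tree after contraction, and both give $O(n)$ vertices and $O(\log^2 n)$ height by the same counting.
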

\begin{proof}
	Let $T'=(V_{T'},E_{T'},\capac_{T'})$ be the Räcke tree for $G$
	from Theorem~\ref{thm:racke-tree} that can be computed in time $\tO(m)$.
	First, note that $T'$ has
	$n_{T'}:=O(n)$ vertices and height $O(\log n)$.  Second, note that $T'$
	can have unbounded degree. Therefore, we will show how to compute a binarized
	Räcke tree $T$ that has $T'$ as its corresponding (non-binarized) Räcke
	tree. We do so replacing in $T'$ each vertex $u$ by a
	balanced binary tree $\tau_u$ with $\deg(u)$ leaves, where $\deg(u)$ denotes
	the number of children of $u$. The internal edges of
	$\tau_u$ will have weight $\infty$ and the edges connecting subtrees
	$\tau_u$ and $\tau_v$, $u\neq v$, in $T$ will correspond to the edges in
	$T'$ and will have the same (finite) weight as in~$T'$. We will see that by
	contracting all edges with weight $\infty$ in $T$, we will obtain $T'$. We
	now elaborate on this process.

	We construct $T$ as follows. First, we compute $T'$ as per the algorithm
	from Theorem~\ref{thm:racke-tree}. Now we construct $T$ as follows. For each
	vertex $u\in V_{T'}$, we add a balanced rooted binary tree $\tau_u$ with
	$\deg(u)$ leaves. We refer to the root of $\tau_u$ as $r_u$.  We identify
	each leaf of $\tau_u$ with a child of $u$ and denote the leaf of $\tau_u$
	that corresponds to the child $v$ by $c_{u,v}$. We set the weight of edges
	inside $\tau_u$ to $\infty$. Note that for each vertex $u$, the tree
	$\tau_u$ has $O(\deg(u))$ vertices, and, therefore, $T$ has $O(n_{T'}) =
	O(n)$ vertices.  Next, for each edge $(u,v)\in E_{T'}$ (where we assume that
	$v$ is a child of $u$), we insert the edge $(c_{u,v},r_v)$ in $T$
	and set $\capac_{T}(c_{u,v},r_u)=\capac_{T'}(u,v)$.  Finally, if $u$ is the
	root of $T'$ then we set $r_u$ to the root of $T$.

	It is left to show that $T$ is a binarized Räcke tree of height
	$O(\log^2 n)$ and quality $O(\log^4 n)$.  Clearly, $T$ is a binary tree
	since all vertices inside each subtree $\tau_u$ have at most two child nodes
	and, additionally, each vertex $c_{u,v}$ has at most one child node (namely
	$r_v$). Next, $T$ has height $O(\lg^2 n)$ since $T'$ has height $O(\lg n)$
	and the subtrees $\tau_u$ have height $O(\lg n)$. Finally, let $T''$ be the
	tree obtained from $T$ by contracting all edges with weight $\infty$. We
	argue that $T'=T''$. Indeed, consider any vertex $u\in V_{T'}$ and its
	subtree $\tau_u$ in $T$. Then after contracting the edges in $\tau_u$, we
	are left with a subtree that only contains $r_u$. Furthermore, all edges
	between vertices of different subtrees $\tau_u$ and $\tau_v$, $u\neq v$,
	have finite weight.  Therefore, $T'=T''$. This implies that $T$ is binarized
	Räcke tree for $G$.  Since $T'$ has quality $O(\log^4 n)$, the quality of
	$T$ is also $O(\log^4 n)$.
\end{proof}

We conclude the subsection by proving Theorem~\ref{thm:partitioning-static}.

\begin{proof}[Proof of Theorem~\ref{thm:partitioning-static}]
	We can obtain the proof for the claim about general graphs as follows. Let
	$G=(V_G,E_G,\capac_G)$ be a weighted graph with $n$ vertices. We compute a
	binarized Räcke tree~$T=(V_T,E_T,\capac_T)$ with $O(n)$ vertices as per
	Lemma~\ref{lem:binary-racke-tree} in time $\tO(n)$. In $T$, we assign weight
	$w(v)=1$ to all vertices $v\in V_G\cap V_T$ (i.e., to the leaves in $T$ that
	correspond to vertices in $G$) and weight $w(v)=0$ to all vertices $v\in
	V_T\setminus V_G$ (i.e., to the internal nodes of $T$ that do not correspond
	to any vertex in $G$). Now observe that $w(V)=n$ and thus a balanced
	partitioning $V_1,\dots,V_k$ of $T$ with
	$w(V_i)\leq(1+\varepsilon)\lceil w(V)/k\rceil$ for all $i$ corresponds to a
	balanced partitioning $V_1',\dots,V_k'$ of $G$ with
	$\abs{V_i'}\leq(1+\varepsilon)\lceil n/k\rceil$ for all $i$, where
	$V_i'=\{v \in V_i \colon w(v)=1\}$.  Now by combining
	Observation~\ref{obs:binarized-racke-tree},
	Proposition~\ref{prop:binary-trees} and the fact that $T$ has quality
	$O(\lg^4 n)$, we obtain the claim.

	To obtain the result about general trees~$T'$ (with unbounded degrees), we
	proceed similarly. We construct a binarized tree~$T$ exactly as in the proof
	of Lemma~\ref{lem:binary-racke-tree}. Now, in $T$ we set $w(r_v)=1$ for all
	root vertices of the subtrees $\tau_v$ and we set $w(v)=0$ for all other
	vertices of the subtrees $\tau_v$.  Similar to before, observe that
	$w(V_T)=n$ and thus a balanced partitioning $V_1,\dots,V_k$ of $T$ with
	$w(V_i)\leq(1+\varepsilon)\lceil w(V)/k\rceil$ for all $i$ corresponds to a
	balanced partitioning $V_1',\dots,V_k'$ of $T'$ with
	$\abs{V_i'}\leq(1+\varepsilon)\lceil n/k\rceil$ for all $i$, where
	$V_i'=\{v \in V_{T'} \colon r_v \in V_i\}$.  Then by
	Proposition~\ref{prop:binary-trees}, this implies the proof for trees with
	unbounded degrees.
\end{proof}

\subsection{Extension to the Dynamic Setting}
\label{sec:extension-dynamic}
Next, we provide new dynamic algorithms in which edges are inserted and deleted
from the graph. We give new algorithms for trees and for general graphs.

\textbf{Extension to Dynamic Trees.}
Let us start with the case when $T$ is a binary tree that is undergoing edge
insertions and deletions. We will use Lemma~\ref{lem:tree-dp-dynamic} to make
the result from Proposition~\ref{prop:binary-trees} dynamic. However, there is a
slight technical difficulty: due to edge deletions, $T$ will become a forest and
fall apart into several connected components. This becomes an issue, when an
edge $(u,v)$ is inserted for which both $u$ and $v$ already have parents in
their respective components. In that case, we cannot immediately make $u$ the
root of $v$ (or vice versa).  Therefore, we need to find an efficient way of
re-rooting the tree containing $v$, i.e., we need to make $v$ the root of
its component and we need to ensure that we do not have to recompute the DP
solution for all vertices in the component of $v$. We now describe our dynamic
algorithm in more detail.

First, suppose that an edge $(u,v)$ is removed from $T$ and assume that (before
the edge deletion) $u$ is closer to the root of $T$ than $v$. Then $T$ becomes a
forest with multiple connected components. In that case, we make $v$ the root of
its component and recompute the DP solution for $v$ (since $v$ does not have a
parent, we only have to recompute the DP cell for $v$).  Furthermore, for $u$
and all of its ancestors we recompute the DP solution as per
Lemma~\ref{lem:tree-dp-dynamic}.

Next, suppose an edge $(u,v)$ is inserted, where $u$ and $v$ are in different
connected components. Further suppose that after the edge insertion, $u$ is the
parent of $v$. Then we distinguish two cases whether $v$ is the root of its
component or not.

First, suppose that $v$ is the root of its component. Then we simply insert the
edge $(u,v)$ into $T$ and recompute the solution for $v$ and all of its
ancestors (including $u$) as per Lemma~\ref{lem:tree-dp-dynamic}.

Second, suppose that neither $u$ nor $v$ is the root of its component.  Now, we
first have to re-root the component containing $v$ such that it has $v$ as its
root and such that all DP solution are valid. We do this as follows. Let
$v=v_1,\dots,v_\ell$ denote the vertices on the path from $v$ to the
root~$v_{\ell}$ of its component (before the edge insertion). Then we first
remove all edges $(v_{\ell},v_{\ell-1}),\dots,(v_2,v_1)$ from $T$ (in this
order) as per the edge deletion routine described above. Note that after the
deletions, none of the $v_i$ has a parent and, therefore, each $v_i$ is the root
of its own component.  Furthermore, by how we picked the order of the edge
deletions, after the $i$'th deletion we only have to recompute the DP cells for
the vertices $v_{\ell-i}$ and $v_{\ell-i-1}$. Now we insert the edges again but
with flipped direction, i.e., we insert the edges
$(v_{\ell-1},v_{\ell}),\dots,(v_1,v_2)$ (in this order). Thus, $v=v_1$ becomes
the root of the component. To insert the edges, we use the subroutine from the
paragraph above, where we exploit that each $v_i$ is the parent of its own
component, which implies that the DP solutions can be updated efficiently: by
how we picked the order of the edge insertions, after the $i$'th edge insertion
we only need to recompute the DP cells for vertices $v_{\ell-i-1}$ and
$v_{\ell-i}$. After the rebalancing of the component containing $v$ is done, $v$
has become the parent of its component and, therefore, we can use the routine
from above to insert the edge $(u,v)$. This concludes the edge insertion
procedure.

Next, when we want to output the value of the DP solution, we simply use the
subroutine described in Section~\ref{sec:computing-result}.

We summarize the guarantees of our dynamic algorithm in the following
proposition. Note that when the parameters $\varepsilon$, $\PTASeps$,
$k$ and $W$ are constants, the update time becomes $\tO(h^3)$ and the query time
is just $O(1)$. Therefore, the
algorithm is very efficient for trees that have polylogarithmic or subpolynomial
height in the number of vertices.
\begin{proposition}
\label{prop:dynamic-approx-dp}
	Let $\varepsilon, \PTASeps > 0$ and $k\in\mathbb{N}$.  Let $T=(V,E,\capac)$
	be a rooted binary tree with edge weights $\capac(e)\in\Winfty$ and vertex
	weights $w(v)\in\{0,1\}$, that is undergoing edge insertions and deletions.
	Let $h$ be an upper bound on the height of the tree $T$ at all times. Then
	there exists a fully dynamic algorithm that maintains a bicriteria
	$(1+\varepsilon,(1+\PTASeps)(1+\epsilon))$-approximation for the
	$k$-balanced partition problem on $T$ with update time
        $O\big(h^3\cdot M^{2t}\log^2(W)
          \log(k/\varepsilon)\log(M^th
          \log(W)/\varepsilon)/\varepsilon^3\big)$
	and query time 
	$M^t(k/(\epsilon\PTASeps))^{O(1/\PTASeps^2)}$.
\end{proposition}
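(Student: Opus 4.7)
The plan is to follow the dynamic-algorithm sketch immediately preceding the proposition and to combine it with three ingredients already established: Lemma~\ref{lem:tree-dp-dynamic} (propagating a changed row along its ancestor path), Lemma~\ref{lem:property-adp} (cost of one approximate DP cell), and the query procedure of Section~\ref{sec:computing-result}. The approximation analysis is inherited essentially verbatim from the approximate part of Proposition~\ref{prop:binary-trees}: setting $\delta=\ln(1+\varepsilon)/(h+1)$ and applying Lemma~\ref{lem:tree-dp} separately to each tree of the current forest shows that after every update the maintained $\ADP(r,g,\iscut,\cdot)$ at the relevant root~$r$ is a $(1+\varepsilon)$-approximation of its exact counterpart, and combining these rows with the set of nearly feasible signatures yields the bicriteria $(1+\varepsilon,(1+\PTASeps)(1+\varepsilon))$-guarantee in exactly the same way as in the static proof.

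\textbf{Update time: simple cases.} First I would argue that any ``simple'' update---deleting an edge $(u,v)$ or inserting $(u,v)$ when $v$ is already a root of its component---invalidates only the DP rows of $u$'s ancestors (plus a single newly created root in the deletion case), and each such vertex can be refreshed by recomputing all $M^{t}$ of its signature rows at cost $t = O\big(M^{2t}p^{2}\log(M^{t}p)\log(k/\varepsilon)/\varepsilon\big)$, using Lemma~\ref{lem:property-adp} row by row and iterating as prescribed in Lemma~\ref{lem:tree-dp-dynamic}. With the standard choice $p=O(\log_{1+\delta}(W))=O(h\log(W)/\varepsilon)$ induced by $\delta=\ln(1+\varepsilon)/(h+1)$, this yields the per-update bound
\begin{equation*}
O\!\left(h^{3}\cdot M^{2t}\log^{2}(W)\log(k/\varepsilon)\log\!\big(M^{t}h\log(W)/\varepsilon\big)\big/\varepsilon^{3}\right),
\end{equation*}
exactly matching the claim.

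\textbf{Update time: re-rooting.} The delicate case is inserting $(u,v)$ when both endpoints lie strictly inside their components, because the tree containing $v$ must first be re-rooted at $v$. My plan is to verify the reversal procedure: $\ell-1\le h$ top-down deletions of the edges $(v_{i},v_{i-1})$ along the path from the old root down to $v$, followed by $\ell-1\le h$ bottom-up insertions of the reversed edges $(v_{i-1},v_{i})$, preserves at every intermediate step the invariant that each endpoint involved in the next operation is already a root of its (current) component. This invariant---the precise reason the deletions are ordered top-down and the insertions bottom-up---implies that every individual sub-operation only needs to refresh the two vertices that become or cease to be a root, at cost $O(t)$ rather than $O(h\cdot t)$. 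Hence the entire re-rooting together with the concluding attachment of $(u,v)$ costs $O(h\cdot t)$, which is absorbed into the bound of the previous paragraph. The main obstacle I expect is purely bookkeeping: checking carefully that each intermediate tree still has height $O(h)$, so that the estimate $p=O(h\log(W)/\varepsilon)$ remains valid throughout the reversal.

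\textbf{Query time.} A query proceeds exactly as in Section~\ref{sec:computing-result}: enumerate all $M^{t}$ signatures, run the Hochbaum--Shmoys PTAS on the associated makespan instance (which has at most $k(1+1/\varepsilon)$ jobs) to test near-feasibility in time $(k/(\varepsilon\PTASeps))^{O(1/\PTASeps^{2})}$ per signature, and return the minimum of the stored values $\ADP(r,g,\iscut,w(V))$ over the surviving signatures at the current root~$r$. The total query cost is therefore $M^{t}(k/(\varepsilon\PTASeps))^{O(1/\PTASeps^{2})}$, as advertised.
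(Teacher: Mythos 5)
Your proposal mirrors the paper's proof of Proposition~\ref{prop:dynamic-approx-dp} very closely: the same three ingredients (Lemma~\ref{lem:tree-dp-dynamic} to limit the affected rows, Lemma~\ref{lem:property-adp} for the per-cell cost, and the query routine of Section~\ref{sec:computing-result}), the same per-vertex cost accounting with $p = O(h\log(W)/\varepsilon)$ from $\delta = \ln(1+\varepsilon)/(h+1)$, and the same re-rooting argument by which each intermediate edge delete/insert touches only two DP cells so that the whole re-rooting costs $O(h\cdot t)$. Both the approximation-ratio argument (inherited from Proposition~\ref{prop:binary-trees} via Lemmas~\ref{lem:tree-dp} and~\ref{lem:approx-dp}) and the final arithmetic match the paper.

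The one place you hedge---whether intermediate trees encountered during re-rooting still have height $O(h)$, since otherwise the accumulated rounding error at some internal vertex could temporarily exceed $(1+\delta)^{h+1}$---is a legitimate subtlety the paper does not spell out, but it resolves cleanly. Every tree that appears during the deletion phase is a subtree of the pre-update tree, hence has height $\le h$ by the standing hypothesis. Every tree that appears during the insertion phase (after inserting $(v_{\ell-i},v_{\ell-i+1})$) is exactly the subtree rooted at $v_{\ell-i}$ in the fully re-rooted component, which in turn is the subtree rooted at $v=v_1$ in the post-update tree; since $v$ sits at depth $\ge 1$ there, that subtree has height $\le h-1$. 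Thus no intermediate state exceeds height $h$, and both the choice $p=O(h\log(W)/\varepsilon)$ and the $\alpha^{h+1}$-approximation bound of Lemma~\ref{lem:approx-dp} remain valid throughout the re-rooting, confirming the claimed update time and approximation guarantee. Note also that the running-time bound is not actually sensitive to intermediate heights at all: $p$ is fixed by $\delta$, which is set once in terms of the a priori bound $h$, so the height worry only ever concerned the approximation ratio.
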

\begin{proof}
	The fact that the algorithm maintains a bicriteria
	$(1+\varepsilon,(1+\PTASeps)(1+\epsilon))$-approximation follows immediately
	from Lemma~\ref{lem:tree-dp-dynamic} and the same arguments as in the proof
	of Proposition~\ref{prop:binary-trees}, where we argued that the approximate
	DP satisfies the conditions of Lemma~\ref{lem:tree-dp}.

	It is left to analyze the update and query times.
	For the query times, note that all we do is run the subroutine from
	Section~\ref{sec:computing-result}. This subroutine runs in time
	$M^t(k/(\epsilon\PTASeps))^{O(1/\PTASeps^2)}$ as we argued in the proof of
	Proposition~\ref{prop:binary-trees}. This proves the claim about the query
	time.

	For the update times, let us first consider edge deletions $(u,v)$. In this
	case, we need to update the DP cell for $v$ and the DP solutions for $u$ and
	all of its ancestors. By Lemma~\ref{lem:tree-dp-dynamic},
	Lemma~\ref{lem:property-adp} and by our choice of $p=O(h\log W/\varepsilon)$, this
	can be done in time $O\big(h^3\cdot M^{2t}\log^2(W)
          \log(k/\varepsilon)\log(M^th
          \log(W)/\varepsilon)/\varepsilon^3\big)$.
          
	Next, consider the case in which $(u,v)$ is inserted and $v$ is the root of
	its component. Then we need to recompute the DP solutions for $v$ and all of
	its ancestors (including $u$) which, by Lemma~\ref{lem:tree-dp-dynamic},
	Lemma~\ref{lem:property-adp} and by our choice of $p=O(h\log W/\varepsilon)$, 
	can be done in the time claimed in the lemma. In the case that we need to
	re-root the component of $v$, note that we have to recompute the solutions
	for all ancestors of $v$. Since the height of $T$ is bounded by $h$, there
	are at most $h$ such ancestors. Furthermore, we have picked the order of
	edge deletions such that whenever we delete or insert an edge in the
	re-rooting process then we only need to recompute two DP cells. Hence, in
	total we only need to recompute the solutions for $O(h)$ DP cells in the
	re-rooting process and thus by Lemma~\ref{lem:property-adp}, the total time
	for this process is $O\big(h^3\cdot M^{2t}\log^2(W)
          \log(k/\varepsilon)\log(M^th
          \log(W)/\varepsilon)/\varepsilon^3\big)$.
\end{proof}

\textbf{Extension to Dynamic General Graphs and Non-Binary Trees.}
Now suppose that our input is a dynamic (general) graph $G$ that is undergoing edge
insertions and deletions. Essentially we will solve this problem by
maintaining a dynamic Räcke tree and running the algorithm from
Proposition~\ref{prop:dynamic-approx-dp} on top of it. However, the dynamic Räcke tree
from Theorem~\ref{thm:dynamic-racke-tree} is non-binary and, therefore, we
start by arguing that we can maintain a binarized Räcke tree dynamically in the
following lemma.

\begin{lemma}
\label{lem:dynamic-binary-racke-tree}
	Let $G=(V_G,E_G)$ be a dynamic unweighted graph with $n$ vertices
	that is undergoing edge insertions and deletions.  We can maintain a
	binarized Räcke tree~$T=(V_T,E_T,\capac_T)$ with $O(n^2)$ vertices, height
	$O(\log^{7/6} n)$ and quality $n^{o(1)}$ in amortized update time
	$n^{o(1)}$. The preprocessing time is $O(n^2)$.
\end{lemma}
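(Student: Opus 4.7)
The plan is to run the dynamic Räcke-tree algorithm of Theorem~\ref{thm:dynamic-racke-tree} to maintain a (non-binarized) Räcke tree $T'$ for $G$ with quality $n^{o(1)}$ and height $O(\log^{1/6} n)$, and then to maintain on top of it the binarization $T$ prescribed in the proof of Lemma~\ref{lem:binary-racke-tree}: each vertex $u\in V_{T'}$ is replaced by a balanced rooted binary tree $\tau_u$ whose leaves are identified with the children of $u$, all intra-$\tau_u$ edges get weight $\infty$, and inter-subtree edges inherit the finite capacities from the corresponding edges of $T'$. Contracting the weight-$\infty$ edges recovers $T'$, so by Observation~\ref{obs:binarized-racke-tree} the quality of $T$ equals the quality of $T'$, namely $n^{o(1)}$.

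The height bound follows by composing the two layers: $O(\log^{1/6} n)$ levels from $T'$ times $O(\log n)$ levels from each balanced $\tau_u$ gives $O(\log^{7/6} n)$. The stated $O(n^2)$ bound on $\abs{V_T}$ is a generous upper bound on $\sum_u \deg_{T'}(u)$ across the lifetime of the data structure; since each $\tau_u$ uses $O(\deg_{T'}(u))$ nodes, this caps the total size of $T$.

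To support updates in amortized time $n^{o(1)}$, I would store each $\tau_u$ as a self-balancing binary search tree (e.g.\ a weight-balanced or AVL tree) keyed by child identifiers. Theorem~\ref{thm:dynamic-racke-tree} realizes its update by performing a sequence of elementary primitives on $T'$ — additions, removals, and re-parenting of children at individual vertices — whose total amortized cost is $n^{o(1)}$. Each such primitive affects the degree of a constant number of vertices in $T'$ and thus translates into $O(1)$ insertions/deletions in the BSTs $\tau_u$ and $\tau_{\mathrm{parent}(u)}$, each executed in $O(\log n)$ time while maintaining the $O(\log n)$ height invariant. Wrapping the Räcke-tree primitives with these BST operations gives amortized update time $n^{o(1)} \cdot O(\log n) = n^{o(1)}$. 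Preprocessing runs the initial construction of $T'$ and then builds each $\tau_u$ from scratch in time $O(\deg_{T'}(u))$; summing and using the same generous bound gives $O(n^2)$ preprocessing time.

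The main obstacle is the black-box interface with Theorem~\ref{thm:dynamic-racke-tree}: one has to argue that its update routine can be instrumented to expose the elementary tree primitives it performs, so that the binarization can be patched locally, and that the number of such primitives per update is absorbed by the $n^{o(1)}$ amortized bound of the theorem. Once this translation is in place — which is standard for expander-based dynamic tree sparsifiers — everything else reduces to routine balanced-BST maintenance, and the claimed bounds on quality, height, vertex count, update time, and preprocessing follow directly.
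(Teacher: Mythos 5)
Your proposal is a correct idea but takes a genuinely different route from the paper, and the route you choose forces you to confront an obstacle the paper deliberately sidesteps. You build each $\tau_u$ with $\deg_{T'}(u)$ leaves (as in the static Lemma~\ref{lem:binary-racke-tree}) and keep it balanced with a self-balancing BST, which means every degree change in $T'$ triggers a BST insert/delete plus potential rotations. The paper instead pre-allocates each $\tau_u$ as a fixed balanced binary tree with $n_{T'}$ leaves --- one leaf slot $c_{u,v}$ for \emph{every} vertex $v\in V_{T'}$, not just the current children of $u$. Then $\tau_u$ never changes shape; when $T'$ inserts or deletes the edge $(u,v)$, the only update to $T$ is to add or remove the single edge $(c_{u,v},r_v)$, in $O(1)$ time. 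That is precisely why the lemma states $O(n^2)$ vertices: it is an intentional over-allocation, not a slack accounting of $\sum_u\deg(u)$ over the lifetime as you suggest (at any time $\sum_u\deg_{T'}(u)=O(n)$, so your reading of the bound does not really match). The pre-allocation trick turns the ``black-box instrumentation'' problem you identify as the main obstacle into a non-issue: all you need to observe from Theorem~\ref{thm:dynamic-racke-tree} is the stream of edge insertions/deletions in $T'$, and each such event costs $O(1)$ in $T$.

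Your BST-based variant could be made to work and would be more space-efficient ($O(n)$ vertices rather than $O(n^2)$), but there is a subtlety you do not address: each rotation inside $\tau_u$ changes the internal edge set of $T$, so a single edge change in $T'$ can generate $\Theta(\log n)$ edge changes in $T$. This is still absorbed by $n^{o(1)}$, so Theorem~\ref{thm:partitioning-dynamic} (which charges the DP recomputation to the number of edge changes in $T$) would survive, but you would need to say this explicitly. The paper's construction is strictly simpler: no rebalancing, no interface beyond observing edge events, and exactly one edge change in $T$ per edge change in $T'$.
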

\begin{proof}
	Let $T'=(V_{T'},E_{T'},\capac_{T'})$ be the fully dynamic Räcke tree for $G$
	from Theorem~\ref{thm:dynamic-racke-tree}. First, note that $T'$ has
	$n_{T'}:=O(n)$ vertices and height $O(\log^{1/6} n)$.  Second, note that $T'$
	can have unbounded degree. Therefore, similar to the proof of
	Lemma~\ref{lem:binary-racke-tree}, we will show how to maintain a binarized
	Räcke tree $T$ that has $T'$ as its corresponding (non-binarized) Räcke
	tree. We do so by taking $T'$ and replacing each vertex $u$ in $T'$ by a
	balanced binary tree $\tau_u$ with $n_{T'}$ leaves; the internal edges of
	$\tau_u$ will have weight $\infty$ and the edges connecting subtrees
	$\tau_u$ and $\tau_v$, $u\neq v$, in $T$ will correspond to the edges in
	$T'$ and will have the same (finite) weight as in~$T'$. We will see that by
	contracting all edges with weight $\infty$ in $T$, we will obtain again
	$T'$. We now elaborate on this process.

	During the preprocessing, we first build $T'$. Note that this takes time
	$O(n^2)$. Now we construct $T$ as follows. For each vertex $u\in V_{T'}$, we
	add a balanced rooted binary tree $\tau_u$ with $n_{T'}$ leaves. We refer to
	the root of $\tau_u$ as $r_u$.  We identify each leaf of $\tau_u$ with a
	vertex $v\in V_{T'}$ and denote the leaf of $\tau_u$ that corresponds to $v$
	by $c_{u,v}$. We set the weight of the edges inside $\tau_u$ to $\infty$. Note
	that $T$ has $O(n_{T'}^2) = O(n^2)$ vertices.  Next, for each edge $(u,v)\in
	E_{T'}$ (where we assume that $v$ is a child of $u$), we insert the edge
	$(c_{u,v},r_v)$ in $T$ and set $\capac_{T}(c_{u,v},r_u)=\capac_{T'}(u,v)$.
	Finally, if $u$ is the root of $T'$ then we set $r_u$ to the root of $T$.

	Next, suppose that $G$ is changed due to an edge insertion or deletion. Then
	we first update the tree $T'$ via the algorithm from
	Theorem~\ref{thm:dynamic-racke-tree}. Now, whenever an edge $(u,v)$ is
	inserted (deleted) in $T'$, we insert (delete) the edge $(c_{u,v},r_{v})$
	into (from) $T$. Each of these insertions and deletions in $T$ can be done
	in time $O(1)$. Since it takes amortized time $n^{o(1)}$ to update $T'$ (via
	Theorem~\ref{thm:dynamic-racke-tree}), the total update time is $n^{o(1)}$.

	It is left to show that $T$ is a binarized Räcke tree of height
	$O(\log^{7/6}n)$ and quality $n^{o(1)}$.
	Clearly, $T$ is a binary tree since all vertices inside each subtree
	$\tau_u$ have at most two child nodes and, additionally, each vertex
	$c_{u,v}$ has at most one child node (namely $r_v$). Next, $T$ has height
	$O(\lg^{7/6} n)$ since $T'$ has height $O(\lg^{1/6}n)$ and the subtrees
	$\tau_u$ have height $O(\lg n)$. Finally, let $T''$ be the tree obtained
	from $T$ by contracting all edges with weight $\infty$. We argue that
	$T'=T''$. Indeed, consider any vertex $u\in V_{T'}$ and its subtree $\tau_u$
	in $T$. Then after contracting the edges in $\tau_u$, we are left with a
	subtree that only contains $r_u$. Furthermore, all edges between vertices of
	different subtrees $\tau_u$ and $\tau_v$, $u\neq v$, have finite weight.
	Therefore, $T'=T''$. This implies that $T$ is binarized Räcke tree for $G$.
	Since $T'$ has quality $n^{o(1)}$, the quality of $T$ is also $n^{o(1)}$.
\end{proof}

Given the lemma above, our dynamic algorithm for dynamic general graphs $G$
works as follows. We maintain the dynamic binarized Räcke tree~$T$ as per
Lemma~\ref{lem:dynamic-binary-racke-tree} on our input graph $G$, i.e., whenever
an edge is inserted or deleted in $G$, we update the data structure from the
lemma as well. Note that this causes edge insertions and deletions in $T$ as
well. As before, we set the vertex weights in $T$ such that $w(v)=1$ if $v$
corresponds to a vertex in $G$ and $w(v)=0$ if $v$ is an internal node of $T$
that does not correspond to any vertex in $G$. Furthermore, we run our dynamic
algorithm from Proposition~\ref{prop:dynamic-approx-dp} for binary trees on $T$.
In particular, whenever $T$ gets updated, we also update the DP solution as
per Proposition~\ref{prop:dynamic-approx-dp}. We also use the same query
procedure as in the proposition.

We conclude the subsection by proving Theorem~\ref{thm:partitioning-dynamic}.

\begin{proof}[Proof of Theorem~\ref{thm:partitioning-dynamic}]
	We prove the result for general graphs first.  Since the dynamic binarized
	Räcke tree~$T$ that we maintain has quality $n^{o(1)}$, the same
	argumentation as in the proof of Theorem~\ref{thm:partitioning-static}
	implies that we maintain a bicriteria
	$(n^{o(1)},(1+\PTASeps)(1+\epsilon))$-approximation for $G$.  Since by
	Lemma~\ref{lem:dynamic-binary-racke-tree} we can maintain $T$ with amortized
	update time $n^{o(1)}$, the amortized number of edge insertions and
	deletions into $T$ is $n^{o(1)}$ per update operation. Since $T$ has height
	$O(\lg^{7/6}n)$ and by Proposition~\ref{prop:dynamic-approx-dp}, the total
	amortized update time $n^{o(1)}$.  This implies the claim about dynamic
	general graphs.

	To obtain our result for non-binary trees, we can proceed similar to above.
	Consider a non-binary $T'$ that is undergoing edge insertions and deletions.
	We can maintain the same data structure as in the proof of
	Lemma~\ref{lem:dynamic-binary-racke-tree} to obtain a binary tree~$T$ with
	$O(n^2)$ vertices with worst-case update time $O(1)$. Now we assign weight
	$w(r_v)=1$ to all vertices~$r_v$ that are roots of the subtrees $\tau_v$ in
	$T$ and weight $w(v)=0$ to all other vertices of the subtrees $\tau_v$.  By
	the same arguments as in the proof of Theorem~\ref{thm:partitioning-static},
	we obtain the claim.
\end{proof}

\section{Simultaneous Source Location}
\label{sec:simultaneous-source-location}
In this section, we provide efficient algorithms for the simultaneous source
location problem as studied by Andreev et al.~\cite{andreev09simultaneous}.
Recall that in
this problem, the input consists of a graph $G=(V,E,\capac,d)$ with a capacity
function $\capac \colon E \to \Winfty$ on the edges and a demand function
$d \colon V \to \Winfty$ on the vertices of the graph. The goal is to select a
minimum set $S\subseteq V$ of \emph{sources} that can simultaneously supply all
vertex demands. More concretely, a set of sources $S$ is \emph{feasible} if
there exists a flow from the vertices in $S$ that supplies demand $d(v)$ to all
vertices $v\in V$ and that does not violate the capacity constraints on the
edges. Here, we assume that each source vertex can potentially send an infinite
amount of flow that is only constrained by the edge capacities.
The objective is to find a feasible set of sources of minimum size.

Next, we summarize our main results for the simultaneous source location
problem. First, we introduce our notion of bicriteria approximation.  Let $S^*$
be the optimal solution for the simultaneous source location problem.  Then we
say that  a solution $S$ is a bicriteria $(\alpha,\beta)$-approximate solution if
$\abs{S}\leq\alpha\abs{S^*}$ and if $S$ is a feasible set of sources after all
edge capacities are increased by a factor $\beta$.

The following theorem summarizes our main result for static algorithms.
\sslstatic*

Next, we turn to our dynamic algorithms which support the following update
operations:
\begin{itemize}
\item \emph{SetDemand}($v$, $d$): updates the demand of vertex $v$ to $d(v)=d$,
\item \emph{SetCapacity}($(u,v)$, $c$): updates the capacity of the edge $(u,v)$ to
								  $\capac(u,v)=c$,
\item \emph{Remove}($u,v$): removes the edge $(u,v)$ from the graph,
\item \emph{Insert}($(u,v)$, $c$): inserts the edge $(u,v)$ into the graph with
								capacity $\capac(u,v)=c$.
\end{itemize}

The next theorem summarizes our main results for dynamic algorithms.
\ssldynamic*

We note that in our static and dynamic algorithms, we can output the
corresponding solutions similarly to what we descriped after
Proposition~\ref{prop:dynamic-knapsack-chan} for knapsack.

We start by presenting an exact DP for the special case of binary trees in
Section~\ref{sec:ssl-dp-trees} and then present an approximate DP in
Section~\ref{sec:ssl-dp-approx}. After that, we generalize the result from
binary trees to general graphs in Section~\ref{sec:ssl-general} and then also
to the fully dynamic setting in Section~\ref{sec:ssl-dynamic}.

\subsection{The Exact DP}
\label{sec:ssl-dp-trees}

We consider the special case of the simultaneous source location problem on
\emph{binary} trees and provide a DP that solves this problem exactly. We let
$T=(V,E,\capac,d)$ denote the rooted binary tree with root $r$ that we obtain as
input. Additionally, we assume that for each vertex $v\in V$ we obtain as input
whether we are allowed to make $v$ a source or not; note that this only
generalizes the problem (as in the original problem all vertices can be made
sources). Later in Section~\ref{sec:ssl-general}, this generalization will be
helpful when we apply Räcke trees because then we only want to allow leaves to
act as sources.

\subsubsection{DP Definition}
We now define our exact DP. We will also discuss its relationship with the DP by
Andreev el al.~\cite{andreev09simultaneous} and why we did not use the DP of
Andreev et al.
Given a vertex $v$ and a value $x\in\mathbb{R}$, we denote by $\DP(v,x)$ the
minimum number of sources to place in $T_v$ such that when $v$ receives
flow at most $x$ from its parent then all demands in $T_v$ can be satisfied.  We
note that $x$ can take positive and negative values: for $x\geq0$ this
corresponds to the setting in which flow is sent from the parent of $v$ into
$T_v$ and for $x<0$ this corresponds to the setting in which flow is sent from
$T_v$ towards the parent of $v$. We further follow the convention that when the
demands in $T_v$ cannot be satisfied when $v$ receives flow $x$ from its
parent, then we set $\DP(v,x)=\infty$.

Observe that this DP has rows $\I=V$ and columns $\J=\mathbb{R}$. We will store
the rows $\DP(v,\cdot)$ using our data structure from
Section~\ref{sec:approx-conv} using monotone piecewise constant functions. 
Next, we observe that each $\DP(v,\cdot)$ is monotonically decreasing.
Hence, the DP satisfies Property~(1) of Definition~\ref{def:well-behaved}.
\begin{observation}
\label{obs:ssl-monotone}
	The function $\DP(v,\cdot) \colon \mathbb{R} \to [n+1] \cup \{\infty \}$ is monotonically
	decreasing.
\end{observation}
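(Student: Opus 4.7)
The plan is to prove monotonicity directly from the definition, in exactly the same spirit as Observation~\ref{obs:monotone}: increasing $x$ corresponds to relaxing (not tightening) the feasibility constraint on the flow entering $T_v$ from its parent edge, so any witness for the smaller $x$ is automatically a witness for the larger one.

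Concretely, I would fix a vertex $v \in V$ and two values $x \le x'$ in $\mathbb{R}$, and consider any source set $S \subseteq V(T_v)$ together with a flow $f$ supported on the edges of $T_v$ plus the edge to the parent of $v$, such that $f$ supplies demand $d(u)$ to every $u \in V(T_v)$, respects all capacity constraints on edges of $T_v$, and routes a net flow of at most $x$ from the parent of $v$ into $v$. The key step is to observe that the very same pair $(S, f)$ is a feasible witness for $\DP(v, x')$: no capacity constraint inside $T_v$ is affected, all demands are still satisfied, and the amount of flow entering $v$ from its parent is at most $x \le x'$, so the weaker bound $\le x'$ is satisfied trivially. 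Taking $S$ to be the optimum witness for $\DP(v,x)$ then yields $\DP(v,x') \le |S| = \DP(v,x)$, proving monotonicity.

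The only subtlety worth flagging is the sign convention: when $x' \le 0$, the condition "$v$ receives flow at most $x'$ from its parent" should be read as "$T_v$ exports at least $|x'|$ units of flow to its parent", and similarly for $x$. Relaxing $x$ to $x' \ge x$ then corresponds either to allowing more inflow (if $x' > 0$) or to requiring less outflow (if $x' \le 0$); in both regimes this is a monotone weakening of the constraint, so the one-line domination argument above goes through uniformly. Since the convention $\DP(v,x)=\infty$ for infeasible $x$ fits into the same framework (an infeasible $x$ gives the vacuous bound $\infty \ge \DP(v,x')$), no additional case analysis is needed and I do not anticipate any real obstacle.
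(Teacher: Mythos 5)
Your proof is correct and follows essentially the same argument as the paper: any witness feasible under the bound $x$ remains feasible under the weaker bound $x' \geq x$, so $\DP(v,x') \leq \DP(v,x)$. The extra remark about the sign convention for negative $x$ is fine but not needed; the paper treats both regimes uniformly with the same one-line domination argument.
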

\begin{proof}
	This follows immediately from the definition of $\DP(v,x)$: Consider
	$x,x'\in\mathbb{R}$ with $x\leq x'$. Then any solution in which $T_v$
	receives flow at most $x$ from the parent of $v$ is also feasible when $T_v$
	receives flow at most $x'$ from the parent of $v$. Therefore,
	$\DP(v,x)\geq\DP(v,x')$, which finishes the proof.
\end{proof}

Observe that the global solution for the simultaneous source location problem on
$T$ can be obtained by evaluating $\DP(r,0)$, where $r$ is the root of $T$:
First, $r$ has no parent and, therefore, it must be a source itself or have its
demand satisfied by its children; this explains the choice of $x=0$.
Furthermore, (by definition) $\DP(r,0)$ is the minimum number of sources that we
need to satisfy all demands in $T_r = T$ and, thus, the flow that we obtain is
feasible. We conclude that $\DP(r,0)$ gives the global optimum solution.

\textbf{Relationship to the approach by Andreev et al.~\cite{andreev09simultaneous}.}
Next, let us elaborate on the relationship of our DP and the function $f$ used by Andreev
et al.~\cite{andreev09simultaneous}.
In~\cite{andreev09simultaneous}, the function $f$ computed by a dynamic program  is defined as follows. Given a
vertex $v$ and an integer $i\in\mathbb{N}$,  Andreev et al.~define a function $f(v,i)$ that
denotes the \emph{minimum amount of flow} that $v$ needs to receive from its
parent if all demands in $T_v$ need to be satisfied and if we can place
$i$~sources in the subtree $T_v$.  Similar to above, $f(v,i)$ can take positive
and negative values: if the demand in $T_v$ can only be satisfied by receiving
flow from the parent, then $f(v,i)$ is positive; if the demand in $T_v$ is
already satisfied by the sources in the subtree~$T_v$, then it is possible that
$v$ can send flow to its parent and $f(v,i)$ is negative.  It is not hard to see
that the function $f(v,i)$ is monotonically decreasing in $i$.\footnote{This
	follows immediately from the definition of $f$ and the fact that by adding
	more sources to a subtree~$T_v$, the amount of flow that $T_v$ needs to
	receive from the parent of $v$ only decreases.}

Now consider $f(v,\cdot) \colon \mathbb{N} \to \mathbb{R}$ as a function and
consider its ``inverse''\footnote{We note that, formally, $f(v,\cdot)$ has no
	inverse since it is possible that multiple values map to the same number,
	i.e., $f(v,i)=f(v,i')$ for $i\neq i'$. Thus, formally, we set
	$f^{-1}(v,x) = \min\{ i \colon f(v,i)\leq x \}$, where we follow the
	convention $\min\{\emptyset\} = \infty$. Then we interpret $f^{-1}(v,\cdot)$
	as a piecewise constant function from $\mathbb{R}$ to $[n+1]$.}
function $f^{-1}(v,\cdot) \colon \mathbb{R} \to \mathbb{N}$, where $f^{-1}$ is
defined on the whole set of real numbers (including negative numbers). That is,
$f^{-1}(v,x)$ denotes the minimum number of sources that we need to place in
$T_v$ such that the demand that $v$ requires from its parent is at most $x$.
But this was exactly the definition of $\DP(v,x)$. Thus, $\DP(v,x)=f^{-1}(v,x)$
for all $v$ and $x$.  

\emph{Why Did We Not Use $f$?}
In~\cite{andreev09simultaneous} it is shown how the function $f$ can be
computed in polynomial time by a bottom-up dynamic program using just a few case
distinctions and a $(\min,+)$-convolution in each DP cell. Thus, one might
wonder why we picked $\DP(v,\cdot)=f^{-1}(v,\cdot)$ and not $f$ for our DP?
Indeed, it seems quite natural to interpret the function $f$ as a monotone
piecewise constant function and to use it for our dynamic program. While for the case of
exact computations this is possible, we now sketch why this appears unhandy for
the approximate case later.

Suppose that we used the function $f$ in our approximate computations.  To
obtain efficient approximation algorithms, we will have to ensure that $f$ has
only few pieces and our main way to achieve this is by rounding~$f$ as per
Lemma~\ref{lem:operations}. However, this becomes tricky because the function
values of $f$ are positive \emph{and negative}. In the following, it will be
illustrative to think of positive function values for $f$ as vertex demands that
need to be satisfied and of negative values for $f$ as available edge
capacities.
The main issue is that
since the function values of $f$ are positive and negative, it is not clear
how we should perform the rounding: if we rounded positive and negative
values up (towards $+\infty$) then this would correspond to \emph{increasing}
the vertex demands while at the same time \emph{decreasing} the edge capacities;
however, this could render some feasible solutions (in the exact computation)
infeasible (in the rounded computation). On the other hand, it is conceivable
that by always rounding $f$ \emph{down} (towards $-\infty$), we would
essentially decrease the vertex demands while increasing the edge capacities.
Potentially, this approach could work when we are allowed to violate the edge
capacities by a $(1+\varepsilon)$-factor.  However, even if we did that, we
would have another issue: to only use a small number of pieces for representing
$f$, we would have to use different rounding mechanisms for those function
values in $[-1,1]$ and those in $[-W,W]\setminus[-1,1]$, where $W$ is the
largest edge capacity.  Indeed, if we rounded the values of $f$ to powers of
$(1+\delta)^j$ then there are only $O(\lg_{1+\delta}(W))$ function values in
$[-W,W]\setminus[-1,1]$ but there are infinitely many function values in
$[-1,1]$.  Similarly, if we rounded to multiples of $\delta$ then there are only
$O(1/\delta)$ function values in $[-1,1]$ but this would lead to $O(W/\delta)$
function values in $[-W,W]\setminus[-1,1]$. In both cases, our functions would
have too many pieces and, thus, one would have to pick a rounding function which
provides a tradeoff between these two cases.  Furthermore, we would have to find
an analysis that shows that this ``more involved'' rounding function does not
introduce too much error.

Note that in the above discussion, all of the issues come from the fact that
$f(v,\cdot)$ can also take negative values. On the other hand, our DP
(which is $f^{-1}(v,\cdot)$) only takes non-negative function values
and, therefore, we avoid all of the above complications because we can use the
standard rounding function $\lceil \cdot \rceil_{1+\delta}$ that rounds to
powers of $1+\delta$. Thus, we bypass all of the issues above.

Our approach also has the positive side effects that instead of getting factors of
$\polylog(W)$ in our running times, we only get factors of $\polylog(n)$ because
the codomain of our monotone piecewise constant functions became $[n+1]$ rather
than some potentially large interval $[-W,W]$.

\subsubsection{Computing the DP}
Now we describe the exact computation of our DP. This will reveal the
procedures $\Procedure_i$ from Definition~\ref{def:well-behaved}.

Let $v\in V$ be any vertex in $T$.  We describe how to compute
$\DP(v,\cdot)$ efficiently assuming that we have already computed the solutions
for the children of $v$ (if they exist). Recall that for each vertex~$v$ we also
obtain as input, whether $v$ can be used as a source or not. In our following
case distinctions, whenever we consider the case that $v$ is used as a source, we
will implicitly condition on the fact that it is also possible to use $v$ as
source; if $v$ cannot be used as a source, we simply skip this case.

In the construction for each DP cell $\DP(v,\cdot)$ for a vertex $v$ with parent
$p$, we will additionally ensure that we do not violate the capacity of the edge
$(p,v)$ when $x$ is very small or very large. More concretely, we will ensure
that $\DP(v,\cdot)$ satisfies the additional property that $\DP(v,x)=\infty$ for
$x<-\capac(p,v)$ and $\DP(v,x)=\DP(v,\capac(p,v))$ for all $x>\capac(p,v)$.
We will denote this property as the \emph{feasible capacity property}.

\bold{Case 1: $v$ is a leaf.}
Suppose that $v$ is a leaf. We initialize $\DP(v,\cdot)$ as the function which
takes value $\infty$ on all of $\mathbb{R}$. In the following, we add at most
two pieces to $\DP(v,\cdot)$ depending on whether $v$ can be used as a source or
not.

First, suppose $v$ can be used as a source. Then we can send flow up to
$\capac(p,v)$ to the parent $p$ of~$v$.  Furthermore, since $v$ is a leaf, there
is exactly one source in $T_v$. Thus, we update $\DP(v,\cdot)$ and set
$\DP(v,x)=1$ for all $x\geq-\capac(p,v)$. This adds one piece to $\DP(v,\cdot)$. 

Second, suppose  $v$ is not a source. Then if $x\geq d(v)$ and $\capac(p,v)\geq
d(v)$, $v$ can receive all of its demand~$d(v)$ from its parent and the flow is
feasible because we do not exceed the capacity of the edge $(p,v)$.  Therefore,
if $\capac(p,v)\geq d(v)$, then we update $\DP(v,\cdot)$ again and add the piece
with $\DP(v,x)=0$ for all $x\geq d(v)$.  If $x\geq d(v)$ but $d(v)>\capac(p,v)$
then we do nothing because the parent of $v$ cannot satisfy the demand of $v$.

Observe that $\DP(v,\cdot)$ is a monotonically decreasing function with at most
three pieces. Furthermore, it clearly satisfies the feasible capacity property
and Property~(3) of Definition~\ref{def:well-behaved}.

\bold{Case 2: $v$ is not a leaf.}
Suppose that $v$ is not a leaf and that $v$ has children $v_1$ and $v_2$, as
well as a parent $p$. Recall that we assume that we have already computed the DP
entries $\DP(v_1,\cdot)$ and $\DP(v_2,\cdot)$ for both children of $v$. We now
show how to compute two DP solutions $\DP_A(v,\cdot)$ and $\DP_B(v,\cdot)$
depending on whether $v$ is a source (in Case~A) or not (in Case~B). Then, if
$v$ can be used as a source, we set
$$\DP(v,\cdot)=\min\{\DP_A(v,\cdot), \, \DP_B(v,\cdot) \},$$
where we compute the $\min$-operation via Lemma~\ref{lem:operations}. If $v$
cannot be used as a source, we set $\DP(v,\cdot)=\DP_B(v,\cdot)$.

\emph{Case~A:} Suppose $v$ is used as a source. We initialize
$\DP_A(v,\cdot)$ as the function which takes value $\infty$ on all of
$\mathbb{R}$. Now, since $v$ can be used as a source, $v$ can send flow
$\capac(p,v)$ to its parent and flow $\capac(v,v_1)$ and $\capac(v,v_2)$ to its
children.  Therefore, for $x\geq-\capac(p,v)$, the number of sources in
$\DP_A(v,x)$ is 1 (since $v$ is a source) \emph{plus} the number of sources that
we require in $T_{v_1}$ when $v_1$ can receive flow $\capac(v,v_1)$ from its
parent~$v$ \emph{plus} the same quantity for $T_{v_2}$.  Thus, it suffices to
set
$$\DP_A(v,x)=1+\DP(v_1,\capac(v,v_1))+\DP(v_2,\capac(v,v_2))$$
for all
$x\geq-\capac(p,v)$. Note that here we exploited that the functions
$\DP(v_1,\cdot)$ and $\DP(v_2,\cdot)$ are monotonically decreasing and that both
of them satisfy the feasible capacity property.  We conclude that in this case
$\DP_A(v,\cdot)$ is a monotonically decreasing function with two pieces.

\emph{Case~B:} Suppose that $v$ is not used as a source. We initialize $\DP_B(v,\cdot)$ as the
function which takes value $\infty$ on all of~$\mathbb{R}$. To compute the value
of $\DP_B(v,x)$, we need to obtain the minimum number of sources such that $v$
receives flow at most $x$ from its parent and such that all demands in $T_v$ are
satisfied.  Since $v$ is not a source, its demand~$d(v)$ must be satisfied
either by its parent~$p$ or by its children (or a combination of them).
Therefore, to obtain that we have to pick the children solutions $\DP(v_1,x_1)$
and $\DP(v_2,x_2)$ such that $d(v)\leq x-x_1-x_2$. 

Since we did not make $v$ a source, the number of sources in $\DP_B(v,x)$ is
the number of sources that we need to place in the subtrees $T_{v_1}$ and
$T_{v_2}$. Thus, we get
$$\DP_B(v,x)=\min_{x_1\in\mathbb{R}} \{ \DP(v_1,x_1)+\DP(v_2,x-x_1-d(v)) \},$$
where we used that $x_2\leq x-x_1-d(v)$ and by monotonicity of $\DP(v_2,\cdot)$
we minimize the number of sources in $T_{v_2}$ if we consider $x_2=x-x_1-d(v)$.
Here, the flows that we computed for $\DP_B(v,x)$ are set feasible because the
solutions $\DP(v_i,\cdot)$ satisfy the feasible capacity property and therefore
we do not violate the edge constraints to the children.

Since the above equality holds for all values of~$x$, $\DP_B(v,\cdot)$
corresponds to a shifted $(\min,+)$-convolution of two monotonically decreasing
functions.  More concretely, via Lemma~\ref{lem:operations} we can first compute
the shifted function $\DP(v_2,\cdot-d(v))$ and then we can set
$$\DP_B(v,\cdot)=\DP(v_1,\cdot)\minconv\DP(v_2,\cdot-d(v)),$$
which we compute via Lemma~\ref{lem:convolution}.

Finally, as a postprocessing step, we set
$\DP(v,\cdot)=\min\{\DP_A(v,\cdot),\,\DP_B(v,\cdot)\}$ if $v$ can be used as a source
and $\DP(v,\cdot)=\DP_B(v,\cdot)$ otherwise, as we already mentioned
above. But we also need to ensure that $\DP(v,\cdot)$ satisfies the feasible
capacity property.  Therefore, we set $\DP(v,x)=\infty$ for $x<-\capac(p,v)$ and
we set $\DP(v,\cdot)=\DP(v,\capac(p,v))$ for $x>\capac(p,v)$.  Observe that
these changes to $\DP(v,\cdot)$ can be done in time linear in the number of
pieces of $\DP(v,\cdot)$.

\textbf{Properties of the DP.}
Observe that in the DP above for each vertex~$v$ we only required the DP
solutions for its children $v_1$ and $v_2$.  Hence, our dependency graph is
given by our input tree~$T$ where all edges are directed towards the root. This
implies that every node in the dependency graph can only reach those nodes on a
path to the root and thus Property~(2) of Definition~\ref{def:well-behaved} is
satisfied with~$h$ being the height of~$T$.  Additionally, one can verify that
above all operations also satisfy Property~(3) of
Definition~\ref{def:well-behaved}. Finally, observe that in each step we only
used a constant number of operations from Lemma~\ref{lem:operations} and at most
one $(\min,+)$-convolution from Lemma~\ref{lem:convolution}.

\subsection{The Approximate DP}
\label{sec:ssl-dp-approx}
Now we explain how we solve the above DP more efficiently by computing
approximate solutions $\ADP(v,\cdot)$. This will reveal the procedures
$\tilde{\Procedure}_i$ from Definition~\ref{def:well-behaved}.

In our approximation algorithm, we do everything exactly as above except that we
replace each exact solution $\DP(v,\cdot)$ with the approximate solution
$\ADP(v,\cdot)$. Then we add a postprocessing step in which we round
$\ADP(v,\cdot)$, i.e., we set
\begin{align}
\label{eq:ssl-rounding}
	\ADP(v,\cdot)=\lceil \ADP(v,\cdot) \rceil_{1+\delta}
\end{align}
for a parameter $\delta>0$ that we will set later.

Note that all of our operations are exact except the rounding step which loses a
factor of $\alpha=1+\delta$. Thus, Property~(4a) of
Definition~\ref{def:well-behaved} is satisfied.
Additionally, observe that in each step we only used a constant number of
operations from Lemma~\ref{lem:operations} and at most one
$(\min,+)$-convolution from Lemma~\ref{lem:convolution}. This implies that
Property~(4b) is satisfied.  Furthermore, all functions we consider are monotone
and our rounding step ensures that each row $\ADP(v,\cdot)$ has at most
$p=O(\lg_{1+\delta}n)$ pieces. Hence, Property~(4c) is also satisfied.

This implies that the DP is $(h,1+\delta,O(\log_{1+\delta}(n)))$-well-behaved.
By applying Theorem~\ref{thm:well-behaved-static} with
$\delta=\ln(1+\varepsilon)/(h+1)$, we obtain the following proposition
which shows that on binary trees, the approximation algorithm computes a
bicriteria $(1+\varepsilon,1)$-approximate solution. We note that for constant
$\varepsilon$, the running time essentially becomes
$\tO(n\cdot h^2)$, where $h$ is the height of the tree. Thus, for trees of
height $\tO(1)$, we obtain a near-linear running time.
\begin{proposition}
\label{prop:ssl-binary-approximate}
	Let $\varepsilon>0$. The approximation algorithm computes a bicriteria
	$(1+\varepsilon,1)$-approximate solution for the simultaneous source
	location problem on binary trees in time 
	$O( n\cdot (h \log(n)/\varepsilon)^2 \log( h \log(n)/\varepsilon) )$, where $h$ is
	the height of the tree.
\end{proposition}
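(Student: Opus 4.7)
The plan is to apply Theorem~\ref{thm:well-behaved-static} as a black box, after observing that the DP constructed in Sections~\ref{sec:ssl-dp-trees} and~\ref{sec:ssl-dp-approx} is $(h,1+\delta,p)$-well-behaved for $p=O(\log_{1+\delta}(n))$. Most of the required verification is already spelled out in the preceding text: monotonicity of each row $\DP(v,\cdot)$ is Observation~\ref{obs:ssl-monotone}; the dependency graph is the input tree $T$ with edges oriented toward the root, so $|\Reach(v)|\le h$ for every vertex $v$; the sensitivity condition holds because the procedures $\Procedure_v$ combine children's rows using operations that are monotone in their inputs (addition, shift, minimum, and $(\min,+)$-convolution), so feeding in $\beta$-approximations yields a $\beta$-approximation at the parent; and $\tilde{\Procedure}_v$ uses only a constant number of operations from Lemma~\ref{lem:operations} plus at most one application of Lemma~\ref{lem:convolution}, followed by the rounding step \eqref{eq:ssl-rounding} that, by the last bullet of Lemma~\ref{lem:operations}, forces the output to be a monotone piecewise constant function with at most $p=O(\log_{1+\delta}(n))$ pieces.

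Given this, I would choose $\delta=\ln(1+\varepsilon)/(h+1)$ and invoke Theorem~\ref{thm:well-behaved-static}. The theorem produces an approximate table $\ADP$ satisfying
\[
\DP(r,0)\;\le\;\ADP(r,0)\;\le\;(1+\delta)^{h+1}\DP(r,0)\;\le\;e^{\delta(h+1)}\DP(r,0)\;=\;(1+\varepsilon)\DP(r,0),
\]
where $r$ is the root of $T$. As noted after Observation~\ref{obs:ssl-monotone}, $\DP(r,0)$ equals the optimum number of sources, so $\lceil\ADP(r,0)\rceil$ sources suffice and the approximation ratio on $|S|$ is $1+\varepsilon$. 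For the running time, Theorem~\ref{thm:well-behaved-static} costs $O(|\I|\cdot p^2\log p)$; plugging in $|\I|=n$ and $p=O(\log(n)/\delta)=O(h\log(n)/\varepsilon)$ gives exactly the bound in the proposition.

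The remaining thing to justify is that the capacity blow-up factor is genuinely $1$, not $1+\varepsilon$. This is where the feasible capacity property does the work: capacities enter the DP only through the clamps $\DP(v,x)=\infty$ for $x<-\capac(p,v)$ and $\DP(v,x)=\DP(v,\capac(p,v))$ for $x>\capac(p,v)$, so every finite value in the exact (and hence approximate) table certifies a flow that respects all edge capacities exactly; the rounding in~\eqref{eq:ssl-rounding} only inflates the \emph{count} of sources and never perturbs the flow constraints. Hence any solution realizing $\ADP(r,0)$ can be obtained from an optimal solution by adding at most $\lceil\ADP(r,0)\rceil-\DP(r,0)$ redundant sources, which preserves feasibility. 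I expect no real obstacle in the write-up, since the main conceptual work is already done in setting up the well-behaved structure; what remains is parameter tuning and arithmetic on top of Theorem~\ref{thm:well-behaved-static}.
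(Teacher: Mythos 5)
Your proposal is correct and follows essentially the same route as the paper: verify that the DP of Sections~\ref{sec:ssl-dp-trees} and~\ref{sec:ssl-dp-approx} is $(h,1+\delta,O(\log_{1+\delta} n))$-well-behaved, set $\delta=\ln(1+\varepsilon)/(h+1)$, and invoke Theorem~\ref{thm:well-behaved-static} to get the $(1+\varepsilon)$ factor on the source count and the stated running time, with the capacity factor staying $1$ because capacities are handled exactly and only the source counts are rounded. No substantive differences from the paper's argument.
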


\subsection{Extension to General Graphs (Proof of Theorem~\ref{thm:ssl-static})}
\label{sec:ssl-general}
We prove Theorem~\ref{thm:ssl-static} by giving reductions to the binary
setting.

First, suppose that $G$ is a tree with potentially unbounded degree. Then we
turn $G$ into a binary tree~$T$ using the same construction as in the proof of
Lemma~\ref{lem:binary-racke-tree}. That is, we replace each vertex $u$ in $G$ by
a balanced binary tree $\tau_u$ with $\deg(u)$ leaves
$c_{u,v_1},\dots,c_{u,v_{\deg(u)}}$, where the $v_i$ are the children of $u$ in
$G$; the internal edges of $\tau_u$ have capacity $\infty$ and we denote the root
of each $\tau_u$ by $r_u$. Furthermore, for each
edge $(u,v)$ in $G$, we insert the edge $(c_{u,v},r_v)$ into $T$ with capacity
$\capac(c_{u,v},r_v) = \capac(u,v)$. By the same arguments as in the proof of
Lemma~\ref{lem:binary-racke-tree}, $T$ has $O(n)$ vertices and height
$O(h \log n)$, where $h$ is the height of $G$.  It is straight-forward to see
that with this construction, there exists a flow from $u$ to $v$ in $G$
\emph{if and only if} there exists a flow from $r_u$ to $r_v$ in $T$.  Now, in
$T$ we have already set the edge capacities and it remains to set the vertex
demands.  For each vertex $r_u$ in $T$, we set $d(r_u)=d(u)$, and for all other
vertices~$v$ in $T$, we set $d(v)=0$. Furthermore, in our instance of the
simultaneous source location problem we set that each vertex $r_u$ can be picked
as a source and none of the other vertices in $T$ can be picked as a source.
Note that there exists a one-to-one correspondence between sources in $G$ and
sources in $T$.  Together with our observation for flows above, this means that
solving the simultaneous source location problem on $T$ gives a solution for
$G$.

To obtain the bicriteria $(1+\varepsilon,1)$-approximation result for trees, we
apply the approximation algorithm from
Proposition~\ref{prop:ssl-binary-approximate} on $T$.

Finally, to obtain the $(1+\varepsilon,O(\lg^4 n))$-approximate solution for a general graph $G$, we
proceed as follows. We build the binarized Räcke tree~$T$ for $G$ as per
Lemma~\ref{lem:binary-racke-tree} and recall that $T$ has quality $q=O(\lg^4 n)$
and height $\tO(1)$.  In $T$, we set the bits to indicate that all leaves can be used as sources but
none of the other vertices might be used as a source. We apply the approximation
algorithm from Proposition~\ref{prop:ssl-binary-approximate} on $T$ to obtain a
$(1+\varepsilon,1)$-approximate solution on $T$ in time $\tO(n)$. Now let us
point out that the Räcke tree from Theorem~\ref{thm:racke-tree} (and, therefore,
also the binarized Räcke tree from Lemma~\ref{lem:binary-racke-tree}) is also a
\emph{tree flow sparsifier}. That is, if there exists a feasible flow $F$ in $G$,
then there exists a flow of the same value between the corresponding leaves in
$T$. Additionally, for any feasible flow~$F$ with value $v$ between leaves in
$T$, there exists a feasible flow with value $\frac{1}{q} v$ between the
corresponding vertices in $G$.  Therefore, if we are allowed to exceed the edge
capacities in $G$ by a factor of $q=O(\lg^4 n)$, the flow that we compute in $T$
is feasible in $G$. This gives that we can compute a
$(1+\varepsilon,O(\lg^4 n))$-approximate solution in time $\tO(n)$.

\subsection{Extension to the Dynamic Setting (Proof of Theorem~\ref{thm:ssl-dynamic})}
\label{sec:ssl-dynamic}
To prove Theorem~\ref{thm:ssl-dynamic}, we first consider the special case of
dynamic \emph{binary} trees (which is not mentioned in the theorem). We show
that for binary trees we can maintain a bicriteria
$(1+\varepsilon,1)$-approximate solution with worst-case update time
$\tO(h^3/\varepsilon^2)$,
where $h$ is an upper bound on the height of the tree. Then we show that the
results of the theorem can be derived from this result.

Consider a dynamic \emph{binary} tree on which we maintain the approximate DP
from Section~\ref{sec:ssl-dp-approx}. We will exploit that $T$ and the
dependency tree of our DP coincide. Hence, an update in $T$ will trigger the
same update in the dependency tree.
Observe that the update operation
\emph{SetDemand}($v$,~$d$) triggers a change to $\ADP(v,\cdot)$. Then we can
recompute the global approximate DP table using
Theorem~\ref{thm:well-behaved-dynamic}.  Since the DP is well-behaved, the tree
has height at most $h$ and since we set $\delta=O(h/\varepsilon)$, the
theorem implies that we need time $\tO(h^3/\varepsilon^2)$ to recompute the ADP
solution.  Similarly, for \emph{SetCapacity}($(u,v)$,~$c$) we
can again update the rows $\ADP(u,\cdot)$ and $\ADP(v,\cdot)$ and we update the
entire DP table using Theorem~\ref{thm:well-behaved-dynamic}. By the same
arguments as above, this takes time $\tO(h^3/\varepsilon^2)$. For
\emph{Remove}($u,v$), we remove the edge $(u,v)$ from $T$ and by the same
reasoning as before we get update time $\tO(h^3/\varepsilon^2)$. Finally,
consider \emph{Insert}($(u,v)$,~$c$), where we assume that $v$ becomes the child
of $u$.  Then we might have the issue that before the update, $v$ is not the
root of its connected component.  To mitigate this issue, we run the same
re-rooting procedure as described in Section~\ref{sec:extension-dynamic}. As
described in Section~\ref{sec:extension-dynamic}, this will only recompute the
solutions of $O(h)$ DP cells and thus we again have a total update time of
$\tO(h^3/\varepsilon^2)$.

Next, we prove the results from Theorem~\ref{thm:ssl-dynamic}. First, consider
the case in which all edge capacities are set to~$1$ and where we want to
obtain a bicriteria $(1+\varepsilon,n^{o(1)})$-approximate solution with
amortized update time $n^{o(1)}/\varepsilon^2$ and preprocessing time $O(n^2)$.  Let $G$ be
the dynamic input graph. We maintain the dynamic binarized Räcke tree~$T$ for
$G$ as per Lemma~\ref{lem:dynamic-binary-racke-tree} and remark that the
dynamic Räcke tree from Theorem~\ref{thm:dynamic-racke-tree} is also a tree flow
sparsifier. We note that any update to
$G$ triggers an update operation on $T$ that requires amortized update time
$n^{o(1)}$. On $T$, we allow the leaves to act as sources but no other
vertices. Furthermore, we set the demands of the leaves
in $T$ to the demands of the corresponding vertices in $G$; all other vertices
have demand $0$. Now we use the data structure for binary trees from the
previous paragraph to maintain a dynamic bicriteria
$(1+\varepsilon,1)$-approximate solution on $T$. That is, when a vertex demand
changes in $G$, we update the corresponding vertex demand in $T$. When an edge
is inserted or deleted in $T$ due to the subroutine from
Lemma~\ref{lem:dynamic-binary-racke-tree}, then we update the data structure
from the previous paragraph that maintains the DP solution on $T$. By the same
argumentation as in Section~\ref{sec:ssl-general}, we obtain that since $T$ has
quality $n^{o(1)}$, if we can exceed the edge capacities in $G$ by a $n^{o(1)}$
factor then any feasible flow in $T$ is also feasible in $G$. This implies the
result claimed in the theorem.

If $G$ is a tree of height $h$ but (potentially) with unbounded degrees, we can
maintain a bicriteria $(1+\varepsilon,1)$-approximate solution with worst-case
update time $\tO(h^3/\varepsilon^2)$ and preprocessing time
$O(n^2/\varepsilon^2)$ similar to above. That is, we transform $G$ into a binary
tree using the same procedure that we use in the proof of
Lemma~\ref{lem:dynamic-binary-racke-tree}, where we replace each vertex $u$ by a
subtree $\tau_u$ with root $r_u$. Similar to what we argued in
Section~\ref{sec:ssl-general}, we only allow the vertices $r_u$ as roots in $T$
and obtain any flow in $T$ corresponds to a flow in $G$. Then by applying the
dynamic data structure for binary trees on $T$, we obtain the result.

Finally, let us consider the case in which we wish to obtain bicriteria
approximation algorithms when we only allow the update operations
\emph{SetDemand}($v$,~$d$). In this case, we observe that
the underlying graph is static, since only the vertex
demands change. Therefore, for our input graph $G$, we can build the
Räcke tree from Theorem~\ref{thm:racke-tree} which is also a tree flow
sparsifier and we consider its binarized
version~$T$ as per Lemma~\ref{lem:dynamic-binary-racke-tree}. Note that building
this tree with quality $\tO(\lg^4 n)$ takes time $\tO(m)$. Given such a static Räcke tree,
we can use our dynamic data structure for binary trees from above to support the
operations \emph{SetDemand}($v$,~$d$) on
$T$. Since $T$ has height $\tO(1)$, we obtain the result with the bicriteria
$(1+\varepsilon,O(\lg^4 n))$-approximation. To obtain the bicriteria
$(1+\varepsilon,O(\lg^2 n\lg \lg n))$-approximation we do exactly the same as
above, but instead of using the Räcke tree from Theorem~\ref{thm:racke-tree}, we
use the Räcke tree from Harrelson, Hildrum and Rao~\cite{harrelson03polynomial}
which can be used as a tree flow sparsifier. As it 
has quality $O(\lg^2 n \lg \lg n)$ and can be constructed in time
$\poly(n)$, we obtain the result.

\section{Recourse Bounds}
\label{sec:recourse}
In this section discuss the recourse bounds we derive. To motivate these lower
bounds, let us note that ``classic'' dynamic algorithms with polylogarithmic
update time maintain a single explicit solution in memory; this is desirable in
many practical scenarios. However, some dynamic algorithms (like our DP
algorithms above) only return the \emph{value} of an approximate solution in
polylogarithmic time, which is sometimes referred to as \emph{implicit}.  To
understand whether for our problems implicit solutions are necessary, we
consider algorithms which maintain multiple explicit solutions, of which only
one has to be feasible. We believe that this is an interesting setting to look
at, as it essentially interpolates between the two scenarios above.  If even
algorithms with multiple solutions must have high recourse, this suggests that
implicit solutions are somehow inevitable. We show below that for fully dynamic
knapsack and fully dynamic $k$-balanced partitioning the latter is the case.

Here, we consider dynamic algorithms over inputs that are undergoing insertions
and deletions via an update operation. The algorithms are allowed to maintain
multiple explicit solutions and must ensure that after every time step, there
exists a solution with certain guarantees while minimizing the recourse for
updating the solutions.  More concretely, we consider algorithms which
explicitly maintain $s$~solutions $S_1^{(t)},\dots,S_s^{(t)}$ for each time
step~$t$. Here, we assume that after each time step a single update operation is
performed, after which an algorithm can make changes to its solutions.  We say
that an algorithm maintains an \emph{$\alpha$-approximate solution} if for each
time step~$t$, there exists an index $i=i(t)$ such that $S_{i}^{(t)}$ is a
feasible and $\alpha$-approximate solution for the problem we study. 

Observe that in this setting, the algorithm might have much lower recourse,
since for each time~$t$ it may pick a different solution. Thus, it may not have
to update any of the solutions significantly after the update operations.
Further note that this notion of ensuring that at each time step there exists a
feasible solution is somewhat reminiscent of list decoding in coding theory,
where the decoder can output a list of messages and only has to ensure that the
correct messages is contained in that list.

Measuring the recourse will be problem-specific, based on how the solutions for
the problems are stored. In general, given two solutions from consecutive time
steps, we let $d(S_i^{(t)},S_i^{(t+1)})$ denote the (problem-specific)
\emph{recourse} incurred by
the $i$'th solution at time step~$t$ (see below for how to set $d(\cdot,\cdot)$
for the problems we study). The \emph{total recourse} of an algorithm is given
by $$\sum_t \sum_{i=1}^s d(S_i^{(t)},S_i^{(t+1)}).$$

Next, we will present the concrete recourse lower bounds that we derive.

\textbf{Recourse Bounds for Knapsack.}
In knapsack, the solutions $S_i^{(t)}$ simply correspond to subsets of items
which are contained in the knapsack. To measure the recourse, we set 
$d(S_i^{(t)},S_i^{(t+1)}) = \abs{S_i^{(t)} \triangle S_i^{(t+1)}}$, i.e., 
we consider the cardinality of the symmetric difference of the $i$'th solution
at time steps~$t$ and $t+1$.

Our main result shows that for a fixed accuracy~$\varepsilon$, any dynamic
$(1-\varepsilon)$-approximation algorithm must maintain
$\Omega(1/\varepsilon)$~solutions or it must have recourse
$\Omega(\frac{n}{\varepsilon})$, even when only a single item is
inserted.

\begin{theorem}
\label{thm:recourse-knapsack}
	Let $\varepsilon\in(0,1/2)$.
	Assume $s < \frac{1}{8\varepsilon(1+2\varepsilon)}$ and
	$n\in\mathbb{N}$ is a sufficiently large multiple of~$s$.
	Then any dynamic randomized $(1-\varepsilon)$-approximation algorithm
	for knapsack
	with $s$~solutions must have recourse~$\Omega(\frac{n}{s})$.
	This holds even for a single item insertion.
\end{theorem}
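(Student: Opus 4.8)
The plan is to prove the lower bound by an explicit adversary together with a pigeonhole argument over the $s$ solution slots; no reduction is needed. First I would build a hard initial instance $\mathcal{I}$ on $n$ items plus a family of candidate insertions. Partition $[n]$ into $N := s+2$ groups $\Gamma_1,\dots,\Gamma_N$ of $m := n/N = \Theta(n/s)$ items each, give every item price $1$, and give all items of $\Gamma_j$ the common weight $\omega_j$, where $\omega_1<\dots<\omega_N$ is ``super-increasing with multiplicity'', i.e.\ $\omega_{j+1} > 2m\,\omega_j$; set the knapsack capacity to $B = m\,\omega_N$. The super-increasing choice forces a rigid combinatorial structure: for every $j$, the essentially unique way to pack close to the maximum number of items of total weight at most $m\,\omega_j$ is to take $\Gamma_1,\dots,\Gamma_{j-1}$ entirely plus all but one item of $\Gamma_j$; call this set $H_j$, and note the $H_j$'s are pairwise far, $|H_i\triangle H_j| = \Omega(|i-j|\,m)=\Omega(n/s)$.

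Second, the concentration lemma. Since all prices are $1$, the optimum of $\mathcal{I}$ is the maximum number of items fitting under $B$, which the rigidity statement pins to $|H_N| = n-1$. A $(1-\varepsilon)$-approximate solution then omits at most $\varepsilon n+1$ items, and by the super-increasing property the omitted items must include at least one item of $\Gamma_N$ and the solution must agree with $H_N$ up to $O(\varepsilon n)$ items; this also rules out ``cheap-looking'' alternatives such as omitting almost all of $\Gamma_{N-1}$ (that solution is too small to be near-optimal). Provided $\varepsilon N$ is small enough that $O(\varepsilon n)\le m/4$ — this is exactly where $s < \frac{1}{8\varepsilon(1+2\varepsilon)}$ enters — every $(1-\varepsilon)$-approximate solution of $\mathcal{I}$ lies within symmetric-difference distance $m/4$ of $H_N$. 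The triggers move this attractor: for each $j^*\in\{1,\dots,N-2\}$ let $e_{j^*}$ have weight $B - m\,\omega_{j^*}$ and price $q_{j^*}:=(N-j^*+1)m$. One checks that in $\mathcal{I}+e_{j^*}$ the optimum is $\{e_{j^*}\}\cup H_{j^*}$ (the residual capacity $m\,\omega_{j^*}$ is filled exactly as in the rigidity statement); $q_{j^*}$ is large enough that every solution not containing $e_{j^*}$ falls below the $(1-\varepsilon)$ threshold, yet small enough (again using $\varepsilon N$ small) that every $(1-\varepsilon)$-approximate solution of $\mathcal{I}+e_{j^*}$ lies within $m/4$ of $F_{j^*}:=\{e_{j^*}\}\cup H_{j^*}$, and $F_{j^*}$ is $\Omega(n/s)$-far from $H_N$ and from every $F_{j'}$ with $j'\ne j^*$.

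Third, the pigeonhole and the forced recourse. Fix any algorithm and any fixing of its randomness, so that after preprocessing $\mathcal{I}$ its $s$ stored sets $S_1,\dots,S_s$ are determined; one of them, say $S_1$, is $(1-\varepsilon)$-approximate for $\mathcal{I}$, hence within $m/4$ of $H_N$ and thus more than $m/4$ from every $F_{j^*}$. The remaining $S_2,\dots,S_s$ lie within $m/4$ of at most one $F_j$ each (the $F_j$'s are pairwise more than $m/2$ apart), so they are ``close'' to at most $s-1$ of the $N-2=s$ triggers; hence some trigger index $j^*$ satisfies $|S_i\triangle F_{j^*}| > m/4$ for every $i$. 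The adversary performs the single insertion $e_{j^*}$. The algorithm must then hold a $(1-\varepsilon)$-approximate solution for $\mathcal{I}+e_{j^*}$, i.e.\ an updated set $S_{i_0}'$ within $m/4$ of $F_{j^*}$, so by the triangle inequality for the symmetric-difference metric $d(S_{i_0},S_{i_0}') = |S_{i_0}\triangle S_{i_0}'| \ge |S_{i_0}\triangle F_{j^*}| - |F_{j^*}\triangle S_{i_0}'| > m/4 = \Omega(n/s)$. As the total recourse is at least $d(S_{i_0},S_{i_0}')$ and the argument never used the randomness, the bound $\Omega(n/s)$ holds for randomized algorithms too, and after only one insertion.

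The main obstacle is the rigidity/concentration analysis of the second step: one must verify, with explicit constants, that the super-increasing weights really do trap every near-optimal solution of $\mathcal{I}$ and of each $\mathcal{I}+e_{j^*}$ in an $O(\varepsilon n)$-neighborhood of a single set, and simultaneously balance the two competing demands on the trigger price $q_{j^*}$ — large enough to invalidate all $e_{j^*}$-free solutions, small enough that its contribution does not swamp the $\varepsilon$-slack. Making both inequalities hold (together with $O(\varepsilon n)\le m/4$, so that the neighborhoods around distinct $F_j$'s are disjoint) is precisely what produces the threshold $s < \frac{1}{8\varepsilon(1+2\varepsilon)}$.
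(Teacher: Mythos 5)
Your construction is genuinely different from the paper's and the high-level pigeonhole idea is sound, but as written there are two concrete problems, one arithmetic and one structural.

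First, the triangle-inequality step is wrong. You define both ``far'' (the pigeonhole conclusion $|S_i\triangle F_{j^*}|>m/4$) and ``close'' (the concentration radius $|S_{i_0}'\triangle F_{j^*}|\le m/4$) with the \emph{same} radius $m/4$, and then claim $|S_{i_0}\triangle S_{i_0}'|\ge|S_{i_0}\triangle F_{j^*}|-|F_{j^*}\triangle S_{i_0}'|>m/4$. But $>m/4$ minus $\le m/4$ only gives $>0$, not $>m/4$. To get a nontrivial bound you must separate the two radii: the pigeonhole must guarantee $|S_i\triangle F_{j^*}|>c_1 m$ for an explicit $c_1$, the concentration must give $\le c_2 m$ for an explicit $c_2 < c_1$, and then the recourse is $\ge (c_1-c_2)m$. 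With $N=s+2$ groups, the attractors $F_j$ are pairwise $\ge m+2$ apart, so you can take $c_1=1/2$; the concentration radius is $2+\varepsilon(n+m-1)$, which under $s<\frac{1}{8\varepsilon(1+2\varepsilon)}$ (and the induced constraint $\varepsilon<(\sqrt2-1)/4$, from $s\ge 1$) bounds to $(\sqrt2-1)m$ in the limit $n\to\infty$, giving $c_2\approx0.414$ and a leftover of only about $0.086\,m$. So the constants do close, but only barely, and only because $s\ge 1$ forces $\varepsilon$ to be small; none of this is visible from the ``$O(\varepsilon n)\le m/4$'' heuristic you offer, which actually \emph{fails} for $\varepsilon$ near its allowed maximum. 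You owe the reader the explicit accounting.

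Second, and more seriously, the claim that ``the argument never used the randomness'' is not correct. Your adversary chooses the trigger index $j^*$ by inspecting the realized slots $S_1,\dots,S_s$, which for a randomized algorithm depend on its coins; that is an \emph{adaptive} adversary. The theorem (and the paper's proof, which explicitly invokes Yao's principle) concerns the oblivious model, where the update sequence is fixed before the coins are drawn. To convert your argument to an oblivious bound, you would draw the trigger $j^*$ at random; but with only $N-2=s$ triggers, the pigeonhole guarantees merely that \emph{at least one} trigger is uncovered, so a uniformly random trigger is uncovered only with probability $\ge 1/s$, and the expected recourse degrades to $\Omega(n/s^2)$. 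The paper avoids this by using $2s$ candidate offsets against $s$ slots, so that a constant fraction of them are uncovered and a random one is bad with probability $\ge 1/2$. You would need to double your trigger set (say $N=2s+2$ groups), which further tightens the already marginal concentration constants, and then explicitly invoke Yao. As it stands, the randomized claim has a gap.

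For comparison, the paper's construction is also substantially simpler: it uses $n$ identical unit-price, unit-weight items and inserts one heavy item whose weight is $n-k$ with $k$ drawn uniformly from $2s$ interval midpoints; approximation then pins the number of small items in any good solution to a window of width $n/(8s)$ around $k$, and the pigeonhole over the $2s$ midpoints gives the bound directly. Your super-increasing-weight construction is a legitimate alternative route, but it trades that simplicity for a rigidity lemma and delicate constant bookkeeping, and you would have to repair both the triangle inequality and the randomization step before the proof is complete.
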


\textbf{Recourse Bounds for $k$-Balanced Partitioning.}
In $k$-balanced partitioning, each solution $S_i^{(t)}$ consists of $k$~clusters
$V^{(i,t)}_1,\dots,V^{(i,t)}_k$ that partition the set of vertices~$V$. To
measure the recourse, we set 
$d(S_i^{(t)},S_i^{(t+1)}) = \sum_{j=1}^k \abs{V^{(i,t)}_j \triangle V^{(i,t+1)}_j}$,
i.e., we consider the total number of vertices that change their set
$V^{(i,\cdot)}_j$ from time~$t$ to $t+1$.

Our main result shows that for any~$C$ and fixed~$\varepsilon$, any algorithm
that maintains a $(C,1+\varepsilon)$-approximate solution must use
$\Omega(1/\varepsilon)$~solutions or it must have \emph{amortized} recourse
$\Omega(\varepsilon^2 \frac{n}{k})$, even when only $O(1/\varepsilon)$~edges are
inserted. Here, the amortized recourse refers to the total recourse divided by
the total number of update operations.
\begin{theorem}
\label{thm:recourse-partitioning}
	Let $C>0$ be arbitrary and $\varepsilon\in(0,1/2)$.
	Assume $k\geq 4$ and $s < \frac{1}{4\varepsilon}$.
	Then any dynamic randomized $(C,1+\varepsilon)$-approximation algorithm for
	$k$-balanced partitioning with $s$~solutions must have amortized
	recourse~$\Omega(\varepsilon^2 \frac{n}{k})$.
	This holds even for $O(1/\varepsilon)$~edge insertions.
\end{theorem}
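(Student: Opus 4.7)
The plan follows the template used for Theorem~\ref{thm:recourse-knapsack}, adapted to exploit the balance slack in $k$-balanced partitioning. The central idea is to keep $\OPT=0$ throughout the adversarial update sequence, so that any $(C,1+\varepsilon)$-approximate solution is forced to be a balanced zero-cut partition; this sidesteps the difficulty of handling an arbitrary approximation factor~$C$, because if $\OPT=0$ then the inequality $\cost\le C\cdot\OPT$ pins any valid solution to cost $0$ regardless of $C$.

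First, I would build a starting graph on $n$ vertices consisting of $\Theta(k/\varepsilon)$ disjoint \emph{atoms} — small cliques of size $\Theta(\varepsilon n/k)$ each — sized so that any assignment of atoms to $k$ groups respecting the $(1+\varepsilon)\lceil n/k\rceil$ balance bound is a zero-cut balanced partition. I would then fix $L:=\lfloor 1/(2\varepsilon)\rfloor$ target configurations $P_1,\dots,P_L$, where each $P_t$ differs from a ``base'' assignment by swapping a designated subfamily of $\Theta(1)$ atoms between cluster~$1$ and cluster~$2$; the subfamilies are chosen disjoint so that $d(P_i,P_j)=\Omega(\varepsilon^2 n/k)$ whenever $i\ne j$. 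The adversary then draws a uniformly random permutation $\pi$ of $[L]$ and, at step $t$, inserts a single edge merging two atoms in a pattern which (together with the earlier merges) forces the currently feasible zero-cut balanced partitions to lie in an $o(\varepsilon^2 n/k)$-neighborhood of $P_{\pi(t)}$ while preserving $\OPT=0$.

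With this instance in place, I would apply a Yao-style pigeonhole argument. Because the edge inserted at step $t$ depends on $\pi$, an algorithm maintaining only $s<1/(4\varepsilon)\le L/2$ solutions has probability at most $s/L\le 1/2$ of already holding a solution within the $o(\varepsilon^2 n/k)$-neighborhood of $P_{\pi(t)}$; conditioned on this failing, some solution must be moved into that neighborhood, incurring recourse $\Omega(\varepsilon^2 n/k)$ by the pairwise-distance property. Summing over the $L=\Theta(1/\varepsilon)$ update steps yields expected total recourse $\Omega((L-s)\cdot \varepsilon^2 n/k)=\Omega(\varepsilon n/k)$, and dividing by the $O(1/\varepsilon)$ updates gives the desired amortized bound $\Omega(\varepsilon^2 n/k)$. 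Yao's minimax principle then lifts this lower bound against a random input to a lower bound for randomized algorithms against a fixed worst-case sequence.

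The main obstacle will be designing the atom-merge schedule so that it simultaneously (i)~preserves $\OPT=0$ after every insertion, (ii)~collapses the feasible zero-cut balanced partitions to an $o(\varepsilon^2 n/k)$-neighborhood of a single $P_{\pi(t)}$ using only one edge per step, and (iii)~uses at most $O(1/\varepsilon)$ insertions in total. The hypothesis $k\ge 4$ enters precisely here: two clusters serve as the ``active'' pair between which swings occur, while two further clusters act as ``sinks'' that absorb the residual imbalance created by the merges and keep a zero-cut balanced partition feasible throughout. A subsidiary check is that the ``free'' atoms (those not in any designated subfamily) cannot be rearranged to amortize the recourse below $\Omega(\varepsilon^2 n/k)$ per miss; this should follow from the fact that any feasible zero-cut partition fully determines the cluster containing any merged atom-group and hence the assignment of $\Omega(1)$ atoms of the designated subfamily, independent of how the free atoms are placed.
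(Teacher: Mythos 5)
Your high-level strategy (keep $\OPT=0$ to neutralize the factor~$C$, then use Yao's principle and a random adversary) matches the paper's, and the instance you describe — $\Theta(k/\varepsilon)$ disjoint small cliques of size $\Theta(\varepsilon n/k)$ — is essentially the paper's family of star graphs. But the core of your argument departs from the paper's and, as written, has a genuine gap.

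You propose fixing $L=\Theta(1/\varepsilon)$ far-apart target partitions $P_1,\dots,P_L$ and, at step $t$, inserting an edge that \emph{forces} every zero-cut balanced partition into an $o(\varepsilon^2 n/k)$-neighborhood of $P_{\pi(t)}$, then applying a pigeonhole argument over the $s<L/2$ maintained solutions. This runs into two problems. First, with insertions only, the set of feasible zero-cut balanced partitions can only shrink monotonically; if after step $t$ it lies in a tiny neighborhood of $P_{\pi(t)}$, it cannot at step $t+1$ lie in a tiny neighborhood of a \emph{far-away} $P_{\pi(t+1)}$, since the latter set is contained in the former. So the adversary you describe cannot ``walk'' the feasible set around a family of pairwise-distant targets. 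Second, even dropping the ``collapse'' language and reading your plan as ``at step $t$ the adversary merges the atoms in the designated subfamily of $P_{\pi(t)}$'', the pigeonhole argument fails: since the subfamilies are disjoint and co-clustering different subfamilies are not mutually exclusive constraints, a \emph{single} carefully chosen solution can satisfy all $L$ co-clustering constraints preemptively (e.g., by keeping each subfamily together in its cluster from the start), so the algorithm never pays. For the argument to give a lower bound you need the adversary's demands to be mutually conflicting from the perspective of any fixed solution, and disjoint-subfamily merges are not.

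The paper sidesteps both issues by not fixing any target configurations at all. It inserts a \emph{random} edge between the centers of two uniformly random stars at each step, and the entire argument is per-solution and local: a ``useful'' solution places most of each star in one cluster, and a random pair of stars lands in different clusters with probability at least $1-2(1+\varepsilon)/k\ge 1/2$ (this is exactly where $k\ge 4$ is used), in which case merging them forces $\Omega(\varepsilon n/k)$ recourse for that solution. Because the adversary's edge is between two random stars out of $\Theta(k/\varepsilon)$, no single solution (nor any $s$ of them) can ``pre-merge'' all $\binom{\Theta(k/\varepsilon)}{2}$ possible pairs into one cluster, so the union bound over the $s$ solutions gives a constant success probability after $s$ insertions, and dividing by the $s\le O(1/\varepsilon)$ updates gives the stated amortized bound. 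Your version of the use of $k\ge 4$ (clusters $3$ and $4$ as ``sinks'') is not what the hypothesis is for. To repair your proof you would either have to redesign the insertion schedule so that, conditioned on any fixed solution, some insertion is expensive with constant probability — which essentially recovers the paper's random-pair construction — or you would have to make the $L$ target configurations mutually exclusive for any one solution, which your disjoint-subfamily design does not achieve.
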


\subsection{Proof of Theorem~\ref{thm:recourse-knapsack}}
\label{sec:proof-recourse-knapsack}
We prove Theorem~\ref{thm:recourse-knapsack}.
We use Yao's principle~\cite{yao77probabilistic}, i.e., we consider a
deterministic algorithm and give a distribution over inputs, showing that in
expectation the algorithm will have recourse $\Omega(\frac{n}{s})$.

We consider an instance in which initially we have $n$~items, and each item~$i$
has weight $w_i=1$ and price $p_i=1$. We refer to these items as \emph{small}
items. We set the budget of our knapsack to $B=n$. Note that in this instance,
$\OPT=n$ because all small items fit into the knapsack.

Now we sample an integer~$j$ uniformly at random from $[2s-1]=\{0,\dots,2s-1\}$,
and we set $k = i\cdot\frac{n}{2s} + \frac{n}{4s}$. We insert a single
\emph{heavy} item with $p = n-k+2\varepsilon n$ and $w=n-k$. Note that after
inserting the heavy item, we have that $\OPT = n+2\varepsilon n$ since the
optimal solution consists of the heavy item and $k$~small items.

We let $S_1,\dots,S_s$ denote the solutions maintained by the algorithm
\emph{before} the heavy item was inserted and we let $S_1',\dots,S_s'$ denote
the solutions \emph{after} the heavy item was inserted. We write
$\numSmall(S_i)$ to denote the number of small items in solution~$S_i$. 

In the following, we will show that any $(1-\varepsilon)$-approximate
solution~$S_i'$ must contain the heavy item and ``almost'' $k$~small items.
However, we will also show that with constant probability all solutions~$S_i$
had ``much less'' or ``much more'' than $k$~small items initially. This then
gives that obtaining any $(1-\varepsilon)$-approximate solution must encur high
recourse.

We follow this proof strategy in reverse order. We start by showing that with
constant probability, all $S_i$ have ``much less'' or ''much more'' than
$k$~small items.
\begin{lemma}
\label{lem:recourse-knapsack}
	With probability at least $1/2$, it holds that
	$\abs{k-\numSmall(S_i)} \geq \frac{n}{4s}$ for all $i\in[s]$.
\end{lemma}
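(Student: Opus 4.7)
The plan is to apply Yao's principle so that we may assume the algorithm is deterministic, then argue over the randomness of $j$. Since the initial instance (all $n$ small items with a budget $B=n$) does not depend on $j$, the solutions $S_1,\dots,S_s$ maintained before the heavy-item insertion are fixed, i.e., the values $\numSmall(S_1),\dots,\numSmall(S_s) \in [0,n]$ are determined independently of $j$.

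Next, I would observe that as $j$ ranges over $\{0,\dots,2s-1\}$, the candidate value $k = j\cdot\tfrac{n}{2s} + \tfrac{n}{4s}$ takes $2s$ distinct values that are evenly spaced with gap exactly $\tfrac{n}{2s}$. For each fixed solution $S_i$, the "bad" set $B_i := \{k : |k - \numSmall(S_i)| < \tfrac{n}{4s}\}$ is an open interval of length $\tfrac{n}{2s}$. Since the allowed values of $k$ are spaced $\tfrac{n}{2s}$ apart, at most one of them can fall inside $B_i$. Therefore $\Pr_j[k \in B_i] \leq \tfrac{1}{2s}$.

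Finally, a union bound over the $s$ solutions gives
\begin{equation*}
\Pr_j\bigl[\exists i \in [s] : |k - \numSmall(S_i)| < \tfrac{n}{4s}\bigr]
\le \sum_{i=1}^s \Pr_j[k \in B_i]
\le s \cdot \tfrac{1}{2s} = \tfrac{1}{2},
\end{equation*}
so the complementary event has probability at least $1/2$, which is exactly the statement of the lemma. There is no serious technical obstacle here; the only subtlety worth emphasizing is that Yao's principle lets us treat $\numSmall(S_i)$ as independent of $j$, which is what makes the "one bad $k$ per solution" counting argument valid.
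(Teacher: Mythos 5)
Your proof is correct and uses essentially the same counting as the paper: the $2s$ candidate values of $k$ are spaced $\frac{n}{2s}$ apart, so each $\numSmall(S_i)$ can be within $\frac{n}{4s}$ of at most one candidate, giving a $1/2$ bound. The paper phrases this as a pigeonhole argument on a partition of $\{1,\dots,n\}$ into $2s$ intervals (at least half of which are ``empty''), whereas you phrase it as a union bound over the $s$ open bad intervals; these are two presentations of the identical combinatorial fact.
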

\begin{proof}
	Suppose that we partition the set $\{1,\dots,n\}$ into $2s$~consecutive
	intervals, each of length $\frac{n}{2s}$. Note that $k$ is the middle point
	of one of these intervals. Furthermore, as there are only $s$~solutions and
	$2s$~intervals, at least half of the intervals do not contain a number from
	the set $\{ \numSmall(S_i) \colon i=1,\dots,s \}$; we call these intervals
	\emph{empty}. Thus, with probability at least $1/2$, $k$ is the middle point
	of an empty interval. If the interval containing $k$ is empty, then $k$ has
	distance at least $\frac{n}{4s}$ to $\numSmall(S_i)$ for all $i=1,\dots,s$.
\end{proof}

Next, recall that $S_1',\dots,S_n'$ are the solutions maintained by the
algorithm after the insertion of the heavy item. We show that any
$(1-\varepsilon)$-approximate solution must contain the heavy item and some
small items.
\begin{lemma}
\label{lem:recourse-knapsack-both}
	Suppose that $S_i'$ is a $(1-\varepsilon)$-approximate solution. Then $S_i'$
	contains the heavy item and a positive number of small items.
\end{lemma}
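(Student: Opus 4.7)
The plan is to analyze $S_i'$ by separately ruling out the two bad cases and then verifying both exclusions give strict inequalities under the assumption that $S_i'$ is a $(1-\varepsilon)$-approximation. First, I would compute the approximation threshold. Since $\OPT = n + 2\varepsilon n$ (achieved by taking the heavy item together with $k$ small items, with total weight $(n-k)+k = n = B$ and total price $(n-k+2\varepsilon n)+k$), any $(1-\varepsilon)$-approximate solution must have price at least
\[
(1-\varepsilon)(n+2\varepsilon n) = n + \varepsilon n - 2\varepsilon^2 n = n + \varepsilon n(1 - 2\varepsilon).
\]
Because $\varepsilon < 1/2$, this lower bound is strictly greater than $n$.

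Next, I would rule out the case that $S_i'$ does not contain the heavy item. In that case $S_i'$ consists only of small items (each of unit weight and unit price), and the knapsack capacity $B = n$ allows at most $n$ of them, so the price of $S_i'$ is at most $n$. This contradicts the threshold from the previous step.

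Finally, I would rule out the case that $S_i'$ contains the heavy item but no small items. The price of $S_i'$ would then be exactly $n - k + 2\varepsilon n$, and I want to show this is strictly less than $n + \varepsilon n(1-2\varepsilon)$, i.e., $k > \varepsilon n(1+2\varepsilon)$. By construction $k = j\cdot\tfrac{n}{2s} + \tfrac{n}{4s} \geq \tfrac{n}{4s}$ for every choice of $j\in[2s-1]$, so it suffices that $\tfrac{1}{4s} > \varepsilon(1+2\varepsilon)$, which is implied by the hypothesis $s < \tfrac{1}{8\varepsilon(1+2\varepsilon)}$ (indeed this hypothesis gives an extra factor of two of slack, which is what Lemma~\ref{lem:recourse-knapsack} then exploits to extract a recourse of order $n/s$). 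Combining the two case exclusions yields the lemma.

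There is no real obstacle here; the argument is a direct two-case check. The only subtlety is keeping track of the constants so that the $\varepsilon < 1/2$ assumption and the bound on $s$ are both used with a bit of slack, which is important because the recourse bound in Theorem~\ref{thm:recourse-knapsack} will be obtained from the gap between $\tfrac{n}{4s}$ (the distance of $\numSmall(S_i)$ from $k$ guaranteed by Lemma~\ref{lem:recourse-knapsack}) and $\varepsilon n(1+2\varepsilon)$ (an upper bound on $k - \numSmall(S_i')$ following from this lemma), via the triangle inequality.
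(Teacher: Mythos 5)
Your proof is correct and follows essentially the same two-case argument as the paper: you rule out a solution of only small items via the price ceiling $n$, and a solution of only the heavy item via $k \ge \tfrac{n}{4s}$ and the bound on $s$. The only cosmetic difference is that you rearrange the threshold comparison to the form $k > \varepsilon n(1+2\varepsilon)$ before invoking the hypotheses, whereas the paper leaves it as a chain of inequalities on $(1-\varepsilon)\OPT$; the underlying algebra is identical.
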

\begin{proof}
	First, suppose that a solution~$S_i'$ only contains small items. Then its total
	price is at most~$n$. However, we have that
	\begin{align*}
		(1-\varepsilon) \OPT
		= (1-\varepsilon) (1+2\varepsilon) n
		> n,
	\end{align*}
	where we used that $\varepsilon<1/2$. Hence, $S_i'$ is not
	a $(1-\varepsilon)$-approximate solution.

	Second, suppose that $S_i'$ only contains the heavy item. Then we have that 
	\begin{align*}
		(1-\varepsilon) \OPT
		&= (1-\varepsilon) (p+k) \\
		&= p + k - \varepsilon(p+k) \\
		&\geq p + \frac{n}{4s} - \varepsilon(1+2\varepsilon)n \\
		&> p + 2\varepsilon(1+2\varepsilon)n - \varepsilon(1+2\varepsilon)n \\
		&> p,
	\end{align*}
	where we used that $k\geq \frac{n}{4s}$, $p+k=n+2\varepsilon n$ and
	$s < \frac{1}{8\varepsilon(1+2\varepsilon)}$.
	Therefore, a solution containing only the heavy item is not
	$(1-\varepsilon)$-approximate.
\end{proof}

Next, we show that any $(1-\varepsilon)$-approximate solution must contain
``almost'' $k$~small items.
\begin{lemma}
\label{lem:recourse-knapsack-numSmall}
	Suppose $S_i'$ is a $(1-\varepsilon)$-approximate solution. Then
	\begin{align*}
		k - \frac{n}{8s} \leq \numSmall(S_i') \leq k.
	\end{align*}
\end{lemma}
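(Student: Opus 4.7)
The plan is to establish the two bounds separately using the feasibility constraint (for the upper bound) and the approximation guarantee (for the lower bound).

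For the upper bound $\numSmall(S_i') \leq k$, I will use that by Lemma~\ref{lem:recourse-knapsack-both} the solution $S_i'$ contains the heavy item, which occupies weight $n-k$. Since the total budget is $B=n$ and each small item has weight~$1$, at most $n-(n-k)=k$ small items can fit into the remaining capacity. This gives $\numSmall(S_i') \leq k$.

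For the lower bound $\numSmall(S_i') \geq k - \frac{n}{8s}$, I will compute the price of $S_i'$ explicitly. Since $S_i'$ contains the heavy item and $\numSmall(S_i')$ small items, its total price is $(n-k+2\varepsilon n) + \numSmall(S_i')$. Since $S_i'$ is a $(1-\varepsilon)$-approximate solution and $\OPT = (1+2\varepsilon)n$, we must have
\begin{align*}
(n-k+2\varepsilon n) + \numSmall(S_i') \geq (1-\varepsilon)(1+2\varepsilon) n = (1+\varepsilon -2\varepsilon^2) n.
\end{align*}
Rearranging yields $\numSmall(S_i') \geq k - \varepsilon(1+2\varepsilon) n$. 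The main (trivial) obstacle is then verifying that this is at least $k - \frac{n}{8s}$; this reduces to $\varepsilon(1+2\varepsilon) \leq \frac{1}{8s}$, which is exactly the hypothesis $s < \frac{1}{8\varepsilon(1+2\varepsilon)}$ from Theorem~\ref{thm:recourse-knapsack}. Combining the two bounds gives the lemma.
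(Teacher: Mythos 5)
Your proof is correct and follows the same route as the paper: the upper bound comes from the heavy item using up weight $n-k$ of the budget $B=n$, and the lower bound comes from bounding the total price via the $(1-\varepsilon)$-approximation guarantee, yielding $\numSmall(S_i') \geq k - \varepsilon(1+2\varepsilon)n$ and then invoking $s < \frac{1}{8\varepsilon(1+2\varepsilon)}$. The only cosmetic difference is that you expand the heavy-item price $p = n-k+2\varepsilon n$ explicitly in the inequality, whereas the paper keeps the symbol $p$; the algebra is the same.
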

\begin{proof}
	The upper bound follows from the fact $S_i'$ must contain the heavy item (by
	Lemma~\ref{lem:recourse-knapsack-both}) of weight $n-k$ and then it can only
	include $k$~small items since the budget constraint is set to $B=n$.

	To prove the lower bound, note that since $S_i'$ is a
	$(1-\varepsilon)$-approximate solution, we have that its solution has value
	\begin{align*}
		p + \numSmall(S_i')
		\geq (1-\varepsilon)\OPT
		= (1-\varepsilon) (p+k).
	\end{align*}
	Hence, we get that
	\begin{align*}
		\numSmall(S_i')
		&\geq k - \varepsilon(p+k) \\
		&= k - \varepsilon(1+2\varepsilon)n \\
		&> k - \frac{n}{8s},
	\end{align*}
	where we used that 
	$s < \frac{1}{8\varepsilon(1+2\varepsilon)}$.
\end{proof}

To finish the proof of the theorem, we condition on the event from
Lemma~\ref{lem:recourse-knapsack}, i.e., we
have that $\abs{k-\numSmall(S_i)} \geq \frac{n}{4s}$ for all $i\in[s]$.

Now consider any solution $S_i'$ after the insertion of the heavy item. If
$S_i'$ is not $(1-\varepsilon)$-approximate, we can ignore $S_i'$. If $S_i'$ is 
$(1-\varepsilon)$-approximate then it satisfies
$k - \frac{n}{8s} \leq \numSmall(S_i') \leq k$ by
Lemma~\ref{lem:recourse-knapsack-numSmall}. Since we are assuming the event from
Lemma~\ref{lem:recourse-knapsack}, the algorithm had to insert/delete at least
$\frac{n}{8s}$ small items into/from $S_i$ to obtain $S_i'$.

Since the event from Lemma~\ref{lem:recourse-knapsack} occurs with probability
at least $1/2$ and the above argument holds for all
$(1-\varepsilon)$-approximate solutions $S_i'$, we have that the expected
recourse is $\Omega(\frac{n}{s})$.

\subsection{Proof of Theorem~\ref{thm:recourse-partitioning}}
\label{sec:proof-recourse-partitioning}
We prove Theorem~\ref{thm:recourse-partitioning}.  Again, we apply Yao's
principle~\cite{yao77probabilistic}, i.e., we consider a deterministic algorithm
and give a distribution over inputs, showing that in expectation the algorithm
will have amortized recourse $\Omega(\varepsilon^2 \frac{n}{k})$.

We consider a graph with $n$~vertices. Our initial instance consists of
$\frac{k}{2\varepsilon}$~star graphs, each of which contains
$2\varepsilon \frac{n}{k}$~vertices. Note that here an optimal solution places
the vertices from exactly $\frac{1}{2\varepsilon}$~star graphs into each
partition $V_j$; there are no edges between vertices from different $V_j$ and
hence the optimal cut-value is zero. Hence, the solution of any
$(C,1+\varepsilon)$-approximate solution must also have cut-value zero.

In the update phase, we sample $s$~edges between the central nodes of the
star graphs uniformly at random and insert them into the graph. Note that after
the insertion of the edges, we connected at most $s$~star graphs and the largest
connected component has size at most
$s\cdot 2\varepsilon \frac{n}{k} \leq \frac{1}{2}\frac{n}{k}$, where we used
that $s\leq\frac{1}{4\varepsilon}$. Hence, the optimal solution still has
cut-value
zero and thus any $(C,1+\varepsilon)$-approximate solution must have cut-value zero.

Next, let us analyze the recourse of an algorithm which starts with initial
solutions $S^{(0)}_1,\dots,S^{(0)}_s$. In a first step, we show that solutions
which at time~$0$ splits one of the star graphs up ``too much'' must entail high
recourse. In a second step, we consider all other solutions and show that our
insertions still trigger high recourse in expectation.

We say that a solution
$S^{(0)}_i=\{V^{(i,0)}_1,\dots,V^{(i,0)}_k\}$ is \emph{useful} if for all star
graphs~$H$, it holds that there exists an
index~$j$ such that $V^{(i,0)}_j$ contains at least $\varepsilon \frac{n}{k}$
vertices from $H$ and at most $\varepsilon \frac{n}{k}$~vertices are placed
in $\bigcup_{j'\neq j}V^{(i,0)}_{j'}$. Given a star graph $H$ and a solution
$S^{(0)}_i$, we write $j(H,i)$ to the denote the index~$j$ such that
$V^{(i,0)}_j$ contains at least at least $\varepsilon \frac{n}{k}$
vertices from $H$.
If a solution is not useful, we call it \emph{useless}.

First, consider solutions which are useless.  There exist two cases.
\emph{Case~A}: Suppose there exists a star graph~$H$ such that for all indices~$j$
it holds that $\bigcup_{j'\neq j}V^{(i,0)}_{j'}$ contains more than $\varepsilon
\frac{n}{k}$ vertices from $H$. Observe that if the algorithm wants to use this
solution after the edge insertions finished, it must ensure that the cut-value
is zero.  Thus it must move at least $\varepsilon \frac{n}{k}$~vertices to one
of the $V^{(i,0)}_j$ which requires
$\abs{\bigcup_{j'\neq j}V^{(i,0)}_{j'}}\geq\varepsilon\frac{n}{k}$ vertex moves.
\emph{Case~B}: Suppose there exists a star graph~$H$ such that for all $j$~it
holds that $V^{(i,0)}_j$ contains less than $\varepsilon \frac{n}{k}$ vertices
from $H$. Using that $\varepsilon\in(0,\frac{1}{2})$, also in this case the
algorithm must move at least
$(1-\varepsilon) \frac{n}{k}\geq\varepsilon\frac{n}{k}$~vertices such that
eventually all of $H$ is contained in the same set $V^{(i,s)}_j$ when the
updates finished.  We conclude that for useless solutions our theorem holds
after amortizing over $s\leq\frac{1}{\varepsilon}$~insertions.

Second, for the remainder of the proof consider only solutions
$S^{(0)}_i=\{V^{(i,0)}_1,\dots,V^{(i,0)}_k\}$ which are
useful.
Observe that when we insert an edge between two star graphs $H_1$ and $H_2$,
then if $j(H_1,i)\neq j(H_2,i)$ the algorithm must move at least
$\varepsilon\frac{n}{k}$~vertices to ensure that after the $s$~insertions
finished, all vertices from $H_1$ and $H_2$ are placed in the same set
$V^{(i,s)}_j$ for some~$j$. We call such an insertion \emph{expensive} for
solution~$i$.

Observe that if our edge insertions are such that they contain an expensive
insertion \emph{for all} solutions, then updating \emph{any} solution
$S^{(0)}_i$ such that $S^{(s)}_i$ is $(C,1+\varepsilon)$-approximate will incur
recourse at least $\varepsilon\frac{n}{k}$.  The rest of our proof is devoted to
showing that with constant probability this event occurs. This will prove the
theorem.

We start by considering a fixed solution $S^{(0)}_i$ and a single random edge
insertion between randomly picked star graphs $H_1$ and $H_2$.
Recall that there are $\frac{k}{2\varepsilon}$ star graphs in total.
Furthermore, 
we have that $\abs{V^{(i,0)}_j}\leq (1+\varepsilon)\frac{n}{k}$ for
all~$j$ and thus for each $j$ there can be at most
$\frac{(1+\varepsilon)}{\varepsilon}$ star graphs $H$ with
$j = j(H,i)$. Hence, for the probability that the edge insertion is
expensive we get that
\begin{align*}
	\Prob{j(H_1,i) \neq j(H_2,i)}
	&= 1 - \Prob{j(H_1,i) = j(H_2,i)} \\
	&\geq 1 - \frac{(1+\varepsilon)/\varepsilon}{k/(2\varepsilon)} \\
	&= 1 - \frac{2(1+\varepsilon)}{k} \\
	&\geq \frac{1}{2},
\end{align*}
where we used that $k\geq 4$.

Next, we consider a fixed solution $S^{(0)}_i$ and $s$~edge insertions between
star graphs which were picked independently and uniformly at random. Then with
probability at least $1-2^{-s}$, at least one of these edge insertions is
expensive for solution~$i$.

Finally, observe that probability that for \emph{all} solutions~$i$ there exists
an expensive edge insertion is at least
\begin{align*}
	\left(1 - 2^{-s}\right)^s
	&= \exp\left( s \ln(1-2^{-s}) \right) \\
	&\geq 1 + s \ln(1-2^{-s}) \\
	&\geq 1 - s 2^{-s} \\
	&\geq \frac{1}{4},
\end{align*}
where we used that $\exp(x)\geq1+x$ for all $x\in\mathbb{R}$, the Taylor
expansion of $\ln(x)$ for $x$ close to~$1$ and the fact that
$s2^{-s}\leq \frac{3}{4}$ for all $s$.

We conclude that with constant probability, for all solutions~$i$ there exists
an expensive edge insertion. In this case, the algorithm has total recourse at
least $\varepsilon \frac{n}{k}$. Hence, the expected total recourse of the
algorithm is $\Omega(\varepsilon \frac{n}{k})$.  Since we only performed
$s$~edge insertions, this gives an \emph{amortized} recourse of
$\Omega(\varepsilon^2 \frac{n}{k})$.

\section{Non-Monotone Functions and $\ell_\infty$-Necklace Alignment}
\label{sec:generalization}

So far we have only considered \emph{monotone} piecewise constant
functions. Now we will generalize some of our results to
\emph{piecewise constant functions with multiple non-monotonicities} and provide
the details in Section~\ref{sec:non-monotone}. We also derive new approximation
algorithms for the $\ell_\infty$-necklace problem in Section~\ref{sec:necklace}.
In particular, for $\ell_\infty$-necklace we present the first approximation
algorithm \emph{with near-linear running time} with additive
error~$\varepsilon$. We also present the first \emph{dynamic} approximation
algorithm for this problem which achieves additive error $\varepsilon$ and has
update time $O((1/\varepsilon)^{2} \lg(1/\varepsilon)$); the algorithm has
preprocessing time $O(1)$ when starting with empty vectors $x$ and $y$ and
requires \emph{sublinear space} $O(1/\varepsilon)$. See
Theorem~\ref{thm:necklace} for the details of our results.

\subsection{Piecewise Constant Functions With Non-Monotonicities}
\label{sec:non-monotone}
We now show that we can perform efficient operations on piecewise constant
functions even when these functions contain non-monotonicities. However,
the running times of our subprocedures will typically have some dependency on
the number of non-monotonicities of the function.

Let us formalize our notion of non-monotonicities.  We say that a function
$f \colon [0,t) \to [0,W]\cup\{-\infty,\infty\}$ has
\emph{$k$~monotone segments} if there exist values $0=x_0<x_1<\dots<x_{k}=t$
such that on each interval $[x_i,x_{i+1})$, $f$ is monotone. Here, we require
that either $f$ is monotonically decreasing on \emph{all} segments or it is
monotonically increasing on \emph{all} segments.  Note that a monotone function
has $k=1$ monotone segments (by setting $x_0=0$ and $x_1=t$) and that the points
$x_1,\dots,x_{k-1}$ can be viewed as the points in which $f$ is non-monotone.

One crucial operations will again be rounding.  However, unlike previously we
will mostly talk about rounding to \emph{multiples} of~$\delta$ instead of
rounding to powers of $1+\delta$.  This will be convenient for our applications
to $\ell_{\infty}$-necklace later.  We will also briefly mention how to extend
our results from this subsection to the setting in which we round to powers of
$1+\delta$.

Next, let $\delta > 0$ and consider a simple rounding function that rounds down
to multiples of $\delta$. More concretely, for $y\in\mathbb{R}$ we set
$\lfloor y\rfloor_\delta^* = \max\{ i\cdot \delta \colon i\cdot \delta \leq y, i\in\mathbb{Z}\}$
and we follow the convention that
$\lfloor -\infty\rfloor_\delta^* = -\infty$ and
$\lfloor \infty\rfloor_\delta^* = \infty$.
We also extend the rounding operation to functions
$f \colon [0,t)\to[0,W]\cup\{-\infty,\infty\}$ by defining
$\lfloor f\rfloor_{\delta}^* \colon [0,t)\to[0,W]\cup\{-\infty,\infty\}$
to be the function with
$\lfloor f \rfloor_{\delta}^*(x) = \lfloor f(x) \rfloor_{\delta}^*$ for all
$x\in[0,t)$. Next, we show that the function $\lfloor f \rfloor_{\delta}^*$ can
be computed efficiently and that it has only few pieces.
\begin{lemma}
\label{lem:rounding-multiples}
	Let $\delta > 0$ and let $f \colon [0,t)\to[0,W]\cup\{-\infty,\infty\}$
	be a piecewise constant function with $p$ pieces and $k$ monotone segments.
	Then we can compute the function $\lfloor f\rfloor_\delta^*$ in time
	$O(p\lg p)$ and $\lfloor f\rfloor_\delta^*$ has $O(k\cdot W/\delta)$ pieces.
\end{lemma}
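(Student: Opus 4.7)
The plan is to handle the running time and the piece-count bound separately, as they rely on quite different arguments.

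For the running time, I would first iterate through the list representation of $f$ piece by piece, computing $\lfloor y_i \rfloor_\delta^*$ in $O(1)$ for each of the $p$ pieces (the rounded value is just $\delta \cdot \lfloor y_i/\delta \rfloor$, with the special cases $\pm\infty$ handled directly). This produces a list of pairs $(x_i, \lfloor y_i \rfloor_\delta^*)$ of the same length as the original list. Since consecutive pieces may now share the same value, I would then do a single left-to-right sweep to merge such consecutive equal pieces, producing the list representation of $\lfloor f \rfloor_\delta^*$ in $O(p)$ time. Finally, rebuilding the auxiliary binary search tree (which we assume stores the pieces, as per the paper's convention after Lemma on list representations) costs $O(p \log p)$, which dominates.

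For the bound on the number of pieces, the key observation is that rounding down commutes with monotonicity: if $f$ is monotone (say, non-decreasing) on an interval $I = [x_i, x_{i+1})$, then $\lfloor f \rfloor_\delta^*$ is also monotone on $I$. On each such segment, the range of $f$ lies in $[0,W] \cup \{-\infty,\infty\}$, so the possible values of $\lfloor f \rfloor_\delta^*$ lie in the set $\{-\infty, 0, \delta, 2\delta, \ldots, \lfloor W/\delta\rfloor \cdot \delta, \infty\}$, which has cardinality $O(W/\delta)$. A monotone step function can change value at most once per distinct value it takes, so $\lfloor f \rfloor_\delta^*$ restricted to one monotone segment has $O(W/\delta)$ pieces. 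Summing over the $k$ monotone segments (and noting that gluing them together at the segment boundaries adds only $O(k)$ additional pieces in the worst case) yields the claimed $O(k \cdot W/\delta)$ bound.

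The bookkeeping is routine and the only subtlety I foresee is making sure the segment-boundary pieces do not inflate the count; this is handled by the fact that each boundary contributes at most one extra piece and there are only $k-1$ boundaries. Likewise, the special values $\pm\infty$ occupy only a constant number of pieces per monotone segment and so are absorbed into the $O(W/\delta)$ term. I expect no further obstacles.
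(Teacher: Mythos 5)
Your proposal is correct and follows essentially the same route as the paper's proof: round each $y_i$-value in the list representation, merge consecutive pieces with equal rounded values, charge the $O(p\lg p)$ cost to rebuilding the binary search tree, and bound the pieces by noting that on each of the $k$ monotone segments the rounded function takes at most $O(W/\delta)$ distinct values. No gaps.
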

\begin{proof}
	Let $(x_1,y_1),\dots,(x_p,y_p)$ denote the list representation of $f$.  We
	construct the list representation $(x_1',y_1'),\dots,(x_{p}',y_{p}')$ of
	$\lfloor f\rfloor_\delta^*$.  For all $i=1,\dots,p$, we set $x_i' = x_i$ and
	$y_i' = \lfloor y_i \rfloor^*_\delta$. After that, we merge all consecutive
	pieces that have the same $y_i'$-values; this can be done exactly as in the
	pruning step described in the proof of Lemma~\ref{lem:operations}.  Since
	$f$ takes values in $[0,W]\cup\{-\infty,\infty\}$, there are $O(W/\delta)$ choices for multiples
	of $\delta$ in $[0,W]$. In particular, on each monotone segment of $f$,
	$\lfloor f\rfloor_\delta^*$ has $O(W/\delta)$ pieces.  Since $f$ has $k$
	monotone segments this implies that $\lfloor f\rfloor_\delta^*$ has
	$O(k\cdot W/\delta)$ pieces in total.  Note that all operations from above
	can be performed in linear time and the running time bound stems from the
	fact that we also need to store the pieces in a binary search tree.
\end{proof}

Next, we show that we can compute the $(\min,+)$-convolution of two
piecewise constant functions in time that is quadratic in the number of their
pieces. The lemma generalizes the result from Lemma~\ref{lem:convolution}
because we drop the assumption that one of the functions needs to be
monotone (but this comes at the cost of a more complicated proof). We prove the
lemma in Section~\ref{sec:proof-convolution-general}.
\begin{lemma}
\label{lem:convolution-general}
	Let $f_1,f_2 : [0,t) \to [0,W]\cup\{-\infty,\infty\}$ be piecewise
	constant functions which have at most $p$ pieces. Then we can compute
	$f=f_1\minconv f_2$ in time $O(p^2 \lg p)$ and $f$ has $O(p^2)$ pieces.
\end{lemma}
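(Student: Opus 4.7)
\textbf{Proof plan for Lemma~\ref{lem:convolution-general}.}
The plan is to reduce computing $f_1 \minconv f_2$ to computing the lower envelope of $O(p^2)$ horizontal segments, and then invoke a standard sweep-line algorithm. Let $0 = a_0 < a_1 < \dots < a_{p_1} = t$ be the breakpoints of $f_1$ with constant values $u_1,\dots,u_{p_1}$ on $[a_{i-1},a_i)$, and analogously let $b_0,\dots,b_{p_2}$ with values $v_1,\dots,v_{p_2}$ describe $f_2$. For any fixed pair $(i,j)$, the sum $f_1(\bar x) + f_2(x-\bar x)$ equals the constant $u_i + v_j$ precisely when $\bar x \in [a_{i-1},a_i)$ and $x - \bar x \in [b_{j-1},b_j)$. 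A short calculation shows that there exists such a $\bar x$ if and only if $x$ lies in the open interval $J_{i,j} := (a_{i-1}+b_{j-1},\, a_i + b_j)$. Therefore, for each pair $(i,j)$ we obtain a horizontal segment of height $u_i+v_j$ above $J_{i,j}$, and
\[
(f_1 \minconv f_2)(x) \;=\; \min\{\, u_i + v_j \;:\; (i,j) \text{ with } x \in J_{i,j}\,\},
\]
i.e.\ the convolution equals the pointwise lower envelope of at most $p_1 p_2 \le p^2$ horizontal segments.

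The first step of the algorithm is to enumerate all $O(p^2)$ pairs $(i,j)$ and emit the corresponding segments in $O(p^2)$ time. The second step is to compute the lower envelope by a standard sweep line: sort the $O(p^2)$ endpoints of all segments in $O(p^2 \lg p)$ time, and sweep left-to-right while maintaining the set of currently active segments in a balanced BST keyed by height. At each event (insertion or deletion of a segment, $O(\lg p)$ time), query the current minimum; whenever it changes, append a new piece to the list representation of $f$. This yields the list representation of $f$ in $O(p^2 \lg p)$ time, and then a second traversal builds the auxiliary binary search tree over its breakpoints within the same budget. The bound on the number of pieces follows because the lower envelope of $N$ horizontal segments changes its value at most $O(N)$ times (each active segment can start or stop being the current minimum $O(1)$ times between its insertion and deletion events, in the standard analysis of lower envelopes for pseudolines of constant complexity), giving $O(p^2)$ pieces.

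The main obstacle is the careful bookkeeping at two points. First, the boundary conditions: the intervals $J_{i,j}$ must be clipped to $[0,t)$, and pairs for which the intersection $[a_{i-1},a_i) \cap [x-b_j,\, x-b_{j-1})$ degenerates (for instance at the right endpoint of a piece) must be handled so that $f$ is defined on the entire domain, including endpoints, consistently with the paper's convention that piecewise constant functions take the value of their rightmost piece at $t$. Second, correctly accounting for the values $\pm\infty$: segments with $u_i+v_j = \infty$ can be ignored (they never participate in the minimum unless no other segment covers a given $x$, in which case $f(x)=\infty$ is the correct answer and is produced by an ``empty active set'' event in the sweep), while $-\infty$ values trivially dominate and can be handled as a special case. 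Once these are settled, the $O(p^2 \lg p)$ running time and $O(p^2)$ piece count follow immediately, matching the statement of the lemma and recovering Lemma~\ref{lem:convolution} as the special case where one function's monotonicity lets the lower envelope collapse to $O(p)$ pieces.
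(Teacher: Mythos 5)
Your proposal is correct and rests on the same underlying decomposition as the paper: both arguments observe that for each of the at most $p^2$ pairs of pieces of $f_1$ and $f_2$ the sum of their values is a candidate for $f$ on an interval of $x$-values (the Minkowski sum of the two pieces' supports), so $f_1\minconv f_2$ is the pointwise lower envelope of $O(p^2)$ constant segments. Where you differ is in how this envelope is computed. The paper sorts the segments (its ``rectangles'') by \emph{increasing value} and processes them greedily, using a union-of-intervals data structure to assign each segment's value only to the portion of its range not yet covered; the running time hinges on an amortized analysis of interval merges (its Claim on the NOI structure). You instead sweep by \emph{$x$-coordinate}, maintaining the active segments in a balanced BST keyed by height and emitting a new piece whenever the minimum changes. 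Your route is the more standard computational-geometry argument: the $O(p^2)$ bound on the number of pieces falls out immediately from counting event points (between consecutive events the active set, hence the minimum, is fixed), and the $O(p^2\lg p)$ time from sorting plus $O(\lg p)$ per event, whereas the paper's value-ordered greedy avoids maintaining a large active set and needs its own amortization argument but yields the same bounds. One small inaccuracy to fix: the set of $x$ covered by the pair $(i,j)$ is the half-open interval $[a_{i-1}+b_{j-1},\,a_i+b_j)$, closed on the left (take $\bar x=a_{i-1}$), not the open interval you wrote; with open intervals the envelope could be wrong, or the domain momentarily uncovered, exactly at breakpoint sums. You flag boundary bookkeeping as an obstacle, and this is the concrete correction needed; with it, and with your stated conventions for $\pm\infty$ (consistent with the paper's convention that $\infty$ plus anything is $\infty$), the argument goes through and indeed specializes to Lemma~\ref{lem:convolution} when one function is monotone.
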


By combining the two lemmas above, we can show that we can efficiently compute
additive approximations of $(\min,+)$-convolutions even in the case of
non-monotonicities. More concretely, we say that
$f \colon [0,t) \to [0,W]\cup\{-\infty,\infty\}$ is
\emph{an additive $\varepsilon$-approximation} of
$g \colon [0,t) \to [0,W]\cup\{-\infty,\infty\}$ if
$g(x)-\varepsilon \leq f(x) \leq g(x)$ for all $x\in[0,t)$.
Now we obtain the following theorem.
\begin{theorem}
\label{thm:convolution-general}
	Let $f,g \colon [0,t) \to [0,W]\cup\{-\infty,\infty\}$ be two
	functions with $k$ monotone segments and suppose we have already computed
	$\lfloor f\rfloor^*_\delta$ and $\lfloor g\rfloor^*_\delta$. Then 
	the function $(\lfloor f\rfloor^*_{\delta}) \minconv (\lfloor g \rfloor^*_{\delta})$
	is an additive $2\delta$-approximation of $f\minconv g$, has at most
	$O((k\cdot W/\delta)^2)$ pieces and can be computed in time
	$O((k\cdot W/\delta)^2 \lg( (k\cdot W/\delta)^2 ))$.
\end{theorem}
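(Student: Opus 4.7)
The plan is to deduce all three claims by combining Lemmas~\ref{lem:rounding-multiples} and~\ref{lem:convolution-general} with a short pointwise argument for the approximation guarantee. Since the rounded functions $\lfloor f\rfloor^*_\delta$ and $\lfloor g\rfloor^*_\delta$ are already provided, the only algorithmic work amounts to a single invocation of Lemma~\ref{lem:convolution-general} on them.

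First I would establish the additive $2\delta$-approximation. By definition of $\lfloor\cdot\rfloor^*_\delta$ we have the pointwise sandwich $h(x) - \delta \leq \lfloor h\rfloor^*_\delta(x) \leq h(x)$ for any $h$ (with the convention that $\pm\infty - \delta = \pm\infty$). For the upper bound, the pointwise inequalities $\lfloor f\rfloor^*_\delta \leq f$ and $\lfloor g\rfloor^*_\delta \leq g$ combined with the obvious monotonicity of $(\min,+)$-convolution in each argument give $(\lfloor f\rfloor^*_\delta \minconv \lfloor g\rfloor^*_\delta)(x) \leq (f\minconv g)(x)$. For the lower bound, fix $x\in[0,t)$ and let $\bar x^*$ be a minimizer of the left-hand side. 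Then
\begin{align*}
(\lfloor f\rfloor^*_\delta \minconv \lfloor g\rfloor^*_\delta)(x)
&= \lfloor f\rfloor^*_\delta(\bar x^*) + \lfloor g\rfloor^*_\delta(x-\bar x^*) \\
&\geq f(\bar x^*) - \delta + g(x-\bar x^*) - \delta \\
&\geq (f\minconv g)(x) - 2\delta,
\end{align*}
where the last step uses that $f(\bar x^*) + g(x-\bar x^*)$ is no smaller than the minimum over all choices of the split point. Together, the two bounds yield the claimed approximation.

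For the piece count and running time, I would appeal directly to the two preceding lemmas. By Lemma~\ref{lem:rounding-multiples}, $\lfloor f\rfloor^*_\delta$ and $\lfloor g\rfloor^*_\delta$ each have $p = O(kW/\delta)$ pieces and take values in $[0,W]\cup\{-\infty,\infty\}$, so the hypotheses of Lemma~\ref{lem:convolution-general} are satisfied with this $p$. Applying that lemma then produces a $(\min,+)$-convolution with $O(p^2) = O((kW/\delta)^2)$ pieces in time $O(p^2\lg p) = O((kW/\delta)^2 \lg((kW/\delta)^2))$, which are exactly the bounds stated in the theorem.

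I do not foresee any serious obstacle: the only routine subtleties are handling the extended-value cases $\pm\infty$ in the approximation argument (both inequalities remain meaningful under the stated conventions) and verifying that the rounded inputs meet the hypotheses of Lemma~\ref{lem:convolution-general}, which they do by construction of $\lfloor\cdot\rfloor^*_\delta$.
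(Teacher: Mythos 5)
Your proposal is correct and follows the same approach as the paper's proof: the approximation guarantee is the triangle-inequality argument (which you usefully spell out pointwise), and the piece count and running time come from combining Lemmas~\ref{lem:rounding-multiples} and~\ref{lem:convolution-general}, exactly as the paper does.
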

\begin{proof}
	The approximation ratio follows from the triangle inequality.  The claims
	about the number of pieces and the running time follow from combining
	Lemma~\ref{lem:rounding-multiples} and Lemma~\ref{lem:convolution-general}.
\end{proof}

We note that by stating Lemma~\ref{lem:rounding-multiples} for the rounding
operation $\lceil \cdot \rceil_{1+\delta}$ that rounds to powers of $1+\delta$
(see Lemma~\ref{lem:operations}), we can obtain the following version of
Theorem~\ref{thm:convolution-general}.
\begin{theorem}
\label{thm:convolution-general-mult}
	Let $f,g \colon [0,t) \to [0,W]\cup\{-\infty,\infty\}$ be two
	functions with $k$ monotone segments and suppose we have already computed
	$\lceil f\rceil_{1+\delta}$ and $\lceil g\rceil_{1+\delta}$. Then 
	$(\lceil f\rceil_{1+\delta}) \minconv (\lceil g \rceil_{1+\delta})$
	is a $(1+\delta)$-approximation of $f\minconv g$, has at most
	$O((k\cdot \lg_{1+\delta}(W)^2)$ pieces and can be computed in time
	$O((k\cdot \lg_{1+\delta}(W))^2 \lg( (k\cdot \lg_{1+\delta}(W))^2 ))$.
\end{theorem}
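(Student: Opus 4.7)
The plan is to mirror the proof of Theorem~\ref{thm:convolution-general}, replacing the additive rounding step with the multiplicative rounding $\lceil\cdot\rceil_{1+\delta}$ from Lemma~\ref{lem:operations}. The approximation ratio will follow from a multiplicative triangle-inequality argument, while the bounds on pieces and running time reduce to an analogue of Lemma~\ref{lem:rounding-multiples} combined with Lemma~\ref{lem:convolution-general}.

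First, I would establish a multiplicative analogue of Lemma~\ref{lem:rounding-multiples}: if $h\colon[0,t)\to[0,W]\cup\{-\infty,\infty\}$ has $k$ monotone segments, then $\lceil h\rceil_{1+\delta}$ is a piecewise constant function with $O(k\cdot\lg_{1+\delta}(W))$ pieces. The reason is that on each monotone segment, the rounded values lie in the totally ordered set $\{0\}\cup\{(1+\delta)^i : 0\le i\le\lceil\lg_{1+\delta}(W)\rceil\}\cup\{-\infty,\infty\}$, which has cardinality $O(\lg_{1+\delta}(W))$, and on a monotone segment each such value contributes at most one piece. Summing across the $k$ monotone segments yields the claimed bound. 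Applying this to both $f$ and $g$ shows that $\lceil f\rceil_{1+\delta}$ and $\lceil g\rceil_{1+\delta}$ each have at most $p := O(k\cdot\lg_{1+\delta}(W))$ pieces.

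Next, I would feed these two rounded functions into Lemma~\ref{lem:convolution-general}. Since that lemma places no monotonicity requirement, it directly gives that $(\lceil f\rceil_{1+\delta})\minconv(\lceil g\rceil_{1+\delta})$ is a piecewise constant function with $O(p^2)=O((k\cdot\lg_{1+\delta}(W))^2)$ pieces, computable in time $O(p^2\lg p)=O((k\cdot\lg_{1+\delta}(W))^2\lg((k\cdot\lg_{1+\delta}(W))^2))$, as required.

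Finally, I would verify the approximation guarantee by a standard two-sided argument. Write $\tilde f=\lceil f\rceil_{1+\delta}$ and $\tilde g=\lceil g\rceil_{1+\delta}$, so $f(x)\le\tilde f(x)\le(1+\delta)f(x)$ and $g(x)\le\tilde g(x)\le(1+\delta)g(x)$ pointwise (with the convention that rounding preserves $\infty$, and values $-\infty$ are handled by extending the rounding to fix $-\infty$, which is consistent since a $-\infty$ value forces the convolution to take $-\infty$ on both sides). For any $x$ and any split $\bar x$, we get $\tilde f(\bar x)+\tilde g(x-\bar x)\ge f(\bar x)+g(x-\bar x)$, so $(\tilde f\minconv\tilde g)(x)\ge(f\minconv g)(x)$. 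Conversely, taking $\bar x^*$ to be the optimal split for $f\minconv g$, $(\tilde f\minconv\tilde g)(x)\le\tilde f(\bar x^*)+\tilde g(x-\bar x^*)\le(1+\delta)[f(\bar x^*)+g(x-\bar x^*)]=(1+\delta)(f\minconv g)(x)$. The main subtlety, and the only real obstacle, is handling the extended real values $\{-\infty,\infty\}$ consistently under multiplicative rounding; this can be dispensed with by defining $\lceil\infty\rceil_{1+\delta}=\infty$ and $\lceil -\infty\rceil_{1+\delta}=-\infty$ and checking that both inequalities above remain valid in those degenerate cases.
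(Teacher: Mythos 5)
Your proof takes essentially the same approach the paper intends: the paper offers no explicit proof of this theorem, just a one-sentence remark that one should restate Lemma~\ref{lem:rounding-multiples} for the rounding operator $\lceil\cdot\rceil_{1+\delta}$ and then repeat the argument of Theorem~\ref{thm:convolution-general}, which is precisely the decomposition you carry out (multiplicative rounding lemma to get $O(k\log_{1+\delta} W)$ pieces, feed into Lemma~\ref{lem:convolution-general}, and finish with a pointwise multiplicative sandwich argument). One small shared subtlety, inherited from the paper's statement: for $\lceil y\rceil_{1+\delta}\le(1+\delta)y$ to hold pointwise one needs $y\in\{0\}\cup[1,W]\cup\{-\infty,\infty\}$ (as in $\Winfty$) rather than all of $[0,W]$, since values in $(0,1)$ are rounded up to $1$; your explicit treatment of the $\pm\infty$ cases is correct, and the remaining gap lies in the theorem statement itself, not in your argument.
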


This result generalizes our previous method of first rounding a monotone
function via Lemma~\ref{lem:operations} and then applying the efficient
convolution from Lemma~\ref{lem:convolution}.  More concretely, observe that
monotone functions have one monotone segment and, thus, after rounding both
functions, our algorithm from Lemma~\ref{lem:convolution} computes the
$(\min,+)$-convolution in time $O( \lg^2_{1+\delta}(W) \lg\lg_{1+\delta}(W) )$
which is the same running time that we obtain by combining the two lemmas above.
Hence, the algorithm from Theorem~\ref{thm:convolution-general-mult} matches
this result for $k=1$ and it generalizes it when we apply it for $k>1$.

\subsubsection{Proof of Lemma~\ref{lem:convolution-general}}
\label{sec:proof-convolution-general}
We assume that $f_i$ for $i = 1, 2$ is given as a doubly linked list $(x^i_1,y^i_1),\dots,(x^i_p,y^i_p)$ such that
$x^i_j < x^i_{j+1}$ for all $1 \le j < p$.
We will output $f$ in the same representation.

To compute $f$ we will make use of the following \emph{non-overlapping interval data structure (NOI)}.
Let $[a,b]$ and $[a', b']$ be two subsets of the real line.
We call each of them an \emph{interval} and say that they \emph{overlap} if $[a,b] \cap [a', b'] \not= \emptyset$.
We say that an interval $[a,b]$ is \emph{empty} if $a \ge b$.
The NOI data structure stores a set $S$ of non-overlapping,   non-empty intervals $I = [a,b]$  
 and supports the following operations:
\begin{itemize}
	\item \emph{ClosestLargerInterval($z$)}, which given a number $z$ returns the interval $[a,b]$ together with a Boolean value $bool$. If $bool$ is true, then $z \le b$ and there is no interval $[a',b']$ in $S$ with $z \le  b' < b$. Note that it is possible that $z$ belongs to $[a,b]$. If $bool$ is false, then there exists no interval $[a,b]$ with $z \le b$ and the returned values for $a$ and $b$ are undefined.
	\item \emph{InsertInterval($a,b$)}, which  inserts the interval $[a,b]$ into $S$, merging it with any interval that it overlaps with and updating $S$ accordingly.
\end{itemize}

There exists an efficient implementation of such a data structure as stated in the next claim, which we prove at the end of this section.

\begin{claim}
\label{claim:my}
There exists an implementation of the non-overlapping interval data structure such that any sequence  of $q$ operations takes time $O(q \log q)$. 
\end{claim}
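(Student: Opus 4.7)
The plan is to implement the non-overlapping interval data structure using a balanced binary search tree (e.g., a red-black tree) $\mathcal{T}$ whose keys are the right endpoints $b$ of the stored intervals; alongside each key we store the corresponding left endpoint $a$. Since the intervals in $S$ are pairwise non-overlapping, ordering by $b$ coincides with ordering by $a$, so every predecessor/successor query in $\mathcal{T}$ gives meaningful geometric information. The key claim is that each individual BST operation (search, insert, delete) takes $O(\log q)$ worst-case time after at most $q$ operations, and that an amortization argument absorbs the cost of the cascading merges triggered by insertions.

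For \emph{ClosestLargerInterval($z$)}, I would perform a successor query in $\mathcal{T}$ for the smallest key $b \ge z$. If such a key exists, return the corresponding interval $[a,b]$ with $bool = \mathsf{true}$; otherwise return $bool = \mathsf{false}$. This takes $O(\log q)$ time, and correctness is immediate from how $\mathcal{T}$ is keyed.

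For \emph{InsertInterval($a,b$)}, I would repeatedly find and absorb any stored interval overlapping with the current candidate $[a,b]$: use ClosestLargerInterval($a$) to find the interval $[a',b']$ with minimum $b' \ge a$; if $a' \le b$ (i.e., it overlaps with $[a,b]$), delete it from $\mathcal{T}$, update $a \leftarrow \min(a,a')$ and $b \leftarrow \max(b,b')$, and repeat. When no further overlap is found, insert the merged interval $[a,b]$ into $\mathcal{T}$. Correctness follows because any interval overlapping the merged range must in particular have $b' \ge a$ and $a' \le b$, and the loop exhausts all such intervals in order.

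The potential concern for the running time is that a single InsertInterval call could delete many intervals and thus appear to cost more than $O(\log q)$. This is resolved by an amortized charging argument: each deletion is charged to the insertion that originally put that interval into $\mathcal{T}$. Since at most one new interval is inserted per InsertInterval call, after $q$ operations there have been at most $q$ insertions into $\mathcal{T}$, hence at most $q$ deletions in total, each of cost $O(\log q)$. Together with the at most one successor query and one insertion of $O(\log q)$ per operation that is not ``paid for'' by a past insertion, the total cost of the sequence is $O(q \log q)$, as claimed. The main (minor) obstacle is making the charging precise: one must note that the final insertion of the merged $[a,b]$ in each call counts as a fresh insertion, so the potential $\Phi = |\mathcal{T}| \cdot c \log q$ (for a suitable constant $c$) increases by $O(\log q)$ per call and decreases by $\Omega(\log q)$ per absorbed deletion, making the amortized cost of every operation $O(\log q)$.
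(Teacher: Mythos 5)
Your proposal is correct and follows essentially the same approach as the paper: a balanced BST keyed on right endpoints, with \emph{ClosestLargerInterval} as a successor query and the cascading deletions in \emph{InsertInterval} amortized against prior insertions (each call inserts at most one new interval, so total deletions across $q$ operations is at most $q$). The only cosmetic difference is that you delete all overlapping intervals and then insert the merged one fresh, whereas the paper updates the first found leaf in place; the amortization argument is the same in both.
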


We compute $f$ as follows. Note that the function values of $f_1$ and of $f_2$ are 
constant  over each 2-dimensional rectangle whose corners are $(x^1_s, x^2_t)$,
$(x^1_s, x^2_{t+1})$, $(x^1_{s+1}, x^2_t)$, and $(x^1_{s+1}, x^2_{t+1})$ for any $1 \le s \le p$ and
$1 \le t \le p$. We call this rectangle $R_{st}$ and denote by
$[x^1_s + x^2_t, x^1_{s+1} + x^2_{t+1}]$ the
\emph{range} of the rectangle $R_{st}$ and by $y_s^1 +y^2_t$ the
\emph{function value} of the rectangle, where we assume that
$\infty + y$ with $y \in \Winfty$ equals $\infty$. 
There are $K^2$ such rectangles.

Now note that for any value $x$ with $x^1_s + x^2_t \le x \le x^1_{s+1} + x^2_{t+1}$,
i.e., $x$ is in the range of the rectangle $R_{st}$,
the function value $y_s^1 +y^2_t$ is one of the sums that occurs in the computation of
$f(x)=\min_{\bar{x}} \{f_1(\bar{x}) + f_2(x-\bar{x})\}$.
We will compute $f(x)$ (for all values $x$ ``simultaneously'') by comparing the function values of all rectangles $R_{st}$ to whose range $x$ belongs. The main observation that we exploit is the following: \emph{As we will consider the rectangles by decreasing function values, the first rectangle (in this order) to whose range a value $x$ belongs is the rectangle whose function value equals $f(x)$. }

Thus, when processing a rectangle, we need to determine all ranges, i.e, subintervals of $[0,t]$, to which no function value has yet been assigned. To do so, we use the NOI data structure to store the intervals of all values $x$ for which we have already assigned a function value. Furthermore we use a 
balanced binary search tree
$\cal B$ that stores at its leaves every interval to which a function value has already been assigned, together with its (constant) function value. Specifically, we will store these ranges in the leaves of
$\cal B$, ordered by their smaller boundary value $x'$. The difference between the two is that the NOI data structures merges overlapping intervals, no matter what their function value is, while every interval stored as a leaf of {\cal B} has the \emph{same} function value, i.e., has a constant $f$-value.

To be precise we proceed as follows: We first generate all rectangles $R_{st}$
by iterating over the lists of $f_1$ and $f_2$  and sort them by non-decreasing order of their function value. This takes time $O(p^2 \log p)$. Then we process the rectangles in this order. To do so, we first initialize an empty NOI data structure as well as an empty balanced binary search tree
$\cal B$. 
Next we describe how to process the rectangles. Let $R_{st}$ be the next
rectangle to be processed. We execute the following steps for $R_{st}$:
\begin{enumerate}
\item $z = x^1_s + x^2_t$
\item $(a ,b, bool)= $ \emph{ClosestLargerInterval($z$)}
\item \textbf{while} $bool$ is true and $b < x^1_{s+1} + x^2_{t+1}$ \textbf{do}
\begin{enumerate}
\item if $z \not\in [a, b]$ then insert the interval $[z, a]$ together
with the function value of $R_{st}$ into $\cal B$.
\item $z = b$
\item
$(a ,b, bool)= $ \emph{ClosestLargerInterval($z$)}
\end{enumerate}
\item If $bool$ is true then insert the interval $[z, a]$ together
with the function value of $R_{st}$ into $\cal B$; else
insert the interval $[z, x^1_{s+1} + x^2_{t+1}]$ together
with the function value of $R_{st}$ into $\cal B$.
\item \emph{InsertInterval($x^1_s+ x^2_t, x^1_{s+1} + x^2_{t+1}$)}.
\end{enumerate}

Once all rectangles have been processed, we traverse the leaves of $\cal B$ in order  and  connect them by 
a doubly linked list to create an  (ordered) list representation of the function $f$.
As we process the rectangles in increasing order of function value this guarantees that for each value $x$ the smallest function value of any rectangle $R_{st}$ is returned as $f(x)$.

Note that each insertion into $\cal B$ takes time $O(\log p)$ and the number of calls to the NOI data structure is proportional to the number of rectangles plus the number of intervals merged in the NOI data structure. As processing a rectangle creates at most one new interval, and merged intervals are never separated again, the number of interval merges is at most the number of rectangles. Thus, there are at most $p$ interval merges and
at most $2 p^2$  insertions into $\cal B$. Hence, the total running time for the above algorithm is $O(p^2 \log p)$
plus the time for the NOI data structure, which, by Claim~\ref{claim:my}, is also $O(p^2 \log p)$ as $q = O(p^2)$.

We still have to prove Claim~\ref{claim:my}.
\begin{proof}[Proof of Claim~\ref{claim:my}]
We implement the NOI data structure with a balanced binary search tree. The leaves store the non-overlapping intervals, ordered by their upper endpoint.

The \emph{ClosestLargerInterval($z$)} operation searches for the interval $[a,b]$ such that $b$ is the smallest upper endpoint of an interval that is at least $z$. If no such interval exists, $bool$ is set to false, otherwise it is set to true and $[a,b]$ is returned as interval.
Note that finding $[a,b]$ takes time $O(\log q)$, as $q$ is the maximum number of intervals stored in the balanced binary tree.

The \emph{InsertInterval($a,b$)} operation first executes a \emph{ClosestLargerInterval($a$)} operation. Let $(a', b', bool)$ be the result. If $bool$ is false, then the interval $[a,b]$ is inserted as new interval and the procedure terminates.
Otherwise the interval $[a', b']$ is the interval with smallest upper endpoint such that $a \le b'$.
Note that $[a', b']$ might overlap with $[a,b]$ and we test for this next.
If $b < a'$ then  a leaf with range $[a,b]$ is inserted into the balanced search
tree and \emph{InsertInterval($a,b$)} terminates. Otherwise ($b \geq a'$),
let $L$ be the leaf of the balanced search tree that stores $[a', b']$. 
If $b \le b'$, the two intervals are merged by updating $L$ to store the interval $[\min(a, a'),b']$ and \emph{InsertInterval($a,b$)}  terminates.
If, however, $b > b'$,
it is possible that the new interval $[a,b]$ overlaps with even more intervals in
$S$. Thus, we execute the following steps:
\begin{enumerate}
	\item $z = b'$
	\item $(a'', b'', bool)=$ \emph{ClosestLargerInterval($z$)}
	\item \textbf{while} $bool$ is true \textbf{do}
	\begin{enumerate}
	\item If $b < a''$ then the leaf $L$ is updated to store the interval
	$[\min(a,a'),b]$ and \emph{InsertInterval} terminates. 
	\item The leaf storing the interval $[a'', b'']$ is removed from the
		balanced search tree.
	\item If $b \le b''$
	then the leaf $L$ is updated to store the interval $[\min(a, a'),b'']$
	and  \emph{InsertInterval} terminates. 
	\item Otherwise, $z = b''$
	and $(a'', b'', bool)=$ \emph{ClosestLargerInterval($z$)}.
\end{enumerate}
\item  $L$ is
updated to store the interval $[\min(a, a'),b]$. 
\end{enumerate}

\smallskip
Note that this algorithm merges all intervals that overlap with $[a,b]$ into one interval and updates the balanced search tree accordingly.

Let $t$ be the number of iterations executed by \emph{InsertInterval($x,y$)}. The running time is
$O((t+1) \log q)$ as each iteration executes one call to 
\emph{ClosestLargerInterval}, one deletion  of a leaf in the balanced binary
tree, and at most one modification of a label at a leaf. Every such iteration
decreases the number of leaves in the balanced binary tree by 1.
Furthermore, each call to \emph{InsertInterval} that does not execute any
iterations of the above while-loop increases the number of leaves by at most 1
and there is no other operation that modifies the number of leaves.
As there are at most $q$ calls to \emph{InsertInterval}, the while-loop can be executed at most $q$ times 
\emph{over all calls to InsertInterval}, each taking time $O(\log q)$. Thus, the
total runnning time for  $q$ calls to
\emph{InsertInterval} is $O(q \log q)$.
\end{proof}

\subsection{$\ell_\infty$-Necklace Alignment}
\label{sec:necklace}

Using our techniques from above, we present a novel approximation algorithm for the
$\ell_\infty$-necklace alignment
problem~\cite{toussaint2004geometry,bremner14necklaces}.  In this problem, the
input consists of two necklaces represented as two \emph{sorted} vectors of
$n$~real numbers, $x = \langle x_0, x_1, \dots, x_{n-1} \rangle$ and $y =
\langle y_0, y_1, \dots, y_{n-1} \rangle$, where the $x_i,y_i \in [0,1)$ represent
points on the unit-circumference circle. We will sometimes refer to the elements
$x_i$ and $y_j$ as \emph{beads}.

We define the distance between two beads $x_i$ and $y_j$ by the minimum of the
clockwise and counterclockwise distances along the circumference of the
unit-perimeter circular necklaces, i.e., we set
$$
d^\circ (x_i,y_j) = \min \{\abs{x_i-y_j}, 1-\abs{x_i-y_j} \} .
$$

In the \emph{$\ell_\infty$-necklace alignment problem}, we need to find an
\emph{offset} $c \in [0,1)$ and a \emph{shift} $s\in[n+1]$ that minimize 
$$
	\max_{i=0}^{n-1} (d^\circ ((x_i+c)\mod 1, y_{(i+s) \mod n}) ).
$$
In the above definition, the offset~$c$ encodes how much we rotate the first
necklace clockwise relative to the second necklace.  Additionally, the shift~$s$
defines a perfect matching between the beads such that bead $i$ of the first
necklace is matched with bead $(i+s) \mod n$ of the second necklace.

Bremner et al.~\cite{bremner14necklaces} showed that the $\ell_\infty$-necklace
alignment problem can be solved \emph{exactly} in time
$\tO(n^2)$. We complement this by showing that we can compute a solution with
additive error~$\varepsilon$ in time $\tO(n + \varepsilon^{-2})$.

We also consider the dynamic version of the problem in which beads are inserted
and deleted. More concretely, we assume that initially $x$ and $y$ are empty and
we offer the following update operations:
\begin{itemize}
	\item \emph{Insert}($i$, $\alpha$, $\beta$) which inserts $\alpha\in[0,1)$
	into $x$ at the $i$'th position and it further inserts $\beta\in[0,1)$ into
	$y$ at the $i$'th position.  We require that after the insertion, $x$ and
	$y$ are still ordered.
	\item \emph{Delete}($i$) which deletes $x_i$ from $x$ and $y_i$ from $y$.
\end{itemize}
Note that
both of these operations change the number of entries in $x$ and $y$ but they
ensure that $x$ and $y$ always have the same length. We show that we can
maintain a solution with additive error $\varepsilon$ using update time
$O(1/\varepsilon^2 \lg(1/\varepsilon))$. The preprocessing time is $O(1)$ and
\emph{the space usage is only $O(1/\varepsilon)$} which is sublinear in the size
of the vectors $x$ and $y$.

\begin{restatable}{theorem}{necklace}
\label{thm:necklace}
	Let $\varepsilon>0$.
	There exists a static algorithm for the $\ell_\infty$-necklace alignment
	problem that computes a solution with additive error~$\varepsilon$ in time
	$O(n + (1/\varepsilon)^2 \lg(1/\varepsilon))$.
	Furthermore, there exists
	a fully dynamic algorithm for the $\ell_\infty$-necklace alignment problem that
	maintains a solution with additive error~$\varepsilon$ with update time
	$O(1/\varepsilon^2 \lg(1/\varepsilon))$ and preprocessing time $O(1)$; the
	space usage of the algorithm is $O(1/\varepsilon)$.
\end{restatable}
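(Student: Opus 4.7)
The approach is to combine the reduction of Bremner et al.~\cite{bremner14necklaces}, which expresses $\ell_\infty$-necklace alignment via a constant number of $(\min,+)$-convolutions, with the approximate-convolution machinery of Theorem~\ref{thm:convolution-general}. Concretely, I would first recall that the optimum of the $\ell_\infty$-necklace alignment problem can be written as the minimum over $O(1)$ expressions of the form $\min_{\bar{x}} \{f(\bar{x}) + g(\xi-\bar{x})\}$, where $f,g : [0,1) \to [0,1]$ are piecewise constant functions derived from the sorted vectors $x$ and $y$. Since the beads lie on a unit-perimeter circle, both functions take values in $[0,1]$, so $W = O(1)$; and since the input vectors are sorted, each function has only $O(1)$ monotone segments (in the notation of Section~\ref{sec:non-monotone}, $k = O(1)$).

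\emph{Static algorithm.} I would set $\delta = \Theta(\varepsilon)$. The rounded functions $\lfloor f\rfloor^*_\delta$ and $\lfloor g\rfloor^*_\delta$ can be constructed from the sorted inputs in time $O(n)$ by a single linear scan that records the at-most-$O(1/\varepsilon)$ jumps at multiples of $\delta$; by Lemma~\ref{lem:rounding-multiples} each rounded function has $O(k \cdot W/\delta) = O(1/\varepsilon)$ pieces. Theorem~\ref{thm:convolution-general} then yields each approximate convolution in time $O((1/\varepsilon)^2 \lg(1/\varepsilon))$ with additive error $2\delta = O(\varepsilon)$. Taking the minimum over the $O(1)$ convolutions gives the desired additive $\varepsilon$-approximate solution, with total running time $O(n + (1/\varepsilon)^2 \lg(1/\varepsilon))$.

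\emph{Dynamic algorithm.} The key observation is that the algorithm never needs to store the full vectors $x$ and $y$ explicitly; it suffices to maintain the two rounded functions $\lfloor f\rfloor^*_\delta$ and $\lfloor g\rfloor^*_\delta$ together with $O(1/\varepsilon)$ bucket counters, one per interval $[j\delta,(j+1)\delta)$ on the circle, recording how many beads of $x$ (resp.\ $y$) currently fall into the bucket. This uses $O(1/\varepsilon)$ space. Preprocessing is $O(1)$ because the empty necklaces correspond to trivial all-zero counters and functions. On an \emph{Insert} or \emph{Delete}, a constant number of counters change, and since the rounded function depends only on these $O(1/\varepsilon)$ counters, it can be rebuilt (or patched) in $O(1/\varepsilon)$ time; using the binary-search-tree representation from Section~\ref{sec:approx-conv}, we locate the affected pieces in $O(\lg(1/\varepsilon))$ time. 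We then recompute the $O(1)$ approximate convolutions from scratch via Theorem~\ref{thm:convolution-general} in time $O((1/\varepsilon)^2 \lg(1/\varepsilon))$, which dominates the update cost, and read off a new near-optimal $(c,s)$ by scanning the resulting piecewise constant function for its minimum.

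The main obstacle will be carefully adapting the Bremner et al. reduction to the piecewise-constant representation and verifying that the resulting functions (i) take values in $[0,1]$ with only $O(1)$ monotone segments, and (ii) depend on the input beads in a sufficiently ``bucketable'' way, so that each insertion or deletion perturbs the rounded functions in only $O(1)$ pieces and can be summarised by the $O(1/\varepsilon)$ counters. Once this structural property is established, the approximation guarantee and running-time bounds follow immediately from Lemma~\ref{lem:rounding-multiples} and Theorem~\ref{thm:convolution-general}, and the sublinear space bound of the dynamic algorithm drops out because we never need to materialise the $n$ beads explicitly.
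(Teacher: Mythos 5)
Your proposal is correct and follows essentially the same strategy as the paper: round the sorted bead positions to multiples of $\delta=\Theta(\varepsilon)$ so that the resulting piecewise constant representations have $O(1/\varepsilon)$ pieces and $O(1)$ monotone segments, compute the Bremner-style $(\min,-)$/$(\max,-)$-convolutions approximately via Theorem~\ref{thm:convolution-general} (reduced to $(\min,+)$ via Lemma~\ref{lem:other-convolutions}), and on each update simply patch the rounded sketch and re-run the static computation. Your bucket-counter framing of the dynamic maintenance is an equivalent repackaging of the paper's direct manipulation of the list representation (the piece widths \emph{are} the bucket counts), and the only slip is that the functions $f,g$ fed into the convolution have the integer shift index $[0,2n)$ as domain and the circle positions $[0,1)\cup\{\pm\infty\}$ as range --- you have these swapped --- though your conclusions $k=O(1)$ (from sortedness) and $W=O(1)$ (unit perimeter) are correct either way.
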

To obtain the result for the dynamic algorithm, we show that for vectors
$A,B\in\mathbb{R}^n$ that are undergoing element insertions and deletions,
we can dynamically maintain an approximation of the $(\min,+)$-convolution
$A\minconv B$.  We expect that this result will have further applications.  The
proof of the theorem follows from Propositions~\ref{prop:necklace-static}
and~\ref{prop:necklace-dynamic} below.

\subsubsection{The Static Algorithm}
\label{sec:necklace-static}

Now we consider our static algorithm and prove the following proposition.
\begin{proposition}
\label{prop:necklace-static}
	There exists a static algorithm for the $\ell_\infty$-necklace alignment
	problem that computes a solution with additive error~$\varepsilon$ in time
	$O(n + (1/\varepsilon)^2 \lg(1/\varepsilon))$.
\end{proposition}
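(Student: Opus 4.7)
The plan is to combine the result of Bremner et al.~\cite{bremner14necklaces}, which reduces $\ell_\infty$-necklace alignment to a constant number of $(\min,+)$-convolutions on vectors derived from $x$ and $y$ in $O(n)$ time, with our efficient data structure for $(\min,+)$-convolutions on monotone piecewise constant functions from Lemma~\ref{lem:convolution}. The crucial observation is that since the input vectors $x$ and $y$ are \emph{sorted} with values in $[0,1)$, after rounding them to multiples of $\varepsilon/C$ (for a sufficiently large constant $C$ depending on the Bremner--et--al.\ reduction), the resulting vectors are \emph{monotone piecewise constant functions with only $O(1/\varepsilon)$ pieces} over the index domain $[0,n)$.

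First, I would use a single $O(n)$-time linear scan through each of $x$ and $y$ to build the list representations of the rounded monotone piecewise constant functions $\tilde{f}_x, \tilde{f}_y \colon [0,n)\to[0,1)$ where $\tilde{f}_x(t) = \lfloor x_{\lfloor t\rfloor}\rfloor^*_{\varepsilon/C}$ and analogously for $\tilde{f}_y$; each has at most $\lceil C/\varepsilon\rceil+1$ pieces by the bound in Lemma~\ref{lem:rounding-multiples} applied to a function with a single monotone segment and range $[0,1)$. Next, I would apply the Bremner--et--al.\ reduction on top of $\tilde f_x$ and $\tilde f_y$ (which remain monotone after the reversals and negations the reduction performs, possibly with a sign flip that turns monotone increasing into monotone decreasing), computing each of the constantly many $(\min,+)$-convolutions via Lemma~\ref{lem:convolution} in time $O((1/\varepsilon)^2\log(1/\varepsilon))$. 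Finally, the output necklace cost is extracted by taking a constant number of pointwise max/min/differences of the computed convolutions and scanning over the $O((1/\varepsilon)^2)$ breakpoints of the resulting piecewise constant function to find the minimum, which takes time $O((1/\varepsilon)^2)$.

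For the approximation guarantee, I would use that $(\min,+)$- and $(\max,+)$-convolutions are $1$-Lipschitz in each argument with respect to the $\ell_\infty$-norm, so that rounding each input by at most $\varepsilon/C$ changes any single convolution value by at most $2\varepsilon/C$; since the final answer is obtained by composing a constant number of such convolutions together with constant-time pointwise arithmetic operations (taking differences of a max- and a min-convolution, and dividing by two as in the standard $\min_c\max_i|x_i-y_i+c|$ calculation), the additive error propagates linearly and is at most $\varepsilon$ after choosing $C$ to be a suitable constant. The total running time is $O(n)$ for the initial scan plus $O(1)\cdot O((1/\varepsilon)^2\log(1/\varepsilon))$ for the convolutions plus $O((1/\varepsilon)^2)$ for the extraction, giving the claimed bound.

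The main obstacle will be carefully handling the circular aspect of the problem: the necklaces live on the unit-perimeter circle and distances use $d^\circ$, while the shift~$s$ also wraps around modulo~$n$. The standard workaround is to ``unroll'' the necklace by duplicating $y$ (or by case-splitting whether the clockwise or the counterclockwise distance is smaller for the optimal alignment), which at most doubles the length of the derived vectors and hence only constant-factor changes the running time; I would verify that after this unrolling the derived vectors remain monotone (or can be split into a constant number of monotone parts) so that Lemma~\ref{lem:convolution} still applies, otherwise falling back to Theorem~\ref{thm:convolution-general} with $k=O(1)$ segments.
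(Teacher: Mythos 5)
Your proposal is correct and follows essentially the same route as the paper's proof: round the sorted inputs down to multiples of $\Theta(\varepsilon)$ so they become monotone step functions with $O(1/\varepsilon)$ pieces, run the Bremner et al.\ reduction (duplicating/reversing $y$ and padding $x$ with $\pm\infty$) via the piecewise-constant $(\min,-)$/$(\max,-)$-convolutions, take $\tfrac{1}{2}(b-a)$, and scan the $O(1/\varepsilon^2)$ breakpoints, with the additive error bounded by the rounding perturbation. Your anticipated fallback to Theorem~\ref{thm:convolution-general} with $k=O(1)$ monotone segments is exactly what the paper does, since the reversed/duplicated vector $y'$ (and the padded $x''$) are only piecewise monotone with two segments, so Lemma~\ref{lem:convolution} alone would not suffice.
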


We devote the rest of this subsection to the proof of the proposition.

\textbf{The Algorithm.}
Our algorithm is rather simple and (up to the part in which we perform the
rounding) it is the same as the one used by Bremner et al.~\cite{bremner14necklaces}.
Consider the input $\varepsilon$ (as error parameter),
$x=\langle x_0,x_1,\dots,x_{n-1}\rangle$ and
$y=\langle y_0,y_1,\dots,y_{n-1}\rangle$.
Now we set $\delta=\varepsilon/2$ and perform a single pass over $x$ and
$y$ and apply the rounding function $\lfloor\cdot \rfloor^*_{\delta}$ to each
of the entries. While doing so, we compute the list representations of $x$ and
$y$ (where we interpret $x$ and $y$ as functions from $[0,n)$ to $[0,1)$) which
have at most $O(1/\delta)$ pieces (by applying
Lemma~\ref{lem:rounding-multiples} with $W=1$).
Then we compute the vectors 
\begin{align*}
	x' &= \langle x_0, x_1, \dots, x_{n-1},
				\underbrace{\infty,\dots,\infty}_{n\text{ times}} \rangle, \\
	x'' &= \langle x_0, x_1, \dots, x_{n-1},
				\underbrace{-\infty,\dots,-\infty}_{n\text{ times}} \rangle, \\
	y' &= \langle y_{n-1},y_{n-2},\dots,y_0,y_{n-1},y_{n-2},\dots,y_0 \rangle,
\end{align*}
but we do not store them explicitly. Instead, we only store their list
representations. We note that $x'$ is a monotonically increasing vector,
$x''$ has two monotonically increasing segments and $y'$ has two monotonically
decreasing segments.

Next, we set $a$ to the $(\min,-)$-convolution of $x'$ and $y'$ and we set $b$
to the $(\max,-)$-convolution of $x''$ and $y'$ (we show below in
Lemma~\ref{lem:other-convolutions} that we can compute these functions
efficiently).

Finally, we set $v = \frac{1}{2}(b-a)$ and return $\min\{ v_s \colon s\in[n]\}$
as the solution for our problem. We note that $v$ can be efficiently computed
via the list representations of $a$ and $b$ and we can also quickly find the
minimum over the $v_s$ by iterating over the list representation of $v$.

\textbf{Analysis.} Now we turn to the analysis of the algorithm above.
We adapt the proof of Theorem~6 in Bremner et al.~\cite{bremner14necklaces} for
approximate solutions and argue how to implement it using piecewise constant
functions.

We start by showing that we can compute $(\min,-)$-convolution and
$(\max,-)$-convolution as efficiently as the classic $(\min,+)$-convolution.
\begin{lemma}
\label{lem:other-convolutions}
	Let $f$ and $g$ be two piecewise constant functions with $p$ pieces and
	suppose that $g$ has $k$~monotonically decreasing segments.  Suppose that we
	can compute the $(\min,+)$-convolution of $f'$ and $g'$ in time $t(p,k)$ if
	$f'$ and $g'$ have $k$~monotonically decreasing segments.  Then in time
	$O(t(p,k) + p\lg p)$ we can compute:
	\begin{itemize}
		\item The $(\max,-)$-convolution of $f$ and $g$ if $f$ has $k$
		monotonically increasing segments.
		\item The $(\min,-)$-convolution of $f$ and $g$ if $f$ has
		$k$~monotonically increasing segments.
	\end{itemize}
\end{lemma}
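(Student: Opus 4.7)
My plan is to reduce both the $(\max,-)$-convolution and the $(\min,-)$-convolution of $f$ and $g$ to an ordinary $(\min,+)$-convolution of (possibly negated) inputs and then invoke the assumed algorithm running in time $t(p,k)$. The two reductions rely on the following pointwise algebraic identities obtained by pushing a sign through the extremum operator: the $(\max,-)$-convolution of $f$ and $g$ equals $-\bigl((-f)\minconv g\bigr)$, and the $(\min,-)$-convolution of $f$ and $g$ equals $f\minconv(-g)$. Both are immediate from the definitions once one observes that $\max_{\bar x} a(\bar x) = -\min_{\bar x}(-a(\bar x))$.

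For the $(\max,-)$ case, observe that $-f$ has $k$ monotonically decreasing segments whenever $f$ has $k$ monotonically increasing segments, and by assumption $g$ already has $k$ monotonically decreasing segments. Hence $(-f)\minconv g$ falls under the hypothesis of the assumed algorithm and can be computed in time $t(p,k)$. Negating $f$ up front and negating the output at the end both amount to flipping the $y$-values in the list representation and rebuilding the accompanying binary search tree on the $x$-keys, each costing $O(p\log p)$. This yields the claimed bound $O(t(p,k) + p\log p)$.

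For the $(\min,-)$ case the same sort of identity produces $f\minconv(-g)$, which is now a convolution of two monotonically \emph{increasing} functions (since $-g$ is increasing whenever $g$ is decreasing). The hypothesized algorithm is stated for monotonically decreasing inputs and therefore does not apply directly; this is the main obstacle. The cleanest remedy is to fall back on Lemma~\ref{lem:convolution-general}, which computes the $(\min,+)$-convolution of arbitrary piecewise constant functions without any monotonicity assumption in time $O(p^2\log p)$, matching the monotone case asymptotically. Since all intended instantiations of $t(p,k)$ satisfy $t(p,k)=\Omega(p^2\log p)$, this fallback is absorbed. A more finicky alternative I would mention briefly is a domain-reversal trick: setting $\widetilde f(y):=f(t-y)$ and $\widetilde h(y):=(-g)(t-y)$ yields two monotonically decreasing functions, and a change of variables inside the convolution identifies $\bigl(f\minconv(-g)\bigr)(x)$ with a value of $\widetilde f\minconv \widetilde h$ at a reflected argument; this, however, requires the convolution to be evaluated on a slightly enlarged domain and so is less clean than simply invoking Lemma~\ref{lem:convolution-general}. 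Either route gives the claimed $O(t(p,k)+p\log p)$ bound after the usual $O(p\log p)$ bookkeeping for the negation of $g$ and output cleanup.
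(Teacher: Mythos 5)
Your $(\max,-)$ reduction is identical to the paper's. For the $(\min,-)$ case you take a different route, and it is the correct one: you derive that the $(\min,-)$-convolution of $f$ and $g$ equals $f\minconv(-g)$, equivalently $(-g)\minconv f$, which is a convolution of two \emph{increasing} functions, and you correctly flag the mismatch with the stated hypothesis as the obstacle. The paper's own proof instead writes
\[
\min_{\bar{x}\in[0,x]}\{f(\bar{x})-g(x-\bar{x})\}
= \min_{\bar{x}\in[0,x]}\{-f(\bar{x})+g(x-\bar{x})\}
= \bigl((-f)\minconv g\bigr)(x),
\]
attributing the first equality to ``symmetry of $(\min,-)$-convolution.'' That equality is false: the two inner expressions are pointwise negatives of one another (take $f(y)=y$, $g\equiv 0$ for a counterexample). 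The correct substitution $\bar{x}\mapsto x-\bar{x}$ gives $\bigl((-g)\minconv f\bigr)(x)$, exactly your $f\minconv(-g)$, and it lands on two increasing rather than two decreasing inputs. Your fallback to Lemma~\ref{lem:convolution-general} then closes the gap cleanly, since that subroutine imposes no monotonicity requirement and its $O(p^2\log p)$ cost matches the intended $t(p,k)$ in every place the lemma is invoked. The domain-reversal alternative you sketch is also workable in principle but, as you note, needs care about the convolution range under reflection, so the Lemma~\ref{lem:convolution-general} route is preferable. In short, your proposal is correct and is in fact more careful than the paper's own proof of this statement.
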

\begin{proof}
	First, suppose that we wish to compute the $(\max,-)$-convolution of two
	functions $f$ and $g$. We show that we can compute the
	$(\max,-)$-convolution of $f$ and $g$ via the $(\min,+)$-convolution of $-f$
	and $g$. Indeed, for all $x$ it holds that:
	\begin{align*}
		\max_{\bar{x}\in[0,x]} \{ f(\bar{x}) - g(x-\bar{x}) \}
		&= \max_{\bar{x}\in[0,x]} \{- (-f(\bar{x}) + g(x-\bar{x})) \} \\
		&= -\min_{\bar{x}\in[0,x]} \{ -f(\bar{x}) + g(x-\bar{x}) \} \\
		&= -(((-f)\minconv g)(x)).
	\end{align*}
	To see that the running time is correct, note that we can compute the list
	representation of $-f$ in time $O(p)$ and it takes takes $O(p \lg p)$ to
	update the binary search tree in which we store the pieces of $-f$.
	Furthermore, $-f$ has $k$~monotonically decreasing segments since $f$ has
	$k$~monotonically increasing segments. Thus, we can apply the efficient
	algorithm for $(\min,+)$-convolution in time $t(p,k)$ on $-f$ and $g$.

	We can prove the result for $(\min,-)$-convolution similarly by computing a
	$(\min,+)$-convolution of $g$ and $-f$. More concretely, for all $x$ it
	holds that
	\begin{align*}
		\min_{\bar{x}\in[0,x]} \{ f(\bar{x}) - g(x-\bar{x}) \}
		=\min_{\bar{x}\in[0,x]} \{ -f(\bar{x}) + g(x-\bar{x}) \}
		= ((-f)\minconv g)(x),
	\end{align*}
	where in the first step we used the symmetry of $(\min,-)$-convolution.  The
	running time analysis is exactly as above.
\end{proof}

In the proof of Proposition~\ref{prop:necklace-static} we need the following
lemma. We will use the lemma to find the optimal offset~$c$ for a given
shift~$s$.
\begin{lemma}[Fact~5 in~\cite{bremner14necklaces}]
\label{fact:bremner}
	Let $z = \langle z_{0}, z_{1}, \dots, z_{n-1} \rangle$.  Then 
	\begin{align*}
		\min_{c\in\mathbb{R}} \max_{i=0}^{n-1}\abs{z_i+c} =
		\frac{1}{2} \left(\max_{i=0}^{n-1} z_i - \min_{i=0}^{n-1} z_i \right)
	\end{align*}
	and the minimizer for this quantity is given by
	$c=-\frac{1}{2} (\min_{i=0}^{n-1} z_i + \max_{i=0}^{n-1} z_i)$.
\end{lemma}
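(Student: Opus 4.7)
The plan is to prove the identity by establishing a matching lower bound and upper bound on $\min_{c}\max_{i}|z_i+c|$. Let $M=\max_{i=0}^{n-1}z_i$ and $m=\min_{i=0}^{n-1}z_i$, and let $c^\ast=-\tfrac{1}{2}(M+m)$ denote the claimed minimizer.

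For the lower bound, I will use the elementary observation that for any $c\in\mathbb{R}$,
\begin{equation*}
\max_{i}|z_i+c| \;\ge\; \max\bigl(|M+c|,\,|m+c|\bigr) \;\ge\; \tfrac{1}{2}\bigl(|M+c|+|m+c|\bigr) \;\ge\; \tfrac{1}{2}\bigl|(M+c)-(m+c)\bigr| \;=\; \tfrac{M-m}{2},
\end{equation*}
where the last inequality is the triangle inequality. Taking the infimum over $c$ yields $\min_{c}\max_{i}|z_i+c|\ge (M-m)/2$.

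For the upper bound, I will plug in $c=c^\ast$. A direct calculation gives $M+c^\ast=(M-m)/2$ and $m+c^\ast=-(M-m)/2$. Since every $z_i$ satisfies $m\le z_i\le M$, adding $c^\ast$ gives $-(M-m)/2\le z_i+c^\ast\le (M-m)/2$, so $|z_i+c^\ast|\le (M-m)/2$ for all $i$. Hence $\max_i|z_i+c^\ast|\le (M-m)/2$, which combined with the lower bound gives the equality and identifies $c^\ast$ as a minimizer.

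There is no real obstacle here; the argument is a standard centering argument for Chebyshev-type minimization. The only minor subtlety is articulating why $c^\ast$ is \emph{the} minimizer rather than \emph{a} minimizer — as stated, the lemma just requires exhibiting one minimizer, which $c^\ast$ is. I will present the proof as two short displayed inequalities (the triangle-inequality lower bound and the direct verification at $c^\ast$) and conclude.
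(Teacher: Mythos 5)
Your proof is correct: the triangle-inequality lower bound together with the direct verification at $c^\ast=-\tfrac{1}{2}(M+m)$ gives both directions, and this is exactly the standard centering argument behind the cited fact. The paper itself does not prove this lemma (it is imported as Fact~5 from Bremner et al.), so your self-contained argument is a perfectly adequate replacement and requires no changes.
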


Next, we can prove Theorem~\ref{thm:necklace}.
\begin{proof}[Proof of Theorem~\ref{thm:necklace}]
	We prove the theorem in three steps. In Step~1, we will prove that we compute the correct result
	in the exact case (i.e., when we perform no rounding). This first part is
	essentially the same proof as in in Bremner et al.~\cite{bremner14necklaces}
	but with more details. In Step~2, we argue about approximation guarantee of
	our algorithm. In Step~3, we prove its running time.

	\emph{Step~1: The Exact Case.}
	First, we use Theorem~2 of Bremner et al.~\cite{bremner14necklaces} which
	states that if
	\begin{align*}
		\tilde{y} &= \langle y_{0},y_{1},\dots,y_{n-1},y_{0},y_{1},\dots,y_{n-1} \rangle
	\end{align*}
	then
	\begin{align*}
		\min_{c,s} \max_{i=0}^{n-1} d^\circ((x_i+c)\mod 1,y_{(i+s)\mod n})
		= \min_{c,s} \max_{i=0}^{n-1} d^-(x_i+c,\tilde{y}_{i+s}),
	\end{align*}
	where $d^-(a,b) = \abs{a-b}$ for all $a,b\in\mathbb{R}$. Thus, instead of
	directly optimizing the original objective function
	$\min_{c,s} \max_{i=0}^{n-1} d^\circ((x_i+c)\mod 1,y_{(i+s)\mod n})$, we
	will consider the more convenient objective function
	$\min_{c,s} \max_{i=0}^{n-1} d^-(x_i+c,\tilde{y}_{i+s})$ which involves no
	modulo operations.
	
	Indeed, consider the new objective function and for all $s\in[n]$ we define
	the vector $z(s)\in\mathbb{R}^n$ such that $z(s)_i = x_i - y_{(i+s) \mod n}$.
	Now we obtain that for the new objective function it holds that:
	\begin{align*}
		\min_{c,s} \max_{i=0}^{n-1} d^-(x_i+c, \tilde{y}_{i+s})
		&= \min_{c,s} \max_{i=0}^{n-1} \abs{x_i+c - y_{(i+s)\mod n})} \\
		&= \min_s \min_c \max_i \abs{z(s)_i + c} \\
		&= \min_s \frac{1}{2} \left( \max_i\{z(s)_i\} - \min_i\{z(s)_i\} \right),
	\end{align*}
	where in the first step we used the definition of $d^-(\cdot,\cdot)$ and
	that $\tilde{y}_k = y_{k \mod n}$ for all $k\in[2n]$, in the second step we
	substituted the definition of $z(s)_i$ and in the third step we applied
	Lemma~\ref{fact:bremner}.

	The above implies that we need to compute the quantities $\max_i\{z(s)_i\}$
	and $\min_i\{z(s)_i\}$ efficiently. Even more, consider the vector
	$v\in\mathbb{R}^n$ with entries
	$v_s = \frac{1}{2} \left( \max_i\{z(s)_i\} - \min_i\{z(s)_i\} \right)$
	and observe that the calculation above shows that the optimal objective function
	value is the same as the smallest entry in $v$.  Therefore, in the following
	we show that we can compute $v$ efficiently using the vectors $a$ and
	$b$ that we computed in our algorithm.

	Recall the definitions of the two vectors $x'$ and $y'$:
	\begin{align*}
		x' &= \langle x_0, x_1, \dots, x_{n-1},
					\underbrace{\infty,\dots,\infty}_{n\text{ times}} \rangle, \\
		y' &= \langle y_{n-1},y_{n-2},\dots,y_0,y_{n-1},y_{n-2},\dots,y_0 \rangle.
	\end{align*}
	Now we let $a\in\mathbb{R}^{2n}$ be the vector resulting from the
	$(\min,-)$-convolution of
	$x'$ and $y'$, i.e., $a_k=\min_{i}\{x_i' - y_{k-i}\}$ for all $k\in[2n]$.
	Now we observe that for each entry $a_{n+s'}$ with $s'\in[n]$, it holds that
	\begin{align*}
		a_{n+s'}
		= \min_{i=0}^{n+s'} \{ x_i' - y_{n+s'-i}' \}
		= \min_{i=0}^{n-1} \{ x_i - y_{(i-s'-1)\mod n} \},
	\end{align*}
	where in the second step we used that $x_i'=\infty$ for $i\geq n$ and that
	$y_{n+s'-i}' = y_{((n-1)-(n+s'-i))\mod n} = y_{(i-s'-1)\mod n}$
	since in $y'$ we concatenated the entries of $y$ twice but in reverse order.
	Now observe that if $s'=n-1-s$ then
	\begin{align*}
		a_{2n-s-1}
		= a_{n+s'} 
		= \min_{i=0}^{n-1} \{ x_i - y_{(i-s'-1)\mod n} \}
		= \min_{i=0}^{n-1} \{ x_i - y_{(i+s)\mod n} \}
		= \min_{i=0}^{n-1} \{ z(s)_i \}.
	\end{align*}

	Next, we define the vector $x''$ such that:
	\begin{align*}
		x'' &= \langle x_0, x_1, \dots, x_{n-1},
					\underbrace{-\infty,\dots,-\infty}_{n\text{ times}} \rangle.
	\end{align*}
	We let $b$ denote the vector resulting from the $(\max,-)$-convolution of
	$x''$ and $y'$, i.e., $b_{k} = \max_i\{x_i''-y_{k-i}'\}$ for
	all~$k\in[2n]$. Now a similar argument as above shows that
	$b_{2n-s-1}=\max_{i=0}^{n-1} \{ z(s)_i \}$ for all $s\in[n]$.
	More concretely, for each entry $b_{n+s'}$ with $s'\in[n]$ it holds that
	\begin{align*}
		b_{n+s'}
		= \max_{i=0}^{n+s'} \{ x_i' - y_{n+s'-i}' \}
		= \max_{i=0}^{n-1} \{ x_i - y_{(i-s'-1)\mod n} \},
	\end{align*}
	where we used that $x_i''=-\infty$ for $i\geq n$ and the same argument
	relating the entries of $y'$ and $y$ as above. Thus, if
	$s'=n-1-s$ then
	\begin{align*}
		b_{2n-s-1}
		= b_{n+s'} 
		= \max_{i=0}^{n-1} \{ x_i - y_{(i-s'-1)\mod n} \}
		= \max_{i=0}^{n-1} \{ x_i - y_{(i+s)\mod n} \}
		= \max_{i=0}^{n-1} \{ z(s)_i \}.
	\end{align*}

	Combining the results above we get that $v_s = \frac{1}{2}( b_{2n-s-1} - a_{2n-s-1} )$ for all
	$s\in[n]$.
	Therefore, we get that the optimal objective function value is given by
	$\min_s v_s = \min_s \frac{1}{2}( b_{2n-s-1} - a_{2n-s-1} )$. In other
	words, to compute the optimal objective function value it suffices to compute
	the difference $\frac{1}{2}(b-a)$ and then to return the smallest entry in
	$v$ with index between $n$ and $2n-1$.

	\emph{Step~2: Approximation Guarantees.}
	We argue that the algorithm returns an additive $\varepsilon$-approximation.
	First, observe that in the algorithm all computations are performed exactly
	except for the rounding at the beginning. In the rounding process, we
	decrease each entry by at most $\delta=\varepsilon/2$. Therefore, the
	triangle inequality implies that when we match bead $x_i$ to bead $y_{i+s}$, 
	the error that was introduced by the approximation is at most
	$2\delta=\varepsilon$. Since in the objective function we are only
	interested in the maximum error over all matched beads, this implies that we
	obtain an additive $\varepsilon$-approximation.

	\emph{Step~3: Running Time Analysis.}
	It is left to analyze the running time of our algorithm. Iterating over the
	input vectors $x$ and $y$, rounding the entries and computing the list
	representation of $x$ and $y$ can be done in time $O(n)$.  Recall that $x$
	and $y$ have $O(1/\delta)$ pieces. Therefore, we can also compute the
	vectors $x'$, $x''$ and $y$ in time $O(1/\delta \lg 1/\delta)$.
	Then Lemmas~\ref{lem:other-convolutions} and~\ref{lem:convolution-general}
	imply that we can compute the $(\min,-)$-convolution and the
	$(\min,+)$-convolutions in time $O(1/\delta^2 \lg(1/\delta))$ and the
	resulting vectors have $O(1/\delta^2)$ pieces. Finally, the vector $v$ can
	be computed in time $O(1/\delta^2 \lg(1/\delta))$ and the minimum that we
	return can be found by simply iterating over the pieces of $v$.
	Since previously we have set $\delta=\varepsilon/2$, this finishes the
	proof.
\end{proof}

\subsubsection{The Dynamic Algorithm}
\label{sec:necklace-dynamic}

We now give our extension to the dynamic setting of the $\ell_\infty$-necklace
alignment problem in which there are insertions and deletions from $x$ and $y$.
\begin{proposition}
\label{prop:necklace-dynamic}
	Let $\varepsilon>0$.  There exists a fully dynamic algorithm for the
	$\ell_\infty$-necklace alignment problem that maintains a solution with
	additive error~$\varepsilon$ with update time
	$O(1/\varepsilon^2 \lg(1/\varepsilon))$ and preprocessing time $O(1)$; the
	space usage of the algorithm is $O(1/\varepsilon)$.
\end{proposition}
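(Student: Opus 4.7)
The plan is to keep the algorithm from Proposition~\ref{prop:necklace-static} and execute it in full after each update, while persistently storing only the rounded vectors $\lfloor x\rfloor^*_\delta$ and $\lfloor y\rfloor^*_\delta$ (with $\delta = \varepsilon/2$) in their list representations. Since $x$ and $y$ remain sorted with values in $[0,1)$, by Lemma~\ref{lem:rounding-multiples} each list representation has at most $\lceil 1/\delta\rceil + O(1) = O(1/\varepsilon)$ pieces, which already yields the $O(1/\varepsilon)$ space bound claimed by the proposition between updates.

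I would represent each of the two rounded vectors by a balanced binary search tree whose leaves hold the pieces $(v_j,c_j)$---with $v_j$ the rounded value and $c_j$ the number of consecutive positions carrying that value---augmented at every internal node by the sum of counts in its subtree. An Insert$(i,\alpha,\beta)$ operation descends each BST by cumulative counts in $O(\lg(1/\varepsilon))$ time to find the piece containing position~$i$, and then either extends a neighbouring piece (when $\lfloor\alpha\rfloor^*_\delta$ matches it) or splits the piece and inserts a new leaf for the new value; a Delete$(i)$ operation is symmetric and may merge two neighbours whose rounded values have become equal. Because the precondition on Insert preserves sortedness, the stored function remains monotone and its piece count always stays at $O(1/\varepsilon)$.

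Immediately after updating the two BSTs, I would rebuild the entire convolution computation of the static algorithm. From the two $O(1/\varepsilon)$-piece list representations I form the list representations of $x'$, $x''$, and $y'$ (adding only a constant number of extra pieces and keeping a constant number of monotone segments), compute the $(\min,-)$-convolution $a$ of $x'$ and $y'$ and the $(\max,-)$-convolution $b$ of $x''$ and $y'$ via Lemmas~\ref{lem:other-convolutions} and~\ref{lem:convolution-general} in total time $O((1/\varepsilon)^2 \lg(1/\varepsilon))$, and finally extract $\min_{s\in[n]}\tfrac{1}{2}(b_{2n-s-1}-a_{2n-s-1})$ by a single sweep of the list representation of $b-a$. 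Correctness with additive error $\varepsilon$ is inherited from the static proof, since the only approximate step is the rounding, which moves each coordinate by at most $\delta = \varepsilon/2$ and thus perturbs each matched distance by at most $2\delta = \varepsilon$ via the triangle inequality.

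The main obstacle is honouring the sublinear-space budget: the intermediate convolutions have $\Theta(1/\varepsilon^2)$ pieces each, so I would discard $a$, $b$, and $v = \tfrac{1}{2}(b-a)$ as soon as the minimum has been read off, keeping only the two $O(1/\varepsilon)$-piece BSTs (and a single scalar for the current objective value) between updates. Preprocessing is $O(1)$ because both BSTs start empty, and the $O(\lg(1/\varepsilon))$ BST-update cost is dominated by the $O((1/\varepsilon)^2 \lg(1/\varepsilon))$ reconvolution cost, giving the worst-case update time stated in the proposition.
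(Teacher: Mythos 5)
Your proposal is correct and mirrors the paper's argument: maintain only the rounded, $O(1/\varepsilon)$-piece list representations of $x$ and $y$ (with $\delta=\varepsilon/2$) and rerun the static convolution routine from Proposition~\ref{prop:necklace-static} after every update, with the additive-$\varepsilon$ guarantee inherited from the static analysis. Your use of an augmented BST with subtree counts for the position-indexed update, and your explicit note that the $\Theta(1/\varepsilon^2)$-piece intermediate convolutions must be discarded to keep persistent storage at $O(1/\varepsilon)$, are slightly more careful than the paper's linear-scan-and-shift description, but both are dominated by the $O((1/\varepsilon)^2\lg(1/\varepsilon))$ reconvolution cost and give the same bounds.
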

\begin{proof}
	In the preprocessing, we initialize $x$ and $y$ as empty vectors and store
	them as piecewise constant functions (as per Section~\ref{sec:approx-conv})
	and we \emph{do not} store them explicitly as vectors. Furthermore, we set
	$\delta=\varepsilon/2$. These operations can be done in time $O(1)$.

	Next, consider an operation
	\emph{Insert}($i$, $\alpha$, $\beta$) which asks to insert $\alpha$ into $x$ at the
	$i$'th position and to insert $\beta$ into $y$ at the $i$'th position.
	Since we are in the approximate setting, instead of inserting the exact
	values of $\alpha$ and
	$\beta$, we insert
	$\lfloor\alpha\rfloor^*_\delta$ into the $i$'th position of $x$ and
	$\lfloor\beta\rfloor^*_\delta$ into the $i$'th position of $y$. We perform
	these insertions by manipulating the list representations of $x$ and $y$. We
	only describe how to perform the manipulations for $x$, as for $y$ they are
	essentially the same.
	
	Denote the list representation of $x$ as $(X_0,Y_0),\dots,(X_p,Y_p)$ where
	$p$ is the number of pieces of $x$. Now we iterate over all pieces of $x$
	and check whether there exists a piece with value
	$Y_j=\lfloor\alpha\rfloor^*_\delta$. If no such piece exists, we insert
	$(i,\lfloor\alpha\rfloor^*_\delta)$ into the list representation at the
	appropriate position.  Then we find the smallest integer $j$ such that
	$Y_j>\lfloor\alpha\rfloor^*_\delta$ and for all $k\geq j$, we increment
	$X_k$ by 1. Intuitively, we are moving all pieces that are larger than
	$\lfloor\alpha\rfloor^*_\delta$ one unit to the right in order to make space
	for the element that was just inserted.

	Once we have updated $x$ and $y$ as described above, we simply run the
	static algorithm without the step in which we initialize $x$ and $y$.  Note
	that, since we assume that after each insertion $x$ and $y$ are still
	ordered and since we only insert rounded entries into $x$ and $y$, we get
	that $x$ and $y$ never have have more than $O(1/\delta)$ pieces by
	Lemma~\ref{lem:rounding-multiples}. Now, since above we have set
	$\delta=\varepsilon/2$, the proof of Proposition~\ref{prop:necklace-static}
	implies that we obtain a solution with additive error $\varepsilon$ in time
	$O(1/\varepsilon^2\lg(1/\varepsilon))$.  Furthermore, note that since we do
	not store $x$ and $y$ explicitly (we only store their rounded version
	represented by their list representations), the space usage is
	$O(1/\varepsilon)$.

	Finally, we note that the operation \emph{Delete}($i$) can be implemented
	similar to above by first manipulating the list representations of $x$ and
	$y$ to remove the $i$'th entries from $x$ and $y$ and then running the
	static algorithm.
\end{proof}

We remark that by storing two dynamic vectors $x$ and $y$ that are undergoing
element insertions and deletions as described in the proof of
Proposition~\ref{prop:necklace-dynamic}, we can also efficiently maintain an
approximation of their $(\min,+)$-convolution $x\minconv y$ via
Lemma~\ref{lem:convolution-general}.

\section{Omitted Proofs}
\label{sec:omitted-proofs}

\def\tildevec #1{\tilde{#1}}

\subsection{Proof of Lemma~\ref{lem:operations}}
Denote the list representations of $g$ and $h$ as
$(x_1^g,y_1^g),\dots,(x_{p_g}^g,y_{p_g}^g)$ and
$(x_1^h,y_1^h),\dots,(x_{p_h}^h,y_{p_h}^h)$, respectively. Recall that both list
representation are stored in doubly linked lists and that the pieces of $g$ and
$h$ are stored in a binary search tree such that for all $x\in[0,t]$ we can evaluate
$g(x)$ and $f(x)$ in time $O(\log p_g)$ and $O(\log p_h)$, respectively.

We show how to construct each of the functions $f_{\min}$, $f_{\shift}$,
$f_{\add}$ and $f_{\round}$ by showing how to construct their list
representations.

First, let us consider $f_{\min}$. We construct the list representation
$(x_1^{\min},y_1^{\min}),\dots$ of $f_{\min}$. The intuition of our approach is
that each piece of $f_{\min}$ must start and end at one of the start or end
points of the pieces of $g$ and $h$. Thus, we will evaluate the function
$\min\{g,h\}$ at all points $x_i^g$ and $x_j^h$ and set $f_{\min}$ accordingly;
then if $f_{\min}$ contains multiple pieces with the same $y_i^{\min}$-value, we
will remove these duplicate pieces.  More concretely, we consider the set
$X = \{x_1^g,\dots,x_{p_g}^g,x_1^h,\dots,x_{p_h}^h \}$ and order it from small
to large.  Now we set $x_i^{\min}$ to the $i$'th smallest element in $X$ for all
$i=1,\dots,p_g+p_h$. Observe that on the interval $[x_{i-1}^{\min},x_i^{\min})$,
$f_{\min}$ must take the value $\min\{g(x_{i-1}^{\min}),h(x_{i-1}^{\min})\}$.
Therefore, we set $y_i^{\min}=\min\{ g(x_{i-1}^{\min}), h(x_{i-1}^{\min}) \}$.
This gives an initial list representation of $f^{\min}$.  Then we ``prune'' the
list representation of $f_{\min}$, i.e., we iterate over all pairs
$(x_i^{\min},y_i^{\min})$ in increasing order of $i$ and if $y_{i-1}^{\min} =
y_i^{\min}$ then we remove the pair $(x_{i-1}^{\min},y_{i-1}^{\min})$ from the
list representation of $f_{\min}$.  Observe that at the end of this process, all
values of $y_i^{\min}$ are pairwise disjoint (since the functions $g$ and $h$
are monotone).

To see that $f_{\min}(x) = \min\{g(x),h(x)\}$ for all $x\in[0,t]$, we observe that
for all $x\in X$ (where $X$ is as in the paragraph above) we have set
$f_{\min}(x)$ correctly by construction. Furthermore, on all contiguous
intervals in $[0,t]\setminus X$, $g$ and $h$ are constant and thus $f_{\min}$ is
constant. Therefore, for all $x\in[0,t]\setminus X$, $f_{\min}(x)$ is also set
correctly.

Next, we observe that $f_{\min}$ has at most $p_g + p_h$ pieces because $X$ consisted
of at most $p_g + p_h$ elements and after that we only removed pieces from
$f_{\min}$.  Furthermore, ordering the elements in $X$ can be done in time
$O((p_g+p_h)\log(p_g+p_h))$ and evaluating $\min\{ g(x_i^{\min}), h(x_i^{\min}) \}$
can be done in time $O(\log(p_g)+\log(p_h))$.
After that we only performed a single pass over the list representations of
$f_{\min}$ in time $O(\abs{X}) = O(p_g+p_h)$. Therefore, it took time
$O((p_g + p_h)\log(p_g+p_h))$ to create the list representation of $f_{\min}$.
Finally, note that to store the elements $x_i^{\min}$ in the binary search tree,
we need additional time $O((p_g+p_h)\log(p_g+p_h))$.

Now we observe that $f_{\add}$ can be computed similarly to $f_{\min}$: the
function $f_{\add}$ only changes its functions values at the points in $X$
(where $X$ is as above). Therefore, we let $x_i^{\add}$ be the $i$'th smallest
element in $X$ and set $y_i^{\add} = g(x_{i-1}^{\add}) + h(x_{i-1}^{\add})$,
followed by the same pruning step as above. The rest of the proof goes through
as above.

Next, consider $f_{\shift}$.  We construct the list representation
$(x_1^{\shift},y_1^{\shift}),\dots,(x_{p_g}^{\shift},x_{p_g}^{\shift})$ of
$f_{\shift}$. For all $i=1,\dots,p_g$, we set $x_i^{\shift} = x_i^g + c$ and
$y_i^{\shift} = y_i^g$. The correctness is straightforward and from the
construction it is evident that there are only $p_g$ pieces and that everything
can be done in time $O(p_g \log(p_g))$ (since we still need to construct the
binary tree for the pieces of $f_{\shift}$).

Finally, us consider $f_{\round}$. As before, we construct the list representation
of $f_{\round}$,
$(x_1^{\round},y_1^{\round}),\dots,(x_{p_g}^{\round},x_{p_g}^{\round})$. For all
$i=1,\dots,p_g$, we set $x_i^{\round} = x_i^g$ and
$y_i^{\round} = \lceil y_i^g \rceil_{1+\delta}$. After that, we perform the same
pruning step as in the construction of $f_{\min}$. Since $g$ takes values in
$\Winfty = \{0\} \cup[1,W] \cup\{+\infty\}$ and $g$ is monotone, $f_{\round}$
can take at most $2 + \lceil \log_{1+\delta}(W) \rceil$ different values. Again,
the running time bound stems from the fact that we have to construct the binary
search tree for the pieces of $f_{\round}$.

\subsection{Proof of Lemma~\ref{lem:convolution}}
    Let $(x^s_1,y^s_1),\dots,(x^s_{p_s},y^s_{p_s})$ be the list
    representation of $f_s$ for $s=1,2$, where $p_s \leq p$ is the number of pieces of
	$f_s$. We create pairs $(y^1_i,y^2_j)$ for all
    $(i,j) \in\{1,\dots,p_1\}\times\{1,\dots,p_2\}$, and order them such that $y^1_i+y^2_j$ becomes
    monotonically increasing. We iterate over all pairs in this order, and
    in each iteration we set the function value $f(x)$ for some $x$-values
    to $y:=y^1_i+y^2_j$, where $(y^1_i,y^2_j)$ is the pair considered
    during the iteration. Here, we start with large $x$-values (at which $f$
	takes the smallest values) and keep on decreasing $x$ (and the function
	values increase); in other words, we construct
	$f$ on its domain $[0,t]$ from right to left. More concretely, let $x_{\max}$ denote the
    highest $x$-value for which we did not yet set a function value (or
    $-\infty$ if all function values have been set). Let
    $x'= x^1_{i-1} + x^2_{j-1}$. We set the function values for all
    $x\in[x',x_{\max})$ to $y$ and then set $x_{\max}=x'$. For each such new
	piece of $f$, we store that we combined the indices~$i$ and~$j$ of the
	pieces that we used from $f_1$ and from $f_2$. Then we proceed
	with next iteration until all function values have been set.
	
	The following two statements show
    that this procedure is correct.
	\begin{description}
	\item[1. Each \mathversion{bold}$x$\mathversion{normal} is assigned a function
	value that is at most the correct value \mathversion{bold}$f(x)$\mathversion{normal}.]

	To see this let $x\in[0,t]$ and recall that 
	\begin{equation*}
	f(x)=\min_{\bar{x}\in[0,x]} f_1(\bar{x})+f_2(x-\bar{x}) \enspace.
	\end{equation*}
	Let $\bar{x}^*$ denote the value of $\bar{x}$ that attains the minimum in the
	above expression, and let $i^*$ and $j^*$ denote the indices of the pieces that $\bar{x}^*$
	and $x-\bar{x}^*$ fall into, w.r.t.\ the list representations of $f_1$ and
	$f_2$, respectively. This means $\bar{x}^*\in[x_{i^*-1}^1,x_{i^*}^1)$ and
	$x-\bar{x}^*\in [x_{j^*-1}^2,x_{j^*}^2)$. Hence, $x\ge x'=x^1_{i^*-1}+x^2_{j^*-1}$.
	Therefore, either in the iteration for the pair $(y_{i^*}^1,y_{j^*}^2)$ or
	before, the procedure assigns a function value to~$x$. Because the procedure
	assigns function-values in monotonically increasing fashion we are
	guaranteed that the function value that is assigned is at most the correct
	value.

	\item[\mathversion{bold}2. The function value $y$ that is assigned is at least the correct value $f(x)$\mathversion{normal}.]
	Suppose that during some iteration we assign the function value
	$y=y_i^1+y_j^2$ to $x\in[x',\dots,x_{\max})$, where
	$x'=x_{i-1}^1+x_{j-1}^2$.
	We have
	 \begin{align*}
	   f(x)&=\min_{\bar{x}\in[0,x]} f_1(\bar{x})+f_2(x-\bar{x}) &\text{(definition)}\\ 
		   &\le f_1(x_{i-1}^1)+f_2(x-x^1_{i-1})   & \text{(consider $\bar{x}=x_{i-1}^1$)}\\
		   &\le f_1(x_{i-1}^1)+f_2(x'-x^1_{i-1})  & \text{($x'\le x,\,f_2$ monotonically decreasing)}\\
		   &= f_1(x_{i-1}^1)+f_2(x^2_{j-1})  \\
		   &= y_i^1+y_j^2     &\text{($y_i^1=f_1(x_{i-1}^1)$ and $y_j^2=f(x_{j-1}^2)$)}\\
		   &= y\enspace.
			\end{align*}
	Hence, the assigned value is at least $f(x)$.

	\end{description}
	Observe that we can implement the above procedure in time $O(p^2\log p)$:
	We first sort the at most $p^2$ pairs in time $O(p^2\log p)$. Then every
	iteration can be executed in constant time because setting the function
	values for $x\in[x',x_{\max})$ to $y$ can be performed by adding the pair
	$(x_{\max},y)$ to the list-representation of $f$ and updating $x_{\max}$ to
	$x'$ takes time $O(1)$.

	Finally, suppose we already computed $f$ and, given $x\in[0,t]$, we shall
	return a value $\bar{x}^*\in[0,t]$
	such that $f(x) = f_1(\bar{x}^*) + f_2(x-\bar{x}^*)$. First, let
	$(x_1,y_1),\dots,(x_p,y_p)$ denote the list representation of~$f$. Then we
	can determine the piece~$\ell$ of $f$ such that $x\in[x_\ell,x_{\ell+1})$ in
	time $O(\lg p)$ since we store the pairs $(x_i,y_i)$ of $f$ in a binary
	search tree.  Recall that for each piece of $f$, we stored the indices $i$
	and $j$ of the pieces from $f_1$ and $f_2$ that we combined. Now observe
	that we have $\bar{x}^*\in[x_{i-1}^1,x_i^1)$ and
	$x-\bar{x}^*\in[x_{j-1}^2,x_j^2)$, where $i$ and $j$ are such that these
	pieces from $f_1$ and $f_2$ form the corresponding piece of $f$.  Thus, to
	find $\bar{x}^*$ we can first try to set $\bar{x}^* = x_{i-1}^1$.  If
	$x-\bar{x}^*=x-x_{i-1}^1\in[x_{j-1}^2,x_j^2)$ then we are done. Otherwise,
	we must have that $x-x_{i-1}^1\geq x_j^2$.  Thus, we have to increase the
	value of $\bar{x}^*$ from $x_{i-1}^1$ until it is large enough such that
	$x-\bar{x}^*\in[x_{j-1}^2,x_j^2)$. This can be achieved by setting
	$\Delta=(x-x_{x-1}^1) - x_j^2$ and
	$\bar{x}^* = x_{i-1}^1 + \Delta
		+ \frac{1}{2} \min\{ x_i^1 - (x_{i-1}^1 + \Delta), x_j^2 - x_{j-1}^2 \}$.
	Note that this value of $\bar{x}^*$ can be computed in time $O(1)$.
	Thus, the total time to return $\bar{x}^*$ is $O(\lg p)$.

\subsection{Proof of Theorem~\ref{thm:well-behaved-static}}
Recall that the dependency graph is a DAG. We call a vertex without any incoming
edges a \emph{leaf}. The \emph{level} of a vertex~$u$ is the length of the
longest path from a leaf to~$u$. Note that since each node can only reach
$h$~other nodes, every vertex has level at most~$h$.

We compute the DP bottom-up, starting at the leaves of the DAG and
then recursively computing the solutions for rows~$i$ for which the solutions of
$\In(i)$ have already been computed. We store the approximate solutions
$\ADP(i,\cdot)$ using monotone piecewise constant functions. 

We prove the theorem by induction over the level of~$i$ in the dependency graph.
We show the stronger statement that for every DP row~$i$ of level~$\ell$,
$\ADP(i,\cdot)$ is an $\alpha^{\ell+1}$-approximation of $\DP(i,\cdot)$.

We start with leaf vertices (i.e., vertices of level $0$). For a leaf~$i$, we
use Properties~4(b) and 4(c) to obtain that $\tilde{\Procedure}_i$ returns
$\ADP(i,\cdot)$ which is a monotonone piecewiese constant function with at most
$p$~pieces and which is an $\alpha$-approximation of $\DP(i,\cdot)$.

Next, consider a row~$i$ of level~$\ell$. We use $\tilde{\Procedure}_i$ to compute
$\ADP(i,\cdot) = \tilde{\Procedure}_i(\{ \ADP(i',\cdot) \colon i'\in\In(i) \}$.
By induction hypothesis, all solutions $\ADP(i',\cdot)$, $i'\in\In(i)$, are
stored as monotone piecewise constant functions and each of them has at most
$p$~pieces. Since we apply the operations from
Lemma~\ref{lem:operations} only $O(1)$~times, the number of pieces
only grows by a factor $O(1)$. Since we only apply the $(\min,+)$-convolution from
Lemma~\ref{lem:convolution} at most a single time, the number of pieces after
the convolution is bounded by $O(p^2)$.  Thus, we will never operate on functions with
more than $O(p^2)$~pieces.  The bounds from Lemmas~\ref{lem:operations}
and~\ref{lem:convolution} imply that all operations to compute
$\tilde{\Procedure}_i$ can
be performed in time at most $O(p^2 \log(p))$. Furthermore, by induction
hypothesis and since each $i'$ is at level $\ell'\leq \ell-1$, we know that
$\ADP(i',\cdot)$ is an $\alpha^{\ell}$-approximation of $\DP(i',\cdot)$. Using
Properties~(3) and 4(a), we get that $\ADP(i,\cdot)$ is an
$\alpha^{\ell+1}$-approximation of $\DP(i,\cdot)$.

The theorem's approximation guarantee follows from Property~(2) which implies
that $\ell\leq h$ for all DP rows~$i$ in the dependency graph. Furthermore,
above we argued that each solution $\ADP(i,\cdot)$ can be computed in time
$O(p^2 \lg(p))$ which gives a total running time of
$O(\abs{I}\cdot p^2 \log(p))$.

\subsection{Proof of Theorem~\ref{thm:well-behaved-dynamic}}
Consider a row~$i$ for which $\DP(i,\cdot)$ changes. Note that we only have to
compute DP solutions for rows~$i'$ which are reachable from $i$ in the
dependency graph. Since we assume that the dependency graph is a DAG and
$\Reach(i)\leq h$ for all rows~$i$, there can be at most~$h$ such rows. In the
proof of Theorem~\ref{thm:well-behaved-static} we argued that each solution
$\ADP(i,\cdot)$ can be computed in time $O(p^2\log(p))$. This gives the proof of
the theorem.

\subsection{Property of the Räcke Tree}
\label{sec:property-racke-tree}

Let $G=(V_G,E_G)$ be an undirected graph and let $T=(V_T,E_T)$ be a Räcke tree
for $G$. 
We prove that $\mincut_T(A,B)\ge \mincut_G(A,B)$ by showing that for any set of
vertices $S_T\subseteq V_T$, it holds that $\capac_T(S_T)\geq \capac_G(S)$ where
$S\subseteq V_G$ is the set of leaf vertices in $V_T$.

Let $S_T\subseteq V_T$ and consider the cut $(S_T,\bar{S}_T)$ in $T$. 
We use $S$ to denote the restriction of $S_T$ to the leaf vertices and observe that
$(S,\bar{S})$ forms a cut in $G$ as well. Then:
\begin{align*}
\capac_T(S_T)
&=\lsum{(x_t,y_t)\in S_T\times\bar{S}_T}{\capac_T(x_t,y_t)} 
  & \text{(definition of $\capac_T(S_T)$)}\\
&= \lsum{(x_t,y_t)\in S_T\times\bar{S}_T}{\capac_G(V_{x_t}\cap V_{y_t})}
  & \text{(definition of tree edge capacity)}\\
&= \sum_{(x_t,y_t)\in S_T\times\bar{S}_T}\lsum{(x,y)\in V_{x_t}\times
                                                  \bar{V}_{x_t}}{\capac_G(x,y)}
  & (\text{w.l.o.g.\ assume $V_{x_t}\subseteq V_{y_t}$})\\
&= \lsum{\{x,y\}\in E_G}{\capac_G(x,y)}\lsum{(x_t,y_t)\in S_T\times\bar{S}_T}
\mathbbm{1}\{x\in  V_{x_t}\wedge y\in \bar{V}_{x_t}\}
  & \text{(change order of summation)}\\
&\ge 
\sum_{(x,y)\in S\times\bar{S}}\capac_G(x,y)=\capac_G(S)\enspace.
\end{align*}
Here the inequality follows because a pair $(x,y)\in {S}\times\bar{S}$ whose
capacity is
counted on the right hand side corresponds to a graph edge
$\{x,y\}\in E_G$ (between $x\in S$ and $y\in \bar{S}$). This graph edge contributes
to the capacity on 
every edge of the $x$-$y$ path in $T$. One of these edges must be cut 
by $S_T$, i.e., $\mathbbm{1}\{x\in  V_{x_t}\wedge y\in \bar{V}_{x_t}\}=1$ for
this tree edge. Hence, its capacity is also counted on the left hand side.

\subsection{Proof of Lemma~\ref{lem:approx-dp}}
The lower bound is immediate since
$\tilde{\Procedure}_i$ is an $\alpha$-approximation of $\Procedure_i$.
For the upper bound we use induction over $\ell$. For $\ell=0$ observe that
\begin{align*}
\tDP(i,\cdot)
	&= \tilde{\Procedure}_i( \{ \tDP(i') \colon i'\in\In(i) \})  &
                                                                           \text{(definition)} \\
   &\le\alpha \Procedure_i( \{ \tDP(i') \colon i'\in\In(i) \}) &
                                                               \text{($\tilde{\Procedure}_i$
                                                               is $\alpha$-approximate)}\\
&=\alpha \Procedure_i(\emptyset) &       \text{($v_i$ is a leaf)}\\
&=\alpha \DP(i) & 						      \text{(the DP is okay-behaved)}
\end{align*}
For $\ell>0$ we have
\begin{align*}
\tDP(i,\cdot) &= \tilde{\Procedure}_i( \{ \tDP(i') \colon i'\in\In(i) \}) &
                                                                           \text{(definition)} \\
   &\le\alpha \Procedure_i( \{ \tDP(i') \colon i'\in\In(i) \}) &
                                                               \text{($\tilde{\Procedure}_i$
                                                               is $\alpha$-approximate)}\\
&=\alpha \Procedure_i( \{ \alpha^{\ell}\DP(i',\cdot) \colon i'\in\In(i) \}) &       \text{(induction hypothesis)}\\
&=\alpha^{\ell+1} \DP(i,\cdot) & 						      \text{(the DP is okay-behaved)}
\end{align*}
Here the induction hypothesis exploits the fact that all $i'\in\In(i)$,
have level strictly less than $\ell$ in the dependency graph.

\subsection{Proof of Lemma~\ref{lem:tree-dp}}
The claim about the approximation guarantee follows immediately from
Lemma~\ref{lem:approx-dp} and the fact that the root has level at most $h$
(since the longest leaf-root path in the dependency tree has length $h$).
To obtain running time $O(\abs{V_T} \cdot t)$, we compute the solutions
$\ADP(v_1),\dots,\ADP(v_n)$ in this order, i.e., based on the topological
ordering of the dependency DAG. Then by assumption on the ordering of
the rows $i$ and since all $\tilde{\Procedure}_i$ can be computed in time $t$,
the lemma follows.

\subsection{Proof of Lemma~\ref{lem:tree-dp-dynamic}}
Suppose the inserted or deleted edge is incident upon a vertex $i$.
Since the DPs we consider are well-behaved, we only need to recompute DP
solutions for those vertices $j$ such that there exists a directed path from
$i$ to $j$, $j\geq i$, in the dependency graph. By construction of the
dependency graph, there can be at most $h$ such vertices (since the longest
leaf-root path in the dependency graph has length $h$). Therefore, we can
recompute all of these solutions in time $O(h \cdot t)$. After we finished the
recomputation, the guarantees on the approximation ratio are implied by
Lemma~\ref{lem:tree-dp}.

\subsection{Proof of Lemma~\ref{lem:multimin}}
We only prove the case if all functions $f_i$ are monotonically decreasing. The
case for monotonically increasing functions is analogous. Let $\cal P$ denote 
the set of all pieces in functions $f_i$.
 Consider a piece $p\in\cal P$ that starts at $t_1$ ends at $t_2$ and has value
$\alpha$. We construction a piece-wise constant function $f_p:[0,t]\to\Winfty$
with two pieces that has value $\infty$ on $[0,t_1)$, and value $\alpha$ on 
$[t_1,t]$ (this means we extend the piece from $t_2$ to $t$). 

Because the functions are monotonically decreasing we can rewrite $f_{\min}$ as
a minimum of the piece-functions $f_p$, i.e.,
\begin{equation*}
 f_{\min}(x)=\min_{p\in\cal P}f_p(x)\enspace.
\end{equation*}
We now sort all pieces in $\cal P$ by there start-point. By processing the pieces
in sorting order we can build the result function step-by-step. Let $f_{r-1}$
denote the piece-wise constant function encoding the minimum over the first $r-1$
pieces. In order to compute $f_{r}$ we have to compare the last piece of
$f_{r-1}$ to the $r$-th piece $p_r$ in $\cal P$. If the value of $p_r$ is higher 
than $f_{r-1}(\infty)$ (the value of the last piece in $f_{r-1}$) we ignore the
piece $p_r$. Otherwise, we end the current last piece of $f_{r-1}$ at the 
start time $t_r$ of piece $p_r$ and add the piece $p_r$ with its start time,
its value, and an end time of $t$. The running time is dominated by sorting the
pieces and inserting them into a binary search tree when adding them to the 
result function.

\subsection{Proof of Lemma~\ref{lem:monotone-convolution-is-monotone}}
	We can assume w.l.o.g.\ that $f_2$ is monotonically decreasing (this follows
	from the symmetry of $(\min,+)$-convolution). Now the lemma is implied by
	the following computation, where in the third step we use the monotonicity
	of $f_2$, i.e., we use that $f_2(x') \leq f_2(x)$ for all $x'\geq x$:
   \begin{align*}
       f(x')
       &= \min_{\bar{x}\in[0,x']} f_1(\bar{x}) + f_2(x'-\bar{x}) \\
       &\leq \min_{\bar{x}\in[0,x]} f_1(\bar{x}) + f_2(x'-\bar{x}) \\
       &\leq \min_{\bar{x}\in[0,x]} f_1(\bar{x}) + f_2(x-\bar{x}) \\
       &= f(x). \qedhere
   \end{align*}

\end{document}